\documentclass[11pt,a4paper]{article}
\usepackage{amssymb,amsmath, amsfonts}
\usepackage{graphicx,graphics}
\usepackage{mathtools}
\usepackage{bbold}
\usepackage[english]{babel}
\usepackage{epsfig,url}
\usepackage{bbm,theorem}
\usepackage{a4wide}
\usepackage{color}
\usepackage{enumerate}
\usepackage{calrsfs}
\usepackage[bookmarks=true]{hyperref}
\usepackage{bookmark}
\usepackage{amsmath}
\usepackage{amsfonts}
\usepackage{amssymb}
\usepackage{verbatim}
\usepackage{graphicx}


\setcounter{MaxMatrixCols}{10}

\providecommand{\U}[1]{\protect\rule{.1in}{.1in}}
\DeclareMathAlphabet{\pazocal}{OMS}{zplm}{m}{n}
\newtheorem{theorem}{Theorem}[section]
\newtheorem{definition}[theorem]{Definition}
\newtheorem{corollary}[theorem]{Corollary}
\newtheorem{lemma}[theorem]{Lemma}
\newtheorem{proposition}[theorem]{Proposition}
\newtheorem{assumption}[theorem]{Assumption}

{\theorembodyfont{\upshape}
\newtheorem{remark}[theorem]{Remark}

}
\numberwithin{equation}{section}
\numberwithin{theorem}{section}
\newcommand{\qed}{\hfill$\Box$}
\newenvironment{proof}{\begin{trivlist}\item[]{\em Proof:}\/}{\qed\end{trivlist}}
\newenvironment{proofof}[1][Proof]{\noindent \textit{#1.} }{\ \qed}

\newcommand{\R}{{\mathbb R}}

\newcommand{\N}{{\mathbb N}}

\newcommand{\ep}{\varepsilon}

\newcommand{\rme}{{\rm e}}

\DeclareMathOperator*{\supp}{supp}

\newcommand{\eps}{{\varepsilon}}

\newcommand{\beq}{\begin{equation}}
\newcommand{\eeq}{\end{equation}}
\newcommand{\beqs}{\begin{eqnarray}}
\newcommand{\eeqs}{\end{eqnarray}}

\newcommand{\norm}[1]{\Vert #1\Vert}
\newcommand{\defset}[2]{ \left\{ #1\left|\,
#2\makebox[0cm]{$\displaystyle\phantom{#1}$}\right.\!\right\} }

\newcommand{\vep}{\varepsilon}

\newcounter{jlisti}

\begin{document}

\title{ Stationary non-equilibrium solutions for coagulation systems}
\author{Marina A. Ferreira, Jani Lukkarinen, Alessia Nota, Juan J. L.
Vel\'azquez}
\maketitle

\begin{abstract} 
We study coagulation equations under non-equilibrium conditions which are induced by the 
addition of a source term for small cluster sizes.
We consider both discrete and continuous coagulation equations, and allow for a large class of 
coagulation rate kernels, with the main restriction being 
boundedness from above and below by certain weight functions.   The weight functions depend on two power 
law parameters, and the assumptions cover, in particular, the 
commonly used free molecular and diffusion limited 
aggregation coagulation kernels.
Our main result shows that the two weight function parameters 
already determine whether there exists a stationary solution under the 
presence of a source term.  In particular, we find that the diffusive kernel allows for 
the existence of 
stationary solutions while there cannot be any such solutions for the free molecular kernel.   The argument to 
prove the non-existence of solutions relies on a novel power law lower bound, valid in the appropriate 
parameter regime, for the decay of stationary solutions with a constant flux. We obtain optimal lower 
and upper estimates of the solutions for large cluster sizes, and prove that the solutions of the discrete 
model behave asymptotically as solutions of the continuous model.
\end{abstract}

\bigskip

\textbf{Keywords:}  Coagulation dynamics; non-equilibrium; continuous Smoluchowski's equation; discrete Smoluchowski's equation; source term; stationary injection solutions; current.

\tableofcontents

\bigskip 

\bigskip 

\section{Introduction}

\bigskip

Atmospheric cluster formation processes  \cite{Fried}, where certain species of the gas molecules (called {\em monomers\/}) can stick together 
and eventually produce macroscopic particles, are an important component in cloud formation and radiation scattering. 
The above cluster formation processes are modelled with the so-called General Dynamic Equation (GDE) \cite{Fried}. 
Under atmospheric conditions, the particle clusters are often aggregates of various molecular species and formed by collisions of several different monomer types, cf.\ \cite{Vehkam,Vehkamki2012} for more details and examples.  Accordingly, in the GDE one needs to label clusters not only by the total number of monomers in them but also by counting each monomer type.  This results in multicomponent labels for the concentration vector, with nonlinear interactions between the components.
Another feature of the GDE which has been largely absent 
from most of the previous mathematical work on coagulation equations,
is the presence of an external monomer source term.  Such sources are nevertheless important for atmospheric phenomena (for more details about the chemical and physical origin and relevance of the sources we refer for instance to \cite{Ehn2014, Kerminen2018}), although this problem has been barely considered in the mathematical literature. 
 
In this work, we focus on the effect the addition of a source term has on solutions of standard one-component coagulation equations.
This is by no means to imply that multicomponent  coagulation equations would not have interesting new mathematical features but these will be the focus of a separate work.  Here, we consider only one species of monomers, and we are interested in the distribution of the concentration of clusters formed out of these monomers.  Let $n_\alpha\ge 0$
denote the concentration of clusters with $\alpha\in \N$ monomers.

Considering the regime in which the precise spatial structure and loss of particles by deposition are not important, the GDE yields the following nonlinear evolution equation 
for the concentrations $n_\alpha$:
\begin{align}
& \partial_{t}n_{\alpha} \nonumber 
=\frac{1}{2}\sum_{0< \beta<\alpha}K_{\alpha-\beta,\beta
}n_{\alpha-\beta}n_{\beta}-n_{\alpha}\sum_{\beta>0}K_{\alpha,\beta}n_{\beta
}
\\ &\qquad +\sum_{\beta>0}\Gamma_{\alpha+\beta,\alpha}n_{\alpha+\beta}-\frac{1}{2}%
\sum_{0<\beta<\alpha}\Gamma_{\alpha,\beta}n_{\alpha}+
s_{\alpha}\, .
\label{B1}
\end{align}
The coefficients $K_{\alpha,\beta}$ describe the coagulation rate joining two clusters of sizes $\alpha$ and $\beta$ into a cluster of size $\alpha+\beta$, as dictated by mass conservation.  Analogously, the coefficients $\Gamma_{\alpha,\beta}$ 
describe the fragmentation rate of clusters of size $\alpha$
into two clusters which have sizes $\beta$ and $\alpha-\beta$.
We denote with $s_{\alpha}$ the (external) source of clusters of size $\alpha$.  In applications, typically only monomers or small clusters are being produced, so we make the assumption that the function $\alpha \mapsto s_\alpha$ has a bounded, non-empty support.  In the following, we make one further simplification and consider only cases where also fragmentation can be ignored, $\Gamma_{\alpha,\beta}=0$; the reasoning behind this choice is discussed later in Sec.~\ref{sec:rateintro}.
An overview of the currently available mathematical results for coagulation-fragmentation models can be found in \cite{C15, LM}.

Therefore, we are led to study the evolution equation
\begin{equation}
\partial_{t}n_{\alpha}=\frac{1}{2}\sum_{\beta<\alpha}K_{\alpha-\beta,\beta
}n_{\alpha-\beta}n_{\beta}-n_{\alpha}\sum_{\beta>0}K_{\alpha,\beta}n_{\beta} + s_{\alpha}\,. 
\label{B2}
\end{equation}
In this paper, we are concerned with the existence or nonexistence of steady 
state solutions to (\ref{B2}) for general coagulation rate kernels $K$, including in particular the physically relevant kernels discussed in Sec.~\ref{sec:rateintro}.
The source is here assumed to be localized on the ``left boundary'' of the system which have small cluster sizes.  Such source terms often lead to nontrivial stationary solutions towards which the time-dependent solutions evolve as time increases.  These stationary solutions are nonequilibrium steady states since they involve a steady flux of matter from the source into the system.    
The characterization of  nonequilibrium stationary states exhibiting transport phenomena is one of the central problems in statistical mechanics. 

The main result of this paper gives a contribution in this direction. More precisely, we address the question of existence of such stationary solutions to (\ref{B2}).
We prove that for a large class of kernels---including in particular the diffusion limited aggregation kernel given in (\ref%
{KernBrow})---stationary solutions to (\ref{B2}) yielding a constant flux of
monomers towards clusters with large sizes exist. On the contrary, for a different class of kernels---including the free molecular coagulation kernel with the form (\ref{eq:BalKer})---such a class of stationary solutions does not exist.

In the case of collision kernels for which stationary
nonequilibrium solutions to (\ref{B2}) exist, we can even compute the rate of
formation of macroscopic particles, which we identify here with infinitely large particles,
from an analysis of the properties of these stationary solutions, cf.
Section \ref{sec:types_sol}.
We find that in this case the main mechanism of transport of monomers to
large clusters corresponds to coagulation between clusters
with comparable sizes, cf.\ Lemma \ref{lem:J},
Section \ref{sec:estimates}.

{
The non-existence of such stationary solutions under a monomer source for a
general class of coagulation kernels yielding coagulation for arbitrary
cluster sizes is one of the novelties of our work. It has been pointed out in
Remark 8.1 of \cite{Dub} that for kernels $K_{\alpha,\beta}$ which vanish if
$\alpha>1$ or $\beta>1$, and sources $s_{\alpha}$ which are different from zero
for $\alpha\geq2$, stationary solutions of (\ref{B2}) cannot exist. Although the
example in \cite{Dub} refers to the continuous counterpart of (\ref{B2}) (c.f. (\ref{eq:time_evol})), 
the argument works similarly for discrete kernels. The example of non
existence of stationary solutions in \cite{Dub} relies on the fact that coagulation
does not take place for sufficiently large particles and therefore cannot
compensate for the addition of particles due to the source term $s_{\alpha}.$
In the class of kernels considered in this paper coagulation takes place for
all particle sizes and therefore the nonexistence of steady states must be due
to a different reason.
At first glance} this result might appear counterintuitive since this non-existence result includes kernels for which the dynamics seems to be well-posed.  Hence, one needs to explain what will happen at large times to the monomers injected into the system.
Our results suggest that for such kernels the aggregation of monomers with large clusters is so fast that it cannot be compensated by the constant addition of monomers described by the injection term $s_\alpha$.  Then the cluster concentration  $n_{\alpha}$ would converge to zero as $t\to \infty$ for bounded $\alpha$ even if  $n_{\alpha}=0$ is not a stationary solution to \eqref{B2} if $(s_{\beta})\neq 0$.

We remark that our non-existence result of stationary solutions includes in particular the so called free molecular kernel (cf. \eqref{KerBall-first} below) derived from kinetic theory which is commonly used for microscopic computations involving aerosols (cf. for instance \cite{Vehkam}). 

In this paper we consider, in addition to the stationary solutions of (\ref{B2}), also the stationary solutions of the continuous counterpart of (\ref{B2}),
\begin{equation}
\partial _{t}f(x,t)=\frac{1}{2}\int_{0}^{x}K\left( x-y,y\right) f\left(
x-y,t\right) f\left( y,t\right) dy-\int_{0}^{\infty }K\left( x,y\right)
f\left( x,t\right) f\left( y,t\right) dy+\eta \left( x\right) .
\label{eq:time_evol}
\end{equation}
In fact, we will allow $f$ and $\eta$ in this equation to be positive measures.  This  will make it possible to study the continuous and discrete equations simultaneously, using Dirac $\delta$-functions to connect $f(\xi)$ and $n_\alpha$ via the formula
$f(\xi) d\xi = \sum_{\alpha=1}^\infty n_\alpha \delta(\xi - \alpha)d\xi$.

In most of the mathematical studies of the
coagulation equation to date, it has been assumed that the injection terms $s_{\alpha}$ and $\eta\left(x\right) $ are absent. In the case of homogeneous kernels, i.e., kernels satisfying \begin{equation}
K(rx,ry)=r^\gamma K(x,y)\label{eq:homogeneity_cond}
\end{equation}
 for any $r>0$,
the long time asymptotics of the solutions of (\ref{eq:time_evol}) with $\eta\left(
x\right) =0$ might be expected to be self-similar for a large class of
initial data. This has been rigorously proved in \cite{MP04} for the
particular choices of kernels $K( x,y) =1$ and $K( x,y) =x+y$.  In the case of
 discrete problems, the distribution of clusters $%
n_{\alpha}$ has also been proved to behave in self-similar form for large times
and for a large class of initial data if the kernel is constant, $%
K_{\alpha,\beta}=1$, or additive, $K_{\alpha,\beta}=\alpha+\beta$ \cite{MP04}. For these kernels it is possible to find explicit
representation formulas for the solutions of (\ref{B2}), (\ref{eq:time_evol}) using
Laplace transforms.

For general homogeneous kernels construction of explicit self-similar 
solutions is no longer possible.  However, the existence of self-similar solutions of (\ref{eq:time_evol})
with $\eta=0$ has been proved for certain classes of homogeneous
kernels $K( x,y) $ using fixed point methods. These solutions
might have a finite monomer density (i.e., $\int_{0}^{\infty}xf\left(
x,t\right) dx<\infty$) as in \cite{EM05, FL05}, or infinite monomer density (i.e., $\int_{0}^{\infty}xf%
\left( x,t\right) dx=\infty$) as in \cite{BNV18, BNV19,NV13, NTV16}. Similar strategies can be applied to other kinetic equations \cite{GPV, JNV, NNTV}.

Problems like (\ref{B2}), (\ref{eq:time_evol}) with nonzero injection terms $s_{\alpha}$, $%
\eta\left( x\right) $ have been much less studied both in the physical and
mathematical literature. In \cite{DKW99} it has been observed using a combination of
asymptotic analysis arguments and numerical simulations that solutions of (\ref{B2}), (\ref{eq:time_evol}) with a finite  monomer density behave in
self-similar form for long times and for a class of homogeneous coagulation kernels, even considering source terms which depend on time following a power law $t^{\omega}$.  Coagulation equations with sources have also been considered in \cite{LK02} using  Renormalization Group methods and leading to predictions of analogous self-similar behaviour. For what concerns the rigorous mathematical literature, in \cite{Dub}, the existence of stationary solutions has been obtained in the case of bounded kernels.   
Well-posedness of the time-dependent problem for a class of homogeneous coagulation kernels with homogeneity $\gamma \in [0,2]$ has been proven in \cite{EM06}. 
 For the constant kernel, the stability of the corresponding solutions has
been proven using Laplace transform methods (cf. \cite{Dub}).
Convergence to equilibrium for  a class of coagulation equations containing also growth terms as well as sources has been studied in \cite{G,GZ}.
Analogous
stability results for coagulation equations with the form of \eqref{B1} but
containing an additional removal term on the right-hand side with the form
$-r_{\alpha}n_{\alpha},\ r_{\alpha}>0$ have been obtained in \cite{KT19}.

In this paper we study  the stationary solutions of (\ref{B2}), (\ref{eq:time_evol}) for
 coagulation kernels satisfying 
\begin{align}\label{eq:cond_kernel}
&c_{1} w(x,y)\leq K( x,y) \leq
c_{2}w(x,y)\,, \qquad 
w(x,y) := x^{\gamma +\lambda }y^{-\lambda }+ y ^{\gamma +\lambda }x^{-\lambda }
\,,
\end{align}%
for some $c_1,c_2>0$ and for all $x,y$.
The weight function $w$ depends on two real parameters: the homogeneity parameter $\gamma$ and the ``off-diagonal rate'' 
parameter $\lambda$.   The parameter $\gamma$ yields the behaviour of kernel $K$ under the scaling of the particle size while 
the parameter $\lambda$ measures how relevant the coagulation events between particles of different sizes are.
However, let us stress that we do not assume the kernel $K$ itself to be homogeneous, even though the weight functions are.

The main result of this paper is the following. Given $\eta$ compactly supported there exists at least one nontrivial stationary solution to the problem (\ref{eq:time_evol}) if and only if $|\gamma +2\lambda|<1$. In particular, if $|\gamma +2\lambda|\ge 1$ no such
stationary solutions can exist. Note that the parameters $\gamma$ and $\lambda$ are arbitrary real numbers and they may be negative or greater than one here.   
Therefore, these results do not depend on having global well-posedness of
mass-preserving solutions for the time-dependent problem \eqref{eq:time_evol}. In particular,
our theorems cover ranges of parameters for which the solutions to the
time-depend problem \eqref{eq:time_evol} can exhibit gelation in finite or zero time. A
detailed description of the current state of the art concerning wellposedness
and gelation results can be found in \cite{BLL}. At a first glance, the fact that the
existence of stationary solutions of \eqref{eq:time_evol} does not depend on having or not
solutions for the time dependent problem might appear surprising. However, the
reason for this becomes  clearer if we notice that the homogeneity of the
kernel is one of the main factors determining the wellposedness of the time
dependent problem \eqref{eq:time_evol}. On the other hand, the homogeneity of the kernel $K$
is not an essential property of the stationary solution problem as it can be
seen (cf. \cite{DP}) noticing that if $f$ is a stationary solution of \eqref{eq:time_evol}, then $x^{\theta
}f\left(  x\right)  $ is a stationary solution of \eqref{eq:time_evol} with kernel
$\frac{K\left(  x,y\right)  }{\left(  xy\right)  ^{\theta}}$ and the same
source $\eta$. This new kernel satisfies \eqref{eq:cond_kernel} with $\gamma$ and $\lambda$
replaced by $\left(  \gamma-2\theta\right)  $ and $\lambda+\theta$
respectively. In particular, we can so obtain kernels with arbitrary homogeneity
and having basically the same steady states, up to the product by a power
law.

We also prove in this paper the analog of these existence and non-existence results for the discrete coagulation equation \eqref{B2}. Moreover, we derive upper and lower estimates, as well as regularity results, for the stationary solutions to \eqref{B2}, \eqref{eq:time_evol} for the range of parameters for which these solutions exist, i.e.  $|\gamma +2\lambda|<1$. Finally, we also describe the asymptotics for large cluster sizes of these stationary solutions.

\subsection{On the choice of coagulation and fragmentation rate functions}\label{sec:rateintro}

Although we do not keep track of any 
spatial structure, the coagulation rates $K_{\alpha,\beta}$ do depend on the specific
mechanism which is responsible for the aggregation of the
clusters. These coefficients need to be computed for example using kinetic theory 
and the result will depend on what is assumed about the 
particle sizes and the processes yielding the motion of the clusters.

For instance, in the case of electrically neutral
particles with a size much smaller than the mean free path between two
collisions between clusters, 
the coagulation kernel is (cf.\ \cite{Fried})%
\begin{equation}
K_{\alpha,\beta}=\left( \frac{3}{4\pi}\right) ^{\frac{1}{6}}\sqrt{6k_{B}T}%
\left( \frac{1}{m( \alpha) }+\frac{1}{m( \beta) }%
\right) ^{\frac{1}{2}}\left( V( \alpha)^{\frac{1}{3}}
+V( \beta) ^{\frac{1}{3}}\right) ^{2} 
\label{KerBall-first}
\end{equation}
where $V( \alpha) $ and $m( \alpha) $ are respectively
the volume and the mass of the cluster characterized by the composition $%
\alpha$.  We denote as $k_{B}$ the Boltzmann constant, as $T$ the absolute temperature, and if $m_1$ is the mass of one monomer, we have above $m(\alpha)= m_1 \alpha$.  In the derivation, one also assumes a spherical shape of the clusters.  If the particles are distributed inside the sphere with a uniform mass density $\rho$, assumed to be independent of the cluster size, we also have $V(\alpha)=\frac{m_1}{\rho} \alpha$.  Changing the time-scale we can set all  the physical constants to one. Finally, it is possible to define  a continuum function $K(x,y)$ by setting 
$\alpha=x$, $\beta=y$ in the above formula. We call this function the {\em free molecular coagulation kernel}, given explicitly by 
\begin{equation}
K( x,y) =\big(x^{\frac{1}{3}}+y^{\frac{1}{3}}\big)^{2}\big(%
x^{-1}+y^{-1}\big)^{\frac{1}{2}}.  \label{eq:BalKer}
\end{equation} 
It is now straightforward to check that with the parameter choice $\gamma=\frac{1}{6}$, $\lambda=\frac{1}{2}$ 
there are $c_1,c_2>0$ such that  \eqref{eq:cond_kernel} holds for all $x,y>0$.
Since here $\gamma + 2 \lambda = \frac{7}{6}>1$, the free molecular kernel belongs to the second category which has no stationary state.

Another often encountered example is diffusion limited aggregation which was studied already in the original work by Smoluchowski  \cite{S16}.  Suppose that there is a background of non-aggregating neutral particles producing cluster paths resembling Brownian motion between their collisions.  Then one arrives at the coagulation kernel 
\begin{equation}
K_{\alpha,\beta}=\frac{2k_{B}T}{3\mu}\left( \frac{1}{ V(
\alpha) ^{\frac{1}{3}}}+\frac{1}{ V( \beta)^{\frac{1}{3}}}\right) 
\left( V( \alpha)^{\frac{1}{3}}
+ V( \beta) ^{\frac{1}{3}}\right) 
\label{KernBrow-first}
\end{equation}
where $\mu>0$ is the viscosity of the gas in which the clusters move.

As before, we then set $V(\alpha)=\frac{m_1}{\rho} \alpha$ and 
define a continuum function $K(x,y)$ by setting 
$\alpha=x$, $\beta=y$ on the right hand side of (\ref{KernBrow-first}).
The constants may then be collected together and after rescaling time one may use the following 
kernel function
\begin{equation}
K( x,y)= 
\left( x^{-\frac{1}{3}} + y^{-\frac{1}{3}} \right)
\left( x^{\frac{1}{3}}+ y^{\frac{1}{3}}\right) \,,
\label{KernBrow}
\end{equation}
which we call here {\em diffusive coagulation kernel\/} or  {\em Brownian kernel\/}.
In this case, for the parameter choice $\gamma=0$, $\lambda=\frac{1}{3}$ 
there are $c_1,c_2>0$ such that for all $x,y>0$ \eqref{eq:cond_kernel} holds.
Since here $0<\gamma + 2 \lambda = \frac{2}{3}<1$, the diffusive kernel belongs to the first category which will have some stationary solutions.

Several other coagulation kernels can be found in the physical and chemical literature. For instance, the derivation of  the free molecular kernel \eqref{KerBall-first}  and the Brownian kernel \eqref{KernBrow-first} is discussed in \cite{Fried}. The derivation of coagulation describing the aggregation between charged and neutral particles can be found in \cite{SB73}.
Applications of these three kernels to specific problems in chemistry can be found for instance in 
\cite{Vehkam}. 

Concerning the fragmentation coefficients $\Gamma_{\alpha,\beta}$, it is
commonly  assumed in the physics and chemistry literature  that these coefficients are related to the
coagulation coefficients by means of the following detailed balance condition (cf.\ for instance \cite{Vehkam})
\begin{equation}
\Gamma_{\alpha+\beta,\beta}=\frac{P_{\text{ref}}}{k_{B}T}\exp\!\left( \frac{\Delta
G_{\text{ref},\alpha+\beta}-\Delta G_{\text{ref},\alpha}-\Delta G_{\text{ref},\beta}}{k_{B}T}%
\right) K_{\alpha,\beta}   \label{DetailedBalance}
\end{equation}
where $\Delta G_{\text{ref},\alpha}$ is the Gibbs energy of formation of the
cluster $\alpha$ and $P_{\text{ref}}$ is the reference pressure at which these
energies of formation are calculated.  
Since we assume the coagulation kernel to be symmetric,  
$
K_{\alpha,\beta}=K_{\beta,\alpha}$, the fragmentation coefficients then satisfy a symmetry requirement
$\Gamma_{\alpha+\beta,\alpha}=\Gamma_{\alpha+\beta,\beta}$ for all $\alpha,\beta\in\mathbb{N}^{d}$.

In the processes of particle aggregation, usually the formation of larger
particles is energetically favourable, which means that%
\begin{equation*}
\Delta G_{\text{ref},\alpha+\beta}\ll\Delta G_{\text{ref},\alpha}+\Delta G_{\text{ref},\beta}\,.
\end{equation*}
Under this assumption, it follows from (\ref{DetailedBalance}) that%
\begin{equation*}
\Gamma_{\alpha+\beta,\beta}\ll K_{\alpha,\beta}\,,
\end{equation*}
and then we might expect to approximate the solutions of \eqref{B1} by means of the solutions of \eqref{B2}. 
The description of the precise conditions on the Gibbs free energy $\Delta
G_{\text{ref},\alpha }$ which would allow to make this approximation rigorous is an interesting mathematical problem that we do not address in the present paper.
Therefore,  we restrict our analysis here to the coagulation equations \eqref{B2} and \eqref{eq:time_evol}.

\subsection{Notations and plan of the paper}

Let $I$ be any interval such that $I \subset \R_+=[0,\infty)$. 
We reserve the notation $\R_*$ for the case $I=(0,\infty)$.
We will denote by $C_{c}(I )$ the space of compactly supported continuous
functions on $I$
and by $C_b(I)$ the space of functions that are continuous and bounded on $I$.
Unless mentioned otherwise, we endow both spaces with the standard supremum norm.  Then $C_b(I)$ is a Banach space and $C_{c}(I )$ is its subspace.
We denote the completion of $C_{c}(I)$ in $C_b(I)$ by $C_0(I)$ which naturally results in a Banach space.  For example, then $C_0(\R_+)$ is the space of continuous functions vanishing at infinity and 
$C_{0}(I )=C_{c}(I )=C_b(I)$ if $I$ is a finite, closed interval.

Moreover, we denote  by $ \mathcal{M}_{+}(I) $ the space of nonnegative  Radon measures on $I$.  
Since $I$ is locally compact, $ \mathcal{M}_{+}(I) $ can be identified with the space of positive linear functionals on $C_{c}(I )$ via Riesz--Markov--Kakutani theorem.
For measures $\mu\in \mathcal{M}_{+}(I) $, we denote its total variation norm by $\norm{\mu}$
and recall that since the measure is positive, we have $\norm{\mu}=\mu(I)$.
Unless $I$ is a closed finite interval, not all of these measures need to be bounded. 
The collection of bounded, positive measures is denoted by $\mathcal{M}_{+,b}(I):=\{\mu\in \mathcal{M}_{+}(I) \,|\,\mu(I)<\infty\}$ and we denote the collection of bounded complex Radon measures by  $\mathcal{M}_{b}(I)$.
We recall that the total variation indeed defines a norm in $\mathcal{M}_{b}(I)$, and this space is a Banach space which can be identified with the dual space $C_{0}(I )^*=C_c(T)^*$.
In addition, $\mathcal{M}_{+,b}(I)$ is a norm-closed subset of
$\mathcal{M}_{b}(I)$. 
Alternatively, we can endow both $\mathcal{M}_{b}(I)$ and  
$\mathcal{M}_{+,b}(I)$ with the $\ast$--weak topology which is generated by the functionals $\langle  \varphi,\mu\rangle=\int_I \varphi(x) \mu(d x)$ with $\varphi\in C_c(I)$.

We will use indistinctly $\eta(x) dx$ and $\eta(dx)$ to denote elements of these measure spaces.
The notation $\eta( dx) $ will be preferred when performing integrations or when we want to emphasize that the measure might not be absolutely continuous with respect to the Lebesgue measure. 

For the sake of notational simplicity,  in some of the proofs we will resort to a generic constant $C$ which  may change from line to line. 

\medskip

The paper is structured as follows. In Section \ref{sec:setting} we discuss the types of solutions considered here and we state the main results. In Section \ref{sec:existence} we prove the existence of steady states for the coagulation equation with source in the continuum case \eqref{eq:time_evol} assuming $|\gamma+2\lambda|<1$. We prove the complementary nonexistence of stationary solutions to \eqref{eq:time_evol} for $|\gamma+2\lambda|\geq 1$ in Section \ref{sec:nonexistence}. 
The analogous existence and nonexistence results for the discrete model \eqref{B2} are collected into 
Section \ref{sec:discr1d}.  In Section \ref{sec:estimates} we derive several further estimates for the solutions of both continuous and discrete models, including also estimates for moments of the solutions. These estimates imply in particular that the only relevant collisions are those between particles of comparable sizes.  
Finally, in Section \ref{sec:discr_cont} we prove that the stationary solutions of the discrete model \eqref{B2} behave as the solutions of the continuous model \eqref{eq:time_evol} for large cluster sizes.

\bigskip 

\section{Setting and main results}\label{sec:setting}

\subsection{Different types of stationary solutions for coagulation equations.}\label{sec:types_sol}

\bigskip

The stationary solutions to the discrete equation \eqref{B2} satisfy: 
\begin{equation}
0=\frac{1}{2}\sum_{\beta<\alpha}K_{\alpha-\beta,\beta}n_{\alpha-\beta}%
n_{\beta}-n_{\alpha}\sum_{\beta>0}K_{\alpha,\beta}n_{\beta}+s_\alpha \label{B2Stat}%
\end{equation}
where $\alpha\in\mathbb{N}$ and $s_\alpha$ is supported on a finite set of integers. Analogously, in the continuous case, the stationary solutions to \eqref{eq:time_evol} satisfy
\begin{equation}
0=\frac{1}{2}\int_{0}^{x}K\left(  x-y,y\right)  f\left(  x-y\right)
f\left(  y\right)  dy-\int_{0}^{\infty}K\left(  x,y\right)  f\left(
x\right)  f\left(  y\right)  dy+\eta\left(  x\right), \label{eq:contStat}%
\end{equation}
where the source term $\eta\left(  x\right)  $ is compactly supported in $[1, \infty)$.  Although we write the equation using a notation where $f$ and $\eta$ are given as functions, the equation can be extended in a natural manner to allow for measures.  The details of the construction are discussed in Sec.\ \ref{sec:existence} and the explicit weak formulation may be found in (\ref{eq:stationary_eq}).

We remark that  equation \eqref{B2Stat}  can be written  as 
\begin{equation}
J_\alpha\left( n\right) - J_{\alpha-1}\left( n\right) = \alpha s_\alpha \,, \qquad \text{ for } \alpha \geq 1\,, \label{fluxDStat}%
\end{equation}
where we define $J_0(n)=0$ and, for $\alpha \geq 1$, we set 
\begin{equation*}
J_\alpha(n) = \sum\limits_{\beta=1}^\alpha \sum\limits_{\gamma=\alpha-\beta+1}^\infty K(\beta,\gamma) \beta n_\beta n_\gamma \,.
\end{equation*}
Notice that we will use indistintly the notation $ K_{\beta,\gamma}$ or $ K(\beta,\gamma)$.  
On the other hand, for sufficiently regular functions $f$ equation \eqref{eq:contStat} can similarly be written as
\begin{equation}
\partial_{x}J\left(  x;f\right)  =x\eta\left(  x\right)  \label{fluxStat}%
\end{equation}
where 
\begin{equation}\label{eq:flux}
J\left(  x;f\right)  =\int_{0}^{x}dy\int_{x-y}^{\infty}dz K(y,z)
y f\left(  y\right)  f\left(  z\right).
\end{equation}
This implies that the fluxes $J_\alpha(n)$ and $J(x;f)$ are constant for $\alpha$ and $x$ sufficiently large due to the fact that  $s$ is supported in a finite set and $\eta$ is compactly supported, and we prove in Lemma \ref{lem:flux} that this property continues to hold even when $f$ is a measure.
If $s_\alpha$ or $\eta(x)$ decay sufficiently fast for large values of $\alpha $ or $x$ then  $J_\alpha(n)$ or $J(x;f)$ converges to a positive constant as $\alpha$ or $x$ tend to infinity.

Given that other concepts of stationary solutions are found in the physics literature, we will call the  solutions of \eqref{B2Stat} and \eqref{eq:contStat} {\it stationary injection solutions}.
In this paper we will be mainly concerned with these solutions.  The physical meaning of these solutions, when they exist, is that  it is possible to  transport monomers 
towards  large clusters at the same rate at which  the monomers are added into the system.

For comparison,  let us also discuss briefly other concepts of stationary solutions and the relation with the stationary injection solutions.  One case often considered in the physics literature are {\it constant flux solutions} (cf. \cite{T96}). 
These are solutions of \eqref{eq:contStat} with $\eta \equiv 0$  satisfying 
\begin{equation}
J(x;f)=J_0\,, \quad \text{for } x>0\,,\label{eq:flux_J0}
\end{equation}
where $J_0 \in \R_+$ and $J(x;f)$ is defined in \eqref{eq:flux}. 
Explicit stationary solutions for coagulation equations have been obtained and discussed in \cite{DG92, DS95, DS92_95, L99, S83}. In these papers the collision kernel $K$ under consideration is not homogeneous. In the case of homogeneous kernels $K$ there is an explicit method to obtain power solutions of  \eqref{eq:contStat} by means of some transformations of the domain of integration that were introduced by Zakharov in order to study the solutions of the Weak Turbulence kinetic equations (cf.\ \cite{Zakharov1, Zakharov2}). Zakharov's method has been applied to coagulation equations in \cite{Connaughton}.

Alternatively, we can obtain power law solutions of \eqref{eq:flux_J0} using the homogeneity $\gamma$ of the kernel (cf.\ \eqref{eq:homogeneity_cond}).
Indeed, suppose that 
 $f\left(  x\right)  =c_{s}\left(  x\right)  ^{-\alpha}$ for some $c_{s}$ positive and $\alpha \in \R$. 
Using the homogeneity of the kernel $K$ we obtain
\begin{align*}
J(x;f)=G\left(
\alpha\right)  \left(  c_{s}\right)  ^{2}\left(  x\right)  ^{3+\gamma-2\alpha}
\end{align*}
under the assumption that%
\begin{equation}
G\left(  \alpha\right)  =\int_{0}^{1}dy\int_{1-y}^{\infty}dzK\left(
y,z\right)  \left(  y\right)  ^{1-\alpha}\left(  z\right)  ^{-\alpha}%
<\infty\,.\label{PowLawCond}%
\end{equation}
Using \eqref{eq:flux_J0}, we then obtain $\alpha = (3+\gamma)/2$ and  $c_{s}=\sqrt{\frac{J_{0}}{G\left(  \alpha\right)  }}.$ Therefore, \eqref{PowLawCond} holds if and only if $|\gamma+2\lambda| <1$. Notice that
(\ref{PowLawCond}) yields a necessary and sufficient condition to have a power
law solution of (\ref{eq:flux_J0}). 
However, one should not assume that all solutions of \eqref{eq:flux_J0} are given by a power law;
indeed, we have preliminary evidence that there exist smooth homogeneous kernels satisfying \eqref{eq:cond_kernel} for which there are non- power law solutions to \eqref{eq:flux_J0}.

Finally, let us mention one more type of solutions  associated with the discrete coagulation equation \eqref{B2Stat} that have some physical interest.   
This is the {\it boundary value problem\/} in which the
concentration of monomers is given and the coagulation equation \eqref{B2Stat} is satisfied for clusters containing two or more monomers ($\alpha  \geq 2$). The problem then becomes
\begin{eqnarray}\label{eq:bvp}
0&=&\frac{1}{2}\sum_{\beta<\alpha}K_{\alpha-\beta,\beta}n_{\alpha-\beta}%
n_{\beta}-n_{\alpha}\sum_{\beta>0}K_{\alpha,\beta}n_{\beta}\,, \quad\text{for
}\alpha\geq2\, \nonumber \\ 
 n_{1} &=& c_{1}
 \end{eqnarray}
 where $c_1 >0$ is given.

Notice that if we can solve the injection problem \eqref{B2Stat} for some source $s=s_1\delta_{\alpha,1} $ with $s_1 >0$, then we can solve the  boundary value problem \eqref{eq:bvp} for any $c_1>0$. Indeed, let us denote by $N_\alpha(s_1)$, $\alpha\in \N$,  the solution to \eqref{B2Stat} with source $s=s_1\delta_{\alpha,1} $. Then equation \eqref{B2Stat} for $\alpha=1$ reduces to 
$$  N_1(s_1) \sum_{\beta>0} K_{1\beta}N_\beta(s_1) = s_1\, .$$
This implies that $0<N_1(s_1) <\infty$.
Then the solution to \eqref{eq:bvp} is given by
 $$n_\alpha = c_1\frac{N_\alpha(s_1)}{N_1(s_1)}.$$
Moreover, if we can solve  \eqref{B2Stat} for some $s_1 > 0$, then we can solve \eqref{B2Stat} for arbitrary values of $s_1$ due to the fact that if $n$ is a solution of \eqref{B2Stat} with source $s$ then for any $\Lambda>0,$ $\sqrt{\Lambda}n$ solves \eqref{B2Stat} with source $\Lambda n.$ 

In this paper we will consider the problems \eqref{B2Stat} and \eqref{eq:contStat} in Sections \ref{sec:setting} to \ref{sec:estimates}. In Section \ref{sec:discr_cont} we prove that  a rescaled version of the solutions to \eqref{B2Stat} and \eqref{eq:contStat} behaves for large cluster sizes as a solution to  \eqref{eq:flux_J0}.   We will not discuss solutions to \eqref{eq:bvp} in this paper.

\medskip

{
In this paper we will study the solutions of \eqref{B2Stat}, \eqref{eq:contStat} for kernels
$K\left(  x,y\right)  $ which behave for large clusters as $x^{\gamma+\lambda}%
y^{-\lambda}+y^{\gamma+\lambda}x^{-\lambda}$ for suitable coefficients
$\gamma,\lambda\in\mathbb{R}$ in the case of the equation  \eqref{eq:contStat}  as well as
their discrete counterpart in the case of \eqref{B2Stat}. (See next Subsection for the
precise assumptions on the kernels, in particular \eqref{eq:cond_kernel2}, \eqref{eq:cond_kernel3}.) The main
result that we prove in this paper is that the equations \eqref{B2Stat},  \eqref{eq:contStat} with
nonvanishing source terms $s_{\alpha},\ \eta$, respectively, have a solution if $\left\vert
\gamma+2\lambda\right\vert <1$ and they do not have solutions at all if
$\left\vert \gamma+2\lambda\right\vert \geq1.$ The heuristic idea behind this
result is easy to grasp. We will describe it in the case of the equation
 \eqref{eq:contStat}, since the main ideas are similar for \eqref{B2Stat}. }

{The equation \eqref{eq:contStat} can be reformulated as \eqref{fluxStat}, \eqref{eq:flux}. Since $\eta$ is
compactly supported we obtain that $J\left(  x;f\right)  $ is a constant
$J_{0}>0$ for $x$ sufficiently large. The homogeneity of the kernel $K\left(
x,y\right)  =x^{\gamma+\lambda}y^{-\lambda}+y^{\gamma+\lambda}x^{-\lambda}$
suggests that the solutions of the equation $J\left(  x;f\right)  =J_{0}$
should behave as $f\left(  x\right)  \approx Cx^{-\frac{\gamma+3}{2}}$ for
large $x,\ $with $C>0.$ Actually this statement holds in a suitable sense that
will be made precise later. However, this asymptotic behaviour for $f\left(
x\right)  $ cannot take place if $\left\vert \gamma+2\lambda\right\vert \geq1$
because the integral in \eqref{eq:flux} would be divergent. Therefore, solutions to
 \eqref{eq:contStat} can only exist for $\left\vert \gamma+2\lambda\right\vert <1.$ }

\subsection{Definition of solution and main results}

We restrict our analysis to the kernels satisfying \eqref{eq:cond_kernel}, or at least one of the inequalities there.  To account for all the relevant cases, let us summarize the assumptions on the kernel slightly differently here.  We always assume that 
\begin{equation}
K:\mathbb{R}_*\times \mathbb{R}_*\rightarrow \mathbb{R}_{+}\, ,\quad K%
\text{ is continuous} \,, \label{ContinAssumpt}
\end{equation}
and
for all $x,y$, 
\begin{equation}
K(x,y)\geq 0\, ,\qquad K(x,y)=K(y,x)\, .  \label{eq:cond_kernel1}
\end{equation}%
We also only consider kernels for which one may find $\gamma,\lambda\in \R$ such that at least one of the following holds:
there is $c_1>0$ such that for all $(x,y)\in \mathbb{R}_*^{2}$
\begin{equation}
K\left( x,y\right) \geq c_{1}\left( x^{\gamma +\lambda }y^{-\lambda
}+y^{\gamma +\lambda }x^{-\lambda }\right)\,,  \label{eq:cond_kernel2}
\end{equation}%
and/or there is $c_2>0$ such that for all $(x,y)\in \mathbb{R}_*^{2}$
\begin{equation}
K\left( x,y\right) \leq c_{2}\left( x^{\gamma +\lambda }y^{-\lambda
}+y^{\gamma +\lambda }x^{-\lambda }\right) \,. \label{eq:cond_kernel3}
\end{equation}%

The class of kernels satisfying all of the above assumptions includes many of the most commonly encountered coagulation kernels. It includes in particular the Smoluchowski (or
Brownian) kernel (cf.\ \eqref{KernBrow})
and the free molecular kernel (cf.\ \eqref{eq:BalKer}).

The source rate is assumed to be given by $\eta \in \mathcal{M}_{+}\left( \mathbb{R}%
_*\right) $ and to satisfy 
\begin{equation}
\supp\left( \eta \right) \subset \left[ 1,L_{\eta }\right] \text{ for some }%
L_{\eta }\geq 1\, .  \label{eq:cond_eta}
\end{equation}
Note that then always $\eta \left( \mathbb{R}%
_*\right) <\infty$, i.e., the measure $\eta$ is bounded.

We study the existence of stationary injection solutions to equation~%
\eqref{eq:time_evol} in the following precise sense:
\begin{definition}
\label{DefFluxSol} Assume that $K:{\R}_*^{2}\rightarrow {\mathbb{R}%
}_{+}$ is a continuous function satisfying \eqref{eq:cond_kernel1} and the upper bound %
\eqref{eq:cond_kernel3}. Assume further that $\eta \in \mathcal{M}_{+}\left( \mathbb{%
R}_*\right) $ satisfies \eqref{eq:cond_eta}. We will say that $f\in 
\mathcal{M}_{+}\left( \mathbb{R}_*\right) ,$ satisfying $f\left( \left(
0,1\right) \right) =0$ and
\begin{equation}
\int_{\R_* }x^{\gamma +\lambda }f\left( dx\right) + \int_{\R_* }x^{-\lambda }f\left( dx\right) <\infty\,,
\label{eq:moment_cond}
\end{equation}
is a stationary injection solution of \eqref{eq:time_evol} if the following
identity holds for any test function 
$\varphi \in C_{c}({\R}_*)$: 
\begin{equation}
\frac{1}{2}\int_{\R_*}\int_{\R_*} K\left( x,y\right) \left[
\varphi \left( x+y\right) -\varphi \left( x\right) -\varphi \left( y\right) %
\right] f\left( dx\right) f\left( dy\right) +\int_{\R_*}\varphi
\left( x\right) \eta \left( dx\right) =0\,.  \label{eq:stationary_eq}
\end{equation}
\end{definition}

\begin{remark}\label{rem:defFluxSol}
Definition \ref{DefFluxSol}, or a discrete version of it, will be used throughout most of the paper (cf.\ Sections \ref{sec:setting} to \ref{sec:estimates}). 
In Section \ref{sec:discr_cont}, we will use a more general notion of a stationary injection solution to \eqref{eq:time_evol}, considering source terms  $\eta$ which satisfy $ \supp \eta \subset [a, b]$ for some given constants $a$ and $b$ such that $0<a<b$.  
Then we require that $f\in \mathcal{M}_{+}\left( \mathbb{R}_*\right)$ and $f((0, a))=0$, in addition to (\ref{eq:moment_cond}).  Note that for such measures we have $\int_{\R_*} f(dx) = \int_{[a,\infty)} f(dx)$.  The generalized case is straightforwardly reduced to the above setup by rescaling space via the change of variables $x'=x/a$.
\end{remark}

The condition $f\left( \left(0,1\right) \right) =0$ is a natural requirement for stationary solutions of the
coagulation equation, given that $\eta \left( \left( 0,1\right) \right) =0$.
As we show next, 
the second integrability condition (\ref{eq:moment_cond}) is the minimal one needed to have well defined integrals
in the coagulation operator.

First, note that all the integrals appearing in \eqref{eq:stationary_eq} are well
defined for any $\varphi \in C_{c}\left( \mathbb{R}_*\right)$ with $\supp%
\varphi \subset ( 0,L]$, because we can then restrict the domain of
integration to the set $\left\{ \left( x,y\right) \in \left[ 1,L\right]
\times \left[ 1,\infty \right)\right\} $ in the term containing $\varphi \left(
x\right)$, and to the set $\left\{ \left( x,y\right) \in \left[ 1,L\right]
^{2}\right\} $ in the term containing $\varphi \left( x+y\right)$. In addition, \eqref{eq:cond_kernel3} implies  that $K\left(
x,y\right) \leq \tilde{C}_{L}[y^{\gamma + \lambda}+y^{-\lambda}]$ for $\left( x,y\right) \in \left[ 1,L\right]
\times \left[ 1,\infty \right)$.
Therefore,
\begin{align*}
& \int_{\mathbb{R}_*}\int_{\mathbb{R}_*}K\left( x,y\right) \left\vert
\varphi \left( x+y\right) \right\vert f\left( dx\right) f\left( dy\right) 
\leq C_{L}\left( \int_{\left[ 1,L\right] }f\left( dx\right) \right) ^{\! 2}\,, \\
& \int_{\mathbb{R}_*}\int_{\mathbb{R}_*}K\left( x,y\right) \left\vert
\varphi \left( x\right) \right\vert f\left( dx\right) f\left( dy\right)
\leq C_{L}\left(
\int_{\mathbb{R}_*}y^{\gamma +\lambda }f\left( dy\right)+\int_{\mathbb{R}_*}y^{-\lambda }f\left( dy\right) \right)\int_{\left[ 1,L\right] }f\left( dx\right) \,,
\end{align*}%
where $C_{L}$ depends on $\varphi$, $\gamma$, and $\lambda$.  Then, the assumption~\eqref{eq:moment_cond} in the
Definition \ref{DefFluxSol} implies that all the integrals appearing in %
\eqref{eq:stationary_eq} are convergent.

We now state the main results of this paper.
\begin{theorem}
\label{thm:existence} Assume that $K$ satisfies~\eqref{ContinAssumpt}--%
\eqref{eq:cond_kernel3} and $| \gamma +2\lambda | <1.$ Let $\eta \neq 0  $ satisfy \eqref{eq:cond_eta}. Then, there exists a stationary injection solution $f\in 
\mathcal{M}_{+}\left( \mathbb{R}_*\right) $, $f\neq 0$, to~%
\eqref{eq:time_evol} in the sense of Definition~\ref{DefFluxSol}. 
\end{theorem}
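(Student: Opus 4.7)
The plan is to produce the solution via a truncation-and-compactness scheme: regularize \eqref{eq:time_evol} by a kernel cut-off so that a stationary solution of the regularized problem can be obtained by a fixed-point argument, derive $R$-uniform a priori estimates on moments of these approximations using the constant-flux identity together with the lower bound \eqref{eq:cond_kernel2}, and then pass to the weak-$\ast$ limit in $\mathcal{M}_+(\mathbb{R}_*)$. Concretely, I would fix $R > L_\eta$ and truncate the kernel by $K_R(x,y) := K(x,y)\,\cf_{\{x+y \leq R\}}$, so that $K_R$ is bounded, continuous, symmetric on $[1,R]^2$, and the truncated dynamics remain in $[1,R]$. The associated coagulation operator is continuous bilinear on the weak-$\ast$ compact convex set $\{\mu \in \mathcal{M}_+([1,R]) : \mu([1,R]) \leq M\}$ for $M$ sufficiently large, so a standard fixed-point argument (Tikhonov/Schauder, applied either to the long-time average of the solution of the truncated time-dependent problem or to a direct linearization of the bilinear stationary equation) produces a stationary solution $f_R$ of the truncated analogue of \eqref{eq:stationary_eq}, which I then extend by zero to a measure on $\mathbb{R}_*$.

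The heart of the proof is the derivation of $R$-uniform a priori estimates, and it is here that the hypothesis $|\gamma+2\lambda|<1$ enters decisively. Testing the truncated stationary equation against suitable Lipschitz cutoffs approximating $\varphi(\xi) = \min(\xi,x)$ yields, as in Lemma~\ref{lem:flux}, the flux identity
\[
J(x; f_R) = J_0 + o_R(1), \qquad J_0 := \int_{\mathbb{R}_*} y\,\eta(dy) > 0,
\]
valid for $L_\eta < x \ll R$. Combining this with the lower bound \eqref{eq:cond_kernel2} and splitting the double integral in \eqref{eq:flux} into the dyadic region $\{1\leq y \leq x/2,\ z\geq x/2\}$ and its symmetric counterpart gives, for every such $x$,
\[
J_0 \geq c\,\Bigl(\int_{1}^{x/2} y^{\gamma+\lambda+1} f_R(dy)\Bigr)\Bigl(\int_{x/2}^{\infty} z^{-\lambda} f_R(dz)\Bigr) + (\text{symmetric term}).
\]
Varying $x$ and bootstrapping between the two resulting inequalities to decouple the product of moments yields
\[
\int_{\mathbb{R}_*} x^{\gamma+\lambda} f_R(dx) + \int_{\mathbb{R}_*} x^{-\lambda} f_R(dx) \leq C
\]
with $C$ independent of $R$. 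Convergence of the dyadic pieces on both sides mirrors the convergence of $G((\gamma+3)/2)$ in \eqref{PowLawCond}, which is exactly the condition $|\gamma+2\lambda|<1$.

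With these estimates in hand, Banach--Alaoglu delivers a subsequence $f_R \overset{\ast}{\rightharpoonup} f$ in $\mathcal{M}_+(\mathbb{R}_*)$; the moment bounds persist under the limit by lower semicontinuity against nonnegative continuous weights, so \eqref{eq:moment_cond} holds for $f$. For any $\varphi \in C_c(\mathbb{R}_*)$ with $\supp\varphi \subset (0,L]$, the moment bounds together with the upper bound \eqref{eq:cond_kernel3} dominate the integrands in \eqref{eq:stationary_eq} uniformly in $R$ on the relevant product regions, exactly as in the estimate immediately following Definition~\ref{DefFluxSol}; dominated convergence then passes the limit term by term in the truncated weak formulation and yields that $f$ solves \eqref{eq:stationary_eq}, with $f \neq 0$ following from $J_0 > 0$ being incompatible with $f \equiv 0$. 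The main obstacle is the uniform moment estimate: the constant-flux identity supplies only a \emph{product} of moments, not either of them individually, and decoupling them requires playing them off each other across different scales $x$; this step also selects the sharp threshold $|\gamma+2\lambda|<1$, so the same difficulty that makes the estimate delicate simultaneously explains the sharpness of the hypothesis.
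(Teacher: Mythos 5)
Your overall architecture (truncate, solve, estimate uniformly, pass to the weak-$\ast$ limit) matches the paper's, but the decisive step---the $R$-uniform moment bound---does not go through as you describe, and that is precisely where the paper does something essentially different.

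\textbf{The flux decomposition does not give what you need.} First, a minor but real error: the set $\{1\le y \le x/2,\ z\ge x/2\}$ is \emph{not} contained in the flux region $\Omega_x=\{0<y\le x,\ z>x-y\}$ from Lemma~\ref{lem:flux}; for $y\le x/2$ one has $x-y\ge x/2$, so points with $x/2\le z\le x-y$ are excluded. Replacing $z\ge x/2$ by $z\ge x$ repairs the inclusion and yields, with the lower bound \eqref{eq:cond_kernel2},
\begin{equation*}
J_0 \ \ge\ c_1\Bigl(\int_{1}^{x/2} y^{\gamma+\lambda+1} f_R(dy)\Bigr)\Bigl(\int_{x}^{\infty} z^{-\lambda} f_R(dz)\Bigr).
\end{equation*}
But the fatal problem is that this bounds a \emph{product} of two quantities that in general cannot be decoupled: the inequality is compatible with one factor being huge and the other tiny. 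Your ``bootstrapping across scales'' is not spelled out, and I do not see how to make it work. Indeed, the paper's Corollary~\ref{cor:moments}(b) shows that for the limit solution $\int x^{(\gamma+1)/2}f(dx)=\infty$, and since $\gamma+\lambda+1>\tfrac{\gamma+1}{2}$ is equivalent to $\gamma+2\lambda>-1$ (always true under the standing hypothesis), the first factor $\int_1^{x/2}y^{\gamma+\lambda+1}f(dy)$ diverges as $x\to\infty$. So the product inequality only forces the second factor to vanish at some unspecified rate; it carries no uniform lower bound on the first factor at finite $x$ with which to quantitatively bound the second, and hence it does not yield $\int z^{-\lambda}f_R(dz)\le C$ uniformly in $R$.

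\textbf{What the paper does instead.} Rather than using the off-diagonal product, the paper restricts the flux to the \emph{diagonal} square $[2z/3,z]^2\subset\Omega_z$, where $K\gtrsim z^\gamma$ and $x\approx z$, giving a single squared quantity:
\begin{equation*}
z^{\gamma+1}\Bigl(\int_{[2z/3,z]}f(dx)\Bigr)^2 \ \lesssim\ J_0,
\end{equation*}
i.e.\ $\frac1z\int_{[2z/3,z]}f(dx)\lesssim \sqrt{J_0}\,z^{-(\gamma+3)/2}$. Lemma~\ref{lem:bound} then converts this running-average bound directly into the moment bounds \eqref{eq:moment_cond}, with $|\gamma+2\lambda|<1$ entering as the integrability threshold $\mu<\tfrac{\gamma+1}{2}$ applied to $\mu=\gamma+\lambda$ and $\mu=-\lambda$. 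Because the diagonal gives a square rather than a product, there is no decoupling problem. This is the key technical ingredient your outline is missing.

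\textbf{Truncation scheme.} Your cutoff $K_R=K\,\cf_{\{x+y\le R\}}$ is discontinuous, but more importantly it kills the kernel on a large subset of $[1,R]^2$ (whenever $x+y>R$), so there is no uniform lower bound on the truncated kernel and the invariant-region estimate $\frac{d}{dt}\int f\le -\tfrac{a_1}{2}(\int f)^2+c_0$ does not follow. The paper handles this by truncating only the \emph{gain} term with $\zeta_{R_*}$ while keeping the kernel strictly positive (via the additional $\varepsilon$-regularization in \eqref{eq:1levTrunc}), and then needs a two-parameter limit ($R_*\to\infty$ at fixed $\varepsilon$, then $\varepsilon\to 0$). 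The $\varepsilon$-parameter is not cosmetic: the first round of uniform estimates \eqref{EstTruncFunction} uses $K_{\varepsilon,R_*}\ge\varepsilon$, and the sharp decay in \eqref{A1} is obtained from the kernel's actual lower bound only after removing $R_*$. Finally, passing to the limit in the bilinear loss term $\iint K\varphi(x)f(dx)f(dy)$ requires uniform control of the tail $\{y\ge M\}$ (the paper's cases (C1)--(C4)); a bare appeal to dominated convergence does not close this, because the convergence of $f_R\times f_R$ is only weak-$\ast$ and the kernel is unbounded.
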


\begin{theorem}
\label{thm:NonExistence} Suppose that $K\left( x,y\right) $ satisfies~%
\eqref{ContinAssumpt}--\eqref{eq:cond_kernel3} as well as $| \gamma +2\lambda | \geq 1.$ Let
us assume also that $\eta \neq 0$  satisfies \eqref{eq:cond_eta}. Then, there is no solution of~\eqref{eq:time_evol} in the sense of Definition \ref{DefFluxSol}.
\end{theorem}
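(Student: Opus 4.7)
The proof will proceed by contradiction: assume that $f \in \mathcal{M}_+(\R_*)$, $f \neq 0$, is a stationary injection solution of \eqref{eq:time_evol} with source $\eta \neq 0$ satisfying \eqref{eq:cond_eta}, and suppose $|\gamma + 2\lambda| \geq 1$. The plan has three steps, closely following the heuristic given on page~\pageref{fluxStat}.

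\textbf{Step 1 (constant flux identity).} I would first establish that the flux $J(R;f)$ defined in \eqref{eq:flux} is well-defined and satisfies
$$J(R;f) = J_0 := \int_{\R_*} y\,\eta(dy) > 0 \qquad \text{for every } R > L_\eta.$$
The positivity of $J_0$ follows from $\eta \neq 0$ together with $\supp \eta \subset [1, L_\eta]$. The identity itself is obtained by inserting into \eqref{eq:stationary_eq} a sequence of test functions that approximate $x \mapsto \min(x,R)$ and extracting \eqref{fluxStat} in the distributional sense; this is essentially the content of the Lemma \ref{lem:flux} referred to in the excerpt.

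\textbf{Step 2 (the novel power-law lower bound).} The core of the argument is showing that there exist constants $c>0$, $K>1$ and $R_* > L_\eta$ such that
$$\int_{[R, KR]} f(dx) \geq c\, R^{-(\gamma+1)/2} \qquad \text{for all } R \geq R_*.$$
The heuristic behind the exponent is that a stationary solution sustaining a constant positive flux at scale $R$ must carry at least the Zakharov-type density $f \sim x^{-(\gamma+3)/2}$; a solution decaying faster could not transport $J_0$. Rigorously, starting from the identity
$$J_0 = \int_0^R y\, f(dy) \int_{R-y}^\infty K(y,z)\, f(dz),$$
I would split the domain of integration dyadically in both $y$ and $z$, use the upper bound \eqref{eq:cond_kernel3} to estimate the off-diagonal contributions (where $y \ll z$ or $z \ll y$), and bound these by the product of the relevant tail moments of $f$ via a Cauchy--Schwarz-type inequality. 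The remaining diagonal contribution, coming from $y, z \sim R$, is controlled from above by $C\, R^{\gamma+1} M(R)^2$, where $M(R) := \int_{[R/K, KR]} f(dx)$. Matching this with the positive lower bound $J_0 > 0$ yields $M(R) \geq c'\, R^{-(\gamma+1)/2}$.

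\textbf{Step 3 (contradiction via the moment condition).} Consider first the case $\gamma + 2\lambda \geq 1$. Covering $[R_*, \infty)$ by the disjoint intervals $[K^k R_*, K^{k+1} R_*]$, $k = 0,1,2,\ldots$, and using Step 2 on each of them,
$$\int_{[R_*, \infty)} x^{\gamma+\lambda}\, f(dx) \geq c \sum_{k=0}^{\infty} (K^k R_*)^{\gamma+\lambda} \cdot (K^k R_*)^{-(\gamma+1)/2} = c\, R_*^{(\gamma + 2\lambda -1)/2} \sum_{k=0}^{\infty} K^{k(\gamma + 2\lambda -1)/2}.$$
Since $(\gamma + 2\lambda -1)/2 \geq 0$, the geometric sum diverges, so $\int x^{\gamma+\lambda} f(dx) = +\infty$, contradicting the moment condition \eqref{eq:moment_cond} required in Definition~\ref{DefFluxSol}. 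The symmetric case $\gamma + 2\lambda \leq -1$ is handled identically, contradicting instead the moment $\int x^{-\lambda} f(dx) < \infty$: the same dyadic summation gives
$$\int_{[R_*, \infty)} x^{-\lambda}\, f(dx) \geq c\, R_*^{-(\gamma + 2\lambda +1)/2} \sum_{k=0}^{\infty} K^{-k(\gamma + 2\lambda +1)/2} = +\infty.$$

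The main obstacle is Step 2. Because $f$ is only a Radon measure, pointwise information is unavailable and the lower bound must be extracted solely from integral identities; the delicate point is isolating the diagonal contribution $y \sim z \sim R$ to the flux and showing that off-diagonal contributions cannot by themselves carry the positive constant flux $J_0$. This is where the full strength of the two-sided kernel bound \eqref{eq:cond_kernel2}--\eqref{eq:cond_kernel3} enters.
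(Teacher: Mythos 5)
Your Steps~1 and~3 are correct in spirit, but Step~2 contains a fatal gap, and it is conceptual rather than technical. You claim that in the regime $|\gamma+2\lambda|\geq1$ the constant flux $J_0$ is carried by the diagonal region $y\sim z\sim R$, so that bounding the diagonal contribution above by $CR^{\gamma+1}M(R)^2$ and matching it with $J_0>0$ yields $M(R)\gtrsim R^{-(\gamma+1)/2}$. This requires establishing that the off-diagonal contribution is strictly smaller than $J_0$, but in fact the opposite is true: under the moment condition~\eqref{eq:moment_cond}, which you assume by contradiction, the \emph{diagonal} contribution tends to zero, and the entire flux is carried by the off-diagonal region. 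Writing $a=\gamma+\lambda$ for the case $\gamma+2\lambda\geq1$ (the other case is symmetric), the moment hypothesis $\int x^a f(dx)<\infty$ forces $M(R)=o(R^{-a})$ as $R\to\infty$, and therefore the diagonal contribution satisfies
\[
\lesssim R^{\gamma+1}M(R)^2 = o\!\left(R^{\,1-\gamma-2\lambda}\right)\longrightarrow 0
\]
precisely because $\gamma+2\lambda\geq1$. Thus the flux identity yields no lower bound on $M(R)$ at all; it merely records that the off-diagonal part converges to $J_0$. The Zakharov exponent $(\gamma+3)/2$ governs the complementary regime $|\gamma+2\lambda|<1$, where collisions between comparable sizes do dominate (Lemma~\ref{lem:J}, Proposition~\ref{prop:estimate}); when $|\gamma+2\lambda|\geq1$ the transport mechanism switches to collisions between a cluster near size $R$ and one of size $y\leq\delta x$, and the heuristic decay becomes $f\sim x^{-(\gamma+\lambda+1)}$, which differs from your proposed $x^{-(\gamma+3)/2}$ by the factor $x^{(\gamma+2\lambda-1)/2}$.

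The paper's actual proof exploits off-diagonal dominance rather than fighting it. After showing that the part with $y>\delta x$ (which contains your diagonal) vanishes as $R\to\infty$, the surviving part $y\leq\delta x$ is rewritten as a nonlocal differential inequality for the tail $F(R)=\int_{(R,\infty)}f(dx)$ (or the cumulative mass when $a<0$) of the form
\[
-\int_{[1,\delta R]}\bigl[F(R-y)-F(R)\bigr]\,y^{b}\,f(dy)\leq-\frac{C}{R^{a+1}}\,,\qquad R\geq R_0\,,
\]
with $b=-\lambda$. Lemmas~\ref{lem:F+estimate} and~\ref{lem:F-estimate} then convert this, by a comparison/maximum-principle argument against the explicit barrier $F_\ast(R)=2B/R^a$, into the weaker but provable lower bound $F(R)\gtrsim R^{-a}$ (logarithmic when $a=0$), which is still just strong enough to contradict $\int x^{a}f(dx)<\infty$. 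To repair your argument you would have to abandon the diagonal route entirely and redo Step~2 along these lines.
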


\begin{remark}
Notice that the free molecular kernel defined as in \eqref{eq:BalKer} satisfies %
\eqref{eq:cond_kernel1}--\eqref{eq:cond_kernel3} with $\gamma =\frac{1}{6},\
\lambda =\frac{1}{2}$. Then, since $\gamma +2\lambda >1$, we are in the
Hypothesis of Theorem \ref{thm:NonExistence} which implies that there are
no solutions of \eqref{eq:time_evol} in the sense of the Definition \ref%
{DefFluxSol} for the kernel \eqref{eq:BalKer} and some $\eta\ne 0$. On the other hand, in the
case of the Brownian kernel defined in (\ref{KernBrow}) with $\gamma = 0 $ and $\lambda = \frac{1}{3}$ the assumptions of Theorem %
\ref{thm:existence} hold and nontrivial stationary injection solutions in the
sense of Definition \ref{DefFluxSol} exist for each $\eta$ satisfying
(\ref{eq:cond_eta}).
\end{remark}

\begin{remark}
We observe that if $\eta =0$, there is a trivial stationary solution to~%
\eqref{eq:time_evol} given by $f=0$. On the other hand, if $\eta \ne 0$, then $f=0$ cannot be a solution.
\end{remark}

\begin{remark} Assumption \eqref{eq:cond_eta} is motivated by specific problems in chemistry \cite{Vehkam} which have a source of monomers $s_{\alpha}=s\delta_{\alpha,1}$ only. 
However, in all the results of this paper this assumption could be replaced by the most general condition
\begin{equation}
\operatorname*{supp}\left(  \eta\right)  \subset\left[  1,\infty\right)
,\ \ \int_{\left[  1,\infty\right)  }x\eta\left(  dx\right)  <\infty
\label{GeneralAssumpt}%
\end{equation}
and in the discrete case, the analogous condition \eqref{eq:Dcond_s} could be replaced by 
$\sum_{\alpha=1}^{\infty}\alpha s_{\alpha}<\infty$. 
Indeed, it is easily seen  that the only property of the source term $\eta$ that  is used in the arguments of the proofs, both in the existence and
non-existence results, is that:
\begin{equation}
\frac{J}{2}\leq\int_{\left[  1,L_{\eta}\right]  }x\eta\left(  dx\right)  \leq
J\ \ ,\ \ \text{where }\int_{\left[  1,\infty\right)  }x\eta\left(  dx\right)
=J\in\left(  0,\infty\right)  \label{WeakCondA1}%
\end{equation}
for some $L_{\eta}$ sufficiently large, or an analogous condition in the
discrete case, which follows immediately from (\ref{GeneralAssumpt}).
Moreover, it seems feasible to extend the support of $\eta$ to all positive real numbers $\mathbb{R}_{\ast}$, by assuming suitable smallness conditions for $f$ and $\eta$ near the origin (for instance in the form of a
bounded moment) in order to avoid fluxes of volume of particles coming from $x=0$. This would lead us to consider issues different from the
main ones considered in this paper, therefore we decided to not further consider this case here.
\end{remark}

The flux of mass from small to large particles at the stationary state is computed in the next lemma for the above measure-valued solutions.  In comparison to (\ref{eq:flux}), then one needs to refine the definition by using a right-closed interval for the first integration and an open interval for the second integration, as stated in (\ref{eq:flux_lem}) below.
\begin{lemma}\label{lem:flux}
Suppose that the assumptions of Theorem~\ref{thm:existence} hold. Let $f$ be a stationary injection solution in the sense of Definition~\ref{DefFluxSol}. Then $f$ satisfies for any $R>0$ 
\begin{equation}\label{eq:flux_lem}
\int_{ (0,R]}\int_{(R-x,\infty)} K(x,y)x f(dx) f(dy) = \int_{(0,R]} x \eta(dx) \, .
\end{equation}
\end{lemma}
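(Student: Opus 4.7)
I plan to obtain \eqref{eq:flux_lem} by inserting into the weak formulation \eqref{eq:stationary_eq} a continuous approximation of the discontinuous test function $\psi(x):=x\mathbf{1}_{(0,R]}(x)$. Applied to $\psi$, the source side of \eqref{eq:stationary_eq} gives $\int_{(0,R]} x\,\eta(dx)$, and
\[
\psi(x+y)-\psi(x)-\psi(y)=(x+y)\mathbf{1}_{x+y\le R}-x\mathbf{1}_{x\le R}-y\mathbf{1}_{y\le R}.
\]
Using the symmetry $K(x,y)=K(y,x)$, relabeling shows that the first term contributes $\iint K\,x\,\mathbf{1}_{x+y\le R}\,f(dx)f(dy)$ and the last term matches the second. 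Hence the double integral against $\tfrac12 K\,f\otimes f$ reduces to $\iint K(x,y)\,x\,(\mathbf{1}_{x+y\le R}-\mathbf{1}_{x\le R})\,f(dx)f(dy)$, whose integrand vanishes when $x>R$ and equals $-K(x,y)\,x\,\mathbf{1}_{y>R-x}$ when $x\le R$. The double integral therefore equals $-\int_{(0,R]}\int_{(R-x,\infty)} K(x,y)\,x\,f(dx)f(dy)$, and \eqref{eq:stationary_eq} yields exactly \eqref{eq:flux_lem}.

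To make this rigorous I would regularize $\psi$. The case $R<1$ is trivial since $\supp f,\supp\eta\subset[1,\infty)$ forces both sides of \eqref{eq:flux_lem} to vanish, so assume $R\ge 1$. For $\varepsilon\in(0,1]$ let $\chi_\varepsilon:\R_+\to[0,1]$ be continuous, equal to $1$ on $[0,R]$ and to $0$ on $[R+\varepsilon,\infty)$, and linear on $[R,R+\varepsilon]$. Let $\rho:\R_*\to[0,1]$ be continuous with $\rho\equiv 0$ on $(0,1/2]$ and $\rho\equiv 1$ on $[1,\infty)$. Then $\varphi_\varepsilon(x):=x\chi_\varepsilon(x)\rho(x)$ belongs to $C_c(\R_*)$ with support in $[1/2,R+\varepsilon]$ and is admissible in \eqref{eq:stationary_eq}. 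Since $\supp f\cup\supp\eta\subset[1,\infty)$, the factor $\rho$ is inert in every integral and $\varphi_\varepsilon\to\psi$ pointwise on $[1,\infty)$; in particular the source integral converges to $\int_{(0,R]} x\,\eta(dx)$ by bounded convergence.

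The main technical step is dominated convergence for the double integral. From $|\varphi_\varepsilon|\le R+1$ one gets $|\varphi_\varepsilon(x+y)-\varphi_\varepsilon(x)-\varphi_\varepsilon(y)|\le 3(R+1)$, and this increment vanishes unless $\min(x,y)\le R+1$. On the region $\{1\le x\le R+1\}$ the upper bound \eqref{eq:cond_kernel3} gives $K(x,y)\le C_R\bigl(y^{-\lambda}+y^{\gamma+\lambda}\bigr)$ with $C_R$ independent of $\varepsilon$, and by symmetry the analogous estimate controls $\{1\le y\le R+1\}$. The moment condition \eqref{eq:moment_cond} then provides an integrable dominant, once one checks $f([1,R+1])<\infty$, which follows from \eqref{eq:moment_cond} because $y^{-\lambda}$ is bounded below on $[1,R+1]$. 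I expect this bookkeeping, uniform in $\varepsilon$ and across the signs of $\gamma+\lambda$ and $\lambda$, to be the only substantive obstacle; once it is carried out, the algebraic rearrangement of the first paragraph produces \eqref{eq:flux_lem}.
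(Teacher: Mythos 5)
Your proposal is correct and follows essentially the same route as the paper's proof: both use the test function $\varphi_\varepsilon(x) = x\chi_\varepsilon(x)$ with $\chi_\varepsilon$ a continuous cut‐off that is $1$ on $[1,R]$ and vanishes beyond $R+\varepsilon$, pass to the limit $\varepsilon\to 0$ in \eqref{eq:stationary_eq}, and then symmetrize to obtain \eqref{eq:flux_lem}. The only cosmetic difference is in how the limit is justified: the paper splits the double integral into the regions $\{x\le R,y\le R\}$, $\{x>R,y\le R\}$, $\{y>R,x\le R\}$, $\{x>R,y>R\}$ (intersected with $\{x+y>R\}$) and argues that the $\chi_\varepsilon$-dependent terms vanish term by term via the Radon‐measure continuity property $\int_{(a,a+\varepsilon]}d\mu\to 0$; you instead pass to the limit globally by dominated convergence, with the dominant $C_R\bigl(y^{-\lambda}+y^{\gamma+\lambda}\bigr)\mathbf 1_{[1,R+1]}(x)$ (and its symmetric counterpart) integrable by \eqref{eq:moment_cond}. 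Your algebraic reduction $\tfrac12\iint K\bigl[(x+y)\mathbf 1_{x+y\le R}-x\mathbf 1_{x\le R}-y\mathbf 1_{y\le R}\bigr] = -\int_{(0,R]}\int_{(R-x,\infty)}K\,x$ is the same symmetrization the paper performs after taking the limit, and the extra cut‐off $\rho$ you introduce near the origin is indeed inert given $\supp f\subset[1,\infty)$. The proof is sound and equivalent in spirit and substance to the one in the paper.
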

\begin{remark}
Note that if $R\geq L_\eta$, the right-hand side of \eqref{eq:flux_lem} is always equal to $J=\int_{[1,L_\eta]} x \eta(dx)>0$.  Therefore, the flux is constant in regions involving only large cluster sizes. 
\end{remark}
\begin{proof}  If $R<1$, both sides of (\ref{eq:flux_lem}) are zero, and the equality holds.
Consider then some $R\ge 1$ and for all $\varepsilon$ with $0<\varepsilon<R$
choose some $\chi_\varepsilon \in C_c^\infty(\R_*)$ such that $0\le \chi_\varepsilon\le 1$, 
$\chi_\varepsilon (x) = 1$, for $1\le x \leq R$, and $\chi_\varepsilon(x) = 0$, for $x \geq R+\varepsilon $.  Then for each $\varepsilon$ we may define
$\varphi(x) = x \chi_\varepsilon(x) $ and thus obtain a valid test function $\varphi\in C_{c}({\R}_*)$.  Since then \eqref{eq:stationary_eq} holds, we find that for all $\varepsilon$
\begin{equation}\label{eq:fluxChi}
\frac{1}{2}\int_{\R_* }\int_{\R_* }K\left( x,y\right) \left[(x+y) \chi_\varepsilon(x+y) -x \chi_\varepsilon(x) - y \chi_\varepsilon(y) \right] f\left( dx\right) f\left( dy\right) + \int_{\R_*}x\chi_\varepsilon(x)\eta(dx)=0\,. 
\end{equation}
The first term can be rewritten as follows
\begin{eqnarray*}
&& \frac{1}{2}\iint_{ \{(x,y)| x+y> R\}} K\left( x,y\right) \left[(x+y) \chi_\varepsilon(x+y) -x \chi_\varepsilon(x) - y \chi_\varepsilon(y) \right] f\left( dx\right) f\left( dy\right) \\
&=& \frac{1}{2}\iint_{ \{(x,y)| x+y> R,\ x\leq R,\ y\leq R\}} K\left( x,y\right) \left[(x+y) \chi_\varepsilon(x+y) -x  - y  \right] f\left( dx\right) f\left( dy\right)\\ 
&& +\frac{1}{2}\iint_{ \{(x,y)| x> R,\ y\leq R \}} K\left( x,y\right) \left[(x+y) \chi_\varepsilon(x+y) -x \chi_\varepsilon(x) - y  \right] f\left( dx\right) f\left( dy\right) \\
&& +\frac{1}{2}\iint_{ \{(x,y)| y> R,\ x \leq R \}} K\left( x,y\right) \left[(x+y) \chi_\varepsilon(x+y) -x - y \chi_\varepsilon(y) \right] f\left( dx\right) f\left( dy\right) \\
&&+\frac{1}{2}\iint_{ \{(x,y)| y> R,\ x>R \}} K\left( x,y\right) \left[-x \chi_\varepsilon(x) - y \chi_\varepsilon(y) \right] f\left( dx\right) f\left( dy\right). 
\end{eqnarray*}
We readily see that the terms involving $\chi_\varepsilon$ on the right hand side 
tend to zero as $\varepsilon$ tends to zero due to the fact that for Radon measures $\mu$ the integrals $\int_{[a-\varepsilon, a)}d\mu $ and $\int_{(a,a+\varepsilon]} d\mu $ converge to $0$ as $\varepsilon $ tends to zero.
Then we obtain  from \eqref{eq:fluxChi}
\begin{eqnarray*}
&   \frac{1}{2}\iint\limits_{ \{(x,y)| x+y> R,\ x \leq R,\ y \leq R \}} K\left( x,y\right) (x+y)f\left( dx\right) f\left( dy\right)
+ \frac{1}{2}\iint\limits_{ \{(x,y)| x> R,\ y \leq R \}} K\left( x,y\right)  y f\left( dx\right) f\left( dy\right)\\
&  +\frac{1}{2}\iint\limits_{ \{(x,y)| y> R,\ x \leq R \}} K\left( x,y\right)  x f\left( dx\right) f\left( dy\right)  =  \int_{(0,R]}x\eta(dx)\,.  
\end{eqnarray*}
Rearranging the terms we obtain
\begin{align*} 
  & \frac{1}{2}\iint\limits_{ \{(x,y)| x+y> R,\ x \leq R \}} K\left( x,y\right) x f\left( dx\right) f\left( dy\right)
+   \frac{1}{2}\iint\limits_{ \{(x,y)| x+y> R,\ y \leq R \}} K\left( x,y\right) y f\left( dx\right) f\left( dy\right) 
\\ &\quad 
=\int_{(0,R]}x\eta(dx)\,
\end{align*}
which implies (\ref{eq:flux_lem}) using a symmetrization argument.
\end{proof}

{
The following Lemma will be used several times throughout the paper to convert bounds for certain ``running averages'' into uniform bounds of integrals. The function $\varphi$  below is included mainly for later convenience.

\begin{lemma}\label{lem:bound}
Suppose $a>0$ and $b \in (0,1)$, and assume that 
$R\in (0,\infty]$ is such that 
$R\ge a$.  Consider some $f \in \mathcal{M_+}(\R_*)$ and $\varphi \in C(\R_*)$,
with $\varphi\ge 0$.
\begin{enumerate}
 \item\label{it:Rlessinf} Suppose $R<\infty$, and assume that there is 
 $g \in L^1([a,R])$ such that $g\ge 0$ and 
\begin{equation}\label{eq:avg}
\frac{1}{z}\int_{[bz,z]} \varphi(x) f(dx) \leq g(z)\,, \quad \text{for } z \in [a,R] \,.
\end{equation}
Then
\begin{equation}\label{eq:bound}
\int_{[a,R]} \varphi(x) f(dx) \leq \frac{\int_{[a,R]}g(z)dz}{\vert \ln b\,\vert }
+ R g(R)\,.
\end{equation}
 \item\label{it:polcase}
 Consider some $r\in (0,1)$, and assume that $a/r\le R<\infty$.
 Suppose that 
(\ref{eq:avg}) holds for $g(z)=c_0 z^q$, with $q\in \R$ and $c_0\ge 0$.  Then there is a constant $C>0$, which depends only on $r$, $b$ and $q$, such that
\begin{equation}\label{eq:polbound}
\int_{[a,R]} \varphi(x) f(dx) \leq C c_0 \int_{[a,R]} z^q dz \,.
\end{equation}
\item\label{it:Risinf} If $R=\infty$ and there is 
 $g \in L^1([a,\infty))$ such that $g\ge 0$ and 
\begin{equation}\label{eq:avgRinfty}
\frac{1}{z}\int_{[bz,z]} \varphi(x) f(dx) \leq g(z)\,, \quad \text{for } z \ge a \,,
\end{equation}
then
\begin{equation}\label{eq:boundRinfty}
\int_{[a,\infty)} \varphi(x) f(dx) \leq \frac{\int_{[a,\infty)}g(z)dz}{\vert \ln b\,\vert }\,.
\end{equation}
\end{enumerate}
\end{lemma}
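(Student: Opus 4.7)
The plan is to use Fubini's theorem to convert the family of local average bounds \eqref{eq:avg} into a global integral bound. The key observation is that the constraint $x\in[bz,z]$ is equivalent to $z\in[x,x/b]$, so integrating the hypothesis over $z$ and swapping the order of integration produces a logarithmic weight $\int z^{-1}\,dz$ on each $x$ in the support of $f$.

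For part \eqref{it:Rlessinf}, I would integrate \eqref{eq:avg} over $z\in[a,R]$ and apply Fubini to obtain
\begin{equation*}
\int_{[a,R]} \varphi(x)\,w_R(x)\, f(dx) \le \int_a^R g(z)\,dz, \qquad w_R(x) := \int_a^R \tfrac{1}{z}\,\mathbf{1}_{\{bz\le x\le z\}}\, dz.
\end{equation*}
A direct computation gives $w_R(x)=|\ln b|$ for $x\in[a,bR]$, while for $x\in(bR,R]$ one only has $w_R(x)=\ln(R/x)\le |\ln b|$. Dropping the (nonnegative) tail contribution yields the ``bulk'' estimate $\int_{[a,bR]} \varphi(x)\, f(dx) \le |\ln b|^{-1}\int_a^R g(z)\,dz$. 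For the remaining piece $\int_{(bR,R]}\varphi\, f$, I would apply \eqref{eq:avg} directly at $z=R$, obtaining a bound of $R\,g(R)$. Summing the two contributions produces \eqref{eq:bound}.

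For part \eqref{it:polcase}, I would substitute $g(z)=c_0 z^q$ into \eqref{eq:bound} and reabsorb the boundary term $c_0 R^{q+1}$ into the main integral. The condition $R\ge a/r$ guarantees that $[rR,R]\subseteq[a,R]$, and the change of variables $z=Rt$ gives $\int_{rR}^{R} z^q\,dz = R^{q+1}\int_r^1 t^q\,dt$; the constant $\int_r^1 t^q\,dt$ is strictly positive and depends only on $r$ and $q$ (including $q=-1$, where it equals $|\ln r|$). Hence $R^{q+1}\le C(r,q)\int_a^R z^q\,dz$, which combined with \eqref{eq:bound} yields \eqref{eq:polbound}. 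For part \eqref{it:Risinf}, I would apply the bulk estimate from part \eqref{it:Rlessinf} on finite intervals $[a,R_n]$ with $R_n\to\infty$, retaining only $\int_{[a,bR_n]}\varphi(x)\, f(dx)\le |\ln b|^{-1}\int_a^{R_n} g\le |\ln b|^{-1}\int_a^\infty g$. Since $b>0$, the sets $[a,bR_n]$ exhaust $[a,\infty)$, so monotone convergence gives \eqref{eq:boundRinfty}.

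The main subtle point is the handling of the ``tail'' region $x\in(bR,R]$, where the Fubini weight $w_R(x)$ degenerates to zero at $x=R$. In part \eqref{it:Rlessinf} this forces the presence of the extra boundary term $Rg(R)$, which is what the lone condition $R\ge a$ permits; in part \eqref{it:polcase} the stronger quantitative condition $R\ge a/r$ with $r<1$ is precisely what is needed to absorb this boundary contribution into the principal integral; and in part \eqref{it:Risinf} the issue disappears altogether because one only needs to control the bulk portion, which already exhausts $[a,\infty)$ in the limit.
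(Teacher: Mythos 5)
Your proof is correct and follows essentially the same route as the paper: integrate the hypothesis over $z\in[a,R]$, apply Fubini to produce the logarithmic weight on the bulk region $[a,bR]$, bound the tail $(bR,R]$ by invoking the hypothesis at $z=R$, reabsorb the boundary term $c_0R^{q+1}$ into $\int_a^R z^q\,dz$ using $a\le rR$ in part (2), and pass to the limit via monotone convergence in part (3). The only stylistic difference is that the paper explicitly separates the degenerate case $a\ge bR$ (where the bulk interval is trivial and the whole integral is already controlled by $Rg(R)$), whereas you treat it implicitly; otherwise the arguments coincide.
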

\begin{proof}
We first prove the general case in item \ref{it:Rlessinf}. Assume thus that $R<\infty$ and that $g\ge 0$ is such that (\ref{eq:avg}) holds.
We recall that then $0<a\le R$.  If $a\ge b R$, 
we can estimate
\begin{equation*}
\int_{[a,R]} \varphi(x) f(dx) \leq
\int_{[bR,R]} \varphi(x) f(dx)\leq Rg(R)\,,
\end{equation*}
using the assumption (\ref{eq:avg}) with $z=R$.  Thus (\ref{eq:bound}) holds in this case since $g\ge 0$.

Otherwise, we have $0<a<b R$.
By assumption, the constant $C_1 := \int_{[a,R]} g(z) dz\geq 0$ is finite.
Integrating~\eqref{eq:avg} over $z$ from $a$ to $R$, we obtain
$$\int_{[a,R]}\int_{[bz,z]}\frac{1}{z} \varphi(x)f(dx) dz\leq C_1\, .$$
The iterated integral satisfies the assumptions of Fubini's theorem, and thus 
it can be written as an integral over the set 
\begin{eqnarray}
&& \{(z,x)\ |\ a \leq z\leq R, \ bz \leq x \leq z \}\nonumber \\
&=& \{(z,x)\ |\ ba\leq x\leq R, \ \max\{a,x\} \leq z \leq \min\{\frac{1}{b}x,R\} \}\nonumber \\
&\supset &  \{(z,x)\ |\ a\leq x\leq bR, \ x \leq z \leq \frac{1}{b}x \}\,.\nonumber
\end{eqnarray}
Therefore, after using Fubini's theorem to obtain an integral where $z$-integration comes first, we find that
$$ \int_{[a,bR]} \int_{[x, x/b]}\frac{1}{z}  \varphi(x) f(dx ) dz \ \leq\ C_1.$$
The integral over $z$ yields $\ln(x/(bx))=\vert \ln b\,\vert $, and thus $\int_{[a,bR]}  \varphi(x) f(dx ) \ \leq\ C_1/ \vert \ln b\,\vert$.  To get an estimate for the integral over $[bR,R]$,
we use (\ref{eq:avg}) for $z=R$.
Hence, \eqref{eq:bound} follows also in this case which completes the proof of the first item.

For item \ref{it:polcase}, let us assume that $0<r<1$, $a\le r R<\infty$, and that
(\ref{eq:avg}) holds for $g(z)=c_0 z^q$, with $q\in \R$ and $c_0\ge 0$.  Since then $g\in L^1([a,R])$ and $g\ge 0$, we can conclude from the first item that
that \eqref{eq:bound} holds.  Thus we only need find a suitable bound for the second 
term therein, for $R g(R)=c_0 R^{q+1}$.  By changing the integration variable from $z$ to $y=z/R$,
we find 
\[
 \int_{[a,R]} z^q dz= R^{q+1}\int_{[a/R,1]} y^q dy
 \ge R^{q+1}\int_{[r,1]} y^q dy\,.
\]
Here, $C':=\int_{[r,1]} y^q dy$ satisfies $0<C'<\infty$
for any choice of $q\in \R$.  Therefore,
we can now conclude that (\ref{eq:polbound}) holds for $C=|\ln b\,|^{-1} + 1/C'$ which depends only on $q$, $b$, and $r$.

For item \ref{it:Risinf}, let us suppose that $R=\infty$
and consider an arbitrary finite $R'>a/b$.  Then item \ref{it:Rlessinf}
holds for $R'$.  Moreover, we can employ the estimate derived in the proof of the item \ref{it:Rlessinf} above, and conclude that
$\int_{[a,b R']}  \varphi(x) f(dx ) \leq C_1/ \vert \ln b\,\vert$
where $C_1=\int_{[a,R']} g(z) dz\le \int_{[a,\infty)} g(z) dz$.  Taking $R'\to \infty$ and using the monotone convergence theorem, proves that (\ref{eq:boundRinfty}) holds.  This completes the proof of the Lemma.
\end{proof}
}

\section{Existence results: Continuous model}\label{sec:existence}

Our first goal is to prove the existence of a stationary injection solution (cf.\ Theorem~%
\ref{thm:existence}) under the assumption $|\gamma + 2\lambda|<1$. 
This will be accomplished in three steps: We first prove in Proposition~\ref%
{thm:existence_evolution} existence and uniqueness of time-dependent
solutions for a particular class of compactly supported continuous kernels.
Considering  these solutions at large times allows us to prove in Proposition \ref%
{thm:existence_truncated} existence of stationary injection solutions for this
class of kernels using a fixed point argument.  We then 
extend the existence result to general unbounded kernels supported in ${%
\mathbb{R}}_*^{2}$ and satisfying \eqref{eq:cond_kernel1}--%
\eqref{eq:cond_kernel3} with $| \gamma +2\lambda | <1$.

Compactly supported continuous kernels are automatically bounded from above but, 
for the first two results, we will also assume that the kernel has a uniform lower bound
on the support of the source.  To pass to the limit 
including the more general kernel functions, it will be necessary to control the 
dependence of the solutions on both of the bounds and on the size of support of the kernel.
To fix the notations, let us first choose an upper bound $L_\eta$ for the support of the source,
i.e., a constant satisfying \eqref{eq:cond_eta}.  In the first two Propositions, we will consider
kernel functions which are continuous, non-negative, have a compact support, and for which we may find 
$R_{\ast }\geq L_{\eta }$ and $a_1$, $a_2$ such that  $0<a_{1}<a_{2}$ and $%
K(x,y)\in \lbrack a_{1},a_{2}],$ for $(x,y)\in \lbrack 1,2R_{\ast }]^{2}$.
This allows us to prove first that the time-evolution is well-defined, Proposition \ref%
{thm:existence_evolution}, and then in Proposition \ref%
{thm:existence_truncated} the existence of stationary injection solutions for this
class of kernels using a fixed point argument.  The proofs include sufficient control of the dependence 
of the solutions on the cut-off parameters to remove 
the restrictions and obtain the result in Theorem \ref{thm:existence}.

In fact, not only we regularize the kernel, but we also introduce a cut-off for the coagulation 
gain term.  This will guarantee that the equation is well-posed and has solutions 
whose support never extends beyond the interval $[1,2R_{\ast }]$.
To this end, let us choose $\zeta _{R_{\ast }}\in C\left( \mathbb{R}_*\right)$ 
such that $0\le \zeta _{R_{\ast }}\le 1$, 
$\zeta_{R_{\ast }}\left( x\right) =1$ for $0\leq x\leq R_{\ast }$, and
$\zeta _{R_{\ast}}\left( x\right) =0$ for $x\geq 2R_{\ast }$.
We then regularize the time evolution equation \eqref{eq:time_evol} as 
\begin{equation}
\partial _{t}f(x,t)=\frac{\zeta _{R_{\ast }\!}(x) }{2}\int_{\left(
0,x\right] }K( x-y,y) f( x-y,t) f( y,t)
dy - \int_{{{\R}_*}}\! K( x,y) f( x,t) f(
y,t) dy+\eta( x) \,. \label{evolEqTrunc}
\end{equation}%
As we show later, this will result in a well-posedness theory 
such that any solution of \eqref{evolEqTrunc} has the following property: 
$f\left( \cdot ,t\right) $ is supported on
the interval $\left[ 1,2R_{\ast }\right] $ for each $t\geq 0$. 
Let us also point out that  since we are interested in solutions $f$ such that $f\left( %
\left( 0,1\right) ,t\right) =0$, the above integral $\int_{\left( 0,x%
\right] }\left( \cdot \cdot \cdot \right) $ can be replaced by $\int_{\left[
1,x-1\right] }\left( \cdot \cdot \cdot \right) $ if $x \geq 1$.

\begin{assumption}\label{Assumptions} 
Consider a fixed source term $\eta \in 
\mathcal{M}_{+}\left( {\R}_*\right)$ and assume that $L_\eta\ge 1$ satisfies \eqref{eq:cond_eta}.
Suppose $R_{\ast }$, $a_1$, $a_2$, and $T$ are constants for which 
$R_{\ast }>L_\eta$, $0<a_{1}<a_{2}$, and $T>0$.  Suppose $K:{%
\mathbb{R}}_*^{2}\rightarrow {\mathbb{R}}_{+}$ is a continuous, non-negative, symmetric function
such that $K(x,y)\le a_2$ for all $x,y$, and we also have 
$K(x,y)\in \lbrack a_{1},a_{2}]$ for $(x,y)\in \lbrack
1,2R_{\ast }]^{2}$, and
$K(x,y)=0$, if $x\geq 4R_{\ast}$ or $y\geq 4R_{\ast}$. 
Moreover, we assume that there is given a function $\zeta _{R_{\ast }}$ such that $\zeta _{R_{\ast }}\in C\left( \mathbb{R}_*\right)$, $0\le \zeta _{R_{\ast }}\le 1$,  $\zeta
_{R_{\ast }}\left( x\right) =1$ for $0\leq x\leq R_{\ast }$, and $\zeta _{R_{\ast
}}\left( x\right) =0$ for $x\geq 2R_{\ast }$.
\end{assumption}

We will now study measure-valued solutions of the regularized problem  
\eqref{evolEqTrunc} in an integrated form.  To this end, we use a fairly strong notion
of continuous differentiability although uniqueness of the regularized problem might hold in a larger class.  
However, since we cannot prove uniqueness after the regularization has been removed,
it is not a central issue here.

\begin{definition}\label{defC1space}
 Suppose $Y$ is a normed space, $S\subset Y$, and $T>0$.  We use the notation $C^1([0,T],S;Y)$ for the collection 
 of maps $f:\left[ 0,T\right]\to S$ such that $f$ is continuous 
 and there is 
 $\dot{f}\in C([0,T],Y)$
 for which the Fr\'echet derivative of $f$ at any $t\in \left( 0,T\right)$ is given by $\dot{f}(t)$.
 
 We also drop the normed space $Y$ from the notation if it is obvious from the context, in particular, 
 if $S=\mathcal{M}_{+,b}(I)$ and $Y=C_0(I)^*$ or $Y=S$.
\end{definition}
Clearly, if $f\in C^1([0,T],S;Y)$, the function $\dot{f}$ above is unique and it can be found by requiring
that for all $t\in \left( 0,T\right)$
\[
 \lim_{\vep\to 0} \frac{\norm{f(t+\vep)-f(t)-\vep \dot{f}(t)}_Y}{|\vep|} = 0\,,
\]
and then taking the left and right limits to obtain the values $\dot{f}(0)$ and $\dot{f}(T)$. 
What is sometimes relaxed in similar notations is the existence of the left and right limits.

\begin{definition}\label{DefTimeSol}  Suppose that Assumption \ref{Assumptions} holds. 
Consider some initial data $f_{0}\in \mathcal{M}_{+}({\mathbb{R}}_*)$ for which $f_{0}\left( \left( 0,1\right) \cup \left( 2R_{\ast
},\infty \right) \right) =0$. Then $f_0\in \mathcal{M}_{+,b}({\mathbb{R}}_*)$.

We will say that $f\in {C^{1}(\left[ 0,T%
\right] ,\mathcal{M}_{+,b}({\mathbb{R}}_*))}$ satisfying $f\left( \cdot
,0\right) =f_{0}\left( \cdot \right) $ is a time-dependent solution of (\ref{evolEqTrunc}) if the following
identity holds for any test function $\varphi \in C^{1}(\left[ 0,T\right]
,C_{c}\left( {\mathbb{R}}_*\right) )$ and all $0<t<T$,
\begin{eqnarray}
&&\frac{d}{dt}\int_{{{\mathbb{R}}_*}}\varphi \left( x,t\right) f\left(
dx,t\right) -\int_{{{\mathbb{R}}_*}}\dot{\varphi}\left( x,t\right) f\left(
dx,t\right)  \notag \\
&=&\frac{1}{2}\int_{{{\mathbb{R}}_*}}\int_{{{\mathbb{R}}_*}}K\left(
x,y\right) \left[ \varphi \left( x+y,t\right) \zeta _{R_{\ast }}\left(
x+y\right) -\varphi \left( x,t\right) -\varphi \left( y,t\right) \right]
f\left( dx,t\right) f\left( dy,t\right) \hspace{1cm}  \notag \\
&&+\int_{{{\mathbb{R}}_*}}\varphi \left( x,t\right) \eta \left( dx\right) .
\label{eq:evol_eqWeak}
\end{eqnarray}
\end{definition}

\begin{remark}
Note that for any such solution $f$, automatically by continuity and compactness of $[0,T]$ one has
\begin{equation}
\sup\limits_{t\in \lbrack 0,T]}\left( \int_{{{\mathbb{R}}_*}}f\left(
dx,t\right) \right) <\infty\,,    \label{eq:moment_cond_evol}
\end{equation}%
since $\norm{f}=f(\R_*)$.  Let us also point out that whenever $\varphi \in C^{1}(\left[ 0,T\right],C_{c}\left( {\mathbb{R}}_*\right) )$ and 
$f\in {C^{1}(\left[ 0,T\right] ,\mathcal{M}_{+,b}({\mathbb{R}}_*))}$,  
the map $t\mapsto \int_{{{\mathbb{R}}_*}}\varphi \left( x,t\right) f\left(dx,t\right)$ 
indeed belongs to $C^{1}(\left[ 0,T\right] ,\mathbb{R}_{*})$. 
Thus the derivative on the left hand side of \eqref{eq:evol_eqWeak} is defined in the usual sense and, in fact, it is equal to $\int_{{{\mathbb{R}}_*}}\varphi \left( x,t\right) \dot{f}\left(dx,t\right)$.
In addition, there is sufficient regularity that after
integrating (\ref{eq:evol_eqWeak}) over the interval $\left[ 0,t%
\right] $ we obtain%
\begin{eqnarray}
&&\int_{{{\mathbb{R}}_*}}\varphi \left( x,t\right) f\left( dx,t\right)
-\int_{{{\mathbb{R}}_*}}\varphi \left( x,0\right) f_{0}\left( dx\right) 
\notag \\
&=&\int_{0}^{t}ds\int_{{{\mathbb{R}}_*}}\dot{\varphi}\left( x,s\right)
f\left( dx,s\right)  \notag \\
&&+\frac{1}{2}\int_{0}^{t}ds\int_{{{\mathbb{R}}_*}}\int_{{{\mathbb{R}}_*}%
}K\left( x,y\right) \left[ \varphi \left( x+y,s\right) \zeta _{R_{\ast
}}\left( x+y\right) -\varphi \left( x,s\right) -\varphi \left( y,s\right) %
\right] f\left( dx,s\right) f\left( dy,s\right)  \notag \\
&&+\int_{0}^{t}ds\int_{{{\mathbb{R}}_*}}\varphi \left( x,s\right) \eta
\left( dx\right) .  \label{eq:EvolWeakInt}
\end{eqnarray}
\end{remark}

We can define also weak stationary solutions of (\ref{evolEqTrunc}).  It is
straightforward to check that if $F(dx)$ is a stationary solution, then  $f(d x ,t)=F(dx)$ is a solution to \eqref{eq:EvolWeakInt} with initial condition $f_0(dx)=F(dx)$.

\begin{definition}
\label{DefStatSol} Suppose that Assumption \ref{Assumptions} holds.  
We will
say that $f\in {\mathcal{M}_{+}({\mathbb{R}}_*)},$ satisfying $f((0,1) \cup (2R_*,\infty))=0$  
is a stationary injection solution of (\ref{evolEqTrunc}) if the following
identity holds for any test function $\varphi \in C_{c}\left( {\mathbb{R}}%
_*\right) $: 
\begin{eqnarray}
&&  \notag \\
0 &=&\frac{1}{2}\int_{{\mathbb{R}}_*}\int_{{\mathbb{R}}_*}K\left(
x,y\right) \left[ \varphi \left( x+y\right) \zeta _{R_{\ast }}\left(
x+y\right) -\varphi \left( x\right) -\varphi \left( y\right) \right] f\left(
dx\right) f\left( dy\right)  \notag \\
&&+\int_{{\mathbb{R}}_*}\varphi \left( x\right) \eta \left( dx\right) .
\label{eq:evol_eqWeakSt}
\end{eqnarray}
\end{definition}

\begin{proposition}
\label{thm:existence_evolution} 
Suppose that Assumption \ref{Assumptions} holds. Then, for any initial
condition $f_{0}$ satisfying $f_{0}\in \mathcal{M}_{+}({\mathbb{R}}_*)$, $%
f_{0}\left( \left( 0,1\right) \cup \left( 2R_{\ast },\infty \right) \right)
=0$ there exists a unique time-dependent solution 
$f\in {C^{1}(\left[ 0,T\right] ,\mathcal{M}_{+,b}({\mathbb{R}}_*))}$
to (\ref{evolEqTrunc}) which solves it in the classical
sense. Moreover, we have%
\begin{equation}
f\left( \left( 0,1\right) \cup \left( 2R_{\ast },\infty \right) ,t\right)
=0\, ,\quad \text{for }0\leq t \le T\,, \label{fVanish}
\end{equation}%
and the following estimate holds
\begin{equation}
\int_{{\mathbb{R}}_*}f(dx,t)\leq  \int_{{\mathbb{R}}_*}f_0(dx)+ C t \, ,\quad  0 \leq t \leq T   \,, \label{fEstim}
\end{equation}%
for $C = \int_{{\mathbb{R}}_*}\eta(dx)\ge 0$ which is independent of $f_{0}$, $t$, and $T$.
\end{proposition}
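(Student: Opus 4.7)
The plan is to treat the truncated equation \eqref{evolEqTrunc} as an ODE in the Banach space $X=\mathcal{M}_b(\R_*)$ with the total variation norm, solve it by Picard--Lindel\"of, and then recover non-negativity, the support property \eqref{fVanish}, and the bound \eqref{fEstim} from the weak formulation \eqref{eq:evol_eqWeak}. Concretely, I would introduce the bilinear map $B\colon X\times X\to X$ defined in duality with $C_0(\R_*)$ by
\[
\langle B(f,g),\varphi\rangle=\tfrac12\!\int\!\!\int K(x,y)\bigl[\zeta_{R_*}(x+y)\varphi(x+y)-\varphi(x)-\varphi(y)\bigr]f(dx)g(dy).
\]
Assumption~\ref{Assumptions} forces $K$ to be bounded by $a_2$ and compactly supported, so $\|B(f,g)\|\le C\|f\|\,\|g\|$ and $\mathcal{F}(f):=B(f,f)+\eta$ is locally Lipschitz on $X$. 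Classical Picard--Lindel\"of then produces a unique $f\in C^1([0,T'],X)$ with $\dot f=\mathcal{F}(f)$ and $f(0)=f_0$ on some interval $[0,T']$.

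Next, I would show that this $f$ actually lies in $\mathcal{M}_{+,b}(\R_*)$ for all times via a Duhamel argument. Treating $L(f)(x):=\int K(x,y)f(dy)$ as a uniformly bounded multiplication potential, the iteration
\[
f_{n+1}(t)=\exp\Bigl(-\!\int_0^t\!L(f_n)(s)\,ds\Bigr)f_0+\int_0^t\exp\Bigl(-\!\int_s^t\!L(f_n)(\tau)\,d\tau\Bigr)\bigl[Q^+(f_n,f_n)(s)+\eta\bigr]ds
\]
preserves the positive cone and, by the same Lipschitz estimates, contracts it for small $T'$; its unique fixed point solves the same ODE in $X$ and so coincides with $f$, giving $f(\cdot,t)\ge 0$.

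Third, I would establish \eqref{fVanish} by two test-function arguments applied to \eqref{eq:EvolWeakInt}. For the upper bound I would take $\varphi\in C_c(\R_*)$ supported in $(2R_*,\infty)$: the source contribution vanishes because $\supp\eta\subset[1,L_\eta]\subset[1,R_*]$, the gain contribution vanishes because $\zeta_{R_*}(x+y)\varphi(x+y)\equiv 0$, and the loss contribution is non-positive; together with $\int\varphi\,f_0(dx)=0$ this forces $f((2R_*,\infty),t)=0$. The lower bound $f((0,1),t)=0$ is the main obstacle, since the equation does not exclude self-coagulation creating mass in $(0,1)$ from mass already present there; here I would take $\varphi\nearrow\cf_{(0,1)}$, set $v(t):=f((0,1),t)$, bound the gain by $\tfrac{a_2}{2}v(t)^2$, discard the negative loss, and close the self-referential Gr\"onwall-type inequality $v'(t)\le\tfrac{a_2}{2}v(t)^2$ with $v(0)=0$ to conclude $v\equiv 0$.

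Finally, with $\supp f(\cdot,t)\subset[1,2R_*]$ in hand, I choose $\chi\in C_c(\R_*)$ equal to $1$ on $[1,4R_*]$. Inserting $\chi$ in \eqref{eq:EvolWeakInt} yields $\int\chi f(dx,t)=\|f(t)\|$ on the left and $\int\chi\,\eta(dx)=\|\eta\|$ on the right, while on $\supp(f\otimes f)$ the combinatorial factor $\zeta_{R_*}(x+y)\chi(x+y)-\chi(x)-\chi(y)=\zeta_{R_*}(x+y)-2\le-1\le 0$ is non-positive, so the coagulation contribution is non-positive. This gives \eqref{fEstim}. The resulting linear-in-$t$ a priori bound rules out blow-up, so the standard continuation argument extends the local solution to all of $[0,T]$.
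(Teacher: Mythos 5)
Your proof is correct and arrives at the same result, but it is organized quite differently from the paper's argument, and the comparison is instructive. The paper works from the start in the space $\mathcal{X}_{R_*}=\{f\in\mathcal{M}_+(\R_*):f((0,1)\cup(2R_*,\infty))=0\}$ and establishes short-time existence and uniqueness by showing that the Duhamel/mild operator $\mathcal{T}[f]$ (exactly your iteration) is a contraction on a ball of $C([0,T],\mathcal{X}_{R_*})$. Positivity and the support constraint \eqref{fVanish} then come out automatically: since every iterate is a positive measure supported in $[1,2R_*]$, the gain term produces mass only in $[2,2R_*]$ and $\zeta_{R_*}$ caps the support above, so the fixed point never sees $(0,1)\cup(2R_*,\infty)$. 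You instead solve first in the full Banach space $\mathcal{M}_b(\R_*)$ by abstract Picard--Lindel\"of (valid because the compactly supported bounded kernel makes $B$ bounded bilinear on $\mathcal{M}_b\times\mathcal{M}_b$), and then recover positivity via the Duhamel iteration restricted to the positive cone and recover the support constraints a posteriori from the weak formulation. Your treatment of the lower cutoff, via $v(t)=f((0,1),t)$ and the Gr\"onwall closure of $v(t)\le\tfrac{a_2}{2}\int_0^t v(s)^2\,ds$ with $v(0)=0$, is a genuinely different device: the paper never needs it because working in $\mathcal{X}_{R_*}$ makes the gain term vanish identically on $(0,1)$. Your route therefore performs the fixed-point argument twice (once for PL existence, once for positivity), and an extra Gr\"onwall, whereas the paper does a single fixed-point argument in a smaller space; but your version clarifies that positivity and the support constraints are a posteriori conservation laws rather than structural inputs, which may be more robust if the support hypothesis on $f_0$ were relaxed. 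The derivations of \eqref{fEstim} and the continuation step are the same in both.

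One point worth being careful about in your write-up: for the Gr\"onwall step you cannot plug $\cf_{(0,1)}$ directly into the weak formulation; you should first integrate \eqref{eq:evol_eqWeak} in time to obtain \eqref{eq:EvolWeakInt}, then pass to the monotone limit $\varphi\nearrow\cf_{(0,1)}$ using that $f(\cdot,s)\ge 0$ (established in your previous step) so that the gain term converges by monotone convergence and the loss term, being non-positive, can be discarded. This yields the integral inequality $v(t)\le\tfrac{a_2}{2}\int_0^t v(s)^2\,ds$ directly, which is what the comparison argument actually uses.
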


\begin{remark}
We remark that the lower estimate $K(x,y)\geq a_{1}>0$ will not be used in
the proof of Proposition \ref{thm:existence_evolution}. However, this
assumption will be used later in the proof of the existence of stationary injection
solutions.
\end{remark}

\begin{proof}
In this proof we skip some standard computations which may be found in \cite[Section 5]{Vuoksenmaa20}. 
 We define $\mathcal{X}_{R_{\ast }}=\left\{ f\in \mathcal{M}_{+}({\mathbb{R}}%
_*):f\left( \left( 0,1\right) \cup \left( 2R_{\ast },\infty \right)
\right) =0\right\}$.  Since $[1,2R_{\ast }]$ is compact, for any $f\in \mathcal{X}_{R_{\ast }}$
we have $f(\R_*)<\infty$, and thus $\mathcal{X}_{R_{\ast }}\subset \mathcal{M}_{+,b}(\mathbb{R}_*)$.
For $f\in \mathcal{M}_{+,b}(\mathbb{R}_*)$, we clearly have $f\in \mathcal{X}_{R_{\ast }}$
if and only if $\int \varphi(x) f(d x)=0$ for all $\varphi\in C_0(\R_*)$ whose support lies in $\left( 0,1\right) \cup \left( 2R_{\ast },\infty \right)$.  Therefore, $\mathcal{X}_{R_{\ast }}$ is a closed
subset both in the $\ast -$weak and norm topology of $C_0(\R_*)^*=\mathcal{M}_{b}({\mathbb{R}}_*)$.  

For the rest of this proof, we endow $\mathcal{X}_{R_{\ast }}$ with the norm topology which makes it into a complete metric space.
We look for solutions $f$ in the subset $X:=C( \left[ 0,T\right], \mathcal{X}_{R_{\ast }})$ of the
Banach space $C\left( \left[ 0,T\right], \mathcal{M}_{b}({\mathbb{R}}_*)\right)$.  
{The space $X$ is
endowed with  the norm }
\begin{equation*}
\left\Vert f\right\Vert _{T}=\sup_{0\leq t\leq T}\left\Vert f\left( \cdot
,t\right) \right\Vert\,.
\end{equation*}%
By the uniform limit theorem, also $X$ is then a complete metric space.

We now reformulate (\ref{evolEqTrunc}) as the following integral equation acting on $\mathcal{X}_{R_{\ast }}$: we define for $0\le t\le T$, $x\in \R_*$, and $f\in X$ first a function
\begin{equation}
a\left[ f\right] \left( x,t\right) =\int_{{\mathbb{R}}_*}K\left(
x,y\right) {f\left( dy,t\right)} \,, \label{aFunct}
\end{equation}
and using this we obtain a measure, written for convenience using the function notation,
\begin{align}
&\mathcal{T}\left[ f\right] \left( x,t\right) := f_{0}\left( x\right) e^{-\int_{0}^{t}a\left[ f\right] \left(x,
s\right) ds} 
+\eta \left( x\right) \int_{0}^{t}e^{-\int_{s}^{t}a\left[ f\right] \left(
x,\xi \right) d\xi }ds
\nonumber \\ & \qquad
+\frac{\zeta _{R_{\ast }}(x) }{2}\int_{0}^{t}e^{-\int_{s}^{t}a\left[ f\right] \left( x,\xi \right) d\xi
}\int_{0}^{x}K\left( x-y,y\right) f\left( x-y,s\right) f\left( y,s\right)
dy ds  \,.
  \label{fIntForm} 
\end{align}%
Notice that the definition \eqref{aFunct} indeed is pointwise well defined and yields a function 
$(x,s)\mapsto a\left[ f\right] \left( x,s\right) $ which is  continuous and non-negative for any $f\in X$. Moreover, we claim that if $f\in X$, then 
(\ref{fIntForm}) defines a measure in $\mathcal{M}_{+}({\mathbb{R}}%
_*)$ for each $t\in [0,T]$, and we have in addition $\mathcal{T}\left[ f\right] \in X$. The only
non-obvious term is the term on the right-hand side containing $\int_{0}^{x}K\left( x-y,y\right) f\left(
x-y,s\right) f\left( y,s\right) dy$. We first explain how this term defines a continuous linear functional on $C_c\left( {\mathbb{R}}_*\right)$. 
Define $g(x,s)=\frac{\zeta _{R_{\ast }}(x) }{2}e^{-\int_{s}^{t}a\left[ f\right] \left( x,\xi \right) d\xi
}$ which is a jointly continuous function with $g(x,s)=0$ if $x\ge 2 R_{\ast }$.
Given $\varphi \in C_{c}\left( {\mathbb{R}}_*\right) $ we then set 
\begin{align}
& \left\langle \varphi,\int_{0}^{t} g(\cdot,s)\int_{0}^{\cdot}K\left( \cdot-y,y\right) f\left( \cdot-y,s\right) f\left(
y,s\right) dy ds \right\rangle \nonumber\\
&\quad = \int_{0}^{t} \int_{{\mathbb{R}}_*} \left[ \int_{{\mathbb{R}}_*}K\left(
x,y\right) g(x+y,s) \varphi \left( x+y\right) f\left( dx,s\right)\right]f\left( dy,s\right)ds \, . \label{ConvDef}
\end{align}%
Here the right-hand side of (\ref{ConvDef}) is well defined since $%
f\left( \cdot ,s\right) \in \mathcal{X}_{R_{\ast }}$ for each $s\in \left[
0,t\right] .$ Moreover, this operator defines a continuous linear functional
from $C_c\left( {\mathbb{R}}_*\right) $ to $\mathbb{R}$, and thus is associated with a unique positive Radon measure.
Finally, if $\varphi(x)=0$ for $1\le x\le 2 R_{\ast }$, then $g(x+y,s) \varphi \left( x+y\right)=0$ for $x+y<1$, 
which implies that the right hand side of \eqref{ConvDef} is zero.  Therefore, the measure belongs to $\mathcal{X}_{R_{\ast }}$
for all $t$.  Continuity in $t$ follows straightforwardly.

The operator $\mathcal{T}\left[ \cdot \right] $ defined in (\ref{fIntForm})
is thus a mapping from $C([0,T],\mathcal{X}_{R_{\ast }})$ to $C([0,T],\mathcal{X}%
_{R_{\ast }})$ for each $T>0.$  
We now claim that it is a
contractive mapping from the complete metric space
\begin{equation*}
{X}_{T} :=\left\{ f\in X \, |\, \left\Vert f-f_{0}\right\Vert _{T}\leq 1\right\}
\end{equation*}%
to itself if $T$ is sufficiently small. This follows by means of
standard computations using the assumption
$K\left( x,y\right) \leq a_{2}$, as
well as the inequality $\left\vert e^{-x_{1}}-e^{-x_{2}}\right\vert \leq
\left\vert x_{1}-x_{2}\right\vert $ valid for $x_{1}\geq 0,\ x_{2}\geq 0.$ 

Therefore, there exists a unique solution of $f=\mathcal{T}\left[ f\right] $  in ${X}_{T}$
assuming that $T$ is sufficiently small. Notice that $f\geq 0$ by
construction.

In order to show that the obtained solution can be extended to arbitrarily
long times we first notice that if $f=\mathcal{T}\left[ f\right] $, then $f\in
C^{1}(\left[ 0,T\right] ,\mathcal{X}_{R_{\ast }})$ and the definition in 
(\ref{fIntForm}) implies that $f$ satisfies (\ref{evolEqTrunc}). 
Integrating this equation with respect to the $x$ variable,
we obtain the following estimates:
\begin{align}
& \partial _{t}\left( \int_{{\mathbb{R}}_* }f\left( dx,t\right) \right) 
\nonumber \\ & \quad
\leq \frac{1}{2}\int_{{\mathbb{R}}_* }f\left( dy,t\right) \int_{{%
\mathbb{R}}_*}K\left( x,y\right) f\left( dx,t\right) -\int_{{\mathbb{R}}_* }f\left( dy,t\right) \int_{{\mathbb{R}}_*
}K\left( x,y\right) f\left( dx,t\right) + \int_{{\mathbb{R}}_*}\eta \left(
dx\right) 
\nonumber \\ & \quad
= -\frac{1}{2}\int_{{\mathbb{R}}_* }f\left( dy,t\right) \int_{{%
\mathbb{R}}_*}K\left( x,y\right) f\left( dx,t\right) +\int_{{\mathbb{R}}%
_* }\eta \left( dx\right) 
\nonumber \\ & \quad
\leq \int_{{\mathbb{R}}_*}\eta \left( dx\right)
\label{eq:zeromomevolreg}
\end{align}
whence (\ref{fEstim}) follows. We can then extend the solution to
arbitrarily long times $T>0$ using standard arguments.
After this, the uniqueness of the solution in $C^{1}(\left[ 0,T\right] ,\mathcal{M}_{+,b}({\mathbb{R}}_*))$
  follows by a standard Gr\"onwall estimate.
\end{proof}

\begin{remark}\label{th:boundforzeromomreg}
Notice that using the inequality $K(x,y)\geq a_{1}>0$ we can strengthen (\ref{eq:zeromomevolreg})
into the estimate%
\begin{equation*}
\partial _{t}\left( \int_{{\mathbb{R}}_*}f\left( dx,t\right) \right) \leq -%
\frac{a_{1}}{2}\left( \int_{{\mathbb{R}}_* }f\left( dx,t\right) \right)
^{2}+\int_{{\mathbb{R}}_*}\eta \left( dx\right). 
\end{equation*}%
Inspecting the sign of the right hand side this implies an estimate stronger than (\ref{fEstim}), namely,%
\begin{equation*}
\int_{{\mathbb{R}}_* }f\left( dx,t\right) \leq \max \left\{ \int_{ {%
\mathbb{R}}_*}f_{0}\left( dx\right) ,\left(\frac{2}{a_{1}}\int_{{\mathbb{R}}%
_*}\eta \left( dx\right)\right)^{\frac 1 2} \right\}.
\end{equation*}
\end{remark}

We now prove that solutions obtained in Proposition \ref%
{thm:existence_evolution} are weak solutions in the sense of Definition \ref%
{DefTimeSol}.

\begin{proposition}
\label{WeakSolBounded} Suppose that the assumptions in Proposition \ref%
{thm:existence_evolution} hold. Then, the solution $f$ obtained  is a
weak solution of (\ref{evolEqTrunc}) in the sense of Definition \ref%
{DefTimeSol}.
\end{proposition}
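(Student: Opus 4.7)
The plan is to start from the classical equation \eqref{evolEqTrunc}, which by Proposition \ref{thm:existence_evolution} the solution $f$ satisfies as an identity in $\mathcal{M}_{+,b}({\R}_*)$ at every $t\in[0,T]$ (and with $\dot f$ matching the right-hand side), then test it against an arbitrary $\varphi\in C^1([0,T],C_c({\R}_*))$ and rearrange via Fubini and a symmetrization argument.

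First, since $f\in C^1([0,T],\mathcal{M}_{+,b}({\R}_*))$ and $\varphi \in C^1([0,T],C_c({\R}_*))$, a direct telescoping estimate (splitting the increment of $\varphi(\cdot,t+h)f(dx,t+h)-\varphi(\cdot,t)f(dx,t)$ into two pieces) together with the bound $\|f(\cdot,t)\|\le \|f_0\|+Ct$ from \eqref{fEstim} shows that the real-valued map $t\mapsto \int_{\R_*}\varphi(x,t)f(dx,t)$ is $C^1$ with derivative
\begin{equation*}
\frac{d}{dt}\int_{\R_*}\varphi(x,t)f(dx,t)=\int_{\R_*}\dot\varphi(x,t)f(dx,t)+\int_{\R_*}\varphi(x,t)\dot f(dx,t).
\end{equation*}
The last integral is well defined because $\dot f(\cdot,t)\in \mathcal{M}_b({\R}_*)$ by the definition of $C^1([0,T],\mathcal{M}_{+,b}({\R}_*))$.

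Next, I would substitute the classical equation \eqref{evolEqTrunc} for $\dot f$. The linear source term $\eta$ contributes $\int\varphi(x,t)\eta(dx)$ directly. For the gain term, the measure $\tfrac12\zeta_{R_*}(x)(K f\ast f)(dx,t)$ (defined as in \eqref{ConvDef} with $g(x,s)$ replaced by $\tfrac12\zeta_{R_*}(x)$) pairs with $\varphi(\cdot,t)$ to give
\begin{equation*}
\tfrac12\iint_{{\R}_*^2}K(x,y)\,\varphi(x+y,t)\,\zeta_{R_*}(x+y)\,f(dx,t)\,f(dy,t),
\end{equation*}
by the very definition of this measure, where Fubini is justified by the fact that $\mathrm{supp}\,f(\cdot,t)\subset[1,2R_*]$ is compact and $K$ is bounded there. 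The loss term contributes $-\iint K(x,y)\varphi(x,t)f(dx,t)f(dy,t)$, and using the symmetry $K(x,y)=K(y,x)$ we rewrite this as $-\tfrac12\iint K(x,y)[\varphi(x,t)+\varphi(y,t)]f(dx,t)f(dy,t)$. Summing the three contributions and moving $\int\dot\varphi f$ to the left-hand side yields exactly \eqref{eq:evol_eqWeak}.

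There is no real obstacle here beyond bookkeeping: the compact support property \eqref{fVanish} together with the continuity and boundedness of $K$ on $\mathrm{supp}\,f(\cdot,t)^2$ and the boundedness of $\eta$ make every double integral absolutely convergent with uniform-in-$t$ bounds on $[0,T]$, so Fubini and the symmetrization step are both fully legitimate. The only point that requires a small argument is the product-rule identity above, and this follows from standard Banach-space calculus applied to the bilinear pairing $C_c({\R}_*)\times\mathcal{M}_b({\R}_*)\to\mathbb R$.
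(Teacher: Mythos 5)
Your proposal is correct and follows essentially the same route as the paper: test the classical equation \eqref{evolEqTrunc} against $\varphi(\cdot,t)$, identify the gain term via the convolution pairing \eqref{ConvDef}, symmetrize the loss term using $K(x,y)=K(y,x)$, and convert $\int\varphi\,\dot f$ to $\frac{d}{dt}\int\varphi\,f-\int\dot\varphi\,f$ by the product rule for the bilinear pairing. The paper's proof is more terse, but the justifications you spell out (compact support of $f(\cdot,t)$, boundedness of $K$ on it, and the $C^1$ structure in $t$) are exactly the ones it relies on implicitly.
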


\begin{proof}
Multiplying (\ref{evolEqTrunc}) by a continuous test function $\varphi \in
C^{1}\left( \left[ 0,T\right] ,C\left( {\mathbb{R}}_*\right) \right) $
with $T>0$ we obtain, using the action of the convolution on a test function
in (\ref{ConvDef}): 
\begin{eqnarray}
&&\int_{{\mathbb{R}}_*}\varphi \left( x,t\right) \dot{f}\left(
dx,t\right)  \notag \\
&=&\frac{1}{2}\int_{{\mathbb{R}}_*}\int_{{\mathbb{R}}_*}K\left(
x,y\right) \left[ \varphi \left( x+y,t\right) \zeta _{R_{\ast }}\left(
x+y\right) -\varphi \left( x,t\right) -\varphi \left( y,t\right) \right]
f\left( dx,t\right) f\left( dy,t\right) \hspace{1cm}  \notag \\
&&+\int_{{\mathbb{R}}_*} {\varphi \left( x,t\right)} \eta \left( dx\right) .
\label{IntStrForm}
\end{eqnarray}%
As mentioned earlier, the left-hand side can be rewritten as 
\begin{equation*}
\int_{{\mathbb{R}}_*}\varphi \left( x,t\right) \dot{f}\left(
dx,t\right) =\frac{d}{dt}\int_{{\mathbb{R}}_*}\varphi \left( x,t\right)
f\left( dx,t\right) -\int_{{\mathbb{R}}_*} {\dot{\varphi}} \left(
x,t\right) f\left( dx,t\right) .
\end{equation*}%
Therefore, $f$ satisfies (\ref{eq:evol_eqWeak}) in Definition \ref{DefTimeSol}. 
\end{proof}

We will use in the following the dynamical
system notation $S\left( t\right) $ for the map%
\begin{equation}
S\left( t\right) f_{0}=f\left( \cdot ,t\right)  \label{SemDef}
\end{equation}%
where $f$ is the solution of (\ref{evolEqTrunc}) obtained in Proposition \ref%
{thm:existence_evolution}. Note that by uniqueness $S\left( t\right) $ has the following
semigroup property:%
\begin{equation}
S\left( t_{1}+t_{2}\right) =S\left( t_{1}\right) S\left( t_{2}\right) \text{
for each }t_{1},t_{2}\in \mathbb{R}_+ . \label{SemProp}
\end{equation}
The operators $S\left( t\right) $ define a mapping%
\begin{equation}
S\left( t\right) :\mathcal{X}_{R_{\ast }}\rightarrow \mathcal{X}_{R_{\ast
}}\ \text{for each }t\geq 0  \label{SemMapping}
\end{equation}
where $\mathcal{X}_{R_{\ast }}=\left\{ f\in \mathcal{M}_{+}({\mathbb{R}}%
_*):f\left( \left( 0,1\right) \cup \left( 2R_{\ast },\infty \right)
\right) =0\right\}$, as before.

We can now prove the following result:
\begin{proposition}
\label{thm:existence_truncated} Under the assumptions of Proposition~\ref%
{thm:existence_evolution}, there exists a stationary injection solution $\hat{f}%
\in \mathcal{M}_{+}({\mathbb{R}}_*)$ to (\ref{evolEqTrunc}) as defined in
Definition~\ref{DefStatSol}.
\end{proposition}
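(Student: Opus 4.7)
The plan is to apply the Schauder--Tychonoff fixed-point theorem to the time-$t_0$ evolution map $S(t_0)$ of the regularized semigroup, working on a weak-$\ast$ compact convex set, and then to promote the resulting $t_0$-periodic solutions to a stationary one via a limit $t_0 \to 0$.

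Set $M := \sqrt{(2/a_1)\,\eta(\mathbb{R}_*)}$ and define
\[
\mathcal{K} := \{ f \in \mathcal{X}_{R_*} : f(\mathbb{R}_*) \le M \}.
\]
The lower bound $K \ge a_1$ on $[1, 2R_*]^2$ together with Remark \ref{th:boundforzeromomreg} shows that $f(\mathbb{R}_*, t) \le M$ is propagated by the evolution, while \eqref{fVanish} preserves $\supp f(\cdot, t) \subset [1, 2R_*]$; hence $S(t): \mathcal{K} \to \mathcal{K}$ for every $t \ge 0$. Regarded as a subset of $\mathcal{M}_b([1, 2R_*]) = C([1, 2R_*])^*$, the set $\mathcal{K}$ is convex, weak-$\ast$ closed, and norm-bounded, hence weak-$\ast$ compact by Banach--Alaoglu and weak-$\ast$ metrizable since $C([1, 2R_*])$ is separable.

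Next, I would prove that $S(t_0): \mathcal{K} \to \mathcal{K}$ is weak-$\ast$ continuous for each fixed $t_0 > 0$. The key ingredient is that for $f_n \to f$ weak-$\ast$ in $\mathcal{K}$, the product measures $f_n \otimes f_n$ converge to $f \otimes f$ weak-$\ast$ on the compact set $[1, 2R_*]^2$; combined with the continuity and boundedness of $K$ on this set, the bilinear coagulation operator is weak-$\ast$ sequentially continuous on $\mathcal{K}$. A Picard/Duhamel iteration mirroring the proof of Proposition \ref{thm:existence_evolution}, applied to the integral formula \eqref{fIntForm}, then transfers this continuity to $S(t_0)$. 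Schauder--Tychonoff applied to $S(t_0): \mathcal{K} \to \mathcal{K}$ yields some $f^\star_{t_0} \in \mathcal{K}$ with $S(t_0) f^\star_{t_0} = f^\star_{t_0}$.

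Finally, to convert $t_0$-periodicity to genuine stationarity, I would take $t_n \downarrow 0$ with corresponding fixed points $f^\star_n$ and extract a weak-$\ast$ limit $\hat{f} \in \mathcal{K}$. For any $\varphi \in C_c(\mathbb{R}_*)$, applying \eqref{eq:EvolWeakInt} with $f_0 = f^\star_n$ and $t = t_n$ makes the left-hand side vanish by periodicity, giving
\[
\frac{1}{t_n}\int_0^{t_n} \mathcal{Q}(S(s) f^\star_n)\,ds = 0,
\]
where $\mathcal{Q}(g)$ denotes the right-hand side of \eqref{eq:evol_eqWeakSt} evaluated at $g$. Uniform equicontinuity of $s \mapsto S(s) f^\star_n$ at $s = 0$ (following from the $n$-uniform bound on $\dot f$ derived in \eqref{eq:evol_eqWeak} via $K \le a_2$ and $\|f^\star_n\|\le M$), together with the weak-$\ast$ continuity of $\mathcal{Q}$ on $\mathcal{K}$, lets me first send $s \to 0$ to obtain $\mathcal{Q}(f^\star_n) \to 0$ and then $n \to \infty$ to conclude $\mathcal{Q}(\hat f) = 0$, which is precisely \eqref{eq:evol_eqWeakSt}. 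I expect the principal obstacle to be a rigorous justification of the weak-$\ast$ continuity of $S(t_0)$: the standard contraction argument yields Lipschitz dependence in total variation, a strictly finer topology than the weak-$\ast$ one in which $\mathcal{K}$ is compact, so one must check that each step of the Picard iteration in \eqref{fIntForm}---including the nonlinear exponentials of $\int a[f]\,d\xi$ and the convolutions---preserves sequential weak-$\ast$ continuity on the bounded set $\mathcal{K}$.
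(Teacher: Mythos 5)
Your overall architecture agrees with the paper's: both proofs set up an invariant, weak-$\ast$ compact, convex set in $\mathcal{M}_{+,b}$ supported on $[1,2R_*]$, apply Schauder to $S(t_0)$ on it, and then pass from $t_0$-periodic fixed points to a genuine steady state. But you diverge in the two technical steps, and they deserve separate comment.

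For the weak-$\ast$ continuity of $S(t_0)$, you propose to read it off the Duhamel/Picard iteration for $\mathcal{T}$ in~\eqref{fIntForm}, and you explicitly flag this as ``the principal obstacle.'' You are right to flag it, because as stated it is a gap: the contraction estimate in Proposition~\ref{thm:existence_evolution} gives continuous dependence in total variation, and to transfer it to the weak-$\ast$ topology you would have to show two nontrivial facts. First, each iterate $f_0\mapsto\mathcal{T}^m[f_0]$ must be weak-$\ast$ continuous, which means controlling the nonlinear exponential $e^{-\int_0^t a[f]\,ds}$ and the bilinear convolution against weak-$\ast$ perturbations of the measure (this is doable via uniform convergence of $a[f_n](\cdot,s)\to a[f](\cdot,s)$ on the compact support plus weak-$\ast$ convergence of product measures, but it requires an explicit argument). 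Second, you need the Picard iterates to converge to $S(\cdot)f_0$ at a rate uniform over $f_0\in\mathcal{K}$ (in TV norm, hence a fortiori in the weak-$\ast$ metric), which does hold on $\mathcal{K}$ because the contraction constant depends only on $\|f_0\|\le M$, but again needs to be said. The paper instead sidesteps the nonlinearity entirely with an adjoint/duality argument: it constructs, for each terminal datum $\psi\in C_c$, a test function $\varphi$ solving the backward linear equation~\eqref{AdjEqu}, yielding the identity~\eqref{eq:adjointsemigest}, so that weak-$\ast$ continuity of $S(t)$ is reduced to pairing $\varphi(\cdot,0)$ against $f_0-\hat f_0$. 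This converts a nonlinear problem into a linear fixed-point problem for $\varphi$, and is the cleaner route. Your approach can be made to work, but you have not done the work, so as written this is the one genuine hole in the proposal.

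Your final step, passing from $t_0$-periodic fixed points to a steady state, is a correct and genuinely different argument from the paper's. The paper cites \cite[Theorem 1.2]{EM05}, which proceeds by taking dyadic times $t_0=2^{-n}$ and using the semigroup law. You instead apply the integrated weak form~\eqref{eq:EvolWeakInt} with a time-independent test function and $f_0=f^\star_n$, exploit periodicity to make the boundary terms cancel, and obtain the Ces\`aro-mean identity $\frac{1}{t_n}\int_0^{t_n}\mathcal{Q}(S(s)f^\star_n)\,ds=0$. Combined with the uniform short-time estimate~\eqref{eq:timeweakstar} (which gives $\|S(s)f^\star_n-f^\star_n\|_{TV}\le Cs$ uniformly on $\mathcal{U}_M$) and the TV-Lipschitz dependence of $\mathcal{Q}$ on its argument within $\mathcal{K}$, this forces $\mathcal{Q}(f^\star_n)\to 0$; weak-$\ast$ sequential continuity of $\mathcal{Q}$ on $\mathcal{K}$ (via convergence of product measures on the compact square, the same tool the paper uses elsewhere) then gives $\mathcal{Q}(\hat f)=0$. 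This is self-contained and avoids the external citation, which is a small gain; it is entirely compatible with the rest of the proof.

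In summary: the structure and conclusion match, your final limiting step is a valid and arguably more elementary alternative, but the weak-$\ast$ continuity of $S(t_0)$ is asserted rather than proved, and that is precisely the step where the paper's adjoint-equation device earns its keep.
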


\begin{proof}
We provide below a proof of the statement but skip over some standard technical computations. Further details about these technical estimates can be found from \cite[Section 5]{Vuoksenmaa20}.

We first construct an invariant region for the evolution equation %
\eqref{evolEqTrunc}.  Let $f_0 \in \chi_{R_*} $ and set $f(t)=S(t)f_0$. 
In particular, $f$  satisfies (\ref{eq:evol_eqWeak}).
Let us then choose a time-independent 
test function such that $\varphi (x)=1$ when $1\le x\leq 2R_{\ast}$.
Similarly to (\ref{eq:zeromomevolreg}) and using the fact that
$f(\cdot,t)$ has support in $[1,2 R_*]$, the lower bound for $K$
implies an estimate 
\begin{equation*}
\frac{d}{dt}\int_{[1,2R_{\ast }]}f\left( dx,t\right) \leq -\frac{a_{1}}{2}%
\left( \int_{[1,2R_{\ast }]}f\left( dx,t\right) \right) ^{2}+c_{0}
\end{equation*}%
where $c_{0}=\int_{\R_*}\eta \left( dx\right) $. 
As in Remark \ref{th:boundforzeromomreg},
inspecting the sign of the right hand side we then find that
if we choose any $M\geq \sqrt{\frac{2c_{0}}{a_{1}}}$, then the following set is invariant under the time-evolution:
\begin{equation}
\mathcal{U}_{M}=\left\{ f\in \mathcal{X}_{R_{\ast }}:\int_{[1,2R_{\ast
}]}f(dx)\leq M\right\} \, . \label{eq:invariant_region}
\end{equation} 
Moreover, $\mathcal{U}_{M}$ is
compact in the $\ast -$weak topology due to Banach-Alaoglu's Theorem (cf.\ 
\cite{Brezis}), since it is an intersection of a $\ast -$weak closed set $\mathcal{X}_{R_{\ast }}$ and the closed ball 
$\norm{f}\le M$.

Consider the operator $S(t):\mathcal{X}_{R_{\ast }}\rightarrow \mathcal{X}%
_{R_{\ast }}$ defined in (\ref{SemDef}). We now endow $\mathcal{X}_{R_{\ast }}$
with the $\ast -$weak topology and prove that $S(t)$ is
continuous. Due to Proposition \ref{WeakSolBounded} we have that $f(\cdot
,t)=S(t)f_{0}$ satisfies (\ref{eq:EvolWeakInt}) for any test function $%
\varphi \in C^{1}\left( \left[ 0,T\right] ,C_c\left( \R_*
\right) \right) $,\ $0\leq t\leq T$ with $T>0$ arbitrary. Let $f_{0},\hat{f}%
_{0}\in \mathcal{X}_{R_{\ast }}$. We write $f(\cdot ,t)=S(t)f_{0}$ and $\hat{%
f}(\cdot ,t)=S(t)\hat{f}_{0}.$  
Using (\ref{eq:EvolWeakInt}) and  subtracting the
corresponding equations for $f$ and $\hat{f}$, we obtain
\begin{eqnarray}
&&\int_{{\mathbb{R}}_*}\varphi \left( x,t\right) (f\left( dx,t\right) -%
\hat{f}\left( dx,t\right) )-\int_{{\mathbb{R}}_*}\varphi \left( x,0\right)
(f_{0}\left( dx\right) -\hat{f}_{0}\left( dx\right) )  \notag \\
&=&\int_{0}^{t}ds\int_{{\mathbb{R}}_*}(f\left( dx,t\right) -\hat{f}\left(
dx,t\right) )\left( \dot{\varphi} \left( x,s\right) +\mathcal{L}\left[
\varphi \right] \left( x,s\right) \right)  \label{SymmEq}
\end{eqnarray}
where%
\begin{equation*}
\mathcal{L}\left[ \varphi \right] \left( x,s\right) =\frac{1}{2}\int_{{%
\mathbb{R}}_*}K\left( x,y\right) \left[ \varphi \left( x+y,s\right) \zeta
_{R_{\ast }}\left( x+y\right) -\varphi \left( x,s\right) -\varphi \left(
y,s\right) \right] (f\left( dy,s\right) +\hat{f}\left( dy,s\right) ) \,.
\end{equation*}
For the derivation of (\ref{SymmEq}), we have used symmetry properties under
the transformation  $x\leftrightarrow y$: clearly, 
$K\left( x,y\right) %
\left[ \varphi \left( x+y,s\right) \chi _{\left\{ x+y\leq R_{\ast }\right\}
}\left( x,y\right) -\varphi \left( x,s\right) -\varphi \left( y,s\right) %
\right] $ is then symmetric and  $\left[ f\left( dx,s\right) \hat{f}\left( dy,s\right) -f\left(
dy,s\right) \hat{f}\left( dx,s\right) \right] $ is antisymmetric, and hence their product integrates to zero.

Consider then an arbitrary $\psi \in C_c\left( {\mathbb{R}}_*\right)$. 
We claim that there is a
test function $\varphi \in C^{1}\left( \left[ 0,t\right] ,C_{c}\left( {%
\mathbb{R}}_*\right) \right) $ such that%
\begin{equation}
\dot{\varphi} \left( x,s\right) +\mathcal{L}\left[ \varphi \right]
\left( x,s\right) =0\ \ \text{for }0\leq s\leq t\, , \ { x\ge 1}\,, \quad \text{with\ \ }\varphi
\left( \cdot ,t\right) =\psi \left( \cdot \right) .  \label{AdjEqu}
\end{equation}
Given such a function $\varphi $, since $f$ and $\hat{f}$ have no support on $(0,1)$, equation (\ref{SymmEq})
implies
\begin{equation}\label{eq:adjointsemigest}
\int_{{\mathbb{R}}_*}\psi \left( x\right) (f\left( dx,t\right) -\hat{f}%
\left( dx,t\right) )=\int_{{\mathbb{R}}_*}\varphi \left( x,0\right)
(f_{0}\left( dx\right) -\hat{f}_{0}\left( dx\right) ) \,.
\end{equation}
Therefore, if such a function $\varphi $ exists for any $%
\psi \in C_c\left( {\mathbb{R}}_*\right)$, we would find that the estimate at time $t$, 
$\left\vert \int_{{\mathbb{R}}_*}\psi \left(
x\right) (f\left( dx,t\right) -\hat{f}\left( dx,t\right) )\right\vert $, will become 
arbitrarily small if the estimate at time $0$,
$\left\vert \int_{[1, {2R_{\ast }}]}\varphi \left(
x,0\right) (f_{0}\left( dx\right) -\hat{f}_{0}\left( dx\right) )\right\vert $,
is made sufficiently small. In particular, this property can be used to prove that 
for every $f(t)=S(t)f_0$ in a $\ast -$weak open set $U$
one can find a $\ast -$weak open neighbourhood $V$ of $f_0$ such that for any 
$\hat{f}_{0}\in V$ one has $S(t)\hat{f}_{0} \in U$.  
Hence, the $\ast -$weak continuity of $%
S\left( t\right) $ would then follow.

{

In order to conclude the proof of the continuity of $S(t)$ in the $\ast -$%
weak topology it only remains to prove the existence of $\varphi \in
C^{1}\left( \left[ 0,t\right] ,C_{c}\left( {\mathbb{R}}_*\right) \right) $
satisfying (\ref{AdjEqu}) for a fixed $%
\psi \in C_c\left( {\mathbb{R}}_*\right)$. 
First, let us choose $a\in (0,1)$ and $b\ge 4R_{\ast }$
so that the support of $\psi$ is contained in $I_0:=[a,b]$.  
We now construct $\varphi$ as a solution to an evolution equation in 
the Banach space $Y:=\defset{h\in C(\R_*)}{h(x)=0\text{ if }x\le a\text{ or }x\ge b}$ which is a closed subspace of $C_0(\R_*)$.

More precisely, we now look for solutions $\varphi\in \tilde{X}:= C([0,t],Y)$,
endowed with the weighted norm $\norm{\varphi}_M := \sup_{x\in \R_*,\, s\in [0,t]}|\varphi(x,s)| \rme^{M (s-t)}$.  The parameter $M>0$ is chosen sufficiently large, 
as explained later.

Clearly, $\psi\in Y$. 
To regularize the small values of $x$, we choose a function 
$\phi_a\in C(\R_*)$ such that $0\le \phi_a\le 1$,
$\phi_a(x)=0$ if $x\le a$, and $\phi_a(x)=1$ if $x\ge 1$, and then define
\[
 \tilde{\mathcal{L}}[\varphi](x,s) := \phi_a(x) \mathcal{L}[\varphi](x,s)\,,\quad
 x>0\,,\ 0\le s\le t\,.
\]
Now, if $\varphi\in\tilde{X}$, we have $\tilde{\mathcal{L}}[\varphi](x,s)=0$
both if $x\le a$ (due to the factor $\phi_a$) and if $x\ge b$, since $K(x,y)=0$ if $x\ge b \ge 4 R_*$.  In addition, the assumptions guarantee that $x\mapsto \mathcal{L}[\varphi](x,s)$ is continuous, so we find that $ \tilde{\mathcal{L}}[\varphi](x,s)\in Y$ for any fixed $s$.

We look for solutions $\varphi$ as fixed points satisfying $\varphi=\mathcal{A}[\varphi]$, where
\[
 \mathcal{A}[\varphi](x,s) := \psi(x) + \int_s^t \tilde{\mathcal{L}}[\varphi](x,s') ds'\,, \quad 
 x>0\,,\ 0\le s\le t\,.
\]
A straightforward computation, using the uniform bounds of total variation norms of $f$ and $\hat{f}$, shows that $\mathcal{A}$ is a map from $\tilde{X}$
to itself.  In addition, since
\[
 |\mathcal{A}[\varphi_1](x,s)-\mathcal{A}[\varphi_2](x,s)|
 \le  \frac{3a_2}{2} \left(\norm{f}_t+\norm{\hat{f}}_t \right) \norm{\varphi_1-\varphi_2}_{M}
 \int_s^t \rme^{-M(s'-t)} ds'\,,
\]
we find that $\mathcal{A}$ is also a contraction if we fix $M$ so that 
$M> \frac{3a_2}{2} \left(\norm{f}_t+\norm{\hat{f}}_t \right)$.
Thus by the Banach fixed point theorem, there is a unique $\varphi\in \tilde{X}$ such that $\varphi=\mathcal{A}[\varphi]$.  This choice
satisfies (\ref{AdjEqu}), at least for $x\ge 1$, and hence completes the proof of continuity of $S(t)$.

}

We next prove that also $t\mapsto S\left( t\right) f_{0}$ is
continuous in the $\ast -$weak topology. Let $t_{1},t_{2}\in \left[ 0,T%
\right] $ with $t_{1}<t_{2}.$ Let $\varphi \in C_{c}\left( {\mathbb{R}}%
_*\right) .$ Using (\ref{eq:EvolWeakInt}) we obtain:%
\begin{eqnarray*}
&&\int_{{{\mathbb{R}}_*}}\varphi \left( x\right) \left[ f\left(
dx,t_{2}\right) -f\left( dx,t_{1}\right) \right] \\
&=&\frac{1}{2}\int_{t_{1}}^{t_{2}}ds\int_{{{\mathbb{R}}_*}}\int_{{{\mathbb{%
R}}_*}}K\left( x,y\right) \left[ \varphi \left( x+y\right) \zeta _{R_{\ast
}}\left( x+y\right) -\varphi \left( x\right) -\varphi \left( y\right) \right]
f\left( dx,s\right) f\left( dy,s\right) \\
&&+\int_{t_{1}}^{t_{2}}ds\int_{{{\mathbb{R}}_*}}\varphi \left( x\right)
\eta \left( dx\right) .
\end{eqnarray*}%
Thus using the bound $\left\Vert f\right\Vert _{T}<\infty$ we
obtain%
\begin{equation}\label{eq:timeweakstar}
\left\vert \int_{{{\mathbb{R}}_*}}\varphi \left( x\right) \left[ f\left(
dx,t_{2}\right) -f\left( dx,t_{1}\right) \right] \right\vert \leq C \left( t_{2}-t_{1}\right)\norm{\varphi}\,,
\end{equation}
where the constant $C$ does not depend on $t_1$, $t_2$ or $\varphi$.
Therefore, the mapping $t\mapsto S\left( t\right) f_{0}$ is continuous
in the $\ast -$weak topology.

We can now conclude the proof of Proposition \ref{thm:existence_truncated}.
As proven above, for any fixed $t$, the operator $S(t):\mathcal{U}_{M}\rightarrow \mathcal{U}_{M}$ is continuous and $\mathcal{U}_{M}$ is convex and compact when 
endowed with the $\ast -$weak topology. 
Using Schauder fixed point theorem, for all $%
\delta >0$, there exists a fixed point $\hat{f}_{\delta }$ of $S(\delta )$
in $\mathcal{U}_{M}$. 
{
In addition, $\mathcal{U}_{M}$ is metrizable and hence sequentially compact.
As shown in \cite[Theorem 1.2]{EM05}, these properties imply
that there is $\hat{f}$ such that 
$S(t)\hat{f} = \hat{f}$ for all $t$.  Thus $\hat{f}$ is a stationary injection solution to 
\eqref{evolEqTrunc}.
}
\end{proof}

\bigskip

We now prove Theorem \ref{thm:existence}.

\begin{proofof}[Proof of Theorem~\ref{thm:existence} (existence)] 
Given a kernel $K(x,y)$ satisfying \eqref{eq:cond_kernel2}, \eqref{eq:cond_kernel3}, it can be rewritten as  
\begin{equation}
K\left(  x,y\right)  =\left(  x+y\right)  ^{\gamma}\Phi\left(  \frac{x}%
{x+y},x\right)  \label{B1in3}%
\end{equation}
where
\begin{equation}\label{eq:B1bound}
 \frac{C_{1}}{s^{p} \left(  1-s\right)  ^{p}}  \leq
\Phi\left(  s,x\right)  \leq \frac{C_{2}}{s^{p}\left(
1-s\right)  ^{p}},\quad {(s,x)\in(0,1)\times \R_{\ast}}
\end{equation}
with $p=\max\left\{  \lambda,-\left(  \gamma+\lambda\right)
\right\} $ and the constants $C_1>0$, $C_2<\infty$ independent of $x$.
Notice that the dependence of the function $\Phi$ on $x$ is due to the fact that we are not assuming the kernel $K(x,y)$ to be an homogeneous function.

By definition of $p$, we have $\gamma + 2 p =|\gamma+2\lambda|\ge 0$, and thus always $p \geq -\frac{\gamma}{2}$.
On the other hand, 
by assumption, $|\gamma+2\lambda|<1$, and thus also $p<\frac{1-\gamma}{2}$.
Reciprocally, we observe that kernels with the form \eqref{B1in3} satisfying \eqref{eq:B1bound} with $p \geq -\frac{\gamma}{2}$ satisfy also \eqref{eq:cond_kernel2}, \eqref{eq:cond_kernel3} .

We use two levels of truncations. First, given $\varepsilon$ with $0<\varepsilon<1$ we define %
\begin{equation}
K_{\varepsilon}\left(  x,y\right)  =\min\left\{  \left(  x+y\right)  ^{\gamma
},\frac{1}{\varepsilon}\right\}  \Phi_{\varepsilon}\left(  \frac{x}%
{x+y}, x\right)  +\varepsilon\,,     \label{eq:1levTrunc}
\end{equation}
where 
$\Phi_\varepsilon$ is smooth, non-negative, and bounded by $\frac{A}{\varepsilon^{\sigma}}$ everywhere, and satisfies
\begin{equation}
\Phi_{\varepsilon}\left(  s, x\right)  =
\begin{cases}
\Phi\left(  s, x\right) \,, & \text{if\ \ }\Phi\left(  s, x\right)  \leq\frac
{A}{\varepsilon^{\sigma}}\,,\\
0\,, & \text{if\ \ }\Phi\left(  s, x\right)  \geq\frac{2A}{\varepsilon^{\sigma}}\,.%
\end{cases}
 \label{B3in3}%
\end{equation}
Here
$A$ is a large constant independent of $\varepsilon$; we take $A=1$ when $\Phi$ is unbounded, and assume it sufficiently large in a way that will be seen in the proof if $\Phi$ is bounded. 
Concerning $\sigma$ we take $\sigma=0$ if $p\leq 0$ for any $\gamma$, $\sigma>0$ arbitrary small if $p>0$ and $\gamma\leq 0$ and $0< \sigma <\frac p \gamma$ if $p>0$ and $\gamma >0$.
We then have%
\begin{equation}
0\leq\Phi_{\varepsilon}\left(  s, x\right)  \leq C_{2}\min\left\{  \frac
{1}{s^{\lambda}}\frac{1}{\left(  1-s\right)  ^{\lambda}}+s^{\gamma+\lambda
}\left(  1-s\right)  ^{\gamma+\lambda},\frac{A}{C_{2}\varepsilon^{\sigma}}\right\}\,.
\label{B2in3}%
\end{equation}

The second level of truncation is to define
\begin{equation}\label{eq:2levTrunc}
K_{\varepsilon,R_{\ast}}\left(  x,y\right)  =K_{\varepsilon}\left(
x,y\right)  \omega_{R_{\ast}}\left(  x,y\right), 
\end{equation}
where $\omega_{R_{\ast}}\in C^{\infty}_{0}(\R^2_{+})$, $0\le \omega_{R_*}\le 1$, and
\begin{equation*}
\omega_{R_{\ast}}(x,y)  =
\begin{cases}
1\,, &\text{if\ \ } (x,y)\in [0,2R_{\ast} ]^2\,, \\ 
0\,, &\text{if\ \ } x\geq 4 R_{\ast} \text{ or } y\geq 4R_{\ast} \,.
\end{cases}
\end{equation*}
Notice that, if $\gamma\leq0$ the truncation in  $\min\left\{  \left(x+y\right)  ^{\gamma},\frac{1}{\varepsilon}\right\}  $ in \eqref{eq:1levTrunc}
does not have any
effect, because we are only interested in the region where $x\geq1$ and $y\geq1$, due to the fact that the solutions we construct satisfy $f((0,1))=0$.

From
Proposition~\ref{thm:existence_truncated}, to every $\varepsilon$ and $R_*$, there exists a stationary injection
solution $f_{\varepsilon ,R_{\ast }}$ satisfying 
\begin{align}
&\frac{1}{2}\int_{{{\mathbb{R}}_{*}}}\int_{{{\mathbb{R}}_{*}}}K_{\varepsilon
,R_{\ast }}\left( x,y\right) \left[ \varphi \left( x+y\right) \zeta
_{R_{\ast }}\left( x+y\right) -\varphi \left( x\right) -\varphi \left(
y\right) \right] f_{\varepsilon ,R_{\ast }}\left( dx\right) f_{\varepsilon
,R_{\ast }}\left( dy\right)  \notag  \label{WeakTruncForm} \\
&+\int_{{{\mathbb{R}}_{*}}}\varphi \left( x\right) \eta \left( dx\right) =0\, ,
\end{align}%
for any test function $\varphi \in {C}_{c}{({\mathbb{R}}_{*})}$.
As in the proof of Lemma \ref{lem:flux}
consider any $z,\delta>0$ and take $\chi_\delta \in C^\infty(\R_+)$ satisfying $\chi_\delta (x) = 1$ if $x \leq z$, and $\chi_\delta(x) = 0$ if $x \geq z+\delta $. 
Then $\varphi(x) = x \chi_\delta(x) $ is a valid non-negative test function. Since $\zeta_{R_*}\le 1$, we may employ
the inequality $\varphi \left( x+y\right) \zeta
_{R_{\ast }}\left( x+y\right)\le \varphi \left( x+y\right)$ in (\ref{WeakTruncForm}), and conclude that for these test functions
\begin{align*}
&\frac{1}{2}\int_{{{\mathbb{R}}_{*}}}\int_{{{\mathbb{R}}_{*}}}K_{\varepsilon
,R_{\ast }}\left( x,y\right) \left[ \varphi \left( x+y\right)  -\varphi \left( x\right) -\varphi \left(
y\right) \right] f_{\varepsilon ,R_{\ast }}\left( dx\right) f_{\varepsilon
,R_{\ast }}\left( dy\right) 
+\int_{{{\mathbb{R}}_{*}}}\varphi \left( x\right) \eta \left( dx\right) \ge 0\,.
\end{align*}%
Using the equalities derived in the proof of Lemma \ref{lem:flux}
and taking $\delta\to 0$ then proves that 
\begin{equation}
\int_{\left( 0,z\right] }\int_{( z-x,\infty ) }K_{\varepsilon
,R_{\ast }}\left( x,y\right) xf_{\varepsilon ,R_{\ast }}\left( dx\right)
f_{\varepsilon ,R_{\ast }}\left( dy\right) \le \int_{(0,z]}x\eta \left(
dx\right)\,, \quad \text{for }z>0\, .
\label{eq:CurrentboundRstar}
\end{equation} 

A lower bound for the left hand-side and an upper bound for the right
hand-side of (\ref{eq:CurrentboundRstar}), both independent of $R_{\ast }$, are computed next. Since $\supp%
\eta \subset \lbrack 1,L_\eta]$ and $\int \eta $ is{\ bounded}, then 
\begin{equation}
\int_{(0,z]} x\eta \left( dx\right) \leq \int_{[1,L_\eta]} x\eta \left( dx\right)
=:c,  \label{eq:bound_eta}
\end{equation}%
where $c$ is a constant independent of $R_{\ast }$ and $c$ is bounded by $L_{\eta} \vert|\eta\vert|$. On the other hand we
have $K_{\varepsilon ,R_{\ast }}\left( x,y\right) \geq \varepsilon >0$
for $\left( x,y\right) \in \left[ 1,2 R_{\ast }\right] ^{2}.$ Then,%
\begin{equation*}
\varepsilon \int_{\left( 0,z\right] }\int_{(z-x,2 R_{\ast }] }xf_{\varepsilon ,R_{\ast }}\left( dx\right)
f_{\varepsilon ,R_{\ast }}\left( dy\right) \leq c\quad \text{if }0<z\le 2 R_*\, .
\end{equation*}
Using that here
\begin{equation*}
[2z/3,z]^{2}\subset \left\{ \left( x,y\right) \in \mathbb{R}_{+}^{2}:0<
x\leq z,\ z-x< y\leq 2 R_* \right\} \, ,
\end{equation*}
we obtain
\begin{equation*}
\varepsilon \iint_{[2z/3,z]^{2}}xf_{\varepsilon ,R_{\ast
}}(dx)f_{\varepsilon ,R_{\ast }}(dy)\leq c\quad \text{if }0<z\le 2 R_*\, .
\end{equation*}%
Since $x\geq 2z/3$ in the domain of integration, we obtain 
\begin{equation*}
  2z/3 \left( \int_{[2z/3,z]}f_{\varepsilon ,R_{\ast
}}(dx)\right) ^{2}\leq \frac{c}{\varepsilon},
\end{equation*}%
which implies that 
\begin{equation}
\frac{1}{z}\int_{[2z/3,z]}f_{\varepsilon ,R_{\ast }}(dx)\leq \frac{C_{\varepsilon }}{z^{3/2}},\ \ \ 0< z\leq 2 R_{\ast },
\end{equation}
where $C_{\varepsilon }$ is a numerical constant depending on $\varepsilon $
but independent of $R_{\ast }$.
Since the right hand side is integrable on $[1,2R_{\ast }]$, 
Lemma~\ref{lem:bound} may be employed to obtain a bound
\begin{equation}
\int_{[1,2 R_*]}f_{\varepsilon ,R_{\ast }}(dx)\leq \frac{2 C_\varepsilon}{\ln(3/2)} + C_\varepsilon \frac{1}{\sqrt{2 R_{\ast }}}\,.
\label{EstTruncFunctionsmallz}
\end{equation}
Since the support of $f_{\varepsilon ,R_{\ast }}$ lies in $[1,2 R_{\ast }]$ we find that 
for all $R_{\ast }\ge 1$
\begin{equation}
\int_{\R_*}f_{\varepsilon ,R_{\ast }}(dx)\leq \bar C_{\varepsilon }\,,
\label{EstTruncFunction}
\end{equation}
where $\bar C_{\varepsilon }$ is a constant independent of $R_{\ast }$. 
Following the same argument for arbitrary lower limit $y\ge 1$, we also obtain a decay bound
\begin{equation}
\int_{[y,\infty)}f_{\varepsilon ,R_{\ast }}(dx)\leq   \bar C_{\varepsilon } y^{-\frac{1}{2}}\,.
\label{EstTruncFunctiondecay}
\end{equation}

Thus, estimate (\ref{EstTruncFunction})
implies that for each $\varepsilon$ the family of solutions $\{f_{\varepsilon ,R_{\ast }}\}_{R_{\ast }\ge 1}$  is contained in a closed unit ball of $\mathcal{M}_{+,b}\left( \mathbb{R}_*\right) $.  This is a sequentially compact set in the 
$\ast -$weak topology, and thus by taking a subsequence if needed, we can find $f_{\varepsilon
}\in \mathcal{M}_{+,b}\left( \mathbb{R}_*\right) $ such that $f_{\varepsilon
}\left( \left( 0,1\right) \right) =0$ and%
\begin{equation}
f_{\varepsilon ,R_{\ast }^{n}}\rightharpoonup f_{\varepsilon }\text{ as }%
n\rightarrow \infty \text{ in the }\ast -\text{weak topology}
\label{fWeakLimit}
\end{equation}%
with $R_{\ast }^{n}\rightarrow \infty $ as $n\rightarrow \infty$.
Note that then we can use the earlier ``step-like'' test-functions and the bounds
(\ref{EstTruncFunction}) and
(\ref{EstTruncFunctiondecay}) to conclude that also the limit functions satisfy similar estimates, namely,
\begin{equation}
\int_{\left( 0,\infty \right) }f_{\varepsilon }(dx)\leq \bar C_{\varepsilon }\,, \qquad
\int_{\left[ y,\infty \right) }f_{\varepsilon }(dx)\leq  \bar C_{\varepsilon }y^{-\frac{1}{2}}\,, \quad \text{if } y\ge 1\,.
\label{fBound}
\end{equation}%


Consider next a fixed test function $\varphi \in C_{c}(\R_*)$.  Now for all large enough values of $n$, we have  $\varphi \left( x+y\right) \zeta
_{R^n_{\ast }}\left( x+y\right) =\varphi \left( x+y\right)$ everywhere, since the support of $\varphi$ is bounded.  We claim that as $n\to \infty$, the limit of (\ref{WeakTruncForm}) is given by 
\begin{equation}
\frac 1 2\int_{\R_*^{2}}K_{\varepsilon }\left( x,y\right) [\varphi
(x+y)-\varphi (x)-\varphi (y)]f_{\varepsilon }\left( dx\right)
f_{\varepsilon }\left( dy\right) +\int_{\R_*}\varphi (x)\eta \left(
dx\right) =0.  \label{WeakFormEps}
\end{equation}
Since $f_{\varepsilon ,R_{\ast }^{n}}$ has support in $[1,2 R_{\ast }]$, it follows that we may always replace $K_{\varepsilon,R_{\ast }}(x,y)$ in (\ref{WeakTruncForm}) by 
$K_{\varepsilon }(x,y)$ without altering the value of the integral.
By the above observations,  it suffices to show that 
\begin{equation}
\lim_{n\to \infty}\int_{\R_*^{2}}\phi( x,y)
\mu_n(dx)\mu_n(dy) =
\int_{\R_*^{2}}\phi( x,y) f_{\varepsilon }\left( dx\right)
f_{\varepsilon }\left( dy\right)\, ,  \label{eq:stongereps}
\end{equation}
for $\mu_n(dx):=f_{\varepsilon ,R^n_{\ast }}\left( dx\right)$ and  
\[
 \phi( x,y) := K_{\varepsilon }\left( x,y\right) [\varphi
(x+y)-\varphi (x)-\varphi (y)]\,.
\]
Note that although $\phi\in C_b(\R_*^2)$, it typically would not have compact support.  However, the earlier tail estimates suffice to control the large values of $x,y$, as we show in detail next.

We prove (\ref{eq:stongereps}) by showing that every subsequence has a subsequence such that the limit holds.
For notational convenience, let $\mu_n$ denote the first subsequence and consider an arbitrary $\vep'>0$.
We first regularize the support of $\phi$ by choosing a function  
$g:\R_+\to [0,1]$ which is continuous and for which $g(r)=1$, for $r\le 1$, and $g(r)=0$, for $r\ge 2$.
We set $\phi_M(x,y) := g\left(\frac{x}{M}\right) g\left(\frac{y}{M}\right)g\left(\frac{1}{M x}\right) g\left(\frac{1}{M y}\right)\phi(x,y)$.
Then for every $M$, we have $\phi_M\in C_c(\R_*^2)$ and thus it is uniformly continuous.
By (\ref{fBound}), we may use dominated convergence theorem to conclude that 
$\int \phi_M(x,y)f_{\varepsilon }\left( dx\right) f_{\varepsilon }\left( dy\right)\to
\int \phi(x,y)f_{\varepsilon }\left( dx\right) f_{\varepsilon }\left( dy\right)$ as $M\to \infty$.
Thus for all sufficiently large $M$, we have
$\left|\int \phi_M(x,y)f_{\varepsilon }\left( dx\right) f_{\varepsilon }\left( dy\right)-
\int \phi(x,y)f_{\varepsilon }\left( dx\right) f_{\varepsilon }\left( dy\right)\right|<\vep'$.
On the other hand, by the decay bound in (\ref{EstTruncFunctiondecay}) we can find a constant
$C$ which does not depend on $R_\ast$ and for which 
$\left|\int \phi_M(x,y)\mu_n(dx)\mu_n(dy)-\int \phi(x,y)\mu_n(dx)\mu_n(dy)\right|\le C M^{-\frac{1}{2}}$.
We fix $M=M(\vep')$ to be a value such that also this second bound is less than $\vep'$
for all $n$.

{
In order to conclude the proof of (\ref{eq:stongereps}) it only remains to show that
 $\int\phi_{M}\left(  x,y\right)  \mu_{n}\left(  dx\right)  \mu_{n}\left(
dy\right)  $ converges to $\int\phi_{M}\left(  x,y\right)  f_{\varepsilon
}\left(  dx\right)  f_{\varepsilon}\left(  dy\right)  $ as $n\rightarrow
\infty.$ This is just a consequence of the fact that the convergence $\mu
_{n}\left(  dx\right)  \rightharpoonup f_{\varepsilon}\left(  dx\right)  $ as
$n\rightarrow\infty$ in the $\ast-$weak topology of the space $\mathcal{M}%
_{+,b}\left(  \left[  \frac{1}{2M},2M\right]  \right)  $ implies the
convergence $\mu_{n}\left(  dx\right)  \mu_{n}\left(  dy\right)
\rightharpoonup f_{\varepsilon}\left(  dx\right)  f_{\varepsilon}\left(
dy\right)  $ as $n\rightarrow\infty$ in the $\ast-$weak topology of the space
$\mathcal{M}_{+,b}\left(  \left[  \frac{1}{2M},2M\right]  ^{2}\right)  .$ This
result can be found for instance in \cite{Billings}, Theorem 3.2 for
probability measures, which implies the result for arbitrary measures using
simple rescaling arguments.
}

Since $f_\varepsilon$ is then a stationary solution to (\ref{eq:time_evol}) with $K=K_{\ep}$, 
we can apply Lemma \ref{lem:flux} directly, and conclude  that
\begin{equation}
\int_{\left(0,z\right] }xf_{\varepsilon }\left( dx\right) \int_{( 
z-x,\infty ) }K_{\varepsilon }\left( x,y\right) f_{\varepsilon
}\left( dy\right) \leq c\quad \text{if }z>0\, ,
\label{FluxEstEps}
\end{equation}%
where $c$ is defined in~\eqref{eq:bound_eta} and is independent 
of $\varepsilon$.
We now observe that (\ref{eq:B1bound}) and (\ref{eq:1levTrunc})-(\ref{B3in3}) imply for all sufficiently small $\varepsilon$ 
\begin{equation*}
K_{\varepsilon }(x,y)\geq \varepsilon+C_{0}\min\{z^{\gamma},\frac{1}{\varepsilon}\} \quad \text{for}\ \ (x,y)\in \left[ \frac{z}{2},z\right]^2
\end{equation*}
where $C_{0}>0$ is independent of $\varepsilon$ and we used that $\frac{x}{x+y}\in 
\left[\frac 1 3, \frac 2 3\right]$.
Combining this estimate with (\ref{FluxEstEps}) as well as the fact that 
\begin{equation*}
[2z/3,z]^{2}\subset \left\{ \left( x,y\right) \in \mathbb{R}_{+}^{2}:0<
x\leq z,\ z-x< y<\infty \right\} \ 
\end{equation*}
we obtain
\begin{equation*}	
\left( \varepsilon+C_{0}\min\{z^{\gamma},\frac{1}{\varepsilon}\} \right) \frac{2}{3} z \left( \int_{[2z/3,z]}f_{\varepsilon }(dx)\right) ^{2}\leq c\quad
\text{for all }z\in (0,\infty ).
\end{equation*} 
Therefore, we obtain the following estimates for the measures $f_{\varepsilon}\left(  dx\right)  $:
\begin{equation}
\frac{1}{z}\int_{\left[  \frac{2z}{3},z\right]  }f_{\varepsilon}\left(
dx\right)  \leq\frac{\tilde{C}}{z^{\frac{3}{2}}}\left(  \frac{1}{\min\left(
z^{\gamma},\frac{1}{\varepsilon}\right)  }\right)  ^{\frac{1}{2}}, \label{A1} 
\end{equation}
\begin{equation}
\frac{1}{z}\int_{\left[  \frac{2z}{3},z\right]  }f_{\varepsilon}\left(
dx\right)  \leq\frac{\tilde{C}}{z^{\frac{3}{2}}\sqrt{\varepsilon}}, \label{A2}
\end{equation}
where $\tilde{C}$ is independent of $\varepsilon$.

Consider first the case $\gamma\le 0$ and 
recall that then $p\ge 0$ and $z^\gamma \le 1$ for $z\ge 1$.  Since $f_\varepsilon((0,1))=0$, 
then the bound (\ref{A1}) implies that for all $z\ge 1$ we have
\begin{equation}
\frac{1}{z}\int_{\left[  \frac{2z}{3},z\right]  } x^{\gamma+p}f_{\varepsilon}\left(
dx\right)  \leq C z^{\frac{\gamma+2p-3}{2}}\,. \label{A1upgrade} 
\end{equation}
Since $\gamma+2 p <1$, Lemma \ref{lem:bound} implies then that
for all $y\ge 1$,
\begin{equation}
\int_{[y,\infty)  } x^{\gamma+p}f_{\varepsilon}\left(
dx\right)  \leq C y^{-\frac{1-\gamma-2p}{2}}\, \label{A1upbound} 
\end{equation}
where the constant $C$ does not depend on $\varepsilon$.
In particular, then the measures $x^{\gamma+p} f_{\varepsilon}\left(
dx\right)$ belong to a $\ast$-weak compact set, and there exist $F\in \mathcal{M}_{+,b}\left( 
\mathbb{R}_{*}\right) $ such that%
\begin{equation}
x^{\gamma+p} f_{\varepsilon_n}\left(dx\right)
\rightharpoonup F\left(dx\right)\text{ as }n\rightarrow \infty \text{
in the }\ast -\text{weak topology}  \label{fepsWeakLimitneggamma}
\end{equation}%
for some sequence $( \varepsilon _{n}) _{n\in \mathbb{N}}$
with $\lim_{n\rightarrow \infty }\varepsilon _{n}=0$.
We denote $f\left(dx\right)= x^{-\gamma-p} F\left(dx\right)$, and then $f
\in \mathcal{M}_{+}\left( 
\mathbb{R}_{*}\right) $.  In addition, $f((0,1))=0$ and it satisfies the tail estimate
\begin{equation}
\int_{[y,\infty)  } x^{\gamma+p}f\left(dx\right)
= \int_{[y,\infty)  } F\left(dx\right) 
\leq C y^{-\frac{1-\gamma-2p}{2}}\,, \quad y\ge 1\,. \label{ftailboundneggamma} 
\end{equation}

It remains to consider the case $\gamma >0$.  
Then (\ref{A1}) implies that  
\begin{align}
& \frac{1}{z}\int_{\left[  \frac{2z}{3},z\right]  }f_{\varepsilon}\left(
dx\right)  \leq  
\tilde{C} z^{-\frac{(\gamma+3)}{2}} \,, \quad 1\le z\le \varepsilon^{-\frac{1}{\gamma}}\,, \label{eq:fvepbounda} \\
&\frac{1}{z}\int_{\left[  \frac{2z}{3},z\right]  }f_{\varepsilon}\left(
dx\right)  \leq  \frac{\tilde{C} \sqrt{\varepsilon}}{z^{\frac{3}{2}}}\,,\quad z> \varepsilon^{-\frac{1}{\gamma}}\,.
\end{align}
Using these bounds in item \ref{it:Risinf} of Lemma \ref{lem:bound} implies then that
for all $y\ge 1$,
\begin{equation}
\int_{[y,\infty)  } f_{\varepsilon}\left(
dx\right)  \leq C \left( y^{-\frac{1+\gamma}{2}}+
 \left(\frac{\varepsilon}{y}\right)^{\frac{1}{2}}\right)\, \label{A1upgradeposgamma} 
\end{equation}
where the constant $C$ does not depend on $\varepsilon$.
Hence, in this case the family of measures $\left\{
f_{\varepsilon }\right\} _{\varepsilon >0}$ 
is contained in a $\ast$-weak compact set 
in $\mathcal{M}_{+,b}\left( 
\mathbb{R}_{*}\right)$. Therefore, there exists $f\in \mathcal{M}_{+,b}\left( 
\mathbb{R}_{*}\right) $ such that%
\begin{equation}
f_{\varepsilon _{n}}\rightharpoonup f\text{ as }n\rightarrow \infty \text{
in the }\ast -\text{weak topology}  \label{fepsWeakLimit}
\end{equation}%
for some sequence $\left( \varepsilon _{n}\right) _{n\in \mathbb{N}}$
with $\lim_{n\rightarrow \infty }\varepsilon _{n}=0.$

{To obtain better tail bounds for the limit measure,
let us first observe that by (\ref{eq:fvepbounda}), there is a constant $C$ such that for all $\vep$
\[
 \frac{1}{z}\int_{\left[  \frac{2z}{3},z\right]  } x^{\gamma+p}f_{\varepsilon}\left(
dx\right)  \leq   C z^{\frac{\gamma+2p-1}{2}-1}\,, \quad 1\le z\le \varepsilon^{-\frac{1}{\gamma}}\,.
\]
Therefore, applying item \ref{it:polcase} of Lemma \ref{lem:bound} with $r=\frac{1}{2}$, and using the assumption $\gamma+2 p <1$,
we can adjust the constant $C$ so that 
\begin{align}\label{eq:fvepafinbound}
 \int_{[a,\varepsilon^{-\frac{1}{\gamma}}]  } x^{\gamma+p}f_{\varepsilon}\left(
dx\right)  \leq   C a^{-\frac{1-(\gamma+2p)}{2}}\,, \quad 1\le a\le \frac{1}{2}\varepsilon^{-\frac{1}{\gamma}}\,.
\end{align}

Let then $y,R\ge 1$ be such that $y<R$ but they are otherwise arbitrary.  We choose a test function $\varphi\in C_c(\R_*)$,
such that $0\leq \varphi \leq 1$, $\varphi(x)=1$ for $y\leq x\leq R$, and $\varphi(x)=0$ for $x\geq 2R$ and for $x\le \frac{1}{2}
y$.
Then, if also $\vep\le (2R)^{-\gamma}$, we have
\[
 \int_{\R_*}\varphi(x)  x^{\gamma+p}f_{\varepsilon}\left(
dx\right) \le  \int_{[\frac{1}{2}y,2R]} x^{\gamma+p}f_{\varepsilon}\left(dx\right)
\le C 2^{\frac{1-(\gamma+2p)}{2}}y^{-\frac{1-(\gamma+2p)}{2}} \,,
\]
where for values $y\le 2$ the estimate follows by using $f_{\varepsilon}((0,1))=0$ and then $a=1$ in (\ref{eq:fvepafinbound}).  Applying this with $\vep=\vep_n$ and then taking $n\to\infty$ proves that 
\[
  \int_{[y,R]} x^{\gamma+p}f\left(dx\right)
\le
 \int_{\R_*}\varphi(x)  x^{\gamma+p}f\left(
dx\right) \le C 2^{\frac{1-(\gamma+2p)}{2}}y^{-\frac{1-(\gamma+2p)}{2}} \,.
\]
Here we may take $R\to \infty$, and using monotone convergence theorem we can conclude that $f$ satisfies a tail estimate identical to the earlier case with $\gamma\le 0$, namely, also for $\gamma>0$ we can find a constant $C$ such that
\begin{equation}
\int_{[y,\infty)  } x^{\gamma+p}f\left(dx\right)
\leq C y^{-\frac{1-\gamma-2p}{2}}\,, \quad y\ge 1\,. \label{ftailboundposgamma} 
\end{equation}

}

It only remains to take the limit $\varepsilon _{n}\rightarrow 0$ in (\ref%
{WeakFormEps}). Suppose that $\varphi \in C_{c}\left( \mathbb{R}_{*}\right)
. $   Then, in the term containing $\varphi (x+y)$ we have that the integrand
is different from zero only in a bounded region. Using then that for any $q\in \R$ we have $
\lim_{\varepsilon \rightarrow 0} (x y)^{q} K_{\varepsilon }\left( x,y\right) = (x y)^{q} K\left(
x,y\right) $ uniformly in compact subsets of $\R_*$, as well as \eqref{fepsWeakLimitneggamma} and (\ref{fepsWeakLimit}), we
obtain that the limit of that term is%
\begin{equation*}
\int_{(0,\infty )^{2}}K\left( x,y\right) \varphi (x+y)f\left(
dx\right) f\left( dy\right).
\end{equation*}

The terms containing $\varphi \left( x\right) $ or $\varphi \left( y\right) $
can be treated analogously due to the symmetry under the transformation $%
x\leftrightarrow y.$ We then consider the limit of the term containing $%
\varphi \left( x\right) $ where $\varphi \in C_{c}\left( \mathbb{R}_{*}\right) .$  Our goal is to show that the contribution to the integral due to regions $\{y\geq M\}$ where $M$ is very large, can be made arbitrarily small as $M\to\infty$, uniformly in $\varepsilon.$ 
Suppose that $M$ is chosen sufficiently large, so that the support of $\varphi$ is contained in $\left(  0,M\right) $.  We then have
the following identity:
\begin{align*}
& \int_{\mathbb{R}_{*}^{2}\cap\left\{  y\geq M\right\}  }K_{\varepsilon
}\left(  x,y\right)  \varphi\left(  x\right)  f_{\varepsilon}\left(
dx\right)  f_{\varepsilon}\left(  dy\right)  \\
& =\int_{\mathbb{R}_{*}^{2}\cap\left\{  y\geq M\right\}  }\left[  \min\left\{
\left(  x+y\right)  ^{\gamma},\frac{1}{\varepsilon}\right\}  \Phi
_{\varepsilon}\left(  \frac{x}{x+y}, x\right)  +\varepsilon\right]
\varphi\left(  x\right)  f_{\varepsilon}\left(  dx\right)  f_{\varepsilon
}\left(  dy\right) \, .
\end{align*}

Given that only values with $x\geq 1$ may contribute, and $x$ is in a bounded region contained in $\left[
1,M\right]  $, we obtain, using \eqref{A1}, an estimate
\begin{align}
&  \int_{\mathbb{R}_{*}^{2}\cap\left\{  y\geq M\right\}  }K_{\varepsilon
}\left(  x,y\right)  \varphi\left(  x\right)  f_{\varepsilon}\left(
dx\right)  f_{\varepsilon}\left(  dy\right) \nonumber \\
&  \quad \leq C \sup_{x\in \supp \varphi}
\int_{\left\{  y\geq M\right\}  }
\min\left\{
\left(  x+y\right)  ^{\gamma},\frac{1}{\varepsilon}\right\}  \Phi
_{\varepsilon}\left(  \frac{x}{x+y}, x\right) 
f_{\varepsilon
}\left(  dy\right)
+C\varepsilon\int_{\left\{
y\geq M\right\}  }f_{\varepsilon}\left(  dy\right)\,.  \label{eq:EstA3}
\end{align}
Using (\ref{A2}) we can bound the second term uniformly,
\[
C\varepsilon\int_{\left\{  y\geq M\right\}  }f_{\varepsilon}\left(  dy\right)
\leq C\sqrt{\varepsilon}\int_{\left\{  y\geq M\right\}  }\frac
{dy}{y^{\frac{3}{2}}}\leq\frac{C\sqrt{\varepsilon}}{M^{\frac{1}{2}}}\,,%
\]
where the constant $C$ is always independent of $\varepsilon$, although it might need to be adjusted at each inequality.
Therefore, the second term on the right hand side of \eqref{eq:EstA3} tends to zero as $\varepsilon\rightarrow0.$

In order to estimate the first term we need to consider separately different ranges of the values of the exponents $p$
 and $\gamma$. 
 We claim that for $1\leq x  \leq  C_0 \leq M,$ $y\geq M$, where $\supp \varphi \subset [0,C_0]$, the following estimates hold for some constants $C$, $C_*>0$ which do not depend on $\varepsilon$: 
\begin{enumerate}
\item If $\gamma\leq0$ and $p\leq0$ we have 
\begin{equation}
\min\left\{  \left(  x+y\right)  ^{\gamma},\frac{1}{\varepsilon}\right\}
\Phi_{\varepsilon}\left(  \frac{x}{x+y}, x\right)  \leq C  
\, . 
\label{C1}%
\end{equation}
\item  If $\gamma>0$ and $p\leq0$ we have 
\begin{align}
& \min\left\{  \left(  x+y\right)  ^{\gamma},\frac{1}{\varepsilon}\right\}
\Phi_{\varepsilon}\left(  \frac{x}{x+y}, x\right)
\nonumber \\ & \quad 
\leq C\left(  y^{\gamma
+\lambda}+y^{-\lambda}\right)  \chi_{\left\{  y\le \left(  \frac
{1}{\varepsilon}\right)  ^{\frac{1}{\gamma}}\right\}  }+\frac{C}{\varepsilon
}\left(  y^{\lambda}+y^{-\gamma-\lambda}\right)  \chi_{\left\{  y>\left(
\frac{1}{\varepsilon}\right)  ^{\frac{1}{\gamma}}\right\}  }\, ,
\label{C2}%
\end{align}
where $\chi_{U}$ is the characteristic function of the set $U.$
\item If $\gamma\leq0$ and $p>0$ we have
\begin{equation}
\min\left\{  \left(  x+y\right)  ^{\gamma},\frac{1}{\varepsilon}\right\}
\Phi_{\varepsilon}\left(  \frac{x}{x+y}, x\right)  \leq C\left(  y^{\gamma
+\lambda}+y^{-\lambda}\right)  \chi_{\left\{  y\leq C_{\ast}
\varepsilon  ^{-\frac{\sigma}{p}}\right\}  }\,.
\label{C3}%
\end{equation}
\item If $\gamma>0$ and $p>0$ we have 
\begin{equation}
\min\left\{  \left(  x+y\right)  ^{\gamma},\frac{1}{\varepsilon}\right\}
\Phi_{\varepsilon}\left(  \frac{x}{x+y}, x\right)  \leq C\left(  y^{\gamma
+\lambda}+y^{-\lambda}\right)  \chi_{\left\{  y\leq C_{\ast}
\varepsilon  ^{-\frac{\sigma}{p}}\right\}  }\, .
\label{C4}%
\end{equation}
\end{enumerate}

\bigskip

{
In the case 1 we use the fact that, since $p\leq0,$ we have $\sigma=0.$ Then \eqref{B2in3} implies that $\Phi_{\varepsilon}\left(  s,x\right)  \leq C.$ On the other hand, using that $\gamma\leq0$ and $x\geq1,\ y\geq M$ we have $\min\left\{  \left(  x+y\right)  ^{\gamma},\frac{1}{\varepsilon}\right\}\leq1$ whence \eqref{C1} follows.

In the case 2, we use the fact that since $p\leq0$ we have $\sigma=0.$
Moreover, since $x\leq M$ and $y\geq M$ we have that $s=\frac{x}{x+y}\leq
\frac{1}{2}.$ Given that $p=\max\left\{  \lambda,-\left(  \gamma
+\lambda\right)  \right\}  \leq0$ we have $\lambda\leq0,$ $\left(
\gamma+\lambda\right)  \geq0.$ Then \eqref{B2in3} implies that $\Phi_{\varepsilon
}\left(  s,x\right)  \leq C\left(  s^{\gamma+\lambda}+s^{-\lambda}\right)  $.
Using then that $y\leq\left(  x+y\right)  \leq2y$ as well as $1\leq x\leq
C_{0}\leq M\leq y$ we obtain $\Phi_{\varepsilon}\left(  s,x\right)  \leq
C\left(  \frac{x^{\gamma+\lambda}}{\left(  y\right)  ^{\gamma+\lambda}}%
+\frac{x^{-\lambda}}{\left(  y\right)  ^{-\lambda}}\right)  \leq C\left(
y^{\lambda}+y^{-\gamma-\lambda}\right)  .$ On the other hand, in order to
estimate $\min\left\{  \left(  x+y\right)  ^{\gamma},\frac{1}{\varepsilon
}\right\}  $ we use the fact that since $1\leq x\leq C_{0}\leq M\leq y$ we
have $\min\left\{  \left(  x+y\right)  ^{\gamma},\frac{1}{\varepsilon
}\right\}  \leq C\min\left\{  y^{\gamma},\frac{1}{\varepsilon}\right\}  .$
Considering separately the cases $y\leq\left(  \frac{1}{\varepsilon}\right)
^{\frac{1}{\gamma}}$ and $y>\left(  \frac{1}{\varepsilon}\right)  ^{\frac
{1}{\gamma}}$ we obtain 
$$\min\left\{  y^{\gamma},\frac{1}{\varepsilon
}\right\}  \leq C\left(  y^{\gamma}\chi_{\left\{  y\leq\left(  \frac
{1}{\varepsilon}\right)  ^{\frac{1}{\gamma}}\right\}  }+\frac{1}{\varepsilon
}\chi_{\left\{  y>\left(  \frac{1}{\varepsilon}\right)  ^{\frac{1}{\gamma}
}\right\}  }\right).$$
 Multiplying the estimates obtained for $\Phi
_{\varepsilon}\left(  s,x\right)  $ and $\min\left\{  \left(  x+y\right)
^{\gamma},\frac{1}{\varepsilon}\right\}  $ we derive \eqref{C2}.

In the cases 3 and 4 we use the fact that, since $p>0$ we have $0<\sigma
<\frac{p}{\gamma}.$ Using \eqref{B3in3} we obtain that $\Phi_{\varepsilon}\left(
s,x\right)  =0$ if $s\leq C \varepsilon  ^{\frac{\sigma}{p}}.$
Using that $\frac{x}{2y}\leq s=\frac{x}{x+y}\leq\frac{x}{y}$ and that $1\leq
x\leq C_{0}$ it follows that $\Phi_{\varepsilon}\left(  s,x\right)  =0$ for
$y>C_{\ast} \left( \frac{1}{  \varepsilon}\right) ^{\frac{\sigma}{p}}$
for some
$C_{\ast}>0.$ On the other hand \eqref{B3in3} as well as the fact that $s\leq\frac
{1}{2}$ implies also that $\Phi_{\varepsilon}\left(  s,x\right)  \leq\frac
{C}{s^{p}}.$ We then have%
\begin{equation}
\Phi_{\varepsilon}\left(  s,x\right)  \leq\frac{C}{s^{p}}\chi_{\left\{
y\leq C_{\ast} \left( \frac{1}{  \varepsilon}\right) ^{\frac{\sigma}{p}}\right\}\ .
}\label{IntStep}%
\end{equation}

We now remark that in the case 3, since $\gamma\leq 0$ and $y\geq1$ we have
$\left(  x+y\right)  ^{\gamma}\leq\frac{1}{\varepsilon}$ whence $\min\left\{
\left(  x+y\right)  ^{\gamma},\frac{1}{\varepsilon}\right\}  =\left(
x+y\right)  ^{\gamma}\leq y^{\gamma}.$ Combining this inequality with
(\ref{IntStep}) we then obtain
\begin{equation}
\Phi_{\varepsilon}\left(  s,x\right)  \min\left\{  \left(  x+y\right)
^{\gamma},\frac{1}{\varepsilon}\right\}  \leq\frac{C}{s^{p}}y^{\gamma}%
\chi_{\left\{  y\leq C_{\ast} \left( \frac{1}{  \varepsilon}\right) ^{\frac{\sigma}{p}}\right\}  }\ .\label{IntStep2}%
\end{equation}

In the case 4 we can derive a similar estimate. To this end we use the fact
that since $\sigma<\frac{p}{\gamma}$ it follows, since $\gamma>0,$ that
$\Phi_{\varepsilon}\left(  s,x\right)  =0$ if $y^{\gamma}\geq\frac
{1}{\varepsilon}$ and $\varepsilon$ is sufficiently small, because then
$y\geq\frac{1}{\varepsilon^{\frac{1}{\gamma}}}\geq C_{\ast} \left( \frac{1}{  \varepsilon}\right) ^{\frac{\sigma}{p}}.$
 Then $\Phi_{\varepsilon}\left(
s,x\right)  \min\left\{  \left(  x+y\right)  ^{\gamma},\frac{1}{\varepsilon
}\right\}  \leq C\Phi_{\varepsilon}\left(  s,x\right)  \min\left\{  y^{\gamma
},\frac{1}{\varepsilon}\right\}  \leq C\Phi_{\varepsilon}\left(  s,x\right)
y^{\gamma}\leq\frac{C}{s^{p}}y^{\gamma}\chi_{\left\{  y\leq C_{\ast} \left( \frac{1}{  \varepsilon}\right) ^{\frac{\sigma}{p}}\right\}  }$ which yields
the inequality (\ref{IntStep2}) that we had obtained also in the case 3. Using
then that $p=\max\left\{  \lambda,-\left(  \gamma+\lambda\right)  \right\}
>0$ and $y\geq1$ we obtain $y^{p}\leq y^{\lambda}+y^{-\gamma-\lambda}$ whence
both \eqref{C3} and \eqref{C4} follow.

}

We can now estimate the first term on the right hand side of \eqref{eq:EstA3} in all the cases.
{ We first observe that in the cases 1, 3 and 4 (cf. \eqref{C1}, \eqref{C3}, \eqref{C4}) the region, where the integrand is non-zero, is contained in the set $V_{\gamma
,\varepsilon,M}=\left\{  y\in\mathbb{R}_{\ast}: y \geq M,\ y^{\gamma}\leq\frac
{1}{\varepsilon}\right\}  .$ This follows immediately in the cases \eqref{C1},
\eqref{C3}, since in those cases $\gamma\leq0$ and then $y^{\gamma}\leq1\leq
\frac{1}{\varepsilon}.$ In the case of \eqref{C4} we remark that due to the
presence of the characteristic function on the right of \eqref{C4} the region is restricted to to the set $\left\{  1\leq y\leq C_{\ast}
\varepsilon  ^{-\frac{\sigma}{p}}\right\}  .$ Since in this case
$\gamma>0$ it follows that this set is the same as $\left\{  1\leq y^{\gamma
}\leq C_{\ast} \varepsilon  ^{-\frac{\sigma\gamma}{p}}\right\}
.$ Using then that $0<\sigma<\frac{p}{\gamma}$ it follows that $C_{\ast
} \varepsilon  ^{-\frac{\sigma\gamma}{p}}\leq\frac
{1}{\varepsilon}$ for $\varepsilon$ sufficiently small. Then, the region of
non-zero integrand is contained in $V_{\gamma,\varepsilon,M}$ also in this case, as
claimed. We now remark that for any $y\in V_{\gamma,\varepsilon,M}$ we have
$\min\left\{  y^{\gamma},\frac{1}{\varepsilon}\right\}  =y^{\gamma}.$ 
 Then \eqref{A1} implies that 
\begin{equation}\label{eq:estimate_V}
\frac{1}{y}\int_{\left[  \frac{2y}{3},y\right]  }f_{\varepsilon}\left(
dx\right)  \leq\frac{\tilde{C}}{y^{\frac{3+\gamma}{2}}}\,, \quad\text{if
} y \in V_{\gamma,\varepsilon,M}\ .
\end{equation}

We then obtain, using \eqref{C1}, \eqref{C3}, \eqref{C4}, the following estimate for the
first term on the right hand side of \eqref{eq:EstA3} in the cases 1, 3 and 4
\begin{equation}
\sup_{x\in\operatorname*{supp}\varphi}\int_{\left\{  y\geq M\right\}  }%
\min\left\{  \left(  x+y\right)  ^{\gamma},\frac{1}{\varepsilon}\right\}
\Phi_{\varepsilon}\left(  \frac{x}{x+y},x\right)  f_{\varepsilon}\left(
dy\right)  \leq C\int_{V_{\gamma,\varepsilon,M}}\left(  y^{\gamma+\lambda
}+y^{-\lambda}\right)  f_{\varepsilon}\left(  dy\right)\ .  \label{IntEstB1}%
\end{equation}
Notice that in the cases 3 and 4, estimate (\ref{IntEstB1}) follows from
\eqref{C3}, \eqref{C4}. In the case 1 we use that $p=\max\left\{  \lambda,-\left(
\gamma+\lambda\right)  \right\}  \leq0.$ Then $\left(  \gamma+\lambda\right)
\geq0$ and $-\lambda\geq0$ and we can use then \eqref{C1} to show that
$\min\left\{  \left(  x+y\right)  ^{\gamma},\frac{1}{\varepsilon}\right\}
\Phi_{\varepsilon}\left(  \frac{x}{x+y},x\right)  \leq C\left(  y^{\gamma
+\lambda}+y^{-\lambda}\right)  $ in the region of integration. Therefore
(\ref{IntEstB1}) follows also in this case.

We can now combine \eqref{eq:estimate_V} and (\ref{IntEstB1}) with Lemma \ref{lem:bound} to obtain
\begin{align*}
& \sup_{x\in\operatorname*{supp}\varphi}\int_{\left\{  y\geq M\right\}  }%
\min\left\{  \left(  x+y\right)  ^{\gamma},\frac{1}{\varepsilon}\right\}
\Phi_{\varepsilon}\left(  \frac{x}{x+y},x\right)  f_{\varepsilon}\left(
dy\right)  \\
& \leq C\int_{  V_{\gamma,\varepsilon,M} }\left(  \frac{y^{\gamma+\lambda
}+y^{-\lambda}}{y^{\frac{3+\gamma}{2}}}\right)  dy
\leq C\int_{\left\{  y\geq
M\right\}  }\left(  y^{\frac{\gamma}{2}+\lambda-\frac{3}{2}}+y^{-\left(
\lambda+\frac{\gamma}{2}\right)  -\frac{3}{2}}\right)  dy
\leq\frac{C}{M^{b}}%
\end{align*}
with $b>0$ since $\left\vert \gamma + 2\lambda\right\vert <1.$
Thus this term can be made arbitrarily small by taking $M\rightarrow\infty.$
}

It only remains to examine in detail the case (\ref{C2}). In this case we
obtain:%
\begin{align*}
& \int_{\left\{  y\geq M\right\}  }\min\left\{  \left(  x+y\right)  ^{\gamma
},\frac{1}{\varepsilon}\right\}  \Phi_{\varepsilon}\left(  \frac{x}%
{x+y}, x\right)  f_{\varepsilon}\left(  dy\right)  \\
& \leq C\int_{\left\{  y\geq M\right\}  }\frac{\left(  y^{\gamma+\lambda
}+y^{-\lambda}\right)  }{y^{\frac{3+\gamma}{2}}}dy+\frac{C}{\varepsilon}%
\int_{\left\{  y\geq\left(  \frac{1}{\varepsilon}\right)  ^{\frac{1}{\gamma}%
}\right\}  }\frac{\left(  y^{\lambda}+y^{-\gamma-\lambda}\right)  }%
{y^{\frac{3+\gamma}{2}}}dy\ .
\end{align*}

The first integral can be estimated as $\frac{C}{M^{b}}$ with $b>0$ arguing as
before (using $\left\vert \gamma+2\lambda\right\vert <1$). It only remains to
estimate the last integral. We have $p=\max\left\{  \lambda,-\left(
\gamma+\lambda\right)  \right\}  \leq0,$ whence $\lambda\leq0$ and $-\left(
\gamma+\lambda\right)  \leq0.$ In this case we have also $\gamma>0.$
Hence, the second integral converges and it can be estimated as 
\begin{align*}
& \frac{C}{\varepsilon}\int_{\left\{  y\geq\left(  \frac{1}{\varepsilon
}\right)  ^{\frac{1}{\gamma}}\right\}  }\frac{\left(  y^{\lambda}%
+y^{-\gamma-\lambda}\right)  }{y^{\frac{3+\gamma}{2}}}dy 
\leq\frac{C}{\varepsilon}\left[  \left(  \left(  \frac{1}{\varepsilon
}\right)  ^{\frac{1}{\gamma}}\right)  ^{\lambda-\frac{1+\gamma}{2}}+\left(
\left(  \frac{1}{\varepsilon}\right)  ^{\frac{1}{\gamma}}\right)
^{-\gamma-\lambda-\frac{1+\gamma}{2}}\right]  \\
& =\frac{C}{\varepsilon}\left[  \left(  \varepsilon\right)  ^{\frac{1+\gamma
}{2\gamma}-\frac{\lambda}{\gamma}}+\left(  \varepsilon\right)  ^{1+\frac
{\lambda}{\gamma}+\frac{1+\gamma}{2\gamma}}\right]  =C\left[  \left(
\varepsilon\right)  ^{\frac{1+\gamma}{2\gamma}-\frac{\lambda}{\gamma}%
-1}+\left(  \varepsilon\right)  ^{\frac{\lambda}{\gamma}+\frac{1+\gamma
}{2\gamma}}\right]  \\
& =C\left[  \left(  \varepsilon\right)  ^{\frac{1}{2\gamma
}\left(  1-2\lambda-\gamma\right)}+\left(  \varepsilon\right)  ^{\frac{1}{2\gamma
}\left(  1+2\lambda+\gamma\right)}\right].
\end{align*}
Thus the integral converges to zero as $\varepsilon\rightarrow0$ since $\vert \gamma +2\lambda \vert <1$.

Therefore, we
can take the limit $\varepsilon _{n}\rightarrow 0$ as $n\rightarrow \infty $ 
in (\ref{WeakFormEps}) with an arbitrary large $M$.
Then $M\to \infty$ can be taken by the assumed bounds on $K$ 
and using the tail estimates (\ref{ftailboundneggamma}) or (\ref{ftailboundposgamma}).
This yields
\begin{equation}\label{eq:weaksoldef}
\int_{( 0,\infty )^{2}}K\left( x,y\right) [\varphi (x+y)-\varphi
(x)-\varphi (y)]f\left( dx\right) f\left( dy\right) +\int_{(0,\infty
)}\varphi (x)\eta \left( dx\right) =0\,,
\end{equation}%
for any $\varphi \in C_{c}\left( \mathbb{R}_{*}\right) .$ In particular, $f\neq 0$ due to $\eta \neq 0$.
Taking
the limit of (\ref{A1}) as $\varepsilon \rightarrow 0$ we arrive at%
\begin{equation*}
\frac{1}{z}\int_{[2z/3,z]}f(dx)\leq \frac{\widetilde C}{z^{3/2+\gamma /2}}\ \ \text{ for
all }z\in (0,\infty ),
\end{equation*}
which implies 
\begin{equation*}
\frac{1}{z}\int_{[2z/3,z]}x^{\mu} f(dx)\leq \widetilde C \frac{z^{\mu} }{z^{3/2+\gamma /2}}\ \ \text{ for
all }z\in (0,\infty ),
\end{equation*}
for any $\mu \in \R$.
From Lemma~\ref{lem:bound} we obtain the boundedness of the  moment of order $\mu$:
\begin{equation}\label{eq:moment_mu}
\int_{(0,\infty)}x^{\mu} f(dx)< \infty
\end{equation}
 for any $\mu$ satisfying $\mu < \frac{\gamma + 1}{2}$. In particular, since $|\gamma + 2\lambda| <1$, then the moments  $\mu = -\lambda$ and $\mu = \gamma + \lambda$ are bounded, which proves (\ref{eq:moment_cond}).
\end{proofof}

\bigskip 

\bigskip

\section{Nonexistence result: Continuous model}\label{sec:nonexistence}

{

The rationale behind the proof of Theorem \ref{thm:NonExistence} is the following. The solutions
of \eqref{eq:contStat} satisfy \eqref{eq:flux_J0} for large values of $x$ with $J\left(  x;f\right)  $ as
in \eqref{eq:flux} and $J_{0}=\int x\eta\left(  dx\right)  $. A detailed analysis of the
contributions to the integrand of the different regions, using also the
assumption \eqref{eq:moment_cond}, that is the minimal assumption required to define a
solution of \eqref{eq:contStat}, shows that $J\left(  x;f\right)  $ can be approximated for
large values of $x$ as%
\begin{equation}
\int\int_{\left\{  y+z>x,\ y\leq x\right\}  \cap\left\{  z\leq\delta
y\right\}  }yK\left(  y,z\right)  f\left(  y\right)  f\left(  z\right)
dydz\simeq J_{0}\label{ApproxNonEx}%
\end{equation}
where $\delta>0$ can be chosen arbitrarily small. By assumption $K\left(
y,z\right)  \approx   y  ^{\gamma+\lambda}z^{-\lambda}+
z  ^{\gamma+\lambda}y^{-\lambda}.$ Suppose that $\gamma+\lambda
\geq0>-\lambda,$ since the other ranges of exponents can be studied with
slight modifications of the arguments. Notice that the assumption $\left\vert
\gamma+2\lambda\right\vert \geq1$ then implies, since $\gamma+2\lambda\geq0,$
that $\gamma+2\lambda\geq1.$ Then $\gamma+\lambda\geq1-\lambda$ and \eqref{eq:moment_cond}
implies that%
\begin{equation}
\int_{1}^{\infty}f\left(  z\right)  z^{1-\lambda}dz<\infty\ .\label{IntBound}%
\end{equation}

Moreover, we can approximate (\ref{ApproxNonEx}), using the form of the region
of integration, as%
\begin{equation}
\left(  x\right)  ^{\gamma+\lambda+1}\int\int_{\left\{  y+z>x,\ y\leq
x\right\}  \cap\left\{  z\leq\delta y\right\}  }f\left(  y\right)  f\left(
z\right)  z^{-\lambda}dydz\simeq J_{0}\ .\label{ApproxFlux}%
\end{equation}

We define $F\left(  x\right)  =\int_{x}^{\infty}f\left(  y\right)  dy$. This
integral is well defined due to \eqref{eq:moment_cond} and the fact that $\gamma+\lambda
\geq0.$ We can then approximate (\ref{ApproxFlux}) for large values of $x$, as%
\begin{equation}
\int_{1}^{\delta x}\left[  F\left(  x-z\right)  -F\left(  x\right)  \right]
f\left(  z\right)  z^{-\lambda}dz\simeq\frac{J_{0}}{\left(  x\right)
^{\gamma+\lambda+1}}\ .\label{NonLocEqu}%
\end{equation}

The equation (\ref{NonLocEqu}) can be thought as a nonlocal differential
equation. Due to (\ref{IntBound}) we can approximate formally (\ref{NonLocEqu}%
) for large values of $x$ as%
\begin{equation}
-\frac{dF}{dx}\simeq\frac{J_{0}}{\int_{1}^{\infty}f\left(  z\right)
z^{1-\lambda}dz}\frac{1}{\left(  x\right)  ^{\gamma+\lambda+1}}\ .\label{DE}%
\end{equation}
Therefore, using the definition of $F$ we formally obtain that $f\left(
x\right)  \simeq\frac{C}{\left(  x\right)  ^{\gamma+\lambda+1}}$ as
$x\rightarrow\infty.$ However, this implies that $\int_{1}^{\infty}%
x^{\gamma+\lambda}f\left(  x\right)  dx=\infty$ which contradicts \eqref{eq:moment_cond}. This
argument is formal and instead of approximating $K\left(  y,z\right)  $ by
means of power laws we must use the inequalities \eqref{eq:cond_kernel2}, \eqref{eq:cond_kernel3}. The solutions
of (\ref{NonLocEqu}) can be estimated in terms of the solutions of (\ref{DE})
by means of maximum principle arguments which are described in the following Lemmas.

}

\bigskip
\begin{lemma}
\label{lem:F+estimate}
 Let  $a$ and $b$ be constants satisfying $a\geq 0$ and  $(a-b)\geq 1$. Let $F: \R_* \to \R$ be a right-continuous non-increasing function satisfying $F(R)\geq 0$, for all $R> 0$. Assume that $f \in \mathcal{M}_+(\R_*)$ satisfies $f([1,\infty))> 0$ and
\begin{equation}\label{eq:momentBound1}
\int_{[1,\infty)} x^a f(dx) < \infty \,.
\end{equation}
There exists $\delta_0\in(0,1)$ which depends only on $a$ such that the following result holds: 

If $0<\delta\le  \delta_0$, $R_0> 1/\delta $, and $C>0$ are such that 
\begin{equation}
-\int_{\left[ 1,\delta R\right] }\left[ F\left( R-y\right) -F\left( R\right) %
\right] y^{b}f\left( dy\right) \leq -\frac{C}{R^{a +1}}\,,\quad \text{for }R\geq R_{0},\label{S4E9_1}
\end{equation}
then there are $R_0' \geq R_0$ and $B>0$ which depend only on $a$, $f$, $\delta$, $R_0$, and $C$, such that
if $a>0$ then 
\begin{equation}
F\left( R\right) \geq \frac{B}{R^{a}}\, , \quad \text{for \ }
R\geq  R_{0}', \label{S5E3_1}
\end{equation}
else, if $a=0$, then 
\begin{equation}
F\left( R\right) \geq B \log(R)\, , \quad \text{for \ }
R\geq  R_{0}'.\label{S5E3_12}
\end{equation}
\end{lemma}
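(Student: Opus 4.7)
The strategy is to integrate the hypothesis \eqref{S4E9_1} over $R$, apply Fubini--Tonelli to swap the $R$ and $y$ integrals, and use the monotonicity of $F$ together with the moment condition to turn the result into a pointwise lower bound on $F((1-\delta)R_1)$. First, since $F$ is non-negative and non-increasing, the limit $F_\infty := \lim_{R\to\infty} F(R) \ge 0$ exists, and \eqref{S4E9_1} depends on $F$ only through the increments $F(R-y)-F(R)$. Thus replacing $F$ by $\widetilde F := F - F_\infty \ge 0$ preserves the hypothesis with the same constants and gives $\widetilde F(\infty)=0$; since a lower bound on $\widetilde F$ implies the same bound on $F$, we may assume $F_\infty = 0$.

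Write $H(R) := \int_{[1,\delta R]}[F(R-y)-F(R)]\, y^b f(dy)$, so \eqref{S4E9_1} reads $H(R) \ge CR^{-(a+1)}$ for $R\ge R_0$. Integrating over $R\in[R_1,R_2]$ with $R_2 \ge R_1 \ge R_0$ and applying Fubini--Tonelli to the non-negative integrand gives
\begin{equation*}
\int_{R_1}^{R_2} H(R)\, dR \;=\; \int_{[1,\delta R_2]} y^b f(dy) \int_{R_1'}^{R_2}[F(R-y)-F(R)]\, dR,\qquad R_1' := \max(R_1, y/\delta).
\end{equation*}
Writing the increment as $F(R-y)-F(R) = \mu([R-y, R))$ for the Stieltjes measure $\mu := -dF$ and exchanging integrations again (equivalently, substituting $s=R-y$ and using the monotonicity of $F$ on $[R_1'-y, R_1']$ together with $F\ge 0$), the inner integral is bounded by $y\, F(R_1'-y)$. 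In both cases $y \le \delta R_1$ (so $R_1'=R_1$, $R_1'-y \ge (1-\delta)R_1$) and $y > \delta R_1$ (so $R_1'=y/\delta$, $R_1'-y = y(1-\delta)/\delta > (1-\delta)R_1$), monotonicity of $F$ yields $F(R_1'-y) \le F((1-\delta)R_1)$. The hypothesis $a-b \ge 1$ together with $y \ge 1$ gives $y^{b+1} \le y^a$, so \eqref{eq:momentBound1} ensures $M_{b+1} := \int_{[1,\infty)} y^{b+1} f(dy) \in (0,\infty)$, whence the key upper estimate
\begin{equation*}
\int_{R_1}^{R_2} H(R)\, dR \;\le\; M_{b+1}\, F((1-\delta) R_1).
\end{equation*}

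Pairing this with the lower bound $\int_{R_1}^{R_2} H(R)\, dR \ge C\int_{R_1}^{R_2} R^{-(a+1)} dR$ and setting $R := (1-\delta)R_1$: for $a>0$, sending $R_2 \to \infty$ gives $C/(aR_1^a) \le M_{b+1}\, F((1-\delta)R_1)$, which after substitution yields \eqref{S5E3_1} with $B = C(1-\delta)^a/(aM_{b+1})$ and $R_0' = R_0$. For $a=0$, the lower bound becomes $C\log(R_2/R_1)$, which grows without bound in $R_2$; choosing $R_2 = R^2/(1-\delta)$ so that $R_2/R_1 = R$ (and $R_2 \ge R_0$ for $R$ sufficiently large) yields \eqref{S5E3_12} with $B = C/M_{b+1}$. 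The central technical point---and the reason one must integrate over $R$ rather than iterate a pointwise estimate---is that the factor of $y$ coming from the length of the interval in the inner bound combines with the weight $y^b f(dy)$ to produce the moment $M_{b+1}$, exactly what is needed to match the exponent $a$ (resp.\ the logarithm) in the conclusion; a crude pointwise bound $H(R) \le M_b[F((1-\delta)R)-F(R)]$ followed by geometric iteration would instead give $F(R) \gtrsim R^{-a-1}$, one order too weak.
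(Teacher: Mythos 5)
Your proof is correct, and it takes a genuinely different route from the paper's. The paper proves the lemma by a comparison (maximum-principle) argument: it builds an explicit barrier $F_*(R)=2B/R^a$ (resp.\ $-B\log R$ when $a=0$) satisfying the reversed nonlocal inequality, supposes for contradiction that $F$ drops below $B/R_1^a$ at some $R_1$, and propagates the comparison via a first-crossing argument for $G:=F_*-\tfrac{B}{2}R_1^{-a}-F$ until $F$ is forced negative. You instead integrate the hypothesis in $R$, apply Tonelli, and bound the inner $R$-integral of the increment $F(R-y)-F(R)$ by $y\,F((1-\delta)R_1)$; the extra factor $y$ merges with the weight $y^{b}$ into the moment $M_{b+1}\le\int_{[1,\infty)} y^a f(dy)<\infty$, and letting $R_2\to\infty$ (resp.\ choosing $R_2$ with $\log(R_2/R_1)=\log R$ when $a=0$) gives the conclusion directly. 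Your route is shorter, produces explicit constants ($B=C(1-\delta)^a/(aM_{b+1})$, $R_0'=R_0$), imposes no size restriction on $\delta$ beyond $\delta<1$ (the paper needs $\delta\le\delta_0(a)$ so that e.g.\ $(1-\delta)^{a+1}\ge \tfrac34$), and does not need the auxiliary condition $f([1,\delta R_0])>0$ that the paper arranges by enlarging $R_0$. You also correctly isolate why $a-b\ge 1$ is the right hypothesis (finiteness of $M_{b+1}$) and why a naive one-step pointwise estimate would only yield $R^{-a-1}$. One small point worth spelling out: when $R_2-y<R_1'=\max(R_1,y/\delta)$ the substituted expression $\int_{R_1'-y}^{R_2-y}F-\int_{R_1'}^{R_2}F$ is not a single tail difference, but the same bound $y\,F(R_1'-y)$ holds because then $R_2-R_1'<y$ (so the first integral is over an interval of length $<y$) and $F\ge 0$ lets one drop the second; the Stieltjes/Fubini formulation $\int_{R_1'}^{R_2}\mu([R-y,R))\,dR\le y\,\mu((R_1'-y,R_2])\le y\,F(R_1'-y)$ you indicate in the parenthetical handles all cases uniformly.
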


\begin{proof}
Since $F$ is non-increasing and right-continuous, we have
\begin{equation}
F\left( R^{-}\right) =\lim_{\rho \rightarrow R^{-}}F\left( \rho \right) \geq
\lim_{\rho \rightarrow R^{+}}F\left( \rho \right) ={F\left( R\right)}\,.
\label{S4E7b_1}
\end{equation}
For the proof, let us first point out that 
we can increase $R_0$ while keeping $\delta$ and $C$ fixed if needed.

We first consider the case of $a>0$, and prove that in this case
the choice $\delta_0 := 1-(3/4)^{1/(1+a)}\in (0,1)$ will suffice.
From now on, we assume that $\delta$ is fixed to a value such that $0<\delta\le \delta_0$. 

We use a comparison argument. To this end, we construct an auxiliary
function 
$$F_{\ast}\left( R\right) =\frac{2B}{R^{a}}$$
 with $B>0$
to be determined. We choose $B$ in order to have
\begin{equation}
-\int_{\left[ 1,\delta R\right] }\left[ F_{\ast}\left( R-y\right)
-F_{\ast}\left( R\right) \right] y^{b}f\left( dy\right) \geq -\frac{C}{%
R^{a+1}}\ \ \text{for }R\geq R_{0}\ .  \label{S5E1_1}
\end{equation}
Therefore, the goal is to impose
\begin{equation}
-\int_{\left[ 1,\delta R\right] }\left[ \frac{2B}{\left( R-y\right)
^{a}}-\frac{2B}{R^{a}}\right] y^{b}f\left(
dy\right) \geq-\frac{C}{R^{a+1}}\ \ \text{for }%
R\geq R_{0} \,. \label{S4E8_1}
\end{equation}

Since $(1-\delta)^{a+1}\ge \frac{1}{2}$, we have for any $R\ge R_0>1/\delta$ and $y\in\left[ 1,\delta R\right] $ \begin{equation*}
\frac{1}{\left( R-y\right) ^{a}}-\frac{1}{R^{a}}
\leq\frac{2 a y}{R^{a+1}} \,.
\end{equation*}
Thus,
\begin{equation*}
-\int_{\left[ 1,\delta R\right] }\left[ \frac{2B}{\left( R-y\right)
^{a}}-\frac{2B}{R^{a}}\right] y^{b}f\left(
dy\right) \geq-\frac{4aB}{R^{a+1}}%
\int_{\left[ 1,\delta R\right] }y^{1+b}f\left( dy\right).
\end{equation*}
On the other hand, then  
$$
\int_{\left[ 1,\delta R\right] }y^{1+b}f\left( dy\right) \leq D,$$
where $D=\int_{\left[ 1,\infty\right) }y^{1+b}f\left( dy\right)$ is a well-defined, strictly positive constant due to $b+1\le a$,~\eqref{eq:momentBound1} and $f\neq0.$ Therefore, choosing 
$$B=\frac{C }{4D a  },$$
 we obtain that (\ref{S4E8_1}) holds.

For the next step, we require that $f([1,\delta R_0])>0$.
If needed, this can be accomplished by increasing $R_0$ since 
the left hand side, by dominated convergence theorem, 
approaches $f([1,\infty))>0$, as $R_0\to \infty$.
  
To prove~\eqref{S5E3_1}, we argue by contradiction. Suppose that there exists $R_{1}\geq
R_{0}$ such that $F\left( R_{1}\right) <\frac{B}{\left(
R_{1}\right) ^{a }}.$ Then, using that $F\left( R\right) $ is
decreasing, we obtain that 
\begin{equation}
F\left( R\right) <\frac{B}{\left( R_{1}\right) ^{a }}\,, \quad \text{ for } R \in \left[ R_{1},\frac{R_{1}}{1-\delta }\right]\,.\label{S5E11_1}
\end{equation}
We define 
$$G\left( R\right) =F_{\ast}\left( R\right) -\frac{B}{2}\frac{1}{\left( R_{1}\right) ^{a }}-F\left( R\right).$$ 
Combining (\ref{S4E9_1}) and (\ref{S5E1_1}) we
obtain that%
\begin{equation}
-\int_{\left[ 1,\delta R\right] }\left[ G\left( R-y\right) -G\left( R\right) %
\right] y^{b }f\left( dy\right) \geq 0\ \ \text{for all }R\geq R_{0} \ .
\label{S5E2_1}
\end{equation}%
Using~\eqref{S5E11_1} we obtain
\begin{align}
G\left( R\right) & =F_{\ast }\left( R\right) -\frac{B}{2}\frac{1}{\left(
R_{1}\right) ^{a }}-F\left( R\right) >\frac{2B}{R^{a }}-\frac{B}{2}\frac{1}{\left( R_{1}\right) ^{a }}-\frac{B}{\left( R_{1}\right) ^{a }}  \notag \\
& \geq B\left( \frac{2\left( 1-\delta \right) ^{a }}{\left(
R_{1}\right) ^{a }}-\frac{3}{2}\frac{1}{\left( R_{1}\right)
^{a}}\right) >0, \ \ \ \ \ \ \ \ \text{    for    } R\in \left[ R_{1},\frac{R_{1}}{1-\delta }\right], \label{S5E2a_1}
\end{align}%
since $\delta>0$ is sufficiently small so that $\left( 1-\delta \right) ^{a }>\left( 1-\delta \right) ^{a+1}\ge \frac 3 4$. 
 Notice that since $%
F_{\ast }\left( R\right) $ and $\frac{B}{2}\frac{1}{\left( R_{1}\right)
^{a }}$ are continuous functions we have that $G$ is right continuous and  (\ref{S4E7b_1})
implies
\begin{equation}
G\left( R^{-}\right) =\lim_{\rho \rightarrow R^{-}}G\left( \rho \right)  \leq 
\lim_{\rho \rightarrow R^{+}}G\left( \rho \right) =G\left( R\right)\,.\label{S5E33_1}
\end{equation}
We define $R_{2}$ as
\begin{equation*}
R_{2}=\inf \left\{ \rho \geq R_{1}:G\left( \rho\right) \leq 0\right\} \,.
\end{equation*}

Suppose first that $R_{2}<\infty.$ 
By definition $G(R_2^+) \leq 0$. Since $G$ is right-continuous, then $G(R_2) \leq 0$.  
From~\eqref{S5E2a_1},  $G(R_2) \geq G(R_2^-) \geq 0$. Therefore, necessarily $G(R_2)=0$.
From~\eqref{S5E2a_1} we also have that $R_2>\frac{R_1}{1-\delta}$ and 
\begin{equation}\label{S5E34_1}
G(R)>0 \text{ for } R \in [R_1,R_2).
 \end{equation}
For $y \in [1,\delta R_2]$, we have that $(R_2-y) \in [R_1,R_2)$, therefore $G(R_2-y) >0$.  Since $f([1,\delta R_2]) \ge f([1,\delta R_0])>0$, this implies
\begin{equation*}
-\int_{\left[ 1,\delta R_{2}\right] }\left[ G\left( R_{2}-y\right) -G\left(
R_{2}\right) \right] y^{b}f\left( dy\right) <0
\end{equation*}%
which contradicts (\ref{S5E2_1}). Then $R_{2}=\infty $ whence $G\left(
R\right) \geq 0$ for all $R\geq R_{1}$. Therefore,
\begin{equation*}
F\left( R\right) \leq F_{\ast }\left( R\right) -\frac{B}{2}\frac{1}{\left(
R_{1}\right) ^{a }}\quad \text{ for }R\geq R_{1}\,.
\end{equation*}

However, this inequality implies that $F\left( R\right) <0$ for $R$ large
enough, but this contradicts the definition of $F$.  Therefore,
\begin{equation*}
F\left( R\right) \geq \frac{B}{R^{a}}\ \ \text{if \ }%
R\geq  R_{0},
\end{equation*}
which concludes the proof for $a>0$.  Note that $R_0$ in this formula might have been increased compared to the value in the original assumptions, hence it is denoted by $R'_0$ in the conclusions of the Lemma.

We now consider the case $a=0$. 
In this case, we prove that the choice $\delta_0 := \frac{1}{2}$ will suffice.
We assume that $\delta$ is fixed to a value such that $0<\delta\le  \delta_0$, 
and that $R_0$ is sufficiently large
so that $R_0>\frac{1}{1-\delta}$ and 
$f([1,\delta R_0])>0$, as before. 

We construct an auxiliary
function 
$$F_{\ast}\left( R\right) =- B \log(R)$$
 with $B>0$
to be determined by the requirement that
\begin{equation}
-\int_{\left[ 1,\delta R\right] }\left[ F_{\ast}\left( R-y\right)
-F_{\ast}\left( R\right) \right] y^{b}f\left( dy\right) \geq -\frac{C}{R}\ \ \text{for }R\geq R_{0} \ .  \label{S5E1_12}
\end{equation}
Therefore, we need to impose%
\begin{equation}
\int_{\left[ 1,\delta R\right] }\left[B\log(R-y)-B\log(R)\right] y^{b}f\left(
dy\right) \geq-\frac{C}{R}\ \ \text{for }%
R\geq R_{0}  \ . \label{S4E8_12}
\end{equation}
Since $0<\delta\le\frac{1}{2}$,
we have for all
$R>1/\delta$  and $y\in\left[ 1,\delta R\right] $ an estimate
\begin{equation*}
\log(R-y)-\log(R)
\geq - \frac{ 2y}{R}\,.
\end{equation*}
Thus,
\begin{equation*}
\int_{\left[ 1,\delta R\right] }\left[B\log(R-y)-B\log(R)\right] y^{b}f\left(
dy\right) \geq-\frac{2B}{R}\int_{\left[ 1,\delta R\right] }y^{1+b}f\left( dy\right).
\end{equation*}
Here, 
$$
\int_{\left[ 1,\delta R\right] }y^{1+b}f\left( dy\right) \leq D,$$
where $D=\int_{\left[ 1,\infty\right) }y^{1+b}f\left( dy\right)$ is a well-defined strictly positive constant due to $b+1\le a$,~\eqref{eq:momentBound1} and $f\neq0.$ 
Therefore, choosing 
$$B=\frac{C }{2D   },$$
 we obtain that (\ref{S4E8_12}) holds.

To prove~\eqref{S5E3_12}, we again argue by contradiction. Suppose that there exists $R_{1}\geq
R_{0}$ such that $F\left( R_{1}\right) <B\log(
R_{1}).$ Then, using that $F\left( R\right) $ is
decreasing, we obtain that 
\begin{equation}
F\left( R\right) <B\log(R_{1}) \text{ for } R \in \left[ R_{1},\frac{R_{1}}{1-\delta }\right].\label{S5E11_12}
\end{equation}
We define 
$$G\left( R\right) =F_{\ast}\left( R\right) +3B\log(
R_{1})-F\left( R\right).$$ 
Combining (\ref{S4E9_1}) and (\ref{S5E1_12}) we
obtain that%
\begin{equation}
-\int_{\left[ 1,\delta R\right] }\left[ G\left( R-y\right) -G\left( R\right) %
\right] y^{b }f\left( dy\right) \geq 0\ \ \text{for all }R\geq R_{0}\ .
\label{S5E2_12}
\end{equation}%
Using~\eqref{S5E11_12} we obtain
\begin{align}
G\left( R\right) & 
=F_{\ast }\left( R\right) +3 B\log(
R_{1}) -F\left( R\right) 
> - B\log(R)+3B\log(R_1)-B\log(R_1) \notag \\
& \geq B\left( -\log(\frac{R_1}{1-\delta})+2\log(R_1) \right) = B\log(R_1(1-\delta))>0, \quad \text{    for    } R\in \left[ R_{1},\frac{R_{1}}{1-\delta }\right]\,, \label{S5E2a_12}
\end{align}
where in the last step we used the property that $R_1\ge R_0>\frac{1}{1-\delta}$.
 Notice that since $F_{\ast }\left( R\right) $ and $3B\log( R_{1})$ are continuous functions we have that $G$ is right continuous and  (\ref{S4E7b_1})
implies
\begin{equation}
G\left( R^{-}\right) =\lim_{\rho \rightarrow R^{-}}G\left( \rho \right)  \leq 
\lim_{\rho \rightarrow R^{+}}G\left( \rho \right) =G\left( R^{+}\right) \ .\label{S5E33_12}
\end{equation}
We define $R_{2}$ as
\begin{equation*}
R_{2}=\inf \left\{ \rho \geq R_{1}:G\left( \rho\right) \leq 0\right\} \ .
\end{equation*}
Using the same reasoning as in the case $a>0$ we obtain that  
$R_{2}=\infty $, and thus $G\left(R\right) > 0$ for all $R\geq R_{1}.$ Therefore,%
\begin{equation*}
F\left( R\right) \leq F_{\ast }\left( R\right) + 3B\log(R_1)
\quad \text{ for }R\geq R_{1} \ .
\end{equation*}
However, this inequality implies that $F\left( R\right) <0$ for $R$ large
enough, but this contradicts the definition of $F$. Therefore,
\begin{equation*}
 F\left( R\right) \geq B\log(R)\,,\ \ \text{if \ }%
R\geq  R_{0} \ ,
\end{equation*}
which concludes the proof.
\end{proof}

\

\begin{lemma}
\label{lem:F-estimate}
Let  $a$ and $b$ be constants satisfying $a<0$ and  $(a-b)\geq 1$. Assume that $F: \R_* \to \R$ is a right-continuous non-decreasing function and $f \in \mathcal{M}_+(\R_*)$ 
satisfies $f([1,\infty))> 0$ and
\begin{equation}\label{eq:momentBound2}
\int_{[1,\infty)} x^a f(dx) < \infty \ .
\end{equation}
There exists $\delta_0\in(0,1)$ which depends only on $a$ such that the following result holds:

If $0<\delta\le  \delta_0$, $R_0> 1/\delta $, and $C>0$ are such that 
$F(R_0)>0$ and 
\begin{equation}
-\int_{\left[ 1,\delta R\right] }\left[ F\left( R-y\right) -F\left( R\right) 
\right] y^{b}f\left( dy\right) \geq  \frac{C}{R^{a +1}}\ \ \text{for }R\geq R_{0},\label{S4E9_2}
\end{equation}
then there are $R_0'\geq R_0$ and $B>0$ which only depend on $a$, $f$, $\delta$, $R_0$, and $C$, such that
\begin{equation}
F\left( R\right) \geq \frac{B}{R^{a}}\, , \quad \text{for \ }
R\geq  R_{0}'.\label{S5E3_2}
\end{equation}
\end{lemma}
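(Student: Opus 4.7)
The plan is to integrate the pointwise inequality~\eqref{S4E9_2} once in $R$ and then extract the lower bound on $F(R_*)$ directly, bypassing the maximum-principle comparison used in Lemma~\ref{lem:F+estimate}. This simpler route works here because $F$ is non-decreasing with $F(R_0)>0$, so $F\ge F(R_0)>0$ on $[R_0,\infty)$ and $F\le F(R_*)$ on $[R_0,R_*]$; both bounds will enter the argument.

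First I would record the integrability fact $D':=\int_{[1,\infty)} y^{b+1} f(dy)<\infty$, which follows from $b+1\le a<0$ (so $y^{b+1}\le y^a$ on $[1,\infty)$) together with~\eqref{eq:momentBound2}. I would then set $\delta_0:=1/2$; any value in $(0,1)$ suffices. Given $0<\delta\le\delta_0$ and $R_0>1/\delta$ as in the hypothesis, let $R_1:=R_0/(1-\delta)$, so that whenever $R\ge R_1$ and $y\in[1,\delta R]$ one has $R-y\ge (1-\delta)R\ge R_0$, and hence $F(R-y)\ge F(R_0)>0$.

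Next I would fix $R_*\ge R_1$ and integrate~\eqref{S4E9_2} in $R$ over $[R_1,R_*]$. Since $-a-1=|a|-1$, the right-hand side becomes $\frac{C}{|a|}\bigl(R_*^{|a|}-R_1^{|a|}\bigr)$. On the left, Fubini's theorem (legitimate since the integrand is non-negative) yields $\int_{[1,\delta R_*]} y^b f(dy)\int_{A(y)}^{R_*}[F(R)-F(R-y)]\,dR$ with $A(y):=\max\{R_1,y/\delta\}$. Translating the second piece, the inner integral equals $\int_A^{R_*}F(R)\,dR-\int_{A-y}^{R_*-y}F(R)\,dR$. A short case check ($y\le\delta R_1$ versus $y>\delta R_1$) shows that $A(y)-y\ge R_0$ always, so $F\ge F(R_0)>0$ on $[A-y,R_*-y]$; combining this with $F\le F(R_*)$ on $[A,R_*]$ gives the uniform bound
\[
\int_{A(y)}^{R_*}[F(R)-F(R-y)]\,dR\;\le\; y\,F(R_*),
\]
in both the overlapping regime $R_*-y\ge A$ (where the inner integral simplifies to $\int_{R_*-y}^{R_*}F-\int_{A-y}^A F$) and in the non-overlapping regime $R_*-y<A$ (where $R_*-A<y$ itself supplies the estimate). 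Integrating against $y^b f(dy)$ then yields that the left-hand side is at most $F(R_*)D'$.

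Putting the two sides together, $F(R_*)\,D'\ge\frac{C}{|a|}\bigl(R_*^{|a|}-R_1^{|a|}\bigr)$. Setting $R_0':=2^{1/|a|}R_1$ ensures $R_*^{|a|}-R_1^{|a|}\ge R_*^{|a|}/2$ for all $R_*\ge R_0'$, and the required estimate $F(R_*)\ge B\,R_*^{-a}$ follows with $B:=C/(2|a|D')$. The step most in need of care is the uniform inner bound above in the non-overlapping regime: there the two translated intervals are disjoint, so the telescoping of the overlapping case fails, but the bound is recovered by combining $(R_*-A)F(R_*)<y\,F(R_*)$ with the non-negativity of $\int_{A-y}^{R_*-y}F\,dR$ guaranteed by $F\ge F(R_0)>0$ on that interval.
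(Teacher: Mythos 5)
Your proof is correct, and it takes a genuinely different route from the paper's. The paper proves this lemma by a comparison/maximum-principle argument: it constructs an explicit auxiliary barrier $F_\ast(R)=A/R^{a}$ satisfying the reversed integral inequality~\eqref{S5E1_2}, forms $G=F-F_\ast$, and shows $G>0$ for all $R\ge R_0$ via a contradiction argument built around $R_2=\inf\{\rho\ge R_0: G(\rho)\le 0\}$, exploiting right-continuity of $F$ and the structure of the nonlocal operator. You instead integrate the pointwise hypothesis~\eqref{S4E9_2} once over $R\in[R_1,R_\ast]$, apply Tonelli to swap the $R$- and $y$-integrations (legitimate since $F$ is non-decreasing so the integrand $[F(R)-F(R-y)]y^{b}$ is non-negative), and then telescope the inner $R$-integral using only the monotonicity of $F$ and the lower bound $F\ge F(R_0)>0$ on $[R_0,\infty)$. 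Your case check on whether $[A,R_\ast]$ and $[A-y,R_\ast-y]$ overlap is the only delicate point, and you handle it correctly; in both cases the inner integral is bounded by $y\,F(R_\ast)$, which after pairing with $D'=\int_{[1,\infty)}y^{b+1}f(dy)<\infty$ (valid because $b+1\le a<0$) and the explicit primitive on the right-hand side yields $F(R_\ast)\ge B\,R_\ast^{-a}$ directly. What your approach buys: no auxiliary barrier to calibrate, no contradiction argument, and no smallness constraint on $\delta_0$ (any $\delta_0\in(0,1)$ works, whereas the paper needs $\delta_0=1/2$ to control Taylor-type estimates for the barrier). What the paper's approach buys: it is structurally parallel to the barrier argument used for Lemma~\ref{lem:F+estimate} in the case $a\ge 0$, so the two cases are proved uniformly, and the maximum-principle template adapts more readily if one wanted sharper or two-sided barriers. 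Both proofs are sound; yours is the more elementary.
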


\begin{proof}
Since $F$ is non-decreasing and right-continuous, we have
\begin{equation}
F\left( R^{-}\right) =\lim_{\rho \rightarrow R^{-}}F\left( \rho \right) \leq
\lim_{\rho \rightarrow R^{+}}F\left( \rho \right) =F\left( R^{+}\right) = F(R)\, .
\label{S4E7b_2}
\end{equation} 
We assume $\delta$ is fixed and satisfies $0<\delta\le \delta_0$. 
We will show that $\delta_0=\frac{1}{2}\in (0,1)$ works in this case. 
Note that if $R'_0\ge R_0$, then also $F(R'_0)\ge F(R_0)>0$ since $F$ is increasing.  Therefore, as in the previous proof, 
we can increase $R_0$ while keeping $\delta$ and $C$ fixed if needed.
In particular, we may assume that $f([1,\delta R_0])>0$, as before.

We again use a comparison argument. To this end, we construct an auxiliary function 
$$F_{\ast}\left( R\right) =\frac{A }{R^{a}},$$
where $A  > 0$ is a constant to be determined.  
We choose $A$ in order to have
\begin{equation}
-\int_{\left[ 1,\delta R\right] }\left[ F_{\ast}\left( R-y\right)
-F_{\ast}\left( R\right) \right] y^{b}f\left( dy\right) \leq \frac{C}{R^{a+1}}\ \ \text{for }R\geq R_{0} \ . \label{S5E1_2}
\end{equation}
Therefore, we need to impose
\begin{equation}
-\int_{\left[ 1,\delta R\right] }\left[ \frac{A  }{ (R-y)
^{a}}-\frac{A }{R^{a}}\right] y^{b}f\left(
dy\right) \leq  \frac{C}{R^{a+1}}\ \ \text{for }%
R\geq R_{0} \ . \label{S4E8_2}
\end{equation}

Let us next show that a constant $A$ for the above inequality may be found
for the values of $\delta$ considered here.
Since $0<\delta\le \delta_0=\frac{1}{2}$,
we have $(1-\delta)^{|a|-1}\le 2$ for $a\in [-1,0)$.
If $a<-1$, the function $x\mapsto x^{|a|-1}$ is increasing, and thus
Taylor's theorem implies
\begin{equation*}
{\left(\frac{1}{R^{a}}-\frac{1}{\left( R-y\right)^{a}}\right) \leq  2 \vert a \vert y \frac{1}{R^{a+1}}}
\end{equation*}
whenever $y\in\left[ 1,\delta R\right] $.  Thus,
\begin{equation*}
-\int_{\left[ 1,\delta R\right] }\left[ \frac{A  }{\left( R-y\right)^{a}}-\frac{A  }{R^a}\right] y^{b}f\left(
dy\right) \leq \frac{2 A |a|}{R^{a+1}}
\int_{\left[ 1,\delta R\right] }y^{1+b}f\left( dy\right)
\end{equation*}
for any $A>0$. For $R>1/\delta$ we obtain that 
$$
\int_{\left[ 1,\delta R\right] }y^{1+b}f\left( dy\right) \leq D,$$
where $D=\int_{\left[ 1,\infty\right) }y^{1+b}f\left( dy\right)$ is a well-defined positive constant due to $b+1\le a$,~\eqref{eq:momentBound2} and $f\neq0.$ Therefore, choosing 
\begin{equation}\label{eq:cdtA}
0<  A \leq \frac{  C }{D |a|  }
\end{equation}
 we obtain that (\ref{S4E8_2}) holds.
 
Next we will prove~\eqref{S5E3_2}.
We define 
$$G\left( R\right) =F(R) - F_{\ast}\left( R\right).$$ 
Combining (\ref{S4E9_2}) and (\ref{S5E1_2}) we
obtain that
\begin{equation}
-\int_{\left[ 1,\delta R\right] }\left[ G\left( R-y\right) -G\left( R\right) %
\right] y^{b }f\left( dy\right) \geq 0\ \ \text{for all }R\geq R_{0}.
\label{S5E2_2}
\end{equation}
Since $F$ is increasing and $F(R_0)>0$, then $F(R)\geq F(R_0)>0$ for all $R\geq R_0$. Then $G(R)\geq F(R_0)-\frac{A}{R^{a+1}}$ for any $R\geq R_0$.  Therefore, choosing $A$ sufficiently small and satisfying also \eqref{eq:cdtA}, we obtain 
\begin{equation}
G(R)>0 \quad \text{for } R \in \left[R_0,\frac{R_0}{1-\delta}\right].\label{S5E2a_2}
\end{equation}
Since $F_{\ast }\left( R\right) $ is continuous,  we have that $G$ is right continuous and  (\ref{S4E7b_2})
implies
\begin{equation}
G\left( R^{-}\right) =\lim_{\rho \rightarrow R^{-}}G\left( \rho \right)  \leq 
\lim_{\rho \rightarrow R^{+}}G\left( \rho \right) ={G\left( R\right)} \ . \label{S5E33_2}
\end{equation}
We define $R_{2}$ as
\begin{equation*}
R_{2}=\inf \left\{ \rho \geq R_0: G\left( \rho\right) \leq 0\right\} \ .
\end{equation*}

Suppose first that $R_{2}<\infty.$ 
By definition, $G(R_2^+) \leq 0$. Since $G$ is right-continuous, then $G(R_2) \leq 0$.  
From~\eqref{S5E33_2}, $G(R_2) \geq G(R_2^-) \geq 0$. Therefore, necessarily $G(R_2)=0$.
From~\eqref{S5E2a_2} we also have that $R_2>\frac{R_0}{1-\delta}$ and 
\begin{equation}\label{S5E34_2}
G(R)>0 \text{ for } R \in [R_0,R_2).
 \end{equation}
For $y \in [1,\delta R_2]$, we have that $(R_2-y) \in [R_0,R_2)$, therefore $G(R_2-y) >0$. This implies
\begin{equation*}
-\int_{\left[ 1,\delta R_{2}\right] }\left[ G\left( R_{2}-y\right) -G\left(
R_{2}\right) \right] y^{-\lambda }f\left( dy\right) <0
\end{equation*}%
which contradicts (\ref{S5E2_2}). Then $R_{2}=\infty $ whence $G\left(
R\right) > 0$ for all $R\geq R_{0}.$
 Therefore,
\begin{equation*}
F\left( R\right) \geq  F_*(R) = \frac{A}{R^{a}}\ \ \text{for \ }%
R\geq  R_{0},
\end{equation*}
which proves~\eqref{S5E3_2} with $B=A$.
\end{proof}

\

\begin{proofof}[Proof of Theorem \ref{thm:NonExistence} (non-existence)]
We argue by contradiction. Suppose that $f\in \mathcal{M}_{+}\left( \mathbb{R%
}_*\right) $ satisfies $f\left( \left( 0,1\right) \right) =0$ as well as (%
\ref{eq:moment_cond}) and it is a stationary injection solution of %
\eqref{eq:time_evol} in the sense of Definition \ref{DefFluxSol}.
Then, from Lemma \ref{lem:flux} and using also that $f\left( \left( 0,1\right) \right) =0$ we obtain
\begin{equation}
-\int_{\left[ 1,R\right] }f\left( dx\right) \int_
{\left(R-x ,\infty \right) \cap \left[1,\infty\right)}
{K\left( x,y\right) x f\left( dy\right)} +\int_{ [1,R]}x\eta \left( dx\right) =0,\ R\geq 1 \ . \label{S4E5b}
\end{equation}

Then we introduce a function $J:\mathbb{R}_{*}\rightarrow \mathbb{R}_{+}$
defined by
\begin{equation}
J\left( R\right) =\iint_{\Sigma _{R}}K\left( x,y\right) xf\left(
dx\right) f\left( dy\right)  \label{S4E5a}
\end{equation}%
where 
\begin{equation*}
\Sigma _{R}=\left\{ x\geq 1,\ y\geq 1:x+y>R,\ x\leq R\right\}\ .
\end{equation*}%
We notice that the function $J$ is constant if $R\geq L_{\eta },$ i.e. 
\begin{equation}
J\left( R\right) =J\left( L_{\eta }\right) \text{ for }R\geq L_{\eta }.
\label{S4E5}
\end{equation}
Suppose that $\eta $ is different from zero. Then (\ref{S4E5b}) implies that 
$J\left( L_{\eta }\right) =\int_{\mathbb{R}_{+}}x\eta \left( dx\right) >0.$
If $(\gamma + 2\lambda) \geq 1$,  we define $a:= \gamma+\lambda$ and $b:= -\lambda$, else, if $(\gamma + 2\lambda) \leq -1$,  we define $a:= -\lambda$ and $b:= \gamma+\lambda$. The assumption $|\gamma + 2\lambda|\geq 1$ becomes $a-b\geq 1$ in both cases.
By~\eqref{eq:moment_cond} we have
\begin{equation}
\int_{\left[ 1,\infty \right) }x^{a }f\left( dx\right) <\infty \ .
\label{S4E6}
\end{equation}

We now prove that the main contribution to the integral $J(R)$ in (\ref{S4E5a}) as $%
R\rightarrow\infty$ is due to the portion of the region of integration where 
$x$ is close to $R$ and $y$ is order one. 
To this end, let us  consider
parameters $\delta$ which satisfy $0<\delta<\delta_0$ for the value $\delta_0=\delta_0(a)$ 
given by Lemma \ref{lem:F+estimate} if $a\ge 0$, or by Lemma \ref{lem:F-estimate} if $a<0$.
We then define the domains
\begin{align*}
D_{\delta}^{\left( 1\right) } & =\left\{ x\geq1,\ y\geq1:y\leq\delta x\right\} \ , \\
D_{\delta}^{\left( 2\right) } & =\left\{ x\geq1,\ y\geq1:y>\delta x\right\} \ .
\end{align*}
We then write
\begin{align*}
J\left( R\right) & =J_{1}\left( R\right) +J_{2}\left( R\right) \text{ \ \
with} \\
J_{k}\left( R\right) & =\iint_{\Sigma _{R}\cap D_{\delta }^{\left( k\right)
}}\left[ K\left( x,y\right) x\right] f\left( dx\right) f\left( dy\right) ,\
\ k=1,2 \ .
\end{align*}

We estimate first $J_{2}\left( R\right) $ for large values of $R.$ Using %
\eqref{eq:cond_kernel3} we obtain
\begin{equation*}
0\leq J_{2}\left( R\right) \leq c_{2}\iint_{\Sigma _{R}\cap D_{\delta
}^{\left( 2\right) }}\left( x^{a }y^{b }+y^{a }x^{b }\right) xf\left( dx\right) f\left( dy\right) \ .
\end{equation*}
Using that $\left( a-b \right) >0$ we obtain that in the region $%
D_{\delta}^{\left( 2\right) }$ we have 
$x^{a}y^{b}\leq \delta^{b-a}y^{a}x^{b}$. Therefore,
\begin{equation*}
J_{2}\left( R\right) \leq C_{\delta}\iint_{\Sigma_{R}\cap D_{\delta}^{\left(
2\right) }}\left( y^{a}x^{1+b}\right) f\left(
dx\right) f\left( dy\right) . 
\end{equation*}
Notice that $\Sigma_{R}\cap D_{\delta}^{\left( 2\right) }\subset\left[ 1,R%
\right] \times\left[ \frac{\delta R}{1+\delta},\infty\right) ,$ whence
\begin{equation*}
J_{2}\left( R\right) \leq C_{\delta}\int_{\left[ 1,R\right] }x^{1+b
}f\left( dx\right) \int_{\left[ \frac{\delta R}{1+\delta},\infty\right) 
}y^{a}f\left( dy\right) \ .
\end{equation*}

Given that $\left(a-b \right) \geq 1$ we obtain, taking into
account \eqref{S4E6},
\begin{equation*}
\int_{\left[ 1,R\right] }x^{1+b }f\left( dx\right) \leq \int_{\left[
1,\infty \right) }x^{a }f\left( dx\right) <\infty \ .
\end{equation*}
Moreover, using again (\ref{S4E6}), it follows that $$
\lim_{R\rightarrow\infty }\int_{\left[ \frac{\delta R}{1+\delta},\infty\right) }y^{a }f\left( dy\right) =0.$$
 This implies that
the contribution due to $J_{2}$ vanishes in the limit $R\to\infty$, namely 
\begin{equation*}
\lim_{R\rightarrow\infty}J_{2}\left( R\right) =0.
\end{equation*}
Therefore, (\ref{S4E5}) implies that 
\begin{equation*}
\lim_{R\rightarrow\infty}J_{1}\left( R\right) =J\left( L_\eta\right) \ .
\end{equation*}

For next step, let us remark that for $(x,y) \in\Sigma _{R}\cap D_{\delta }^{\left( 1\right) }$ we
have $x>R-y \geq R-\delta R$ and therefore $(1-\delta)R< x\leq R.$ 
In this region we have also $y^{a }x^{b
}\leq \delta ^{a-b }x^{a }y^{b }.$ Combining \eqref{eq:cond_kernel3} and using the above bounds for $x$, we obtain 
\begin{equation*}
K\left( x,y\right) x\leq c_3\left( 1+\delta ^{|a-b| }\right)
R^{a +1}y^{b }\ \ ,\ \ \ \left( x,y\right) \in \Sigma
_{R}\cap D_{\delta }^{\left( 1\right) }
\end{equation*}
where $c_3>0$ can be chosen independent of $\delta $ as soon as $\delta \leq \frac{1}{2}$ which we do in the following. Then
\begin{equation*}
\liminf_{R\rightarrow \infty }\left( R^{a +1}\iint_{\Sigma
_{R}\cap D_{\delta }^{\left( 1\right) }} y^{b }f\left( dx\right)
f\left( dy\right) \right) \geq \frac{J\left( L_\eta\right) }{c_3\left( 1+\delta
^{|a-b| }\right) } \ .
\end{equation*}

Notice that, if $R>1/\delta, 1/(1-\delta)$,
\begin{equation*}
\Sigma_{R}\cap D_{\delta}^{\left( 1\right) }\subset\left\{ \left( x,y\right)
:1\leq y\leq\delta R,\ R<x+y,\ 1 \leq  x\leq R\right\}
\end{equation*}
whence
\begin{equation}
\int_{\left[ 1,\delta R\right] }y^{b}f\left( dy\right) \int_{\left(
R-y,R\right] }f\left( dx\right) \geq\frac{J\left( L_\eta\right) }{2c_3\left(
1+\delta^{|a-b|}\right) }\frac{1}{R^{a+1}}
\label{S4E7}
\end{equation}
 for $R\geq R_{0}$ with $R_{0}$ large enough.

The rest of the proof is divided into two cases: $a\geq 0$ and $a<0$. 

Suppose first that  $a\geq 0$. Due to (\ref{S4E6})  we may define the function:
\begin{equation}
F\left( R\right) =\int_{\left( R,\infty\right) }f\left( dx\right) \ \ ,\ \
R\geq1 \ . \label{S4E7a}
\end{equation}
Note that  the function $R\rightarrow F\left( R\right) $ is right
continuous, i.e. $F\left( R\right) =F\left( R^+\right) = \lim_{\rho \rightarrow R^{+}}F\left(\rho \right).$
Moreover, $F$ is non-increasing and $F(R)\geq 0$, for all $R\geq 1$.
 Using (\ref{S4E7a}) we can rewrite (\ref{S4E7}) as: 
\begin{equation*}
-\int_{\left[ 1,\delta R\right] }\left[ F\left( R-y\right) -F\left( R\right) %
\right] y^{b}f\left( dy\right) \leq -\frac{J\left( L_\eta\right) }{%
2c_3\left( 1+\delta ^{|a-b| }\right) }\frac{1}{R^{a +1}}\ \ \text{for }R\geq R_{0}.
\end{equation*}
From Lemma~\ref{lem:F+estimate}, it then follows: 
\begin{equation}
F\left( R\right) \geq \frac{B}{R^{a}}\ \ \text{if \ }
R\geq R_{0},\ \text{ for } a>0,  \label{S5E3_a}
\end{equation}
and
\begin{equation}
F\left( R\right) \geq B \log(R)\ \ \text{if \ }
R\geq R_{0},\ \text{ for } a=0,  \label{S5E3_b}
\end{equation}
for some constant $B>0$.

In the case where $a>0$, we use~\eqref{S4E6} and~\eqref{S5E3_a} to obtain: 
\begin{eqnarray}
\int_{[1,\infty)} x^{a}f(dx) &=& \int_{[1,R]}x^{a}f(dx)+ \int_{(R,\infty)}x^{a}f(dx) \nonumber \\
&\geq &  \int_{[1,R]}x^{a}f(dx)+ R^a\int_{(R,\infty)}f(dx) \nonumber \\
&\geq &  \int_{[1,R]}x^{a}f(dx)+ B  \ .\nonumber 
\end{eqnarray}
 By taking the limit $R \to \infty$ we obtain that $B\leq 0$ which leads to a contradiction.
 
In the case where $a=0$,~\eqref{S5E3_b} yields 
\begin{eqnarray}
\int_{(R,\infty)} f(dx) &\geq & B \log(R). \nonumber
\end{eqnarray}
By taking the limit $R \to \infty$ we obtain using~\eqref{S4E6} that the left-hand side converges to zero, while the right hand-side diverges, which leads to a contradiction.

Suppose now that $a<0$. We define the function $F$ by
\begin{equation}
F\left( R\right) =\int_{[1,R] }f\left( dx\right) \ \ ,\ \
R\geq1.  \label{S4E7aa}
\end{equation}
The function $R\rightarrow F\left( R\right) $ is right
continuous and non-decreasing. Since $f \neq 0$, then $F(R)> 0$, for all $R\geq R_0$, for $R_0$ large enough.
Using (\ref{S4E7aa}) we can rewrite (\ref{S4E7}) as: 
\begin{equation}
-\int_{\left[ 1,\delta R\right] }\left[ F\left( R\right) -F\left( R-y\right) %
\right] y^{b }f\left( dy\right) \leq -\frac{J\left( L_\eta\right) }{%
2c_3\left( 1+\delta ^{|a-b| }\right) }\frac{1}{R^{a +1}}\ \ \text{for }R\geq R_{0}.\nonumber
\end{equation}
From Lemma~\ref{lem:F-estimate}, it follows that there are $B>0$ and $R'_0\ge R_0$ such that 
\begin{equation}
F\left( R\right) \geq \frac{B}{R^{a}}\ \ \text{if \ }
R\geq  R'_{0} \, .  \label{S5E3a}
\end{equation}

From~\eqref{S5E3a} it follows that for all $R>M$ we have
\begin{eqnarray*}
B \leq R^a \int_{[1,R]} f(dx)  \leq  R^a \int_{[1,M]} f(dx) + \int_{[M,R]} x^a f(dx) \leq  R^a \int_{[1,M]} f(dx) + \int_{[M,\infty)} x^a f(dx). 
\end{eqnarray*}
Using that $a < 0$, we first let  $R \to \infty$ and then $M\to \infty$ to obtain that  $B \leq 0$, which leads to a contradiction.
\end{proofof}

\bigskip

\bigskip

\bigskip

\section{Existence and non-existence results: Discrete model}\label{sec:discr1d}

\subsection{Setting and main results}

We consider the following discrete coagulation equation with source: 
\begin{equation}
\partial_{t}n_{\alpha}=\frac{1}{2}\sum_{\beta<\alpha}K_{\alpha-\beta,\beta
}n_{\alpha-\beta}n_{\beta}-n_{\alpha}\sum_{\beta>0}K_{\alpha,\beta}n_{\beta
}+ s_{\alpha} \label{eq:Dtimecoag}
\end{equation}
where $\alpha\in\mathbb{N}=\{1,2,\dots\}$. 
We assume that the sequence $s=(s_{\alpha})_{\alpha\in {\mathbb{N}}}$ satisfies 
\begin{equation}  \label{eq:Dcond_s}
s_{\alpha}\geq 0 \;\; \forall \, \alpha\in {\mathbb{N}} \quad \text{and}%
\quad \supp s \subset\{1,2,\dots, L_s\}.
\end{equation}
We consider coagulation kernels $K_{\alpha,\beta}: \N^2 \to \N$ defined on the integers satisfying the same conditions as before:
\begin{equation}
K_{\alpha,\beta}\geq 0,\ \ \ K_{\alpha,\beta}=K_{\beta,\alpha},
\label{eq:Dcond_kernel1}
\end{equation}%
\begin{equation}
K_{\alpha,\beta} \geq c_{1}\left( \alpha^{\gamma +\lambda }\beta^{-\lambda
}+\beta^{\gamma +\lambda }\alpha^{-\lambda }\right)  \label{eq:Dcond_kernel2}
\end{equation}%
and 
\begin{equation}
K_{\alpha,\beta} \leq c_{2}\left( \alpha^{\gamma +\lambda }\beta^{-\lambda
}+\beta^{\gamma +\lambda }\alpha^{-\lambda }\right)  \label{eq:Dcond_kernel3}
\end{equation}%
for $(\alpha,\beta)\in \mathbb{N}^{2}$, with $0<c_{1}\leq c_{2}<\infty$.
Similarly to  the continuous case, we will try to construct steady states for the coagulation equation 
\eqref{eq:Dtimecoag} yielding the transfer of particles to infinity. More precisely, we consider stationary injection solutions to the discrete coagulation equation \eqref{eq:Dtimecoag}.

\begin{definition}
\label{Def:DFluxSol} Assume that $K:{\mathbb{N}}^{2}\rightarrow {\mathbb{R}}_{+}$ is a  function satisfying \eqref{eq:Dcond_kernel1} and 
\eqref{eq:Dcond_kernel3}. Assume further that $s=(s_{\alpha})_{\alpha=1 }^{\infty}$
is a sequence in ${\mathbb{R}}$ satisfying~\eqref{eq:Dcond_s}. We will say
that $(n_{\alpha})_{\alpha=1}^{\infty}$ satisfying
\begin{equation}
\sum_{\alpha=1 }^{\infty} \alpha^{\gamma +\lambda }n_{\alpha} + \sum_{\alpha=1 }^{\infty} \alpha^{-\lambda }n_{\alpha} <\infty
\label{eq:Dmoment_cond}
\end{equation}%
is a stationary injection solution of \eqref{eq:Dtimecoag} if the following
identity holds for any test sequence with finite support $%
(\varphi_{\alpha})_{\alpha=1}^{\infty}$: 
\begin{equation}
\frac {1}{2}\sum_{\beta}\sum_{\alpha}K_{\alpha,\beta}n_{\alpha}n_{\beta}%
\left[ \varphi_{\alpha+\beta}-\varphi_{\alpha}-\varphi_{\beta}\right] +\sum
_{ \beta }s_{\beta}\varphi_{\beta}=0 . \label{eq:Dweakform}
\end{equation}
\end{definition}

Next we prove the existence of stationary injection solutions as stated in the next theorems:

\begin{theorem}
\label{thm:Dexistence} Assume that $K:{\mathbb{N}}^{2}\rightarrow {\mathbb{R}}_{+}$ satisfies~\eqref{eq:Dcond_kernel1}--
\eqref{eq:Dcond_kernel3} and $| \gamma +2\lambda | <1.$ Let $s \neq 0 $
satisfy \eqref{eq:Dcond_s}. Then, there exists a stationary injection solution $%
(n_{\alpha})_{\alpha=1}^{\infty}$ to~\eqref{eq:Dtimecoag} in the sense of
Definition~\ref{Def:DFluxSol} satisfying $n_{\alpha}\geq 0$ for all $\alpha$.
\end{theorem}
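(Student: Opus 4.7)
The plan is to mirror the three-step strategy from Section \ref{sec:existence}: truncate the system to clusters of size at most $N$, produce a stationary solution of the truncated problem by a fixed-point argument, then pass to the truncation limit using uniform flux-based tail estimates. In the discrete setting this is structurally simpler because the truncated problem is a finite-dimensional ODE on $\mathbb{R}_+^N$.

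For each $N>L_s$, I would consider the truncated system
\begin{equation*}
\dot{n}_\alpha=\frac{1}{2}\sum_{\beta=1}^{\alpha-1}K_{\alpha-\beta,\beta}n_{\alpha-\beta}n_\beta-n_\alpha\sum_{\beta=1}^{N-\alpha}K_{\alpha,\beta}n_\beta + s_\alpha,\qquad 1\le\alpha\le N,
\end{equation*}
as the discrete analogue of \eqref{evolEqTrunc}. The vector field is locally Lipschitz on $\mathbb{R}_+^N$ and preserves nonnegativity, and summing over $\alpha$ while invoking the lower bound \eqref{eq:Dcond_kernel2} on $\{1,\ldots,N\}^2$ gives, as in Remark \ref{th:boundforzeromomreg}, an a priori bound $\sum_\alpha n_\alpha(t)\le M_N$. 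The convex compact set $U_{M_N}=\{n\in\mathbb{R}_+^N:\sum_\alpha n_\alpha\le M_N\}$ is then forward invariant, so Brouwer's fixed-point theorem applied to the time-$t$ map of the flow produces a fixed point for each $t=1/k$, and the sequential compactness argument of \cite[Theorem 1.2]{EM05} (as invoked at the end of Proposition \ref{thm:existence_truncated}) yields a common stationary point $\hat{n}^N\in U_{M_N}$.

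Next I would establish $N$-uniform tail estimates. The discrete flux identity (analogue of Lemma \ref{lem:flux}), obtained from \eqref{eq:Dweakform} with the test sequence $\varphi_\alpha=\alpha\chi_{\{\alpha\le R\}}$, reads
\begin{equation*}
\sum_{\alpha=1}^{R}\alpha\hat{n}^N_\alpha\!\!\sum_{\substack{\beta>R-\alpha\\ \alpha+\beta\le N}}\!\! K_{\alpha,\beta}\hat{n}^N_\beta = \sum_{\alpha\le R}\alpha s_\alpha \le J_0:=\sum_\alpha\alpha s_\alpha.
\end{equation*}
Restricting the double sum to the diagonal block $\alpha,\beta\in\{\lceil 2R/3\rceil,\ldots,R\}$ (contained in the flux region when $R\le N/2$) and using \eqref{eq:Dcond_kernel2} to get $K_{\alpha,\beta}\ge c\, R^\gamma$ there yields the discrete analogue of \eqref{A1},
\begin{equation*}
\frac{1}{R}\sum_{\lceil 2R/3\rceil\le\alpha\le R}\hat{n}^N_\alpha\ \le\ \frac{\tilde{C}}{R^{(3+\gamma)/2}},\qquad 1\le R\le N/2.
\end{equation*}
A discrete counterpart of Lemma \ref{lem:bound} (proved by the same Fubini manipulation, or by identifying the sums with integrals against the counting measure $\sum_\alpha\delta_\alpha$) then converts this into tail bounds $\sum_{\alpha\ge y}\alpha^\mu \hat{n}^N_\alpha\le C y^{\mu-(1+\gamma)/2}$ for $\mu<(1+\gamma)/2$, uniformly in $N$. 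A diagonal extraction produces a subsequence with $\hat{n}^{N_k}_\alpha\to n_\alpha$ pointwise for each $\alpha$; the tail bound then permits passing to the limit in \eqref{eq:Dweakform} for any finitely supported test sequence, and the moment condition \eqref{eq:Dmoment_cond} follows by applying the tail bound with $\mu=\gamma+\lambda$ and $\mu=-\lambda$, both admissible precisely because $|\gamma+2\lambda|<1$.

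The main obstacle, inherited from the continuous case, is controlling the contribution to the coagulation loss term in \eqref{eq:Dweakform} coming from the interaction of a bounded $\alpha$ with very large $\beta$ when removing the truncation. As in the continuous proof this calls for a case analysis on the signs of $\gamma$ and $p=\max\{\lambda,-(\gamma+\lambda)\}$, and the condition $|\gamma+2\lambda|<1$ enters precisely to guarantee convergence of the resulting tail sums via the bounds just derived. The discrete framework is, however, simpler on one point: the built-in cutoff $\alpha\ge 1$ removes all small-scale singularities in the kernel, so no analogue of the regularization $\Phi_\varepsilon$ used near $s=0,1$ in the continuous proof is needed, and the bulky two-level truncation \eqref{eq:1levTrunc}--\eqref{eq:2levTrunc} collapses to a single size cutoff.
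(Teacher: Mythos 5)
Your overall scheme (finite truncation, fixed point, uniform flux estimates, diagonal extraction) matches the paper's, and your observation that the lattice automatically bounds the kernel above and below on $\{1,\dots,N\}^2$ so that the $\varepsilon$-regularization of $\Phi$ can in principle be dropped is a reasonable streamlining. But there is a fatal error in the truncated system you chose. Writing the loss term as $n_\alpha\sum_{\beta=1}^{N-\alpha}K_{\alpha,\beta}n_\beta$ restricts all collisions to pairs with $\alpha+\beta\le N$: this is the \emph{conservative} truncation, which preserves total mass exactly, so that $\partial_t\!\left(\sum_\alpha \alpha n_\alpha\right)=\sum_\alpha \alpha s_\alpha>0$ for all $t$ and the system has no stationary state at all. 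The symptom is already visible at the top scale: for $\alpha=N>L_s$ the loss term is empty and $\partial_t n_N = \tfrac{1}{2}\sum_\beta K_{N-\beta,\beta}n_{N-\beta}n_\beta\ge 0$, so $n_N$ can only increase. This is also not the discrete analogue of \eqref{evolEqTrunc}: there $\zeta_{R_*}$ cuts only the gain term while the loss integral over $y$ is unrestricted. The correct discrete system is \eqref{DevolEqTrunc}, the non-conservative truncation, where the loss sum runs over all $\beta\le N$; pairs with $\alpha+\beta>N$ still coagulate but the product leaves the system, which is exactly the escape mechanism that can balance the source.

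The error propagates to your a priori bound. For the conservative truncation one has $\partial_t\!\left(\sum_\alpha n_\alpha\right)=-\tfrac{1}{2}\sum_{\alpha+\beta\le N}K_{\alpha,\beta}n_\alpha n_\beta+\sum_\alpha s_\alpha$, and the constrained double sum cannot be bounded below by $\tfrac{a_1}{2}\left(\sum_\alpha n_\alpha\right)^2$: concentrate mass near $\alpha\approx N$ and the coagulation term vanishes while $\sum n_\alpha$ is large. The argument of Remark \ref{th:boundforzeromomreg} needs the unconstrained loss $-\tfrac{1}{2}\sum_{\alpha,\beta\le N}K_{\alpha,\beta}n_\alpha n_\beta\le -\tfrac{a_1}{2}\left(\sum_\alpha n_\alpha\right)^2$, which is exactly what the non-conservative truncation supplies. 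Once you replace the loss sum by $\sum_{\beta\le N}$, the rest of your outline (flux identity via test sequences $\varphi_\alpha=\alpha\chi_{\{\alpha\le z\}}$ with $z<N$, diagonal-block lower bound from \eqref{eq:Dcond_kernel2} giving $\tfrac{1}{z}\sum_{[2z/3,z]}n^N_\alpha\le C z^{-(\gamma+3)/2}$, a discrete analogue of Lemma \ref{lem:bound}, and weighted tail control with $\mu=\gamma+\lambda$, $\mu=-\lambda<(\gamma+1)/2$) goes through and does allow passing to the limit in the weak form without reintroducing $K_\varepsilon$.
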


\begin{theorem}
\label{thm:DNonExistence} Suppose that
$K:{\mathbb{N}}^{2}\rightarrow {\mathbb{R}}_{+}$ satisfies~\eqref{eq:Dcond_kernel1}--\eqref{eq:Dcond_kernel3}  and $|\gamma +2\lambda | \geq 1.$
 Let us assume also that $s \neq 0$ satisfies \eqref{eq:Dcond_s}.  Then, there is no stationary injection solution of~
\eqref{eq:Dtimecoag} in the sense of the Definition \ref{Def:DFluxSol}.
\end{theorem}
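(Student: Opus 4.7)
The plan is to mirror the proof of Theorem~\ref{thm:NonExistence} (continuous case), replacing integrals by series throughout. Given a purported stationary injection solution $n=(n_\alpha)$, I would encode it as a Radon measure $f := \sum_{\alpha\ge 1} n_\alpha \delta_\alpha \in \mathcal{M}_+(\R_*)$ supported on $\mathbb{N}$, so that the moment condition \eqref{eq:Dmoment_cond} becomes exactly \eqref{eq:moment_cond}. This reduction allows direct application of the measure-theoretic Lemmas~\ref{lem:F+estimate} and \ref{lem:F-estimate}, bypassing the need to reprove their discrete analogs. Note that $s\ne 0$ forces $f([1,\infty))>0$: otherwise every $n_\alpha = 0$ and \eqref{eq:Dweakform} with test sequences $\varphi_\alpha=\delta_{\alpha,\beta_0}$ would give $s_{\beta_0}=0$ for all $\beta_0$, a contradiction.

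First I would derive the discrete analog of Lemma~\ref{lem:flux}: choosing the test sequence $\varphi_\alpha = \alpha$ for $\alpha\le R$ and $\varphi_\alpha=0$ otherwise in \eqref{eq:Dweakform} and symmetrizing (the argument is purely algebraic and transfers from the continuous case), we obtain for every $R\in\mathbb{N}$ that
\begin{equation*}
\sum_{\alpha\le R}\sum_{\beta>R-\alpha} K_{\alpha,\beta}\,\alpha\, n_\alpha n_\beta \;=\; \sum_{\alpha\le R}\alpha s_\alpha\,.
\end{equation*}
Since $\supp s\subset\{1,\dots,L_s\}$, the right side stabilises at $J_\infty := \sum_\alpha \alpha s_\alpha > 0$ for $R\ge L_s$. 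Define the flux $J(R):= \sum_{(\alpha,\beta)\in\Sigma_R} K_{\alpha,\beta}\,\alpha\, n_\alpha n_\beta$ with $\Sigma_R=\{(\alpha,\beta)\in\mathbb{N}^2:\alpha+\beta>R,\ \alpha\le R\}$, and split it along $D_\delta^{(1)}=\{\beta\le\delta\alpha\}$ and $D_\delta^{(2)}=\{\beta>\delta\alpha\}$ for $\delta\in(0,\delta_0)$, with $\delta_0$ from the relevant comparison lemma. Setting $(a,b)=(\gamma+\lambda,-\lambda)$ if $\gamma+2\lambda\ge 1$, and $(a,b)=(-\lambda,\gamma+\lambda)$ if $\gamma+2\lambda\le -1$, so that $a-b\ge 1$, the same manipulation as in the continuous proof, based on \eqref{eq:Dcond_kernel3} and \eqref{eq:Dmoment_cond}, gives $\lim_{R\to\infty} J_2(R)=0$ and hence $\liminf_{R\to\infty} J_1(R)\ge J_\infty$.

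On $\Sigma_R\cap D_\delta^{(1)}$ one has $(1-\delta)R<\alpha\le R$, and \eqref{eq:Dcond_kernel3} yields $K_{\alpha,\beta}\,\alpha\le c_3(1+\delta^{|a-b|})R^{a+1}\beta^{b}$, so that for $R\ge R_0$ with $R_0$ large,
\begin{equation*}
\sum_{\beta\le\delta R}\beta^{b} n_\beta \sum_{R-\beta<\alpha\le R} n_\alpha \;\ge\; \frac{C}{R^{a+1}}\,,
\end{equation*}
for a constant $C>0$. Setting $F(R):=\sum_{\alpha>R} n_\alpha$ if $a\ge 0$, or $F(R):=\sum_{1\le\alpha\le R} n_\alpha$ if $a<0$, this double sum rewrites as $-\int_{[1,\delta R]}[F(R-y)-F(R)]\,y^b f(dy)$ with the discrete measure $f$ defined above. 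One checks directly that $F$ is right-continuous (as a finite linear combination of right-continuous step functions with jumps at integers) and monotone in the appropriate direction, and that \eqref{eq:momentBound1}/\eqref{eq:momentBound2} translates to \eqref{eq:Dmoment_cond}. Thus Lemma~\ref{lem:F+estimate} or Lemma~\ref{lem:F-estimate} applies and yields $F(R)\ge B R^{-a}$, or $F(R)\ge B\log R$ when $a=0$. Exactly as in the continuous case, this contradicts \eqref{eq:Dmoment_cond}.

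The only obstacle is bookkeeping: verifying that the symmetrization underlying the discrete flux identity, the tail decay argument for $J_2$, and the kernel bounds on $D_\delta^{(1)}$ all survive the transition from integrals over $\R_*$ to sums over $\mathbb{N}$. Since the comparison lemmas are stated for arbitrary $f\in\mathcal{M}_+(\R_*)$, no new analytic input is needed and the proof is a careful but essentially routine translation.
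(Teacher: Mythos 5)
Your proposal is correct, but it takes a genuinely different route from the paper. The paper's actual proof is much shorter: it invokes the continuous interpolation kernel $\widetilde K$ already constructed at the start of Section~\ref{ssec:discr1dNEx} (which agrees with $K_{\alpha,\beta}$ at integers and satisfies \eqref{eq:cond_kernel2}--\eqref{eq:cond_kernel3}), then applies Lemma~\ref{lem:discrete_sol_continuous}, which asserts that any discrete stationary injection solution, encoded as the measure $f=\sum n_\alpha\delta_\alpha$ with source $\eta=\sum s_\alpha\delta_\alpha$, is automatically a continuous stationary injection solution to \eqref{eq:time_evol} in the sense of Definition~\ref{DefFluxSol} with kernel $\widetilde K$. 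Non-existence then follows immediately from Theorem~\ref{thm:NonExistence}. (The paper's proof says ``From Theorem~\ref{thm:existence}''; this is a typo for Theorem~\ref{thm:NonExistence}.) By contrast, you re-trace the interior of the continuous non-existence argument in the discrete setting: you re-derive the flux identity from the discrete weak formulation with the test sequence $\varphi_\alpha=\alpha\chi_{\{\alpha\le R\}}$, re-do the $D^{(1)}_\delta/D^{(2)}_\delta$ splitting using the discrete kernel bounds directly, and then feed the resulting running inequality into Lemmas~\ref{lem:F+estimate} and \ref{lem:F-estimate}, which --- as you correctly point out --- are stated for arbitrary measures $f\in\mathcal{M}_+(\R_*)$ and hence apply verbatim to the atomic measure. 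The trade-off is clear: your route is self-contained and avoids the interpolation construction (indeed, since the integrals only ever probe $K$ at points in $\supp f\times\supp f$, i.e.\ integer pairs, the interpolation is not logically necessary for the non-existence proof), at the price of redoing several pages of the flux decomposition argument; the paper's route is modular, two lines long, and reuses machinery (the interpolated kernel and Lemma~\ref{lem:discrete_sol_continuous}) that is needed anyway for Section~\ref{sec:discr_cont}. Both arguments hinge on the same key observation: the comparison Lemmas~\ref{lem:F+estimate}/\ref{lem:F-estimate} are agnostic to whether $f$ is discrete or continuous. One small sign slip in your write-up: the double sum $\sum_{\beta\le\delta R}\beta^b n_\beta\sum_{R-\beta<\alpha\le R}n_\alpha$ equals $+\int_{[1,\delta R]}[F(R-y)-F(R)]y^b f(dy)$, not the negative of it; the minus sign only appears after moving everything to one side of the inequality, which is what the lemma's hypothesis requires, so the conclusion is unaffected.
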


\bigskip

\subsection{Existence result}\label{ssec:discr1dEx}

We first consider equations with the form (\ref{eq:Dtimecoag}) but with $n_\alpha(t) $ and $n_\beta(t) $ supported in $I:=\{1,2,\dots,R_{\ast } \}$ for
each $t\geq 0.$ Therefore, (\ref{eq:Dtimecoag}) becomes
\begin{equation}  \label{DevolEqTrunc}
\partial_{t}n_{\alpha}=\frac{1}{2}\sum_{\beta\leq
\alpha-1}K_{\alpha-\beta,\beta
}n_{\alpha-\beta}n_{\beta}-n_{\alpha}\sum_{\beta\leq R_{\ast
}}K_{\alpha,\beta}n_{\beta }+\sum_{ \beta \leq
R_*}s_{\beta}\delta_{\alpha,\beta} \
\end{equation}
where  $\delta_{\alpha,\beta}=1$ if $\alpha=\beta$, and $\delta_{\alpha,\beta}=0$ otherwise.
Let  $(\varphi _{\alpha })_{\alpha \in I}$ be an arbitrary test function such that $\varphi _{\alpha}:[0,T]\rightarrow {\mathbb{R}}$ is continuously differentiable for any $\alpha $. 
Multiplying \eqref{DevolEqTrunc} by  $(\varphi _{\alpha })_{\alpha \in I}$ and adding up in $\alpha$ we obtain the weak formulation of \eqref{DevolEqTrunc}: 
\begin{align}
& \frac{d}{dt}\left( \sum_{\alpha \leq R_{\ast }}{n_\alpha
(t)\varphi_\alpha(t)}\right) -\sum_{\alpha \leq R_{\ast }}{n_\alpha(t)\dot{
\varphi}_\alpha(t) } \notag  \\
& =\frac{1}{2}\sum_{\beta \leq R_{\ast }}\sum_{\alpha \leq R_{\ast
}}K_{\alpha ,\beta }{n_\alpha(t)}{n_\beta(t)}\left[ {\varphi _{\alpha +\beta}(t)}
\chi _{\left\{ \alpha +\beta \leq R_{\ast }\right\} }-\varphi _{\alpha
}(t)-\varphi _{\beta
}(t)\right] +\sum_{\beta \leq R_{\ast}}s_{\beta }\varphi _{\beta
}(t),  \label{eq:Devol_eqWeak}
\end{align}
where $\dot \varphi$ denotes the time-derivative of $\varphi$ and $\chi
_{\left\{ \alpha+\beta\leq R_{\ast }\right\} } $
is the characteristic function of the set $\left\{ \alpha+\beta\leq R_{\ast}\right\}.$

The approximation \eqref{DevolEqTrunc} is known as the non-conservative
approximation of the coagulation equation \eqref{eq:Dtimecoag}. This equation and its weak formulation \eqref{eq:Devol_eqWeak} have been extensively used in the study of the mathematical properties of the coagulation equations (cf. for instance \cite{C98, L99}).

Our first goal is to prove the well-posedness for \eqref{DevolEqTrunc}.
\begin{proposition}
\label{prop:Dwellp} Assume that $1<R_{\ast }<\infty $ and that $%
K:I^{2}\rightarrow {\mathbb{R}}_{+}$ is a  function satisfying %
\eqref{eq:Dcond_kernel1} and \eqref{eq:Dcond_kernel2}. Assume further that $%
s=(s_{\alpha})_{\alpha\in I}$ satisfies~\eqref{eq:Dcond_s}. Let $%
(n_{\alpha}\left( 0\right))_{\alpha\in I}$ be the initial condition. Then,
there exists a unique solution $(n_{\alpha}\left( t\right))_{\alpha\in I}$%
, with $n_{\alpha}: (0,\infty)\to {\mathbb{R}}_{+}$ continuously differentiable
 for any $\alpha$, which solves \eqref{DevolEqTrunc} in the
classical sense.
\end{proposition}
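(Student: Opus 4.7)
My plan is to recognize that \eqref{DevolEqTrunc} is a finite system of ODEs on $\mathbb{R}^{R_*}$, with the right-hand side being a polynomial (in fact quadratic) function of the unknowns $(n_\alpha)_{\alpha \in I}$. Since polynomial vector fields are locally Lipschitz, the classical Picard--Lindelöf theorem yields the existence of a unique $C^1$ solution on some maximal interval $[0, T_{\max})$ for any initial data in $\mathbb{R}^{R_*}$.

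Next I would verify nonnegativity. Writing the equation as $\dot n_\alpha = G_\alpha(n) - n_\alpha L_\alpha(n)$, where $G_\alpha(n)=\frac{1}{2}\sum_{\beta<\alpha}K_{\alpha-\beta,\beta}n_{\alpha-\beta}n_\beta + s_\alpha\mathbbm{1}_{\alpha\le L_s}$ and $L_\alpha(n)=\sum_{\beta\le R_*}K_{\alpha,\beta}n_\beta$, one sees that whenever $n_\alpha=0$ and the other components are nonnegative, $\dot n_\alpha = G_\alpha(n) \ge 0$. A standard argument (e.g.\ using the integrating factor $\exp(\int_0^t L_\alpha(n(s))ds)$ applied inductively in $\alpha$, or a quadrant-invariance lemma) then shows that if $n_\alpha(0)\ge 0$ for all $\alpha$, then $n_\alpha(t)\ge 0$ for all $t\in [0,T_{\max})$ and all $\alpha\in I$.

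To obtain global existence, i.e.\ $T_{\max} = \infty$, I would derive an a priori bound. Using \eqref{eq:Devol_eqWeak} with the test function $\varphi_\alpha = \alpha$, the bracket $(\alpha+\beta)\chi_{\{\alpha+\beta\le R_*\}} - \alpha - \beta$ is always nonpositive, so by the nonnegativity of $n$ and $K$,
\begin{equation*}
\frac{d}{dt}\sum_{\alpha \le R_*} \alpha\, n_\alpha(t) \; \le \; \sum_{\beta\le R_*} \beta\, s_\beta =: S < \infty.
\end{equation*}
Hence $\sum_\alpha \alpha n_\alpha(t) \le \sum_\alpha \alpha n_\alpha(0) + S t$, which, together with nonnegativity, yields a linear-in-$t$ bound on each $|n_\alpha(t)|$. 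This a priori bound rules out finite-time blow-up, so the local solution extends globally to $[0,\infty)$; $C^1$ regularity follows from continuity of the right-hand side in $n$.

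I do not anticipate a real obstacle here: the result is essentially standard ODE theory for a finite, polynomial, dissipative system, and the only points to be careful about are the quadrant-invariance argument and the mass bound, both of which are routine. The more interesting work lies in the subsequent propositions where one takes $R_* \to \infty$ and studies stationary solutions; the present proposition just provides the regularized dynamical system needed for that fixed-point argument.
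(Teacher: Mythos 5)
Your proposal is correct and follows essentially the same route as the paper: local existence and uniqueness from Picard--Lindel\"of applied to a finite polynomial ODE system, positivity preservation from the sign structure of the vector field, and an a priori moment bound to rule out finite-time blow-up. The only (immaterial) difference is the choice of test function for the a priori estimate: you use $\varphi_\alpha=\alpha$ to bound $\sum_\alpha \alpha\, n_\alpha(t)$, whereas the paper uses $\varphi_\alpha=1$ to obtain $\partial_t\bigl(\sum_{\alpha=1}^{R_*}n_\alpha\bigr)\le\sum_{\alpha=1}^{R_*}s_\alpha$ directly; both brackets are nonpositive because of the cut-off $\chi_{\{\alpha+\beta\le R_*\}}$, and in this finite-dimensional setting both deliver the same linear-in-$t$ control.
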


\begin{proof}
The proof of this statement relies on classical arguments of the theory of
ordinary differential equations. We just outline the main steps. To simplify
the notation we define 
\begin{equation*}
g_{\alpha}:{\mathbb{R}}_{+}^{R_{\ast}}\rightarrow {\mathbb{R}}%
_{+}^{R_{\ast}} \quad \text{for}\;\; \alpha=1,\dots, R_{\ast}
\end{equation*}
such that 
\begin{equation*}
g_{\alpha}(\xi_1,\dots, \xi_{R_{\ast}})=\frac{1}{2}\sum_{\beta\leq
\alpha-1}K_{\alpha-\beta,\beta}\xi_{\alpha-\beta}\xi_{\beta}-\xi_{\alpha}%
\sum_{\beta\leq R_{\ast }}K_{\alpha,\beta}\xi_{\beta}+\sum_{\beta \leq
R_*}s_{\beta}\delta_{\alpha,\beta}.
\end{equation*}
Then, we can rewrite \eqref{DevolEqTrunc} as 
\begin{align*}
\partial_t n_{\alpha}=g_{\alpha}(n_1,\dots, n_{R_{\ast}}),
\end{align*}
with initial condition $n_{\alpha}\left( 0\right).$ We observe that the
functions $g_{\alpha}$ are polynomials, therefore they are locally Lipschitz
continuous functions. Thus, due to the Picard-Lindel\"of theorem there
exists a unique solution continuously differentiable $(n_{\alpha}\left(
t\right))_{\alpha\in I}$ on a maximal time interval $[0,T_{\ast})$.

Moreover, since $K_{\alpha ,\beta }\geq 0,$ $s_{\gamma }\geq 0$ by
assumption and $n_{\alpha }(0)\geq 0$ it easily follows that $n_{\alpha
}\geq 0$ in $[0,T_{\ast })$ for any $\alpha =1,\dots ,R_{\ast }$.
The fact that the solutions of \eqref{DevolEqTrunc} are globally defined in time follows from the fact that 
\begin{equation}
\partial_t\left( \sum_{\alpha=1}^{R_*} n_\alpha
\right) \leq \sum_{\alpha=1}^{R_*} s_\alpha.
\end{equation}
\end{proof}

 Next we show the existence of stationary injection solutions to \eqref{DevolEqTrunc} corresponding to time independent solutions of \eqref{DevolEqTrunc}.

\begin{proposition}
\label{thm:Dexistence_truncated} Under the assumptions of Proposition~\ref{prop:Dwellp}, there exists a stationary injection solution $(\hat
n_{\alpha})_{\alpha\in I}$ to~\eqref{DevolEqTrunc} 
satisfying $\hat n_{\alpha}\geq 0$ for any $\alpha\in I$.
\end{proposition}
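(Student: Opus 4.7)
The plan is to mimic the strategy used in Proposition~\ref{thm:existence_truncated}, but here everything happens in the finite-dimensional space $\mathbb{R}^{R_\ast}$, which simplifies the topological considerations considerably. Since $I=\{1,\ldots,R_\ast\}$ is finite, the lower bound \eqref{eq:Dcond_kernel2} combined with continuity of the weight function on the compact set $[1,R_\ast]^2$ yields a uniform constant $a_1>0$ such that $K_{\alpha,\beta}\ge a_1$ for all $\alpha,\beta\in I$. By Proposition~\ref{prop:Dwellp}, for every initial datum $n(0)\in\mathbb{R}_+^{R_\ast}$ there is a unique global nonnegative solution of \eqref{DevolEqTrunc}, which defines a continuous semigroup $S(t):\mathbb{R}_+^{R_\ast}\to \mathbb{R}_+^{R_\ast}$.

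The first step is to construct an invariant, convex, compact set. Summing \eqref{DevolEqTrunc} over $\alpha\in I$ and using $\chi_{\{\alpha+\beta\le R_\ast\}}\le 1$ together with the lower bound on $K$, I obtain, analogously to Remark~\ref{th:boundforzeromomreg},
\begin{equation*}
\frac{d}{dt}\Bigl(\sum_{\alpha\le R_\ast} n_\alpha(t)\Bigr)\ \le\ -\frac{a_1}{2}\Bigl(\sum_{\alpha\le R_\ast} n_\alpha(t)\Bigr)^{\!2}+\sum_{\beta\le R_\ast} s_\beta\,.
\end{equation*}
Inspecting the sign of the right-hand side, I then see that, for any $M\ge \sqrt{2 s_{\mathrm{tot}}/a_1}$ with $s_{\mathrm{tot}}=\sum_\beta s_\beta$, the set
\begin{equation*}
\mathcal{U}_M\ :=\ \Bigl\{n\in\mathbb{R}_+^{R_\ast}\ :\ \sum_{\alpha\le R_\ast} n_\alpha\le M\Bigr\}
\end{equation*}
is invariant under $S(t)$ for all $t\ge 0$. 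It is also convex and, being a closed bounded subset of $\mathbb{R}^{R_\ast}$, compact in the Euclidean topology.

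For each fixed $\delta>0$ the map $S(\delta):\mathcal{U}_M\to \mathcal{U}_M$ is continuous by standard ODE continuous dependence on initial conditions (the right-hand side of \eqref{DevolEqTrunc} is polynomial, hence smooth). Brouwer's fixed-point theorem (equivalently Schauder in this finite-dimensional setting) then produces a fixed point $\hat n_\delta\in\mathcal{U}_M$ with $S(\delta)\hat n_\delta=\hat n_\delta$. Passing to a subsequence $\delta_k\to 0$ and using compactness of $\mathcal{U}_M$, I extract a limit $\hat n\in\mathcal{U}_M$. The semigroup property \eqref{SemProp} together with $\ast$-weak-type/continuity arguments (which here reduce to continuity of $S(t)$ in $n$ and in $t$) imply in the standard way (cf.\ \cite[Thm.~1.2]{EM05}, already invoked in the proof of Proposition~\ref{thm:existence_truncated}) that $S(t)\hat n=\hat n$ for all $t\ge 0$.

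Finally, the stationarity $S(t)\hat n=\hat n$ means $\partial_t n=0$ in \eqref{DevolEqTrunc}, i.e.\ $\hat n$ satisfies the pointwise stationary equation
\begin{equation*}
0=\frac{1}{2}\sum_{\beta\le\alpha-1}K_{\alpha-\beta,\beta}\hat n_{\alpha-\beta}\hat n_\beta-\hat n_\alpha\sum_{\beta\le R_\ast}K_{\alpha,\beta}\hat n_\beta+\sum_{\beta\le R_\ast}s_\beta\delta_{\alpha,\beta}\,,\qquad \alpha\in I\,.
\end{equation*}
Multiplying by an arbitrary finitely supported test sequence $(\varphi_\alpha)$ supported in $I$ and summing in $\alpha$, using the symmetry \eqref{eq:Dcond_kernel1} to rewrite the quadratic terms, yields the weak formulation corresponding to \eqref{eq:Devol_eqWeak} with $\dot\varphi=0$, which is the discrete analogue of Definition~\ref{DefStatSol}. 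Nonnegativity $\hat n_\alpha\ge 0$ is automatic from $\hat n\in\mathcal{U}_M\subset\mathbb{R}_+^{R_\ast}$. I do not anticipate a real obstacle here: the main simplification compared to Proposition~\ref{thm:existence_truncated} is that both compactness and continuity of $S(t)$ are trivial in the Euclidean topology on $\mathbb{R}^{R_\ast}$, so the adjoint-equation construction needed in the continuous case is entirely avoided.
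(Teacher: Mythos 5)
Your proposal is correct and follows essentially the same strategy as the paper: sum the equation to obtain the Riccati-type inequality giving the invariant convex compact set $\mathcal{U}_M$, use standard ODE theory for continuity of $S(t)$, apply Brouwer to get fixed points of $S(\delta)$ for each $\delta>0$, and then conclude $S(t)\hat n = \hat n$ for all $t$ by the same argument (via \cite[Thm.~1.2]{EM05}) already used in Proposition~\ref{thm:existence_truncated}. The only (harmless) difference is that you spell out the passage from the pointwise stationary ODE to the weak form at the end, whereas the paper leaves this implicit.
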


\begin{proof}
We first construct an invariant region for the evolution equation~%
\eqref{DevolEqTrunc}. From Proposition~\ref{prop:Dwellp} there exists a
unique solution to~\eqref{DevolEqTrunc}, $(n_{\alpha }\left( t\right)
)_{\alpha \in I}$, with $n_{\alpha }:(0,\infty )\rightarrow {\mathbb{R}}%
_{+} $ continuously differentiable for any $\alpha $. In particular, $(n_{\alpha }\left( t\right) )_{\alpha \in I}$ satisfies~%
\eqref{eq:Devol_eqWeak}. Choosing $\varphi _{\alpha }=1$ and using the upper
bound for $\chi _{\left\{ \alpha +\beta \leq R_{\ast }\right\} }\leq 1$ and the lower bound $a_{1}=\min_{\alpha ,\beta \in
I}K_{\alpha ,\beta }$, we obtain 
\begin{equation*}
\frac{d}{dt}\sum_{\alpha \leq R_{\ast }}n_{\alpha }(t)\leq -\frac{a_{1}}{2}%
\left( \sum_{\alpha \leq R_{\ast }}n_{\alpha }(t)\right) ^{2}+c_{0}\hspace{%
1cm}
\end{equation*}%
where $c_{0}=\sum_{\beta \leq R_{\ast }}s_{\beta }\varphi _{\beta }$. Notice
that $a_{1}>0$ because we assume that (\ref{eq:Dcond_kernel2}) holds. We
then obtain the invariant region%
\begin{equation}
\mathcal{U}_{M}=\left\{ (n_{\alpha })_{\alpha \in I}\in {\mathbb{R}}%
^{R_{\ast }}:\sum_{\alpha \leq R_{\ast }}n_{\alpha }\leq M\right\}
\label{eq:Dinvariant_region}
\end{equation}%
with $M\geq \sqrt{\frac{2c_{0}}{a_{1}}}$. Moreover, $\mathcal{U}_{M}$ is
compact and convex. Consider the operator $S(t):{\mathbb{R}}^{R_{\ast }}\rightarrow {%
\mathbb{R}}^{R_{\ast }}$ defined by $n_{\alpha }(t)=S(t)n_{\alpha }(0)$.
This operator is continuous by standard continuity results on the initial
data for the solutions of ODEs (cf.~\cite{CD}). Since the functions $n_{\alpha }(t)$
solve a first order ODE, they are also continuous in time. Then, the mapping 
$t\rightarrow S(t)n_{\alpha }(0)$ is continuous.

We can now conclude the proof of Proposition \ref{thm:Dexistence_truncated}.
The operator $S(t):\mathcal{U}_{M}\rightarrow \mathcal{U}_{M}$ is continuous and 
$\mathcal{U}_{M}$ is convex and compact.
Then, Brouwer's Theorem (cf.\ \cite{Evans}) implies that for all $\delta >0$, there exists a fixed-point $\hat{n}_{\delta }$ of $S(\delta )$ in $\mathcal{U}_{M}$.    
Arguing as in the last paragraph of the proof of Theorem \ref{thm:existence} we conclude that there exists $\hat{n}\in \mathcal{U}_{M} $ such that $S(t) \hat{n} = \hat{n}$, which implies that $\hat{n}$ is a stationary injection solution to \eqref{DevolEqTrunc}. 
\end{proof}

We now prove the Theorem \ref{thm:Dexistence}.

\begin{proofof}[Proof of Theorem~\ref{thm:Dexistence} (existence)] 
We just sketch the argument since it is an adaptation of Theorem \ref{thm:existence}.  For notational convenience we rewrite the kernel $K_{\alpha,\beta}=K(\alpha ,\beta )$ in the form \eqref{B1in3} where now $x$, $y\in \N$. Throughout this proof we will also use the notation $n_\alpha=n(\alpha)$ and $n_\beta=n(\beta)$. 
The function $\Phi(s,x)$ is defined in a subset of the rational numbers contained in the interval $(0,1)$ and satisfies \eqref{eq:B1bound} in this domain of definition. We then define the kernel $K_{\ep}(x,y)$ as in \eqref{eq:1levTrunc} and $K_{\ep,R_{\ast}}(x,y)$ as in \eqref{eq:2levTrunc}. Hence, using Proposition \ref{thm:Dexistence_truncated} 
there exists a stationary injection
solution $n_{\varepsilon ,R_{\ast }}$ satisfying 

\begin{eqnarray}
&\frac{1}{2}\sum\limits_{\beta \leq R_{\ast }}\sum\limits_{\alpha \leq R_{\ast
}}K_{\varepsilon ,R_{\ast }}(\alpha ,\beta )n_{\varepsilon ,R_{\ast }}(\alpha )n_{\varepsilon ,R_{\ast }}(\beta )\left[ \varphi_{\alpha
+\beta }\chi_{\left\{ \alpha +\beta \leq R_{\ast }\right\} }-\varphi_\alpha -\varphi_\beta\right] \nonumber\\
 &+\sum_{\beta \leq R_{\ast
}}s_{\beta }\varphi_\beta =0\label{WeakD_TruncForm}
\end{eqnarray}
for any test function $\varphi: \N \to \R$  compactly supported.

 Choosing $\varphi$ of the form $\varphi _{\alpha }=\alpha \psi
_{\alpha }$ for some compactly supported function $\psi: \N \to \R$, we obtain
\begin{eqnarray*}
&&\varphi_{\alpha+\beta} \chi _{\{  \alpha+\beta \leq R_{\ast }\}} -\varphi_\alpha -\varphi_\beta \\
&=&\alpha(\psi_{ \alpha+\beta}\chi _{\{  \alpha+\beta \leq R_{\ast }\}}  -\psi_{\alpha})+ \beta(\psi_{ \alpha+\beta}\chi _{\{  \alpha+\beta \leq R_{\ast }\}}  -\psi_\beta).
\end{eqnarray*}
Symmetrizing we arrive at%
\begin{equation}
\sum_{\beta \leq R_{\ast }}\sum_{\alpha \leq R_{\ast
}}K_{\varepsilon ,R_{\ast }}(\alpha ,\beta )n_{\varepsilon ,R_{\ast }}(\alpha )n_{\varepsilon ,R_{\ast }}(\beta )\left[\alpha(\psi_{ \alpha+\beta}\chi _{\{  \alpha+\beta \leq R_{\ast }\}}  -\psi_\alpha)\right] +\sum_{\alpha \leq R_{\ast
}}\alpha\psi_\alpha s_\alpha=0.\nonumber
\end{equation}
Let us assume that 
\begin{equation}
\psi_\alpha=0\text{ \ for \ }\alpha\geq R_{\ast }. \nonumber
\end{equation}
For such test functions we have $\psi_{\alpha+\beta}\chi _{\{  \alpha+\beta \leq R_{\ast }\}}  =\psi_{\alpha+\beta}$, therefore,
\begin{equation}
\sum_{\beta \leq R_{\ast }}\sum_{\alpha \leq R_{\ast
}}K_{\varepsilon ,R_{\ast }}(\alpha ,\beta )n_{\varepsilon ,R_{\ast }}(\alpha )n_{\varepsilon ,R_{\ast }}(\beta )\left[\alpha(\psi_{ \alpha+\beta}-\psi_\alpha)\right] +\sum_{\alpha \leq R_{\ast
}}\alpha\psi_\alpha s_\alpha=0.\nonumber
\end{equation} 
Choosing a test function $\psi_\alpha = \chi_{\{\alpha \leq z\}}$ we obtain 
\begin{equation}
\sum_{\alpha  \leq  z} \alpha n_{\varepsilon ,R_{\ast }}(\alpha )\sum_{\beta >  z-\alpha}K_{\varepsilon ,R_{\ast }}(\alpha ,\beta )n_{\varepsilon ,R_{\ast }}(\beta ) = \sum_{\alpha < z} \alpha s_\alpha,\ \ z\in \left( 0,R_{\ast }\right).\nonumber
\end{equation} 

We can then argue as in the proof of \eqref{EstTruncFunction} to obtain 
\begin{equation*}
 \sum_{\alpha < R_*} n_{\varepsilon ,R_{\ast }}(\alpha ) \leq \bar{C}_{\ep}.
\end{equation*}
Therefore there exists a subsequence $R_{\ast}^{n}\to\infty$ and $(n_{\varepsilon}(\cdot ) )\in \ell^{1}(\N)$ such that 
$n_{\varepsilon, R^n_{\ast }}(\alpha ) \to n_{\ep}(\alpha)$ for any $\alpha \in \N$. Moreover, 
$\sum_{\alpha } n_{\varepsilon}(\alpha ) \leq \bar{C}_{\ep}$ and 
 for any bounded test function $\varphi: \N \to \R_+$, $n_{\varepsilon }$ satisfies 
\begin{equation}
\frac{1}{2}\sum_{\beta}\sum_{\alpha }K_{\varepsilon}(\alpha ,\beta )n_{\varepsilon }(\alpha )n_{\varepsilon}(\beta )\left[ \varphi_{\alpha
+\beta }-\varphi_\alpha -\varphi_\beta \right] 
 +\sum_{\beta}s_{\beta }\varphi_\beta =0.\label{WeakD_FormEps}
\end{equation}
Following the same reasoning as in the derivation of \eqref{A1} and \eqref{A2} in the proof of Theorem \ref{thm:existence} we then arrive at 
\begin{eqnarray}
\frac{1}{\beta}\sum_{\alpha \in \left[\frac{2\beta}{3},\beta\right]\cap \N}n_{\varepsilon }(\alpha) &\leq &\frac{c}{\beta^{3/2}}\left( 
\frac{1}{\min  \left\{
\beta^{\gamma},\frac{1}{\varepsilon }\right\} }\right) ^{1/2} \nonumber \\
&\leq & \frac{C}{ \beta^{3/2} \sqrt{\ep}},\ \ \text{ for all }\beta \in
\N. \label{IntEstFEps_D}
\end{eqnarray}
Then, taking subsequences, we obtain that there exists a limit sequence $(n(\alpha ))_{\alpha\in\N}$  such that $n_{\varepsilon_{n}}(\alpha )\to n(\alpha)$ as $n\to\infty$ with $\ep_n\to 0$ as $n\to\infty$ for any $\alpha\in \N$. 

Definition~\eqref{eq:1levTrunc} implies that $\lim_{\varepsilon \rightarrow 0}K_{\varepsilon }\left( \alpha,\beta\right) =K\left(\alpha,\beta\right)$ uniformly in compact sets. Taking now the limit as $n\to\infty$ in \eqref{WeakD_FormEps} we obtain that $n$ satisfies: 
\begin{equation}
\frac{1}{2}\sum_{\beta}\sum_{\alpha }K(\alpha ,\beta )n(\alpha )n(\beta )\left[ \varphi_{\alpha+\beta }-\varphi_\alpha -\varphi_\beta \right] 
 +\sum_{\beta}s_{\beta }\varphi_\beta =0,
\end{equation}
for every test function $\varphi$ compactly supported. The only difficulty doing that is to control the contribution due to the regions with $\beta\geq M$ with $M$ large in the sums
$$\sum_{\beta}\sum_{\alpha }K(\alpha ,\beta )n_{\ep}(\alpha )\varphi_\alpha n_{\ep}(\beta ).$$
This can be made arguing exactly as in the proof of  Theorem \ref{thm:existence} distinguishing the cases \eqref{C1}-\eqref{C4} and replacing the integrals by sums.

Moreover,  taking
the limit of (\ref{IntEstFEps_D}) as $\varepsilon \rightarrow 0$ we arrive at:%
\begin{equation*}
\frac{1}{\beta}\sum_{\alpha \in [2\beta /3,\beta]\cap \N}n(\alpha)\leq \frac{C}{\beta^{3/2+\gamma /2}}\ \ \text{ for
all }\beta\in \N,
\end{equation*}
which implies
\begin{equation*}
\frac{1}{\beta}\sum_{\alpha \in [2\beta/3,\beta]\cap \N} \alpha^{\gamma + \lambda} n(\alpha)\leq \bar C \beta^{\gamma + \lambda - 3/2-\gamma /2}\ \ \text{ for
all }\beta\in \N,
\end{equation*}
which implies (\ref{eq:Dmoment_cond})  using that $-1<\gamma +2\lambda<1$ .
\end{proofof}

\begin{remark}
We notice that in the paper \cite{RT} it has been proved that there exists a unique stationary solution of a problem that can be reformulated as a  solution of \eqref{eq:Dtimecoag} for the explicit kernel $K_{\alpha,\beta}=\alpha\beta.$
\end{remark}
\bigskip 

\subsection{Non-existence result}\label{ssec:discr1dNEx}

We first give an example of a construction of a continuous kernel $\widetilde K$ which interpolates the values of the discrete kernel $K_{\alpha,\beta}$ and satisfies \eqref{eq:cond_kernel2}-\eqref{eq:cond_kernel3}.
Let $K_{\alpha, \beta}$  satisfy \eqref{eq:Dcond_kernel1}-\eqref{eq:Dcond_kernel3}  and let $w$ denote the corresponding weight function in (\ref{eq:cond_kernel}). 
We define  the  continuous  kernel $\widetilde{K}: (\R_*)^2 \to \R_+$ by setting 
\begin{equation}\label{eq:dc_K}
\widetilde K(x,y) = \sum_{\alpha,\beta=1}^\infty  K_{\alpha,\beta} \zeta_\varepsilon(x-\alpha) \zeta_\varepsilon(y-\beta)+
c_1 \left(\zeta_\varepsilon(x) + \zeta_\varepsilon(y)\right) w(x,y)\,, \quad x,y>0\,,
\end{equation}
where   $\varepsilon<1/2$ and  $\zeta_\varepsilon$ is a continuous non-negative function satisfying $\zeta_\varepsilon (x)= 0,\ |x| \geq 1/2+\varepsilon$, $\zeta_\varepsilon (x)= 1,\ |x| \leq 1/2-\varepsilon$  and affine  in each interval $(1/2-\varepsilon,1/2+\varepsilon)$ and  $(-1/2-\varepsilon,-1/2+\varepsilon)$.  
We remark that the series in (\ref{eq:dc_K}) is convergent at any $x$ and $y$ since it contains at most $4$ non-zero terms.

The function $\widetilde K$ is  continuous, non-negative and symmetric  as it is written as a sum of functions with the same properties. 
We now show that $\widetilde K$ satisfies the growth bounds \eqref{eq:cond_kernel2}-\eqref{eq:cond_kernel3} with the same exponents  $\gamma, \lambda$  of the discrete kernel $K_{\alpha,\beta}$, although possibly for different constants $c_1$ and $c_2$.   
Let us observe first that, if $x<\frac{1}{2}$ or $y<\frac{1}{2}$,
the second term in (\ref{eq:dc_K}) is proportional to $w(x,y)$ and thus already provides a suitable lower bound.
The upper bound may also be checked to hold then,
after possibly adjusting $c_2$ from the value in (\ref{eq:Dcond_kernel3}).
Hence, we assume $x,y\ge \frac{1}{2}$ in the following.

For each  $\alpha,\beta \in \N$, we have from \eqref{eq:dc_K} that  $\widetilde K(\alpha,\beta)=K_{\alpha,\beta}$. Therefore $\widetilde K(x,y)$  satisfies \eqref{eq:cond_kernel2}-\eqref{eq:cond_kernel3} for  $x=\alpha$ and $y=\beta$.  If $x\in [\alpha-1/2, \alpha+1/2]$, $y\in [\beta-1/2, \beta+1/2]$ we have that $\frac{1}{2}K_{\alpha,\beta} \leq \widetilde K(x,y) \leq   \sum_{i,j=-1,0,1  } K_{\alpha+i,\beta+j} +
c_1 \left(\zeta_\varepsilon(x) + \zeta_\varepsilon(y)\right) w(x,y)$, where we set $K_{0,j}=K_{j,0}=0$ for $j\in \N$.  Using
the bounds \eqref{eq:Dcond_kernel2}, \eqref{eq:Dcond_kernel3},  and 
the monotonicity properties of $w$, this  
implies that there exist positive constants $c_1$ and $c_2$  such that $\widetilde K(x,y)$ satisfies \eqref{eq:cond_kernel2}-\eqref{eq:cond_kernel3}.

\begin{lemma}\label{lem:discrete_sol_continuous}
Assume that: 
\begin{itemize}
\item $K: \N^2 \to \R_+$ is a function satisfying \eqref{eq:Dcond_kernel1} and \eqref{eq:Dcond_kernel3}, 
\item  $\widetilde K: \R_*^2 \to \R_+ $ is a continuous interpolation of $K$ satisfying \eqref{eq:cond_kernel1} and \eqref{eq:cond_kernel3}, i.e., $\widetilde K \in C(\R_*^2) $ and $K_{\alpha, \beta} = \widetilde K(\alpha,\beta)$,
\item $s = (s_\alpha)_{\alpha \in \N} $ satisfies $s \neq 0$ and \eqref{eq:Dcond_s},
\item  $(n_\alpha )_{\alpha \in \N}$ is a stationary injection solution to \eqref{eq:Dtimecoag} in the sense of Definition \ref{Def:DFluxSol}.
\end{itemize}
Let $f, \eta \in \mathcal{M}_+(\R_*)$ be defined by $ f(dx)=\sum_{\alpha=1}^\infty n_\alpha  \delta_{\alpha}(dx)$ and $ \eta(dx) = \sum_{\alpha=1}^\infty s_{\alpha}\delta_{\alpha}(dx) $, where $\delta_{\alpha} \in \mathcal{M}_+(\R_*)$ is the Dirac measure at $\alpha$. 
  Then  $f$ is a stationary injection solution to the continuous coagulation equation \eqref{eq:time_evol} in the sense of Definition \ref{DefFluxSol} with the kernel $\widetilde K$ and source $\eta$.
\end{lemma}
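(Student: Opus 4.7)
The strategy is to translate each condition of Definition~\ref{DefFluxSol} into the corresponding discrete condition of Definition~\ref{Def:DFluxSol} and exploit the fact that the measure $f$ is purely atomic, supported on $\mathbb{N}$, so all integrals against $f\otimes f$ collapse to double sums that the discrete weak formulation already controls.

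First, I would verify the structural requirements for $f$. Clearly $f\in \mathcal{M}_+(\R_*)$ with $f((0,1))=0$ since $f$ is supported on $\mathbb{N}\subset [1,\infty)$. The moment condition~\eqref{eq:moment_cond} for $f$ reads
\[
\int_{\R_*} x^{\gamma+\lambda} f(dx)+\int_{\R_*} x^{-\lambda} f(dx)=\sum_{\alpha\ge 1}\alpha^{\gamma+\lambda}n_\alpha+\sum_{\alpha\ge 1}\alpha^{-\lambda}n_\alpha,
\]
which is finite by the assumed discrete moment bound~\eqref{eq:Dmoment_cond}. Likewise, $\eta=\sum_\alpha s_\alpha \delta_\alpha$ is a bounded positive measure with $\supp\eta\subset [1,L_s]$ by~\eqref{eq:Dcond_s}, so~\eqref{eq:cond_eta} holds with $L_\eta=L_s$. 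The hypotheses on $\widetilde K$ supply \eqref{ContinAssumpt}--\eqref{eq:cond_kernel3}, hence all integrals in~\eqref{eq:stationary_eq} are well defined by the argument given immediately after Definition~\ref{DefFluxSol}.

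Next I would verify the weak stationary identity~\eqref{eq:stationary_eq} for $f$. Fix any $\varphi\in C_c(\R_*)$ and set $\varphi_\alpha:=\varphi(\alpha)$ for $\alpha\in\mathbb{N}$. Since $\supp\varphi$ is compact in $\R_*$, the sequence $(\varphi_\alpha)$ has finite support and is therefore admissible as a test sequence in Definition~\ref{Def:DFluxSol}. Using that $f$ and $\eta$ are supported on $\mathbb{N}$ and that $\widetilde K(\alpha,\beta)=K_{\alpha,\beta}$, we compute
\begin{align*}
&\frac{1}{2}\int_{\R_*}\int_{\R_*}\widetilde K(x,y)\bigl[\varphi(x+y)-\varphi(x)-\varphi(y)\bigr] f(dx) f(dy)+\int_{\R_*}\varphi(x)\eta(dx)\\
&\quad =\frac{1}{2}\sum_{\alpha\ge 1}\sum_{\beta\ge 1} K_{\alpha,\beta}\,n_\alpha n_\beta\bigl[\varphi_{\alpha+\beta}-\varphi_\alpha-\varphi_\beta\bigr]+\sum_{\beta\ge 1}s_\beta\varphi_\beta.
\end{align*}
The right-hand side vanishes by~\eqref{eq:Dweakform}, establishing~\eqref{eq:stationary_eq}.

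The only point needing care is the interchange of the double integral with the double sum, i.e.\ the absolute convergence of these double series. I would justify it by splitting the integrand into the three pieces $\varphi(x+y)$, $\varphi(x)$, $\varphi(y)$; the $\varphi(x+y)$-piece involves only finitely many pairs $(\alpha,\beta)$ with $\alpha+\beta\le L$ where $\supp\varphi\subset (0,L]$, while the $\varphi(x)$- and $\varphi(y)$-pieces are controlled exactly as in the paragraph following Definition~\ref{DefFluxSol}: $\widetilde K(\alpha,\beta)\le C_L(\beta^{\gamma+\lambda}+\beta^{-\lambda})$ on $[1,L]\times[1,\infty)$, so absolute summability reduces to the discrete moment condition~\eqref{eq:Dmoment_cond}. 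This is the main (and only) technical step; everything else is bookkeeping. Fubini then legitimises the rearrangement and completes the proof.
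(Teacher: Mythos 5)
Your proposal is correct and follows essentially the same route as the paper's proof: check $f((0,1))=0$ and the moment condition \eqref{eq:moment_cond} by identifying the integrals with the discrete sums, then use Fubini to rewrite the continuous weak form \eqref{eq:stationary_eq} as the discrete one \eqref{eq:Dweakform}. Your explicit justification of absolute summability (splitting the integrand and invoking the bound $\widetilde K(\alpha,\beta)\le C_L(\beta^{\gamma+\lambda}+\beta^{-\lambda})$ on $[1,L]\times[1,\infty)$) is a welcome slight elaboration of a step the paper passes over quickly, but it is the same idea, not a different argument.
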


\begin{proof}
We first  notice that $\eta$ satisfies \eqref{eq:cond_eta} with $L_\eta = L_s$ and $\widetilde K$ satisfies \eqref{eq:cond_kernel1} and \eqref{eq:cond_kernel3}. Therefore $\widetilde K$ and $\eta$ satisfy the assumptions of Definition \ref{DefFluxSol}. 

For $f \in \mathcal{M}_+(\R_+)$ such that $f=\sum_{\alpha=1}^\infty  \delta_{\alpha} n_\alpha$, we have that $f((0,1))=0$. Since $(n_\alpha )_{\alpha \in \N}$ is a stationary injection solution in the sense of Definition \ref{Def:DFluxSol}, then it satisfies \eqref{eq:Dmoment_cond}. Using \eqref{eq:Dmoment_cond} and using Fubini's theorem to exchange the sum and the integral, we obtain:
\begin{eqnarray}
\infty &>& \sum_{\alpha=1 }^{\infty} \alpha^{\gamma +\lambda }n_{\alpha} + \sum_{\alpha=1 }^{\infty} \alpha^{-\lambda }n_{\alpha} \nonumber\\
 &=& \sum_{\alpha=1}^\infty  \int_{\left(0,\infty \right) }x^{\gamma +\lambda } n_\alpha { \delta_{\alpha}(dx) } + \sum_{\alpha=1}^\infty \int_{\left(0,\infty \right) }x^{-\lambda } n_\alpha { \delta_{\alpha}(dx) } \nonumber\\
&=& \int_{\left(0,\infty \right) }x^{\gamma +\lambda }\sum_{\alpha=1}^\infty  n_\alpha { \delta_{\alpha}(dx) }+ \int_{\left(0,\infty \right) }x^{-\lambda }\sum_{\alpha=1}^\infty  n_\alpha { \delta_{\alpha}(dx) } \nonumber\\
&=&\int_{\left(0,\infty \right) }x^{\gamma +\lambda }f\left( dx\right) + \int_{\left(0,\infty \right) }x^{-\lambda }f\left( dx\right) \nonumber
\end{eqnarray}
which proves \eqref{eq:moment_cond}.
For any test function $\varphi \in C_{c}(\R_*)$ we  use  Fubini's theorem to exchange the sum and the integral yielding:
\begin{equation}\label{eq:disc_cont1}
\int_{ (0,\infty )} \varphi\left( x\right) \eta \left( dx\right)
=\int_{(0,\infty )}\varphi\left( x\right)   \sum_{\alpha=1}^\infty  s_{\alpha} { \delta_{\alpha}(dx) }
= \sum_{\alpha=1}^\infty  \int_{ (0,\infty )}\varphi\left( x\right)  s_{\alpha}{ \delta_{\alpha}(dx) }
=\sum_{ \alpha=1 }^\infty s_{\alpha}\varphi_{\alpha}.
\end{equation}
Using \eqref{eq:moment_cond} we have that for any  test function $\varphi \in C_{c}(\R_*)$,
$$\iint_{  (0,\infty )^2}\widetilde K\left( x,y\right) \left[
\varphi \left( x+y\right) -\varphi \left( x\right) -\varphi \left( y\right) %
\right] f\left( dx\right) f\left( dy\right) < \infty $$
is well-defined. 
Using again Fubini's theorem we obtain that 
\begin{eqnarray}\label{eq:disc_cont2}
&&\frac{1}{2}\iint_{(0,\infty )^2}\widetilde K\left( x,y\right) \left[
\varphi \left( x+y\right) -\varphi \left( x\right) -\varphi \left( y\right) %
\right] f\left( dx\right) f\left( dy\right) \nonumber\\
&& \hspace{2cm} = \frac {1}{2}\sum_{\beta}\sum_{\alpha}K_{\alpha,\beta}n_{\alpha}n_{\beta}\left[ \varphi_{\alpha+\beta}-\varphi_{\alpha}-\varphi_{\beta}\right].
\end{eqnarray}
Adding \eqref{eq:disc_cont1} with \eqref{eq:disc_cont2} and using \eqref{eq:Dweakform} we  obtain \eqref{eq:stationary_eq}, which concludes the proof.
\end{proof}

Let $\mathcal{C} \subset \mathcal{M}_+(\R_+)$ be the set of positive bounded Radon measures supported on the natural numbers, i.e.,
$$\mathcal{C} = \{ f \in \mathcal{M}_+(\R_*)\ | \ f = \sum_{\beta = 1}^\infty n_\beta {\delta_{\beta}},\ n_\beta \geq 0,\ \beta \in \N  \}.$$

\begin{proofof}[Proof of Theorem~\ref{thm:DNonExistence} (non-existence)]
We first recall the interpolation construction in the beginning of the Section which allows to extend to the discrete bounds \eqref{eq:Dcond_kernel2}-\eqref{eq:Dcond_kernel3} and construct a continuous interpolation kernel $\widetilde K$ such that \eqref{eq:cond_kernel2}-\eqref{eq:cond_kernel3} hold. From Theorem~\ref{thm:existence}  there is no stationary injection solution  to \eqref{eq:time_evol} in the sense of Definition~\ref{DefFluxSol}. In particular, by Lemma \ref{lem:discrete_sol_continuous} then there is no solution in the subset of discrete measures $\mathcal{C}$, which concludes the proof. 
\end{proofof}

\bigskip

\bigskip

\bigskip

\section{Estimates and regularity}\label{sec:estimates}

In order to define upper and lower estimates for the measure $f$ we need detailed estimates for the fluxes $J$ defined on the left hand side of (\ref{eq:flux_lem}) in Lemma \ref{lem:flux}. 
That is, we consider the function 
\[
 J(z) =  
\iint_{\Omega_z} K\left( x,y\right) xf \left( dx\right)
f\left( dy\right) \,, \quad z >0,  
\]
where
\begin{equation}\label{eq:Omegaz}
 \Omega_z := \{(x,y)\ |\ 0< x \leq z ,\ y > z-x  \}.
\end{equation}
Given $\delta>0$,  we introduce a partition of $(\R_+)^2= \Sigma_1(\delta) \cup \Sigma_2(\delta) \cup \Sigma_3(\delta) $  by
\begin{equation}
\label{eq:sigma}\Sigma_1(\delta) =  \{(x,y)\ |\  y >  x/\delta  \}\,,\quad \Sigma_2(\delta) =  \{(x,y)\ |\ \delta x\leq y \leq   x/\delta  \}\,,\quad \Sigma_3(\delta) = \{(x,y)\ |\ y <  \delta x \} \,,
\end{equation}
and we then define for $j=1,2,3$
\begin{equation} \label{eq:fluxeq_j}
J_j(z,\delta) = 
\iint_{\Omega_z \cap \Sigma_j(\delta)}K\left( x,y\right) xf \left( dx\right)
f\left( dy\right)  \ \text{for }z >0.  
\end{equation}
Clearly, $J(z) = \sum_{j=1}^3 J_j(z,\delta) $ for any choice of $\delta$.

The following Lemma will be used to prove that the contribution to the integral defining the fluxes due to the points contained in $ \Sigma_1(\delta)$ and $ \Sigma_3(\delta) $ are small for $\delta$ sufficiently small.

\begin{lemma}\label{lem:J}
Let $K$ satisfy~\eqref{ContinAssumpt}--\eqref{eq:cond_kernel3} and $| \gamma +2\lambda | <1.$ Suppose that $f \in \mathcal{M}_+(\R_*)$ satisfies
\begin{equation}
 \frac{1}{z}\int_{[z/2,z]} f(dx) \leq \frac{ A }{z^{(\gamma + 3)/2}}\ \ \text{ for all }   z>0. \label{eq:lemm_upperBound}
 \end{equation}
 Then for every $\varepsilon>0$ there exists a  $ \delta_\eps>0$  depending on $\varepsilon$ as well as   on $\gamma,\ \lambda$ and on the constants $c_1,\ c_2$ in  \eqref{eq:cond_kernel2}--\eqref{eq:cond_kernel3}  but independent of $A$ such that for any $\delta \leq \delta_\eps $ we have that
\begin{equation}\label{eq:flux_sigma1}
\sup_{z>0} J_1(z,\delta) \leq \varepsilon A^2 
\end{equation}
and
\begin{equation}\label{eq:flux_sigma3}
\sup_{R>0}\frac{1}{R}\int_{[R,2R]} J_3(z,\delta)dz \leq \varepsilon A^2.
\end{equation}
\end{lemma}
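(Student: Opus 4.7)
The plan is to convert the running-average hypothesis (\ref{eq:lemm_upperBound}) into pointwise moment estimates via Lemma~\ref{lem:bound}, insert the kernel upper bound (\ref{eq:cond_kernel3}) into the definitions of $J_1$ and $J_3$, and bound the resulting pair of integrals directly. Three moment estimates will be needed: a tail estimate $\int_{[a,\infty)} x^q f(dx) \le C_q A\, a^{q-(\gamma+1)/2}$ for $q < (\gamma+1)/2$, obtained from item~\ref{it:Risinf} of Lemma~\ref{lem:bound} with $g(w) = C A w^{q-(\gamma+3)/2}$; a near-origin estimate $\int_{(0,M]} x^q f(dx) \le C_q A\, M^{q-(\gamma+1)/2}$ for $q > (\gamma+1)/2$, by dyadic summation of (\ref{eq:lemm_upperBound}); and a dyadic annular bound $\int_{[R/2,2R]} x^q f(dx) \le C_q A\, R^{q-(\gamma+1)/2}$ for any $q \in \mathbb{R}$. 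The hypothesis $|\gamma+2\lambda|<1$ is used exactly to place the four exponents $q \in \{-\lambda,\, \gamma+\lambda\}$ strictly below, and $q \in \{1-\lambda,\ 1+\gamma+\lambda\}$ strictly above, the borderline value $(\gamma+1)/2$, equivalently to make $\alpha_\pm := (1 \pm (\gamma+2\lambda))/2$ strictly positive.

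\textbf{Bounding $J_1$.} The region $\Omega_z \cap \Sigma_1(\delta)$ imposes $y > \max(z-x,\, x/\delta)$, so I will split the $x$-integration at the crossover point $x_* = z\delta/(1+\delta)$. On $(x_*, z]$, where the binding bound is $y > x/\delta$, the tail estimate for $q \in \{-\lambda,\, \gamma+\lambda\}$ yields an inner bound of the form $CA(\delta^{\alpha_+}+\delta^{\alpha_-})\,x^{(\gamma-1)/2}$, so after multiplying by $x$ one is left with $\int x^{(\gamma+1)/2} f(dx)$ on the interval $(x_*, z]$. This is the borderline case of moment integrability; item~\ref{it:polcase} of Lemma~\ref{lem:bound} applied with $q = -1$ bounds it by $CA\,|\ln\delta|$, a logarithmic loss absorbed by either $\delta^{\alpha_\pm}$. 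On $(0, x_*]$, where the binding bound is $y > z-x \ge z/2$, the tail estimate is applied uniformly at level $\sim z$, while the $x$-integration over $(0,\delta z]$ uses the near-origin estimate with $q \in \{1-\lambda,\ 1+\gamma+\lambda\}$; the powers of $z$ then cancel exactly and the resulting bound is $CA^2(\delta^{\alpha_+}+\delta^{\alpha_-})$, uniformly in $z$, proving (\ref{eq:flux_sigma1}).

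\textbf{Bounding $J_3$ and main obstacle.} For $J_3$ the plan is to use Fubini to swap $z$ with $(x,y)$. Since $(x,y) \in \Omega_z$ is equivalent to $z \in [x,\,x+y)$, the Lebesgue measure of $\{z \in [R,2R] : (x,y) \in \Omega_z\}$ is at most $\min(y,R)$; on $\Sigma_3$ one has $y < \delta x \le 2\delta R$, so for $\delta \le 1/2$ the minimum equals $y$ and the effective $(x,y)$-domain reduces to $\{R/(1+\delta) < x \le 2R,\ 0 < y < \delta x\}$. Splitting the kernel bound into two product terms, the $x$-integral on $[R/2,2R]$ is handled by the dyadic annular estimate and the $y$-integral on $(0, 2\delta R]$ by the near-origin estimate; the powers of $R$ cancel with the $1/R$ prefactor and one obtains again $CA^2(\delta^{\alpha_+}+\delta^{\alpha_-})$, proving (\ref{eq:flux_sigma3}). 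The main obstacle throughout is exponent bookkeeping together with the appearance of the borderline moment $(\gamma+1)/2$ in the large-$x$ subcase of $J_1$: since $\delta_\varepsilon$ must be independent of $A$, the $\delta$-dependence has to come out as an explicit positive power, and the positivity of both $\alpha_\pm$, equivalent to the full strength of $|\gamma+2\lambda|<1$, is exactly what is needed to absorb the logarithmic loss produced by item~\ref{it:polcase} of Lemma~\ref{lem:bound}.
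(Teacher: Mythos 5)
Your proposal is correct and follows essentially the same route as the paper's proof: both convert the running-average hypothesis into moment tail, near-origin, and annular estimates via Lemma~\ref{lem:bound}, swap the order of integration in $J_3$ by Fubini, and absorb the logarithmic loss produced at the borderline moment $(\gamma+1)/2$ using the strict positivity of $\alpha_\pm$, which is exactly the content of $|\gamma+2\lambda|<1$. The only differences are organizational rather than conceptual --- the paper treats the signs of $\gamma+2\lambda$ separately and keeps only the dominant kernel term on each of $\Sigma_1,\Sigma_3$, and it carries the factor $\max\{\theta x, z/2\}^{-\nu}$ through a single application of Lemma~\ref{lem:bound} rather than splitting at your crossover $x_*$ --- so the decomposition, key lemma, and critical estimates coincide.
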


\begin{proof}
We set $\theta := 1/\delta>1$.
In order to estimate the contribution due to the region  $ \Sigma_1(\delta) \cap \Omega_z$ we define 
$$ D(z,\theta):= \{(x,y)\ |\ 0< x \leq z,\ \max\{\theta x ,z/2\} \leq y \}.$$ 
First suppose that $2\lambda+\gamma \geq 0 $.
Using the upper bound for $K$ given in \eqref{eq:cond_kernel3} and the fact that $\Omega_z \cap \Sigma_1 \subset D(z,\theta)$  we obtain that
\begin{equation*}
\iint_{\Omega_z \cap \Sigma_1 }K\left( x,y\right) xf \left( dx\right)
f\left( dy\right) 
\leq  2  c_2 \iint_{D(z,\theta) }  x^{1-\lambda}y^{\gamma+\lambda} f \left( dx\right)
f\left( dy\right).\
\end{equation*}

{
Assuming $0<x\le z$ we denote $a:=\max\{\theta x ,z/2\}$.  Then we can employ item \ref{it:Risinf}
of Lemma \ref{lem:bound} and the upper bound \eqref{eq:lemm_upperBound} to estimate
\[
\int_{[a,\infty)} y^{\gamma+\lambda}
f\left( dy\right) \le A \frac{2^{|\gamma+\lambda|}}{\nu \ln 2} a^{-\nu}\,,
\]
where $\nu:=\frac{1-|2\lambda+\gamma|}{2}=\frac{1-(2\lambda+\gamma)}{2}>0$.
Therefore, by Fubini's theorem, we can now conclude that
\begin{align}\label{eq:Omzcap1}
\iint_{\Omega_z \cap \Sigma_1} K\left( x,y\right) x
f \left( dx\right)
f\left( dy\right)
 \leq  C A  \int_{(0,z]} \max\{\theta x ,z/2\}^{-\nu}
x^{1-\lambda} f \left( dx\right)\,.
\end{align}
Denoting $\varphi(x)=\max\{\theta x ,z/2\}^{-\nu}
x^{1-\lambda} $, it follows from 
\eqref{eq:lemm_upperBound} that 
\[
  \frac{1}{y}\int_{[y/2,y]} \varphi(x) f(dx) \leq
  2^\nu \max\{\theta y ,z\}^{-\nu} 2^{|1-\lambda|} 
  A y^{\nu-1}\,, \quad \text{ for all } y>0.
\]
Thus by item \ref{it:Rlessinf} of Lemma \ref{lem:bound}, we find that
for any $a'\in (0,z]$,
\[
 \int_{[a',z]} \varphi(x) f \left( dx\right)\le 
 2^{|1-\lambda|+\nu}  A \int_{[a',z]} \max\{\theta y ,z\}^{-\nu}  y^{\nu-1} dy 
 + 2^{|1-\lambda|+\nu}  A \max\{\theta ,1\}^{-\nu} \,.
\]
The limit $a'\to 0$ of the right hand side is finite, and thus we can use monotone convergence theorem to conclude that 
\[
 \int_{(0,z]} \varphi(x) f \left( dx\right)\le 
 2^{|1-\lambda|+\nu}  A \left(\int_{0}^z \max\{\theta y ,z\}^{-\nu}  y^{\nu-1} dy 
 +  \theta^{-\nu}\right) \,.
\]
Evaluating the remaining integral and inserting the result in (\ref{eq:Omzcap1}) yields
\begin{eqnarray*}
\iint_{\Omega_z \cap \Sigma_1} K\left( x,y\right) x
f \left( dx\right)
f\left( dy\right) \nonumber
& \leq &  C A^2  \left(1+ \frac{1}{\nu} + \ln \theta \right)\theta^{-\nu} \,.
 \label{eq:flux1}
\end{eqnarray*}
Since $\nu >0$, the factor multiplying $A^2$ converges to $0$ as $\theta \to \infty$, i.e., also when $\delta\to 0$.  Therefore, to any $\vep>0$ there is $\delta_\vep>0$ such that 
\eqref{eq:flux_sigma1} holds for all $0<\delta\le \delta_\vep$.

In the case where $\gamma +2\lambda <0$ we have $\nu=\frac{1+2\lambda+\gamma}{2}>0$.
The above steps can then be repeated simply by 
exchanging the exponents $\gamma+\lambda$ and $-\lambda$ therein. We find
\begin{align*}
& \iint_{\Omega_z \cap \Sigma_1 }K\left( x,y\right) xf \left( dx\right)
f\left( dy\right) 
\leq  2  c_2 \iint_{D(z,\theta) }  x^{1+\gamma+\lambda}y^{-\lambda} f \left( dx\right)
f\left( dy\right)
\\ & \quad \le  C A^2  \left(1+ \frac{1}{\nu} + \ln \theta \right)\theta^{-\nu}\,.
\end{align*}
Thus also in this case \eqref{eq:flux_sigma1} holds for all sufficiently small $\delta$.

To study the region $\Sigma_3(\delta)$, 
we return to the case $\gamma +2\lambda \ge 0$ and assume also $\delta\le \frac{1}{4}$.
Then we have that 
$$\Omega_z \cap \Sigma_3(\delta) \subset \{(x,y)\ |\ 0 < y \leq \delta z,\ z-y \leq x \leq z \}\,.$$ 
In particular, if $(x,y)\in \Omega_z \cap \Sigma_3(\delta)$, 
we have $x\ge (1-\delta)z\ge (\delta^{-1}-1)y> y$.
We integrate \eqref{eq:fluxeq_j} 
in $z$ from $R$ to $2R$, and using \eqref{eq:cond_kernel3} we obtain
\begin{eqnarray*}
I_3 &:=& \int_{[R,2R]}\iint_{\Omega_z \cap \Sigma_3} K\left( x,y\right) xf \left( dx\right)
f\left( dy\right)dz
 \\
& \leq & 2 c_2 \int_{[R,2R]} \int_{(0,\delta z]} \int_{[z-y,z]}  x^{1+\gamma+\lambda}y^{-\lambda}f \left( dx\right)f\left( dy\right)dz .
\end{eqnarray*}
Notice that in the region of integration we have $ R/2 \leq x \leq z\leq 2R$ since $\delta \le \frac{1}{2}$. Therefore, $(0,\delta z] \subset (0,2\delta R]$.
Thus there exists a  constant $C>0$ independent of $\delta$ and $R$  such that 
\begin{eqnarray*}
I_3
 &\leq & C R^{1+\gamma+\lambda} \int_{[R,2R]}  \int_{(0,2 \delta R]}  \int_{[z-y,z]} y^{-\lambda}f \left( dx\right)f\left( dy\right)dz.
\end{eqnarray*}

Using now Fubini's theorem, as well as the fact that 
 $\{(x,z)\ |\ z-y\leq x \leq z,\ R \leq z \leq 2R \} \subset \{(x,z)\ |\ R/2 \leq x \leq 2R, \ x\leq z\leq x+y \}$ if $0<y\le R/2$, we obtain
\begin{eqnarray*}
I_3 &\leq & C R^{1+\gamma+\lambda}\int_{(0,2\delta R]} y^{-\lambda} f\left( dy\right) \int_{[R/2,2R]}f\left( dx\right)  \int_{[x,x+y]} dz   \\
&=&   C R^{1+\gamma+\lambda}\int_{(0,2\delta R]} y^{1-\lambda}f\left( dy\right) \int_{[R/2,2R]} f\left( dx\right). 
\end{eqnarray*}
Then using the bound \eqref{eq:lemm_upperBound}, the assumption $|\gamma +2\lambda|<1$, and item \ref{it:polcase} of
Lemma \ref{lem:bound} applied separately to both of the remaining integrals and with a regularization by $a'\to 0$ in the first integral, we obtain
\[
I_3 \leq  CA^2 R^{1+\gamma + \lambda} \delta^\nu  R^{\nu} R^{-\frac{\gamma+1}{2}}= 
C A^2 \delta^\nu   R, 
\]
where $\nu = \frac{1-|2\lambda+\gamma|}{2}>0$, as before, and $C$ is an adjusted constant independent of $R$ and $\delta$.  Thus the prefactor of $A^2 R$ goes to 
zero as $\delta\to 0$, and we may conclude that to any $\vep>0$ there is $\delta_\vep>0$ such that 
\eqref{eq:flux_sigma3} holds for all $0<\delta\le \delta_\vep$.  Taking smaller of
the cutoffs $\delta_\vep$ obtained for \eqref{eq:flux_sigma1}
and \eqref{eq:flux_sigma3}, we find a value such that both inequalities are
valid whenever $0<\delta\le \delta_\vep$.

In the case where $\gamma +2\lambda <0$, also \eqref{eq:flux_sigma3} can be checked as in the first case, by  exchanging the exponents $\gamma+\lambda$ and $-\lambda$ in the above.
}
\end{proof}

In this Section we use  the assumptions of  Theorem~\ref{thm:existence} as stated next which guarantee the existence of a stationary injection solution $f$ in the sense of Definition~\ref{DefFluxSol}.

\begin{assumption}\label{assump:estimates}
Let $K$ satisfy~\eqref{ContinAssumpt}--\eqref{eq:cond_kernel3}
and suppose $| \gamma +2\lambda | <1.$ Let $\eta \neq 0  $ satisfy \eqref{eq:cond_eta}.
Let $f \in \mathcal{M}_+(\R_*)$, $f\neq 0$ be a stationary injection solution to~\eqref{eq:time_evol} in the sense of Definition~\ref{DefFluxSol} with $f((0,a))=0$, for some $a>0$  (cf.\ Remark \ref{rem:defFluxSol}).
\end{assumption}

Under Assumption \ref{assump:estimates} we obtain, from Lemma \ref{lem:flux}, that $f$ satisfies:
\begin{equation}\label{eq:fluxeq}
\iint_{\Omega_z }K\left( x,y\right) xf \left( dx\right)
f\left( dy\right) = \int_{(0,z]}x\eta \left(dx\right) \ \text{for }z > 0.
\end{equation}

Notice that  Lemma \ref{lem:J} shows that the contributions of the regions $\Sigma_j(\delta) \cap \Omega_z$ with $j=1,3$ to the fluxes defined in \eqref{eq:flux} are  small for $\delta $   sufficiently small.  This shows that  the flux of particles in the size space is due to region $\Sigma_2(\delta) \cap \Omega_z$, which yields the contribution of the collisions between  particles  of comparable size.

\begin{proposition}\label{prop:estimate}
Suppose that Assumption \ref{assump:estimates} holds. Let  $J$ be the constant  $J= \int_{(0,L_\eta]} x \eta (dx)$. Then:
\begin{equation}
 \frac{1}{z}\int_{[z/2,z]} f(dx) \leq \frac{ C_1\sqrt{J} }{z^{(\gamma + 3)/2}}\ \ \text{ for all }   z>0. \label{eq:integral_bounds1}
 \end{equation}
Moreover,  there exists a constant $b$, with $0<b<1$ and depending  on $\gamma,\ \lambda$ and on the constants $c_1,\ c_2$  in \eqref{eq:cond_kernel2}-\eqref{eq:cond_kernel3} such that 
 \begin{equation}
 \frac{1}{z}\int_{ (bz,z]} f(dx)  \geq  \frac{ C_2\sqrt{J} }{z^{(\gamma + 3)/2}}\ \ \text{ for all }   z\geq \frac{L_\eta}{\sqrt{b}} \ .\label{eq:integral_bounds2}
 \end{equation}
The constants $C_1, \ C_2$ that appear in \eqref{eq:integral_bounds1} and \eqref{eq:integral_bounds2} depend   on $\gamma,\ \lambda$ and on the constants $c_1,\ c_2$  in \eqref{eq:cond_kernel2}-\eqref{eq:cond_kernel3}. 
\end{proposition}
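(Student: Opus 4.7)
\textbf{Plan for the proof of Proposition \ref{prop:estimate}.}

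The upper bound \eqref{eq:integral_bounds1} follows directly from the flux identity \eqref{eq:fluxeq}. For any $z>0$ the square $[z/2,z]^2$ is contained in $\Omega_z$, since there $x+y\ge z$ automatically. On this square the ratio $y/x$ lies in $[1/2,2]$, so the lower bound \eqref{eq:cond_kernel2} on $K$ gives $K(x,y)\ge c\,z^{\gamma}$ for some $c>0$ depending only on $c_1,\lambda$, and since $x\ge z/2$ one has $K(x,y)x\ge C z^{\gamma+1}$. Combining with $\int_{(0,z]}x\eta(dx)\le J$ yields
\begin{equation*}
J\ \ge\ \iint_{[z/2,z]^{2}} K(x,y)\,x\,f(dx)f(dy)\ \ge\ C\,z^{\gamma+1}\Bigl(\int_{[z/2,z]}f(dx)\Bigr)^{2},
\end{equation*}
which upon dividing by $z$ gives \eqref{eq:integral_bounds1}.

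For the lower bound \eqref{eq:integral_bounds2} I will feed the just-proven \eqref{eq:integral_bounds1} (with $A=C_1\sqrt{J}$) into Lemma \ref{lem:J}. Given $\vep>0$, pick $\delta=\delta_\vep$ so that $J_1(z,\delta)\le \vep J$ pointwise and $\frac{1}{R}\int_{[R,2R]}J_3(z,\delta)\,dz\le \vep J$ on averages. By Lemma \ref{lem:flux}, $J(z)=J$ for $z\ge L_\eta$, hence averaging the decomposition $J=J_1+J_2+J_3$ over $z\in[R,2R]$ with $R\ge L_\eta$ and fixing $\vep$ small enough (say $\vep\le 1/4$) yields
\begin{equation*}
\frac{1}{R}\int_{[R,2R]} J_2(z,\delta)\,dz\ \ge\ \frac{J}{2}.
\end{equation*}

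The next step is an upper bound for the same average. Writing
$$\frac{1}{R}\int_{[R,2R]}J_2(z,\delta)\,dz=\frac{1}{R}\iint_{\Sigma_2(\delta)}K(x,y)x f(dx)f(dy)\int_{[R,2R]}\mathbf{1}_{\Omega_z}(x,y)\,dz,$$
and bounding the inner integral trivially by $R$, the effective integration region is $\Sigma_2(\delta)\cap\{x\le 2R,\ x+y>R\}$. In $\Sigma_2(\delta)$ the ratio $y/x$ lies in $[\delta,1/\delta]$, which combined with $x\le 2R$ and $x+y>R$ forces both $x$ and $y$ to lie in some interval $[b_1 R,b_2 R]$ where $b_1=\delta/(1+\delta)$ and $b_2=2/\delta$. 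On this region the upper bound \eqref{eq:cond_kernel3}, together with the boundedness of $y/x$, yields $K(x,y)x\le C_\delta R^{\gamma+1}$ (treating the cases $\gamma\ge 0$ and $\gamma<0$ separately, using $\max(x,y)\asymp R$ and $\min(x,y)\asymp R$ respectively). Thus
\begin{equation*}
\frac{J}{2}\ \le\ \frac{1}{R}\int_{[R,2R]}J_2(z,\delta)\,dz\ \le\ C_\delta\, R^{\gamma+1}\Bigl(\int_{[b_1 R,b_2 R]}f(dx)\Bigr)^{2},
\end{equation*}
so $\int_{[b_1 R,b_2 R]}f(dx)\ge C\sqrt{J}/R^{(\gamma+1)/2}$.

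Finally, to bring this bound into the form \eqref{eq:integral_bounds2}, I will choose $R=z/b_2$ so that $b_2 R=z$ and $b_1 R=(b_1/b_2)z=:bz$. Defining $b$ via this ratio gives $\int_{(bz,z]}f(dx)\ge C_2\sqrt{J}/z^{(\gamma+1)/2}$, where the constraint $R\ge L_\eta$ translates into $z\ge L_\eta b_2$, which up to adjusting $b$ can be written as $z\ge L_\eta/\sqrt{b}$. The main obstacle is the upper bound on the averaged $J_2$: the Fubini manipulation of the $z$-integration has to be done carefully, the ``effective'' support of $(x,y)$ must be correctly identified, and the uniform estimate $K(x,y)x\le C_\delta R^{\gamma+1}$ on $\Sigma_2(\delta)$ must be verified in all sign-combinations of the exponents $\gamma$ and $\lambda$; once these are in place, the change of variables producing $b$ from $\delta$ is bookkeeping.
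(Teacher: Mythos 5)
Your proposal is correct and follows essentially the same route as the paper: the upper bound comes from restricting the flux identity to a sub-square of $\Omega_z$ and invoking the lower kernel bound (the paper uses $[2z/3,z]^2$ and then patches to $[z/2,z]$, while you use $[z/2,z]^2$ directly, which is a harmless simplification), and the lower bound comes from averaging the flux decomposition $J=J_1+J_2+J_3$ over $z\in[R,2R]$, using Lemma~\ref{lem:J} to suppress $J_1,J_3$, and then bounding $J_2$ from above on the compact region $\Sigma_2(\delta)\cap\bigcup_{z\in[R,2R]}\Omega_z$ where $x,y\asymp R$. The only small bookkeeping caveat is that the threshold for $\varepsilon$ must absorb the constant $C_1^2$ (i.e.\ $\varepsilon\le 1/(4C_1^2)$ rather than $\varepsilon\le 1/4$), since Lemma~\ref{lem:J} bounds $J_1,J_3$ by $\varepsilon A^2$ with $A=C_1\sqrt J$.
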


\begin{proof}
Using Lemma \ref{lem:flux} we obtain that \eqref{eq:fluxeq} holds.
 We first prove the upper bound \eqref{eq:integral_bounds1}.
Using that  $[2z/3,z]^2 \subset \Omega_z$, where  $\Omega_z$ is as in \eqref{eq:Omegaz},
 we obtain
\begin{equation*}
\iint_{\left[ 2z/3,z\right]^2} K\left( x,y\right) xf \left( dx\right)
f\left( dy\right) \leq J.
\end{equation*}
Using the lower bound~\eqref{eq:cond_kernel2} for $K$ and the fact that $x$ and $y$ are of the same order of $z$ in the domain of integration we obtain
\begin{equation*}
z^{\gamma+1} \left( \int_{\left[ 2z/3,z\right] } f \left( dx\right)\right)^2
 \leq  C^2 J
\end{equation*}
for some positive constant $ C$ which depends only on $K$. Equivalently,
\begin{equation}
\frac{1}{z}\int_{\left[ 2z/3,z\right] } f \left( dx\right)
 \leq \frac{ C \sqrt{J}}{z^{(\gamma+3)/2}},\ \text{for }z\in (0,\infty), \label{eq:upperEstimate}
\end{equation}
which proves the upper estimate using that $[z/2,z]\subset [4z/9,2z/3] \cup [2z/3,z] $.

Using $J= \int_{(0,L_\eta]} x \eta (dx)$ in \eqref{eq:fluxeq} as well as the definition of $J_j$ in \eqref{eq:Omegaz}--\eqref{eq:fluxeq_j}  we obtain
\begin{equation}\label{eq:sumJ}
J = \sum_{j=1}^3 J_j(z,\delta),\  \ z\geq L_\eta.
\end{equation} 
Integrating \eqref{eq:sumJ} with respect to $z$ in $[R,2R]$, using the upper estimate \eqref{eq:integral_bounds1} 
as well as Lemma \ref{lem:J}
we obtain that for $\delta >0$ sufficiently small depending only  on $\gamma,\ \lambda$ and on the constants $c_1,\ c_2$  in \eqref{eq:cond_kernel2}--\eqref{eq:cond_kernel3}, the following chain of inequalities holds with $A:=C \sqrt{J}$ and $\varepsilon \leq \frac{1}{4A^2}$ 
\begin{eqnarray*}
\frac{JR}{2} &\leq & J(1-2\varepsilon A^2) R \leq \int_{[R,2R]} J_2(z,\delta) dz \\ &\leq & \int_{[R,2R]}\iint_{\Omega_z \cap \Sigma_2(\delta)} K\left( x,y\right) xf \left( dx\right)
f\left( dy\right)dz, \quad R \geq L_\eta \ . 
\end{eqnarray*}

A simple geometrical argument shows that there exists a constant $b$, $0<b<1$, depending only on $\delta$ (and therefore on $\gamma,\ \lambda,\ c_1$ and $c_2$) such that
$\underset{z \in [R,2R]}{\bigcup} (\Omega_z \cap \Sigma_2(\delta)) \subset (\sqrt{b}R, R/\sqrt{b}]^2 $. 
(For a fixed $z$ and $(x,y)\in \Omega_z \cap \Sigma_2(\delta)$ one finds
$(\delta^{-1}+1)^{-1} z<x,y\le \delta^{-1} z$; thus for example 
$b = \frac{\delta^2}{4}$ would suffice.) 
Moreover, for every $(x,y) \in (\sqrt{b}R, R/\sqrt{b}]^2 $ we have
$x K(x,y) \leq C R^{\gamma+1}$, with $C$ depending only on $\gamma,\ \lambda,\ c_1$ and $c_2$.
Then 
\begin{equation*}
\frac{JR}{2} \leq  C R R^{\gamma+1} 
\left( 
\int_{(\sqrt{b}R, R/\sqrt{b}]}f\left( dx\right) 
\right)^2,
\end{equation*}
whence $1/R \int_{(\sqrt{b}R,R/{\sqrt{b}}]} f(dx) \geq C \sqrt{J} R^{-(\gamma+3)/2}$ for $R\geq L_\eta$.
Thus \eqref{eq:integral_bounds2} follows after substituting $R/\sqrt{b}$ by $z$.
\end{proof}

In the next Corollary we obtain the moment estimates for a stationary injection solution, when it exists.

\begin{corollary}\label{cor:moments}
Suppose that Assumption \ref{assump:estimates} holds. Then we have the following moment estimates:
\begin{enumerate}
\item[a)] 
$
\int_{\mathbb{R}_{*}}x^{\mu}f\left( dx\right) < \infty \quad \text{for }\quad \mu < \frac{\gamma  + 1}{2}
$ , 
\item[b)] 
$\int_{\mathbb{R}_{*}}x^{\frac{\gamma+1}{2}}f\left( dx\right) = \infty$ .
\end{enumerate}
\end{corollary}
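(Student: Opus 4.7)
The plan is to deduce both parts directly from the matching upper and lower dyadic estimates in Proposition \ref{prop:estimate}: part (a) will follow by combining the upper bound \eqref{eq:integral_bounds1} with item \ref{it:Risinf} of Lemma \ref{lem:bound}, while part (b) will come from the lower bound \eqref{eq:integral_bounds2} via a geometric summation over scales. No new analytic input beyond these ingredients will be needed.

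For part (a), I would fix $\mu < (\gamma+1)/2$. On each interval $[z/2, z]$ one has $x^\mu \le \max(1, 2^{-\mu}) z^\mu$, so multiplying \eqref{eq:integral_bounds1} by $z^\mu$ yields
\[
\frac{1}{z} \int_{[z/2, z]} x^\mu f(dx) \le \tilde C \, z^{\mu - (\gamma+3)/2}, \qquad z > 0,
\]
with $\tilde C$ independent of $z$. Since $\mu - (\gamma+3)/2 < -1$, the right hand side is integrable on $[a_0, \infty)$ for any fixed $a_0 > 0$. Applying item \ref{it:Risinf} of Lemma \ref{lem:bound} with $\varphi(x) = x^\mu$ and the parameter $b = 1/2$ then gives $\int_{[a_0, \infty)} x^\mu f(dx) < \infty$. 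The residual integral $\int_{(0, a_0)} x^\mu f(dx)$ is finite because $f$ vanishes on $(0, a)$ and is Radon, so $f([a, a_0]) < \infty$ while $x^\mu$ is continuous and bounded on $[a, a_0]$.

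For part (b), I would rewrite \eqref{eq:integral_bounds2} as $\int_{(bz, z]} f(dx) \ge C_2 \sqrt J \, z^{-(\gamma+1)/2}$ for $z \ge L_\eta/\sqrt b$, with $b \in (0,1)$ the constant supplied by Proposition \ref{prop:estimate}. Bounding $x^{(\gamma+1)/2}$ below on $(bz, z]$ by $c_* \, z^{(\gamma+1)/2}$, where $c_* = b^{(\gamma+1)/2}$ when $\gamma \ge -1$ and $c_* = 1$ otherwise, I obtain a uniform positive constant $\kappa := c_* C_2 \sqrt J$ such that
\[
\int_{(bz, z]} x^{(\gamma+1)/2} f(dx) \ge \kappa, \qquad z \ge L_\eta/\sqrt b.
\]
Setting $z_0 := L_\eta/\sqrt b$ and $z_n := z_0 \, b^{-n}$ makes the intervals $(z_{n-1}, z_n] = (b z_n, z_n]$ pairwise disjoint, and summing the preceding inequality yields $\int_{(z_0, \infty)} x^{(\gamma+1)/2} f(dx) \ge \sum_{n \ge 1} \kappa = \infty$, as required.

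There is no real obstacle: both parts reduce to a few lines once Proposition \ref{prop:estimate} is in hand. The only point requiring a moment's care is the sign of $(\gamma+1)/2$ in part (b), since the hypothesis $|\gamma + 2\lambda| < 1$ does not force $\gamma > -1$; however, in either sign case the extremum of $x^{(\gamma+1)/2}$ on $(bz, z]$ is proportional to $z^{(\gamma+1)/2}$, which is exactly what is needed to cancel the $z$-dependence in the lower bound on $\int_{(bz, z]} f(dx)$.
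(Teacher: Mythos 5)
Your proof is correct and follows essentially the same route as the paper: part~(a) is obtained from the upper dyadic estimate \eqref{eq:integral_bounds1} together with Lemma~\ref{lem:bound} (which is exactly the alternative the paper itself points to, alongside its reference to \eqref{eq:moment_mu}), and part~(b) reproduces the paper's geometric summation over the disjoint scales $(bz,z]$ coming from \eqref{eq:integral_bounds2}. The extra sign case distinction on $(\gamma+1)/2$ that you flag is handled in the paper by absorbing it into the constant $C$, so there is no substantive difference.
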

\begin{proof}
a)  The boundedness of moments of order $\mu$ for $\mu < \frac{\gamma+1}{2}$ has  already been obtained  in the proof of Theorem~\ref{thm:existence} in Section~\ref{sec:existence} equation~\eqref{eq:moment_mu}. Notice also that $a)$ is an easy consequence of \eqref{eq:integral_bounds1} and Lemma \ref{lem:bound}. 

b) Using the lower bound~\eqref{eq:integral_bounds2} and multiplying by $z^{(\gamma+3)/2}$ we obtain 
\begin{equation*}
C_2\sqrt{J} \leq z^{(\gamma+1)/2} \int_{(bz,z]}f(dx) \leq   C \int_{(bz,z]}x^{(\gamma+1)/2} f(dx),\quad z \geq \frac{L_\eta}{\sqrt{b}}
\end{equation*}
for some constant $C>0$. In particular, for any natural number $n$ satisfying $b^{-n}\ge L_\eta / \sqrt{b}$ and for  $z=b^{-n}$ we have that  $C_2\sqrt{J} \leq C \int_{(b^{1-n},b^{-n}]}x^{(\gamma+1)/2} f(dx)$.
 Summing in $n$ we finally obtain the result $b)$.
\end{proof}

\begin{remark}
Notice that for $\gamma>1$ Corollary \ref{cor:moments} a) implies that the first moment $\int_{\R_*} xf(dx)$ is finite. Therefore the stationary injection solutions can be interpreted in this case as solutions having a finite number of monomers for which the source of monomers $\eta(x)$ is balanced with the flux of monomers towards infinity. This is closely related to the phenomenon of gelation, which takes place for $\gamma>1$, in which it is possible to have  solutions with a finite number of monomers having a flux of monomers towards infinity. Notice that for $\gamma<1$ we have that $\int_{\R_*} xf(dx)$ is infinite. We further observe that the existence or non existence of stationary injection solutions is independent of the corresponding kernels yielding mass conservation or gelation. 
\end{remark}

\begin{remark}
We observe that for $\gamma>-1$, Corollary \ref{cor:moments} implies that the number of clusters associated to the stationary injection solutions $\int_{\R_*} f(dx)$ is finite  
and Proposition \ref{prop:estimate} together with Lemma \ref{lem:bound} yields
the following  integral estimates:
$$  \frac{ C_1\sqrt{J}}{z^{(\gamma + 1)/2}}  \leq \int_{[z,\infty)} f(dx) \leq
\frac{ C_2\sqrt{J}}{z^{(\gamma + 1)/2}} \quad \text{for } z \geq L_{\eta}
$$
where $J= \int_{(0,L_\eta]} x \eta (dx)$  and  $0< C_1 \leq C_2 $.
\end{remark}

\begin{remark}
The result in Corollary \ref{cor:moments} has been obtained in \cite{Dub} in the case of bounded kernels.
\end{remark}

The next Corollary contains the estimates for a stationary injection solution in the discrete case.

\begin{corollary}\label{prop:Destimate}
Assume that $K:{\mathbb{N}}^{2}\rightarrow {\mathbb{R}}_{+}$ satisfies~\eqref{eq:Dcond_kernel1}-%
\eqref{eq:Dcond_kernel3} and $| \gamma +2\lambda | <1.$ Let $s \neq 0 $
satisfy \eqref{eq:Dcond_s}. Let $(n_\alpha)_{\alpha=1}^{\infty}$ be a stationary injection solution  to~\eqref{eq:Dtimecoag} in the sense of
Definition~\ref{Def:DFluxSol}.
 Then:
\begin{equation}
 \frac{ C_1 \sqrt{J}}{z^{(\gamma + 3)/2}} \leq 
 \frac{1}{z}\sum_{\alpha \in \N \cap [z/2,z]} n_\alpha \leq \frac{ C_2\sqrt{J}}{z^{(\gamma + 3)/2}}\ \  \text{ for all }  z\geq L_{s} \label{eq:Dintegral_bounds}
 \end{equation}
where $J=\sum_\alpha s_\alpha$  and the constants $0< C_1 \leq C_2 $  are independent of $s$. 
\end{corollary}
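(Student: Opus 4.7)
The plan is to reduce the discrete statement to its continuous counterpart via Lemma~\ref{lem:discrete_sol_continuous}. Let $\widetilde K : \mathbb{R}_*^2 \to \mathbb{R}_+$ be the continuous interpolation of $K$ constructed in Section~\ref{ssec:discr1dNEx}, which satisfies \eqref{eq:cond_kernel1}--\eqref{eq:cond_kernel3} with the same exponents $\gamma, \lambda$ (and possibly adjusted constants $c_1, c_2$). Set $f(dx) := \sum_{\alpha \in \mathbb{N}} n_\alpha \delta_\alpha(dx)$ and $\eta(dx) := \sum_{\alpha \in \mathbb{N}} s_\alpha \delta_\alpha(dx)$. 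By Lemma~\ref{lem:discrete_sol_continuous}, $f$ is a stationary injection solution of \eqref{eq:time_evol} with kernel $\widetilde K$ and source $\eta$, so Assumption~\ref{assump:estimates} is satisfied with $a = 1$ and $L_\eta = L_s$.

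Proposition~\ref{prop:estimate} then yields
\[
\frac{1}{z}\int_{[z/2,\,z]} f(dx) \leq \frac{\widetilde C_2 \sqrt{J'}}{z^{(\gamma+3)/2}} \quad (z > 0), \qquad
\frac{1}{z}\int_{(bz,\,z]} f(dx) \geq \frac{\widetilde C_1 \sqrt{J'}}{z^{(\gamma+3)/2}} \quad (z \geq L_s/\sqrt{b}),
\]
with $J' = \int x\,\eta(dx) = \sum_\alpha \alpha s_\alpha$ and a constant $b \in (0,1)$ determined by $\gamma, \lambda, c_1, c_2$. Since $f$ is supported on $\mathbb{N}$, $\int_{I} f(dx) = \sum_{\alpha \in \mathbb{N} \cap I} n_\alpha$ for every interval $I$, so the upper bound transfers immediately (after absorbing the bounded ratio $\sqrt{J'}/\sqrt{J} \in [1, \sqrt{L_s}]$ into the constant, if one insists on the normalization $J = \sum_\alpha s_\alpha$ written in the statement).

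For the lower bound, if $b \geq 1/2$ the inclusion $(bz, z] \subseteq [z/2, z]$ gives it immediately. The proof of Proposition~\ref{prop:estimate} actually produces $b = \delta^2/4 \leq 1/4$, so in general I would decompose $(bz, z] = (bz, z/2] \sqcup (z/2, z]$ and bound the first piece via the upper estimate applied at dyadic scales $z' = 2^{-k}z$, $k = 1, \ldots, K$ with $K \simeq \log_2(1/(2b))$:
\[
\int_{(bz,\,z/2]} f(dx) \leq \sum_{k=1}^{K} \int_{[2^{-k-1}z,\,2^{-k}z]} f(dx) \leq \frac{C\sqrt{J'}}{z^{(\gamma+1)/2}} \sum_{k=1}^{K} 2^{k(\gamma+1)/2}.
\]
Subtracting from the lower bound on $(bz, z]$ yields the desired lower bound on $(z/2, z] \supset [z/2, z] \cap \mathbb{N}$.

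The main obstacle is the quantitative balance: when $\gamma + 1 > 0$ the dyadic sum scales like $b^{-(\gamma+1)/2}$, so one must verify that $\widetilde C_1$ strictly dominates it. I would handle this by tracing the dependence of $\widetilde C_1$ and $b$ on the cut-off parameter $\delta$ in the proof of Proposition~\ref{prop:estimate} jointly, picking $\delta$ small enough that both estimates are compatible; alternatively, one can rerun the geometric argument of Proposition~\ref{prop:estimate} directly targeting the interval $[z/2, z]$ by further restricting the domain in Lemma~\ref{lem:J}. The remaining finite range $z \in [L_s, L_s/\sqrt{b}]$ is dispatched by a direct compactness check, since it involves only $O(1)$ terms $n_\alpha$ that are strictly positive (by the source term feeding the system at $\alpha \le L_s$) with a lower bound depending only on $\gamma, \lambda, c_1, c_2$.
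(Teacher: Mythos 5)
Your first paragraph reproduces the paper's proof exactly: the paper's entire argument for this Corollary is the one-line statement that it follows from Lemma~\ref{lem:discrete_sol_continuous} together with Proposition~\ref{prop:estimate}. So the reduction is the intended route, and your setup (interpolating kernel $\widetilde K$, Dirac-comb measures $f$ and $\eta$, verifying Assumption~\ref{assump:estimates}) is what the paper relies on.

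However, your subsequent discussion correctly flags a real mismatch that the paper's terse proof does not confront. Proposition~\ref{prop:estimate} yields the lower estimate on $\tfrac1z\int_{(bz,z]}f(dx)$ with $b$ determined by the small cut-off $\delta$ (e.g.\ $b=\delta^2/4$), so $b<1/2$; while the Corollary asserts it for the strictly smaller interval $[z/2,z]$. Since $[z/2,z]\subset(bz,z]$, a lower bound on the larger interval does not transfer, and your dyadic subtraction is the natural repair attempt. You rightly note that it fails to close when $\gamma+1>0$: the dyadic sum over $(bz,z/2]$ scales like $b^{-(\gamma+1)/2}$ with a constant $C_1$ that is \emph{not} small, while the lower-bound constant $C_2$ from Proposition~\ref{prop:estimate} \emph{does} shrink as $\delta\to0$ (it comes from the wide union $(\sqrt b R,R/\sqrt b\,]^2$), so the difference need not be positive. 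Neither proposed alternative (joint tracking of $\delta$, or re-running Lemma~\ref{lem:J} targeting $[z/2,z]$) is carried out, and the second in fact runs into the same structural obstacle: Lemma~\ref{lem:J} shows the flux is dominated by collisions with ratio $y/x\in[\delta,1/\delta]$, a band that \emph{widens} as $\delta\to0$, so the argument cannot localize to $[z/2,z]$ as written. Similarly, your final remark that the finite range $z\in[L_s,L_s/\sqrt b]$ is handled by positivity of $n_\alpha$ with a bound "depending only on $\gamma,\lambda,c_1,c_2$" is not substantiated: any such lower bound on finitely many $n_\alpha$'s will a priori depend on $s$ (and on $L_s$), which conflicts with the stated $s$-independence of the constants. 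In short: the reduction step is correct and identical to the paper's, but your own analysis exposes that the passage from $(bz,z]$ to $[z/2,z]$ in the lower bound is a genuine gap that neither your sketch nor the paper's one-sentence proof closes; either the Corollary should be restated with $(bz,z]$ in place of $[z/2,z]$, or a sharper localization argument than Proposition~\ref{prop:estimate} provides is required.
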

\begin{proof}
The results follow directly from Lemma \ref{lem:discrete_sol_continuous} and Proposition \ref{prop:estimate}.
\end{proof}

Finally we obtain that the solutions to the continuous problem when they exist are measures in $C^k(\R_+)$ provided that the source $\eta$ and the kernel $K$ are functions in $ C^k(\R_+)$ and that the derivatives of $K$ satisfy some growth conditions.

\begin{lemma}\label{lem:estimate_x0}
Suppose that Assumption \ref{assump:estimates} holds with  $\eta \in L^\infty((0,\infty))$.  
Let $L_0 \geq \frac{1}{2}$ and $0<\rho < \frac{1}{8}$. Assume that 
there exists $A >  0$ such that 
\begin{equation}
\int_{[x_0-r,x_0+r]}f(dx) \leq Ar\label{eq:reg_1}
\end{equation}
for all $r\leq \rho$ and for all  $x_0 \in [\frac{1}{4}, L_0]$.
Then there exists a constant $B>0$ that depends on $L_0,\ \eta$ and $A$, but it is independent of $r$ such that
\begin{equation}
\int_{[x_0-r,x_0+r]}f(dx) \leq B( A^2+\| \eta \|_{L^\infty}) r,\label{eq:reg_2}
\end{equation}
for any $x_0 \in [\frac{1}{4},L_0+1]$ and $r \leq \rho/2$.
\end{lemma}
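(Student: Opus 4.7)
The plan is to apply the weak formulation \eqref{eq:stationary_eq} of the stationary equation with a bump test function localized at scale $r$ around $x_0$. I would choose $\varphi \in C_c(\R_*)$ with $0 \le \varphi \le 1$, $\varphi = 1$ on $[x_0 - r, x_0 + r]$, and $\supp \varphi \subset [x_0 - 2r, x_0 + 2r]$; since $2r \le \rho$ for $r \le \rho/2$, any invocation of the hypothesis \eqref{eq:reg_1} at a scale $\le 2r$ is legal. Using the symmetry of $K$, \eqref{eq:stationary_eq} rearranges into
\[
\iint K(x,y)\varphi(x)\,f(dx)f(dy) = \int \varphi\,\eta + \tfrac{1}{2}\iint K(x,y)\varphi(x+y)\,f(dx)f(dy).
\]
The left-hand side is bounded below by $c^{\ast} f([x_0 - r, x_0 + r])$, where $c^{\ast} := \inf_{x \in [a/2,\, L_0+3]} \int K(x,y)\,f(dy) > 0$; positivity of $c^{\ast}$ uses the continuity and lower bound \eqref{eq:cond_kernel2} of $K$, together with the fact (guaranteed by Proposition~\ref{prop:estimate}) that $f$ places positive mass on some fixed compact set. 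The source term on the right-hand side satisfies $\int \varphi\,\eta \le 4r\|\eta\|_{L^\infty}$, so the core task reduces to bounding the gain term, which, after using $K \le K_{\max}$ on the relevant compact set, is at most $\tfrac{K_{\max}}{2}(f \ast f)([x_0 - 2r,\, x_0 + 2r])$.

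The key step is showing $(f \ast f)([x_0 - 2r,\, x_0 + 2r]) \le C(L_0,\eta)\, A^2 r$. First, \eqref{eq:reg_1} together with a covering / Lebesgue differentiation argument implies that $f|_{[1/4, L_0]}$ is absolutely continuous with density bounded by $A/2$ almost everywhere, and that $f([a, L_0 + 1]) \le C A$, the tail contribution from $(L_0, L_0+1]$ being controlled by Proposition~\ref{prop:estimate}. Writing
\[
(f \ast f)([x_0 - 2r,\, x_0 + 2r]) = \int f(dy)\,f(I_y),\qquad I_y := [x_0 - 2r - y,\, x_0 + 2r - y] \cap [a, \infty),
\]
with $|I_y| \le 4r$, I would split the $y$-integration according to the position of the center $x_0 - y$ of $I_y$. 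When $x_0 - y \in [1/4, L_0]$ the hypothesis directly gives $f(I_y) \le 2Ar$; when $I_y$ is truncated at $a$ from below, a version of \eqref{eq:reg_1} centered near $a$ again yields $f(I_y) \le 2Ar$. Each of these cases contributes $\lesssim A^2 r$ after integrating in $y$ using the $L^1$ bound $f([a, L_0+1]) \le CA$. The residual case $x_0 - y > L_0$ (possible only for $x_0 > L_0 + a$) is treated via the symmetry $(x, y) \leftrightarrow (y, x)$ of the convolution: restricting to $x \le y$ forces $x \le (x_0 + 2r)/2$, and for $L_0$ large enough this places $x$ back into the hypothesis region $[1/4, L_0]$, reducing the case to the previous ones. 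For the remaining small-$L_0$ window, the narrow-strip geometry $|I_y| \le 4r$ is combined with the a priori bound on $f|_{(L_0, L_0+1]}$ from Proposition~\ref{prop:estimate} to recover an $O(A^2 r)$ contribution.

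Putting the estimates together gives
\[
c^{\ast} f([x_0 - r, x_0 + r]) \le 4 r \|\eta\|_{L^\infty} + C(L_0,\eta)\, A^2 r,
\]
from which the claimed bound follows with $B = B(L_0, \eta)$ absorbing all the constants (and the kernel data). The main obstacle of the plan is the bookkeeping in the convolution estimate, namely ensuring that the gain term scales genuinely linearly in $r$ uniformly across the whole extended range $x_0 \in [1/4, L_0 + 1]$. The delicate positions are those near $x_0 \sim L_0 + 1$, where neither factor of $(f \ast f)$ is automatically supported in the hypothesis region $[1/4, L_0]$; closing the argument there requires the combination of \eqref{eq:reg_1}, the $(x,y)$-symmetry of the convolution, and the a priori upper estimates of Section~\ref{sec:estimates}, all of which are essential to avoid a spurious $O(1)$ contribution.
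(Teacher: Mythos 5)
Your proposal uses the same skeleton as the paper: test against a localized test function, bound the loss term below via $\alpha(x) = \int K(x,y)f(dy) \ge \alpha_{L_0} > 0$ on the relevant compact interval, bound the source by $O(\|\eta\|_\infty r)$, and then show the gain term is $O(A^2 r)$. Where you diverge is in the gain-term estimate. The paper bounds $\iint K(x,y)\varphi(x+y)\,f(dx)f(dy)$ by covering the strip $\{|x+y-x_0|\le r\}$ with roughly $(L_0+1)/r$ axis-aligned squares $Q_\ell$ of side $O(r)$ centered on the anti-diagonal, applying \eqref{eq:reg_1} to \emph{both} coordinates to get $\iint_{Q_\ell}f\,f\le 4A^2r^2$ per square, and summing. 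You instead write it as a convolution $\int f(dy)\,f(I_y)$ and apply \eqref{eq:reg_1} to the inner factor together with an $L^1$ bound on the outer $y$-integral. Both are valid, and the two bookkeepings are essentially equivalent. However, the elaborate residual-case analysis you build in (the $(x,y)\leftrightarrow(y,x)$ symmetry trick, the invocation of Proposition~\ref{prop:estimate} for the tail $(L_0,L_0+1]$, the "small-$L_0$ window") is unnecessary: since $f$ vanishes on $(0,1)$ (this is the constraint the paper uses in its own proof), for $y$ in the support of $f$ with $I_y\ne\emptyset$ one automatically has $1-2r\le x_0-y\le L_0$, so the convolution center never leaves the region where \eqref{eq:reg_1} applies, and the $L^1$ bound $f([1,L_0+\rho])\lesssim_{L_0,\rho} A$ follows from covering by $O(L_0/\rho)$ intervals of radius $\rho$ without any input from Section~\ref{sec:estimates}. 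The paper's square-covering sidesteps this discussion entirely and is a bit cleaner, but your argument does close once the dead cases are recognized as empty.
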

\begin{proof}
Using \eqref{eq:stationary_eq}
we obtain for all $\varphi \in C_c(\R_*)$
\begin{equation}
\int_{\R_*} \varphi(x) \alpha(x) f(dx) =
\int_{\R_*}\varphi
\left( x\right) \eta \left( dx\right) + \frac{1}{2}\int_{\R_*}\int_{\R_*} K\left( x,y\right) 
\varphi \left( x+y\right)  f\left( dx\right) f\left( dy\right), \label{eq:regu_stat}
\end{equation}
where 
\begin{equation} \label{eq:reg_3}
\alpha(x) = \int_{\R_*} K(x,y) f(dy).
\end{equation}
The continuity and the lower estimate for the kernel $K$ (cf.\ \eqref{eq:cond_kernel2} and \eqref{ContinAssumpt}) 
imply that $\alpha(x) \geq \alpha_{L_0}>0, $ for all $x \in [\frac{1}{8},L_0+1]$.
Using an approximation argument as in Lemma \ref{lem:flux}, we may use in \eqref{eq:regu_stat} a test function $\varphi(x) = \chi_{[x_0-r, x_0+r]}(x)$. Using the boundedness of $\eta$ we obtain
\begin{equation}
\int_{[x_0-r, x_0+r]}f(dx) \leq \frac{1}{\alpha_{L_0}} \left( 2 \| \eta\|_{L^\infty} r + \frac{1}{2}\int_{\R_*}\int_{\R_*} K\left( x,y\right) 
\chi_{[x_0-r, x_0+r]}(x+y)  f\left( dx\right) f\left( dy\right) \right) \ .
\end{equation}
We now use a geometrical argument to show that for every $x_0 \in [\frac{1}{4}, L_0+1] $ and $r < \frac{\rho}{2} $ there exists a set $\{ \xi_\ell \}_{\ell \in { I}} \subset \R_+$ such that $\# {I} \leq \frac{L_0+1}{r}$ and 
$$
\{(x,y) \ |\ |x+y- x_0| \leq r \} \subset \bigcup_{\ell \in {I}} Q_\ell 
$$
with $Q_\ell=[\xi_\ell - 2r,\xi_\ell +2r] \times [x_0-\xi_\ell - 2r,x_0- \xi_\ell +2r]$ and $\xi_\ell \leq x_0 $ for all $\ell \in {I}$. 

This can be seen just locating points along the segment $\left\{  \left(  x,y\right)  :x+y=x_{0},\ x\geq0,\ y\geq0\right\}  $ given by $\left\{  \left(  \xi_{\ell},x_{0}-\xi_{\ell}\right)  \right\}  _{\ell\in  I}$ and such that $\operatorname*{dist}\left(  \xi_{\ell},\left\{  \xi_{j}\right\}  _{j\in I}\backslash\left\{  \xi_{\ell}\right\}  \right)  =r.$ Then, the union of the
cubes $Q_{\ell}$ cover the strip $\left\{  \left(  x,y\right)  :\left\vert
x+y-x_{0}\right\vert \leq r,\ x\geq0,\ y\geq0\right\}  .$ 
 Using the boundedness of $K$ for $x\geq 1, y \geq 1$ and $x+y \leq L_0 + 1 + \frac{\rho}{2}$ as well as the fact that $f((0,1))=0$ by assumption, we obtain 
\begin{equation*}
\int_{[x_0-r, x_0+r]}f(dx) \leq \frac{1}{\alpha_{L_0}} \left( 2 \| \eta\|_{L^\infty} r +
 C \sum_{\ell \in { I}} \iint_{Q_\ell}   f\left( dx\right) f\left( dy\right) \right)
\end{equation*}
where $C$ depends on $K$ and $L_0$. 
Using \eqref{eq:reg_1} it follows that  
 $\iint_{Q_\ell}   f\left( dx\right) f\left( dy\right)  \leq 4A^2 r^2
$. Then, since  $\#  I \leq \frac{L_0+1}{r}$, we get
\begin{equation*}
\int_{[x_0-r, x_0+r]}f(dx) \leq \frac{1}{\alpha_{L_0}} \left( 2 \| \eta\|_{L^\infty} r +
 4 A^2 C (L_0+1)r \right).
\end{equation*}
Hence \eqref{eq:reg_2} follows.
\end{proof}

\begin{proposition}
Suppose that Assumption \ref{assump:estimates} holds with  $\eta \in C((0,\infty))$.  Then $f \in  C((0,\infty))$.

In addition, suppose that for some $k\geq 1$  we have that $\eta \in C^k((0,\infty))$, $ K  \in C^k((0,\infty)^2)$   and that for every $P>1$ there exists a constant $C_P$ such that 
\begin{equation}\label{eq:reg_5}
\left|\frac{\partial^\ell K}{\partial x^\ell} (x,y)\right| \leq C_P[y^{-\lambda} + y^{\gamma+\lambda}],\ \ \  \forall x \in [1,P],\ y\in (0,\infty), \  1 \leq \ell \leq k.
\end{equation}
  Then $f \in C^k((0,\infty))$.
\end{proposition}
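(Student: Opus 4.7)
The plan is to proceed in two stages: first establish $f \in C^0$, and then bootstrap to $C^k$ by induction on $k$. For the continuity claim, I would iterate Lemma \ref{lem:estimate_x0} to show that $f$ admits a locally bounded density on $(0, \infty)$. Its hypothesis \eqref{eq:reg_1} is satisfied trivially with $A = 0$ on any small initial interval contained in $(0, a)$, because $f((0, a)) = 0$ by Assumption \ref{assump:estimates}; the continuity of $\eta$ together with \eqref{eq:cond_eta} yields $\eta \in L^\infty$, so the Lemma propagates the linear bound from $[1/4, L_0]$ to $[1/4, L_0 + 1]$, and iterating covers any bounded subinterval of $(0, \infty)$. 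Once $f$ is known to have a locally bounded density, the weak equation \eqref{eq:regu_stat} together with Fubini and the change of variables $u = x + y$ in the coagulation term translates into the pointwise identity
\[
\alpha(x) f(x) = \eta(x) + \tfrac{1}{2} I(x), \qquad I(x) := \int_0^x K(y, x-y) f(y) f(x-y)\, dy,
\]
valid almost everywhere. Continuity of $\alpha$ and $I$ then follows from dominated convergence, with dominating functions supplied by \eqref{eq:cond_kernel3} combined with the moment bounds of Corollary \ref{cor:moments} and the local boundedness of the density. Since $\alpha$ is bounded below by a positive constant on each compact subset of $(0, \infty)$ by \eqref{eq:cond_kernel2}, the identity yields $f \in C^0((0, \infty))$.

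For the $C^k$ claim I would argue by induction on $k \geq 1$, assuming $f \in C^{k-1}$ as well as $\eta, K \in C^k$ and the derivative bound \eqref{eq:reg_5}. First, $\alpha \in C^k$ follows by differentiating $\alpha(x) = \int K(x, y) f(dy)$ under the integral, justified by \eqref{eq:reg_5} combined with the moment bounds of Corollary \ref{cor:moments}. The key structural observation is that since $f$ vanishes on $(0, a)$, for $x > 2a$ the integral reduces to $I(x) = \int_a^{x-a} K(y, x-y) f(y) f(x-y)\, dy$, and continuity moreover ensures $f^{(j)}(a) = 0$ for every $j = 0, \ldots, k-1$, so the boundary contributions from Leibniz differentiation at the moving endpoint $y = x - a$ vanish to all relevant orders.

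Formally differentiating the identity $\alpha f = \eta + I/2$ a total of $k$ times and isolating the highest-order derivative of $f$ produces a linear Volterra integral equation of the second kind for $g := f^{(k)}$,
\[
\alpha(x)\, g(x) = R(x) + \tfrac{1}{2} \int_a^{x-a} K(y, x-y)\, f(y)\, g(x-y)\, dy,
\]
where $R$ collects $\eta^{(k)}$, products of $\alpha^{(j)}$ with $f^{(k-j)}$ for $j \geq 1$, and the cross-terms $\partial_2^{\,a} K \cdot f \cdot f^{(c)}$ with $a + c = k$ and $c \leq k-1$. By the inductive hypothesis and the regularity of $\alpha$ already established, every contribution to $R$ belongs to $C^0$. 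Standard Volterra theory on $[a, \infty)$ produces a unique continuous solution $g$, which can be identified with $f^{(k)}$ by integrating $k$ times and comparing with the pointwise identity together with the $C^{k-1}$ regularity of $f$, completing the inductive step.

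The main obstacle is the base case $k = 1$: a naive Leibniz differentiation of $I$ produces the term $\int K(y, x-y) f(y) f'(x-y)\, dy$, which requires $f' \in C^0$, precisely what one is trying to prove. The Volterra reformulation circumvents this circularity by treating $f^{(k)}$ as the unknown in a linear integral equation whose data involve only $f$ and its lower-order derivatives, already known by the inductive assumption. A technical subtlety is verifying that the Volterra solution $g$ genuinely coincides with $f^{(k)}$; this is done by showing that the integrated form of the equation for $g$ matches the equation satisfied by the $k$-th distributional derivative of $f$, which by the induction is a well-defined continuous function.
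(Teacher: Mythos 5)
Your starting point — iterating Lemma \ref{lem:estimate_x0} to obtain a locally bounded density, and then inverting $\alpha(x) f(x) = \eta(x) + \tfrac12 I(x)$ — matches the paper. But two of your steps are not closed. First, for the $C^0$ claim you appeal to dominated convergence to conclude $I\in C^0$; this does not apply directly, because as $x'\to x$ the quantity $f(x'-y)$ need not converge pointwise to $f(x-y)$ when $f$ is only $L^\infty_{\mathrm{loc}}$. One can repair this with continuity of translation in $L^1_{\mathrm{loc}}$, but the paper avoids it by exploiting the delay that $f((0,1))=0$ builds into the formula: in the identity $f(x)=\alpha(x)^{-1}\bigl[\eta(x)+\tfrac12\int_1^{x-1}K(x-y,y)f(x-y)f(y)\,dy\bigr]$ the integral only probes $f$ on $[1,x-1]\subsetneq[1,x)$, so $f=\eta/\alpha$ on $(1,2)$ is automatically regular, and one simply propagates regularity forward over intervals $[n,n+1)$ because at each step $f$ is already known to be regular on the whole integration range. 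Dominated convergence then \emph{does} apply, because the pointwise limit exists.

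Second, and more significantly, the identification of the Volterra solution $g$ with $f^{(k)}$ is a genuine gap as written. You say this "is done by showing that the integrated form of the equation for $g$ matches the equation satisfied by the $k$-th distributional derivative of $f$, which by the induction is a well-defined continuous function" — but the induction gives only $f\in C^{k-1}$, so the $k$-th distributional derivative is \emph{not} known to be a function; that is exactly what must be proved, and the matching argument is not supplied. The natural way to close this is to observe that the Volterra kernel $\int_a^{x-a}K(y,x-y)f(y)\,g(x-y)\,dy$ only involves $g$ on $[a,x-a]$, so the Volterra equation is solved by forward iteration over the intervals $[na,(n+1)a)$, and on each such interval the candidate $g$ visibly equals the $k$-th derivative of the pointwise identity (with boundary terms vanishing since $f^{(j)}(1)=0$ for $j\le k-1$, inherited from $\eta^{(j)}(1)=0$). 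But this \emph{is} the paper's interval bootstrap; once you carry it out, the Volterra reformulation is redundant overhead. The paper differentiates the identity directly on each interval, with the inductive assumption that $f$ is already $C^k$ (not merely $C^{k-1}$) on the previous intervals, so no circularity and no separate identification step arise. Your approach can be made to work, but only by secretly running the same interval argument, at which point it is cleaner to run it from the start.
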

\begin{proof}
Suppose that $\eta \in C((0,\infty))$. 
Using that $f((0,1))=0$ it follows that $\int_{[x_0-r, x_0+r]}f(dx) =0$ for all $x_0 \in [ 1/8, 1/2]$ and $r \leq \rho =1/8$.
 Given any $M>1/8$, it then follows from Lemma \ref{lem:estimate_x0} that 
$\int_{[x_0-r, x_0+r]}f(dx) \leq C_M r$ for any $x_0 \in [ 1/8, M]$ and $r \leq \rho_M$ and $\rho_M>0$ sufficiently small. Then, since every null set can be covered by a countable union of intervals with arbitrary small lengths, we have that  $f$ is absolutely continuous with respect to the Lebesgue measure.  Thus $f(dx)=f dx$ for some $f\in L^1_{loc}(\R_{+})$. Moreover, $f(x_0)=\lim_{r\to 0} \frac 1 r \int_{[x_0-r,x_0+r]}f(dx),$ a.e. $x_0\in\R_
{+}$ whence $f(x_0)\leq C_{M}$ a.e. $x_0\in\R_
{+}$.  
Hence  $f \in L_{\text{loc}}^\infty(\R_+).$
Using also the weak formulation \eqref{eq:regu_stat}
it follows that
\begin{eqnarray*}
f(x) &=& \frac{1}{\alpha(x)}[\eta(x)+\frac{1}{2} \int_0^x K(x-y,y) f(x-y)f(y)dy)]\,,\\
&=&  \frac{1}{\alpha(x)}[\eta(x)+ \int_0^{x/2} K(x-y,y) f(x-y)f(y)dy)]\,,
\end{eqnarray*}
with $\alpha$ given in \eqref{eq:reg_3}.
Then $f \in C((0,\infty))$ can be obtained by induction, taking as starting point the fact that $f(x)=0$ for $0 \leq x \leq 1$.   
The fact that $f \in C^k((0,\infty))$ if $\eta \in  C^k((0,\infty)) $ and  \eqref{eq:reg_5} follows in a similar manner.
\end{proof}

\bigskip

\bigskip

\bigskip

\section{Convergence of discrete to continuous model}\label{sec:discr_cont}

We start by  defining constant flux solution (cf.\ Section \ref{sec:types_sol}).

\begin{definition}
\label{DefFluxSol2} Assume that $K:{\mathbb{R}}_{*}^{2}\rightarrow {\mathbb{R}%
}_{+}$ is a continuous function satisfying \eqref{eq:cond_kernel1} and %
\eqref{eq:cond_kernel3}.  
 We will say that $f\in 
 \mathcal{M}_+\left( 0,\infty \right) ,$ satisfying:
\begin{equation}
\int_{(0,\infty) }x^{\gamma +\lambda }f\left( dx\right) + \int_{(0,\infty ) }x^{-\lambda }f\left( dx\right) <\infty
\label{eq:moment_cond2}
\end{equation}
is a constant flux solution of \eqref{eq:time_evol} with $\eta \equiv 0$ if the following
identity holds for  some constant $J\geq 0$ and for any $z>0$: 
\begin{equation}
\int_{(0,z]}\int_{(z-y, \infty)}y K\left( x,y\right) f\left( dx\right) f\left( dy\right) =J.  \label{eq:stationary_eq2}
\end{equation}
\end{definition}

\begin{remark}
Note that in Definition \ref{DefFluxSol2} we use measures  $f\in 
 \mathcal{M}_+\left( 0,\infty \right)$ and therefore the measure can be unbounded in any interval of the form $(0,a)$ for any $a>0$.
\end{remark}

Our goal is to prove that for a large class of kernels $K_{\alpha,\beta}$ satisfying \eqref{eq:Dcond_kernel1}-\eqref{eq:Dcond_kernel3}, the stationary injection solutions to the discrete problem \eqref{eq:Dtimecoag} can be approximated for large cluster sizes by constant flux solutions of the continuous problem \eqref{eq:time_evol} in the sense of Definition \ref{DefFluxSol2}.
Since we proved in Theorems \ref{thm:Dexistence}-\ref{thm:DNonExistence} that stationary injection solutions to \eqref{eq:Dtimecoag} exist if and only if $|\gamma+2\lambda|<1$,  we will assume this condition in the rest of this Section.  To this end, for each $R>0$ we  construct stationary injection solutions $f_R$ to \eqref{eq:time_evol} with some suitable kernel $K_R$ and $\eta_R$ satisfying $ \supp \eta_R \subseteq [1/R,  L_\eta/R]$ (cf.\ Remark \ref{rem:defFluxSol}). 

Let $K_{\alpha, \beta}$  satisfy \eqref{eq:Dcond_kernel1}-\eqref{eq:Dcond_kernel3}  with  $\left\vert \gamma+2\lambda
\right\vert <1$ and $s$ satisfy \eqref{eq:Dcond_s}. Let $(n_\alpha )_{\alpha \in \N}$ be a discrete stationary injection solution to \eqref{eq:Dtimecoag} in the sense of Definition \ref{Def:DFluxSol}. 
For each $R>0$, we  define the  measure $f_R \in \mathcal{M}(\R_*) $ by 
\begin{equation}\label{eq:dc_fR}
 f_R(dx) =  R^{(3+\gamma)/2}\sum_{\alpha=1}^\infty n_{\alpha}  \delta_{\alpha/R}(dx) ,
\end{equation}
and the  continuous  kernel $K_R: (\R_*)^2 \to \R_+$ by 
\begin{equation}\label{eq:dc_KR}
K_R(x,y) = R^{-\gamma} \sum_{\alpha,\beta=1}^\infty  K_{\alpha,\beta} \zeta_\varepsilon(Rx-\alpha) \zeta_\varepsilon(Ry-\beta)
{+ 
c_1 \left(\zeta_\varepsilon(Rx) + \zeta_\varepsilon(Ry)\right) w(x,y)\,,}
\end{equation}
where  $w$ denotes the weight function in (\ref{eq:cond_kernel}),  $\varepsilon<1/2$, and  $\zeta_\varepsilon$ is a continuous non negative function satisfying $\zeta_\varepsilon (x)= 0,\ |x| \geq 1/2+\varepsilon$, $\zeta_\varepsilon (x)= 1,\ |x| \leq 1/2-\varepsilon$  and affine  in each interval $(1/2-\varepsilon,1/2+\varepsilon)$ and  $(-1/2-\varepsilon,-1/2+\varepsilon)$.  
Moreover, we define the source $\eta_R \in \mathcal{M}(\R_*) $ with  $\supp \eta_R \subseteq [1/R,L_\eta/R]$ by
\begin{equation}\label{eq:dc_etaR}
\eta_R(dx) =  R^{2}\sum_{\alpha=1}^\infty s_{\alpha}  \delta_{\alpha/R}(dx).
\end{equation}

\begin{lemma}\label{lem:fR}
The kernel $K_R$ satisfies \eqref{ContinAssumpt}-\eqref{eq:cond_kernel3} with  $\left\vert \gamma+2\lambda
\right\vert <1$, uniformly in $R$. 
The  measure $f_R$ defined as in \eqref{eq:dc_fR} is a stationary injection solution to \eqref{eq:time_evol} in the sense of Definition \ref{DefFluxSol} (cf.\ Remark \ref{rem:defFluxSol}) satisfying \eqref{eq:moment_cond} and \eqref{eq:stationary_eq} with the kernel $K_R$ and the source $\eta_R$ given by \eqref{eq:dc_KR} and \eqref{eq:dc_etaR} respectively.
\end{lemma}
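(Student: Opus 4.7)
My plan is to handle the two assertions of the lemma separately, and the crucial bookkeeping reduces to the observation that, on the lattice $\{\alpha/R\}$, the continuum kernel $K_R$ and measure $f_R$ decouple cleanly into powers of $R$ times the corresponding discrete quantities.

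For the properties of $K_R$, the key structural observation is that, using the homogeneity $w(Rx,Ry)=R^\gamma w(x,y)$ of the weight function, one may rewrite
\[
K_R(x,y)=R^{-\gamma}\widetilde K(Rx,Ry),
\]
where $\widetilde K$ is exactly the continuous interpolation kernel constructed at the beginning of Section~\ref{ssec:discr1dNEx}. There, $\widetilde K$ was shown to be continuous, symmetric and non-negative, and to satisfy $c_1'w(x,y)\le\widetilde K(x,y)\le c_2'w(x,y)$ with constants determined by $c_1,c_2$. Continuity, symmetry and non-negativity are preserved under the rescaling, and the two-sided bounds transfer to $K_R$ with the same exponents $\gamma,\lambda$ and constants $c_1',c_2'$ that are independent of $R$; the condition $|\gamma+2\lambda|<1$ is inherited directly from the discrete assumption.

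For $f_R$, I would first verify the structural conditions: the support lies in $\{\alpha/R:\alpha\in\mathbb{N}\}\subset[1/R,\infty)$, which matches Remark~\ref{rem:defFluxSol} with $a=1/R$; and the moment condition \eqref{eq:moment_cond} follows by direct computation, since for $\mu\in\{-\lambda,\gamma+\lambda\}$,
\[
\int_{\R_*}x^{\mu}f_R(dx)=R^{(3+\gamma)/2-\mu}\sum_{\alpha=1}^{\infty}\alpha^{\mu}n_\alpha,
\]
which is finite by the hypothesis \eqref{eq:Dmoment_cond} on $(n_\alpha)$. The main step is then the verification of the weak equation \eqref{eq:stationary_eq}, and its heart is the algebraic identity
\[
K_R(\alpha/R,\beta/R)=R^{-\gamma}K_{\alpha,\beta}\qquad\text{for all }\alpha,\beta\in\mathbb{N}.
\]
To see this, I use that $\varepsilon<1/2$, so $\zeta_\varepsilon$ is supported in $\{|t|\le \tfrac12+\varepsilon\}\subset(-1,1)$: hence $\zeta_\varepsilon(\alpha)=0$ for every integer $\alpha\ge 1$, which kills the boundary correction term in \eqref{eq:dc_KR}, and the double factor $\zeta_\varepsilon(\alpha-\alpha')\zeta_\varepsilon(\beta-\beta')$ vanishes unless $(\alpha',\beta')=(\alpha,\beta)$, collapsing the interpolation sum to the single diagonal term $K_{\alpha,\beta}$.

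Given any test function $\varphi\in C_c(\R_*)$, substituting the explicit formulas \eqref{eq:dc_fR}, \eqref{eq:dc_KR} and \eqref{eq:dc_etaR} into \eqref{eq:stationary_eq} converts every integral into a sum over $(\alpha,\beta)\in\mathbb{N}^2$, with prefactors that are explicit powers of $R$. After collecting these prefactors, the resulting identity coincides (up to a common positive power of $R$) with the discrete weak equation \eqref{eq:Dweakform} applied to the finitely supported test sequence $\varphi_\alpha:=\varphi(\alpha/R)$, which vanishes because $(n_\alpha)$ is a stationary injection solution in the discrete sense; this yields \eqref{eq:stationary_eq} and completes the proof. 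The only (rather modest) obstacle is the bookkeeping of scaling exponents: one must check that the powers of $R$ from the three definitions combine consistently so that the continuum weak form reduces exactly to the discrete one, and that no analytic issue arises in passing between the two.
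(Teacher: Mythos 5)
Your approach matches the paper's in structure, but you organize the first half more cleanly: the rescaling identity $K_R(x,y)=R^{-\gamma}\widetilde K(Rx,Ry)$ (using the homogeneity $w(Rx,Ry)=R^\gamma w(x,y)$) immediately transfers continuity, symmetry, non-negativity and the two-sided bounds from $\widetilde K$ to $K_R$ with constants independent of $R$, whereas the paper re-derives the bounds directly. Your verification of the moment condition and the collapse identity $K_R(\alpha/R,\beta/R)=R^{-\gamma}K_{\alpha,\beta}$ (from $\operatorname{supp}\zeta_\varepsilon\subset(-1,1)$) are both correct.

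There is, however, a genuine gap precisely at the point your last sentence defers. Carry out the bookkeeping: substituting $f_R(dx)=R^{(3+\gamma)/2}\sum_\alpha n_\alpha\delta_{\alpha/R}(dx)$ into the quadratic term of \eqref{eq:stationary_eq} produces a prefactor $(R^{(3+\gamma)/2})^2\cdot R^{-\gamma}=R^{3}$, while substituting $\eta_R(dx)=R^{2}\sum_\alpha s_\alpha\delta_{\alpha/R}(dx)$ into the source term produces only $R^{2}$. These are not a ``common positive power of $R$'' and so \eqref{eq:stationary_eq} does \emph{not} reduce to \eqref{eq:Dweakform} for $R\neq 1$ with the definitions \eqref{eq:dc_fR}--\eqref{eq:dc_etaR} as printed. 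Everything becomes consistent if instead $f_R$ carries $R^{(1+\gamma)/2}$ and $\eta_R$ carries $R$: both terms then share the common factor $R$, the flux $J=\int x\,\eta_R(dx)=\sum_\alpha\alpha s_\alpha$ is $R$-independent (as asserted in the proof of Theorem~\ref{thm:convergence}), and the bound \eqref{eq:dc_fR_3} holds without a spurious factor of $R$. In short, your strategy is correct, but the single step you leave undone is exactly the one that reveals the exponents must be corrected — a complete argument has to actually check this scaling rather than assert it.
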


\begin{proof}
{We first notice that the function $K_R$ is  continuous, non-negative and symmetric  as it is written as a sum of functions with the same properties. Next we will show that $K_R$ satisfies the growth bounds \eqref{eq:cond_kernel2}-\eqref{eq:cond_kernel3} with the same exponents  $\gamma, \lambda$  of the discrete kernel $K_{\alpha,\beta}$. In particular these exponents satisfy $\left\vert \gamma+2\lambda\right\vert <1$. 
If $x <\frac{1}{2}$ or $y < \frac{1}{2}$, the second term in \eqref{eq:dc_KR} is proportional to $w(x,y)$ and thus it provides a suitable lower bound. The upper bound also holds after possibly adjusting $c_2$ in \eqref{eq:Dcond_kernel3}. Hence, we may assume $x,y \geq \frac{1}{2}$ in the following.
For each  $\alpha,\beta \in \N$, we have from \eqref{eq:dc_KR} that  $K_R(\alpha/R,\beta/R)=R^{-\gamma}K_{\alpha,\beta}$. Therefore $K_R(x,y)$  satisfies \eqref{eq:cond_kernel2}-\eqref{eq:cond_kernel3} for  $x=\alpha/R$ and $y=\beta/R$   uniformly in $R$ due to the assumption on $K_{\alpha,\beta}$ \eqref{eq:Dcond_kernel2}-\eqref{eq:Dcond_kernel3}.  For $x\in [(\alpha-1/2)/R, (\alpha+1/2)/R]$, $y\in [(\beta-1/2)/R, (\beta+1/2)/R]$ we have that $\frac{1}{2} R^{-\gamma}K_{\alpha,\beta} \leq K_R(x,y) \leq  R^{-\gamma} \sum_{i,j=-1,0,1  } K_{\alpha+i,\beta+j}$, where we set $K_{0,j}=K_{j,0}=0,$ for $j\in \N$. This implies together with the bounds \eqref{eq:Dcond_kernel2}-\eqref{eq:Dcond_kernel3} and the monotonicity properties of $w$,  that there exist positive constants $c_1$ and $c_2$  that are independent of $R$ and such that $K_R(x,y)$ satisfies \eqref{eq:cond_kernel2}-\eqref{eq:cond_kernel3}, which concludes the first part of the Lemma.}

Next we  substitute the expressions for $f_R$, $K_R$ and $\eta_R$ in the weak formulation \eqref{eq:stationary_eq} and perform a change of variables $\xi = Rx$  and $\theta = Ry$. We then obtain an expression where all the terms are multiplied by the same factor $R$. Using then that for $m \in \N $, $\zeta_{\varepsilon}(m) = 1, \ m=0$ and $\zeta_{\varepsilon}(m) = 0, \ m\neq 0$ we obtain that the weak formulation of the continuous problem \eqref{eq:stationary_eq} reduces to the weak formulation of the discrete problem \eqref{eq:Dweakform}.
\end{proof}

\begin{theorem}\label{thm:convergence}
Let $(n_\alpha )_\alpha$ be a solution of the stationary problem 
\eqref{eq:Dtimecoag} in the sense of Definition \ref{Def:DFluxSol}.   Let $f_R, K_R$ and $\eta_R$ be as in \eqref{eq:dc_fR}, \eqref{eq:dc_KR} and \eqref{eq:dc_etaR}, respectively.  Assume that there exists  $K \in C((\R_*)^2)$ such that  $K_R \to K $ as $R \to \infty$ uniformly on compact sets of $(0,\infty)^2$.
Consider the family of stationary injection solutions defined above $(f_R)_{R>0}$. Then for any sequence $( R_n)_{n\in \N}$ such that $\lim_{n \to \infty} R_{n} = \infty$ there exists a subsequence $( R_{n_k})_{k\in \N}$  and  $f \in \mathcal{M}(0,\infty)$ (that might depend on the subsequence) such that
\begin{equation}\label{eq:dc_weak_conv}
\forall \varphi \in  C_c(0,\infty),\ \int_{\R_*} \varphi(x)  f_{R_{n_k}}(dx) \to \int_{\R_*} \varphi(x) f(dx) \text{ as } k \to \infty
\end{equation}
and $f$ is a constant flux solution to \eqref{eq:time_evol} in the sense of Definition \ref{DefFluxSol2} with $J = \sum_{\alpha=1}^\infty \alpha s_\alpha$.
\end{theorem}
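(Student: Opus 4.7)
The plan is to establish uniform bounds on $f_R$, extract a vaguely convergent subsequence $f_{R_{n_k}} \rightharpoonup f$, and then pass to the limit in the flux identity for $f_R$ to obtain the constant flux property for $f$. First, I would translate the discrete bounds of Corollary~\ref{prop:Destimate} into uniform (in $R$) estimates on $f_R$: the bound $\frac{1}{z}\sum_{\alpha \in \N \cap [z/2,z]} n_\alpha \leq C_2 \sqrt{J}\, z^{-(3+\gamma)/2}$ translates, under the rescaling $x = \alpha/R$, into a bound of the form $\frac{1}{w}\int_{[w/2,w]} f_R(dx) \leq C\, w^{-(3+\gamma)/2}$ valid for $w \geq L_s/R$, with $C$ independent of $R$. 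This local boundedness together with Banach--Alaoglu applied on an exhaustion of $(0,\infty)$ by compacta yields a subsequence along which $f_{R_{n_k}} \rightharpoonup f$ vaguely in $\mathcal{M}_+(0,\infty)$, and the limit $f$ inherits the same bound and hence satisfies the moment condition \eqref{eq:moment_cond2} by Lemma~\ref{lem:bound}. From Lemmas~\ref{lem:fR} and~\ref{lem:flux}, $f_R$ satisfies the flux identity
\begin{equation}\label{eq:fluxRplan}
\int_{(0,z]} x f_R(dx) \int_{(z-x,\infty)} K_R(x,y) f_R(dy) = \int_{(0,z]} x \eta_R(dx) \quad \text{for all } z > 0,
\end{equation}
and for any fixed $z > 0$ and $R$ large enough that $L_\eta/R < z$, the right-hand side is independent of $R$ and matches, up to the chosen normalization, the value $J = \sum_\alpha \alpha s_\alpha$ prescribed in the theorem.

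The main step is to pass to the limit $k \to \infty$ in the left-hand side of \eqref{eq:fluxRplan}, which after symmetrization yields the desired identity $\int_{(0,z]} \int_{(z-y,\infty)} y K(x,y) f(dx) f(dy) = J$ for every $z > 0$. The principal obstacle is that $\Omega_z = \{(x,y) : x \leq z,\, y > z-x\}$ is not compact in $(0,\infty)^2$: it touches the axis $x=0$ (where $K_R$ may be singular and where $f_R$ may accumulate mass) and extends to infinity in $y$, while the vague convergence of $f_{R_{n_k}}$ is only available on compacta of $(0,\infty)$. My plan is to split $\Omega_z$ using the partition \eqref{eq:sigma} into a diagonal region $\Sigma_2(\delta)$, where both variables are comparable, and two off-diagonal pieces $\Sigma_1(\delta)$, $\Sigma_3(\delta)$. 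On a further truncation of $\Omega_z \cap \Sigma_2(\delta)$ that stays bounded away from $0$ and from $\infty$, the assumption $K_R \to K$ uniformly on compacta together with the product convergence $f_{R_{n_k}} \otimes f_{R_{n_k}} \rightharpoonup f \otimes f$ permits direct passage to the limit. The removed pieces are controlled uniformly in $R$ by Lemma~\ref{lem:J} (applied to $f_R$ with $A = C\sqrt{J}$), so that their contributions can be made arbitrarily small by first taking $\delta \to 0$ and then sending the truncation scale to $\infty$.

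The hard part will be the neighborhood of the axis $x=0$ (and, by symmetrization, of $y=0$): the measures $f_R$ may develop Dirac-like atoms at points $\alpha/R$ that escape toward the origin as $R \to \infty$, carrying precisely the flux injected by the source $\eta_R$. Controlling this region requires exploiting the $x$ weight in the integrand to tame the possible singularity of $K_R(x,y)$ as $x \to 0$, together with the uniform bound on $\int_{(0,a]} x f_R(dx) \int_{(a-x,\infty)} K_R(x,y) f_R(dy)$ for small $a>0$, which is obtained by specializing \eqref{eq:fluxRplan} to $z=a$ and bounding the right-hand side by $\int_{(0,a]}x\eta_R(dx)$, which tends to $0$ as $a \to 0$ uniformly in $R$ (since $\supp \eta_R \subset [1/R, L_\eta/R]$ and the full source contribution equals $J$, so for $a \geq L_\eta/R$ it is exactly $J$ while for smaller $a$ it is bounded by $J$). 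This shows that the contribution of the region $\{x \leq a\} \cap \Omega_z$ to the full flux integral converges to the analogous piece for $f$, and likewise for $\{y \leq a\}$ by symmetrization. Assembling the three regimes (diagonal bulk, off-diagonal and tail, neighborhood of the axes), the left-hand side of \eqref{eq:fluxRplan} converges to the desired integral for every $z > 0$, establishing that $f$ is a constant flux solution in the sense of Definition~\ref{DefFluxSol2} with the prescribed value of $J$.
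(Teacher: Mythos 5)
The overall structure of your argument mirrors the paper's: rescale, use Corollary~\ref{prop:Destimate} to get a bound of the form $\frac{1}{w}\int_{[w/2,w]} f_R(dx) \leq C\sqrt{J}\, w^{-(3+\gamma)/2}$ uniformly in $R$, extract a vaguely convergent subsequence via a diagonal argument, pass the bound and the moment condition to $f$, and then decompose the flux integral over $\Omega_z$ into the pieces $\Sigma_1(\delta), \Sigma_2(\delta), \Sigma_3(\delta)$ using Lemma~\ref{lem:J}. However, there is a concrete gap in the limit passage: you aim for a \emph{pointwise-in-$z$} flux identity, but Lemma~\ref{lem:J} only provides a pointwise bound for $J_1$ (namely $\sup_z J_1(z,\delta) \leq \varepsilon A^2$), whereas for $J_3$ it gives only the \emph{averaged} bound $\sup_{R>0}\frac{1}{R}\int_{[R,2R]} J_3(z,\delta)\,dz \leq \varepsilon A^2$. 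This asymmetry is not a cosmetic issue: in $\Omega_z\cap\Sigma_3(\delta)$ one integrates over the thin strip $x\in(z/(1+\delta),z]$, $y\in(z-x,\delta x]$, and since the $f_R$ (hence $f$) are purely atomic here, $J_3(z,\delta;K,f)$ can jump at particular values of $z$, so no uniform pointwise smallness of $J_3$ is available. The paper sidesteps this by integrating the flux identity against a test function $\varphi\in C_c(0,\infty)$ in $z$ \emph{before} passing to the limit, so that only the averaged control on $J_3$ is needed; it then obtains $\left|\int \varphi(z)\bigl(J_2(z,\delta;K,f)-J\bigr)dz\right|\le C\varepsilon J\|\varphi\|_\infty$, sends $\varepsilon\to 0$, and concludes. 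You would need to introduce this averaging-in-$z$ step; without it the control of $J_3$ fails.

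Two smaller points. First, your separate treatment of the ``neighborhood of the axes'' is redundant: for $(x,y)\in\Omega_z$ with $x$ small one has $y>z-x$ so $y/x$ is large, i.e.\ $(x,y)\in\Sigma_1(\delta)$ once $\delta$ is small; symmetrically, $y$ small forces $x\approx z$ so $(x,y)\in\Sigma_3(\delta)$. Thus the axis regions are already absorbed into the off-diagonal pieces handled by Lemma~\ref{lem:J}, and the extra argument via the flux through $(0,a]$ is not needed. Second, no ``further truncation'' of $\Omega_z\cap\Sigma_2(\delta)$ is necessary: for fixed $z>0$ and $\delta>0$ this set is already contained in $[(\delta^{-1}+1)^{-1}z,\ \delta^{-1}z]^2$, hence compactly contained in $(0,\infty)^2$, and the union over $z$ in a compact set stays bounded — exactly what is needed for the uniform convergence $K_R\to K$ and the product weak-$*$ convergence to apply.
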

\begin{remark}
Note that a priori we may expect that the only constant flux solutions in the sense of Definition \ref{DefFluxSol2} are power laws.
{ We will see in \cite{FLNVprep} that there are homogeneous kernels $K$ that satisfy the upper and lower bounds \eqref{eq:cond_kernel2}--\eqref{eq:cond_kernel3} for which this is not true.} Therefore the limit measure $f$ can be different for different subsequences $(f_{n_k})_k$  in \eqref{eq:dc_weak_conv}.
\end{remark}

\begin{remark}
The assumption   $K_R \to K $ as $R \to \infty$ means that the discrete kernel $K_{\alpha,\beta}$ behaves like  the continuous kernel $K$ for large values of $\alpha,\ \beta$.   For instance, in the case of the kernel $K_{\alpha,\beta}=\frac{\alpha^{\gamma+\lambda}}{\beta^{\lambda}}%
+\frac{\beta^{\gamma+\lambda}}{\alpha^{\lambda}},$ the function $K_{R}$
defined by means of \eqref{eq:dc_KR} converges to $K\left(  x,y\right)  =\frac
{x^{\gamma+\lambda}}{y^{\lambda}}+\frac{y^{\gamma+\lambda}}{x^{\lambda}}$ as $R\rightarrow\infty.$ A large class of kernels $K_{\alpha,\beta}$ for which the convergence $K_{R}\rightarrow K$ as $R\rightarrow\infty$ takes place can be obtained just restricting a continuous homogeneous kernel $K=K\left(x,y\right)  $ to integer values, i.e $K_{\alpha,\beta}=K\left(  \alpha,\beta\right)  $ for $\alpha,\beta\in\mathbb{N}$.
\end{remark}

\begin{proofof}[Proof of Theorem \ref{thm:convergence}] 
Using the expression \eqref{eq:dc_fR} for $f_R$ and the upper estimate in Corollary \ref{prop:Destimate} we obtain 
\begin{equation}
\frac{1}{z}\int_{[z/2,z]} f_R(dx) = C \frac{R^{(\gamma+3)/2}}{Rz}\sum_{\alpha \in [R z/2,Rz]} n_\alpha \ \leq \ \frac{C \sqrt{J}}{z^{(\gamma+3)/2}},\ z>0\label{eq:dc_fR_3}
\end{equation}
for some positive constant $C$ independent of $R$. Note that this estimate is valid for $Rz \geq 1$ and for $0<Rz< 1$ is automatic because the sum is empty. 
Therefore $\{ {f_R}_{|_K} \}_{R>0}$ is precompact in $\mathcal{M}_+(K)$ for any $K \subset (0,\infty)$ compact, where $|_K$ denotes the restriction to $K$.
Given a sequence of compact sets in $(0,\infty)$, $I_n = [2^{-n},2^n]$ we then obtain using a diagonal sequence argument,  that there is a subsequence of measures $( f_{R_{n_k}})_{k\in \N}$ and a measure $f\in  \mathcal{M}_+(\R_+)$ such that  \eqref{eq:dc_weak_conv} holds. Moreover
\begin{equation}
\frac{1}{z}\int_{[z/2,z]} f(dx)  \leq \ \frac{C \sqrt{J}}{z^{(\gamma+3)/2}},\ z>0.\label{eq:dc_fR_2}
\end{equation}

Now we prove that $f $ is a constant flux solution in the sense of Definition \ref{DefFluxSol2}. 
 Indeed, since $f$ satisfies \eqref{eq:dc_fR_2} and using that $\left\vert\gamma+2\lambda\right\vert <1$ and Lemma \ref{lem:bound}, it follows that the moment condition \eqref{eq:moment_cond2}  holds. 
 
For any test function $\varphi \in C_c(0,\infty)$, since $f_R$ is a stationary injection solution, we have from Lemma \ref{lem:flux} that $f_R$ satisfies
\begin{equation}\label{eq:dc_2}
\int_{(L_\eta/R,\infty)}dz \varphi(z)\int_{ (0,z]}\int_{(z-x,\infty)} K_R(x,y)x f_R(dx) f_R(dy)  = J \int_{(L_\eta/R,\infty)}dz \varphi(z),
\end{equation} 
where $J=\int_{(0,\infty)} x \eta_R(dx) = \sum_{\alpha=1}^\infty \alpha s_\alpha >0$ is independent of $R$. We now rewrite using the domain of integration $\Omega_z$ defined in \eqref{eq:Omegaz} as well as the domains $\Sigma_1(\delta),\ \Sigma_2(\delta)$ and $\Sigma_3(\delta)$ for $\delta >0$ defined in \eqref{eq:sigma}. We use also the partial fluxes $J_j,\ j=1,2,3$ defined in \eqref{eq:fluxeq_j}. In order to make explicit in these fluxes the dependence on the kernel $K$ and the measure $f$, we will rewrite them as $J_j(z,\delta; K,f)$ in the rest of this proof.
Therefore \eqref{eq:dc_2} becomes
\begin{equation}\label{eq:dc_3}
\sum\limits_{j=1}^3\int_{(L_\eta/R,\infty)}dz \varphi(z)  J_j(z,\delta; K_R,f_R)  = J \int_{ (L_\eta/R,\infty)} dz \varphi(z)
\end{equation}
for any $\varphi \in C_c(0,\infty)$. 

Let $\varepsilon >0$ arbitrarily small. Since the kernels $K_R,\ R\geq 1$ satisfy the Assumption \ref{assump:estimates} with $c_1,\ c_2$  in \eqref{eq:cond_kernel2}-\eqref{eq:cond_kernel3}  independent of $R$,
we can  apply Lemma \ref{lem:J} combined with \eqref{eq:dc_fR_3} to obtain
\begin{equation}\label{eq:dc_4}
\left| \sum_{j \in \{1,3\}} \int_{(L_\eta/R,\infty)} dz \varphi(z) J_j(z,\delta;K_R,f_R) \right| \leq C\varepsilon J \| \varphi \|_{L^\infty(0,\infty)} 
\end{equation}
where $C>0$ is independent of $R$. 

For every compact set $K\subset (0,\infty)$ we have that $\underset{z \in K}{\bigcup} (\Sigma_2(\delta) \cap \Omega_z)$ is bounded.
Then using \eqref{eq:dc_weak_conv} and 
 using that $\lim_{R\to \infty} K_R = K$ uniformly on compact sets of $(0,\infty)$ and that $\varphi$ is compactly supported, we obtain 
\begin{equation*}
\lim_{n\to \infty } \int_{(0,\infty)}dz \varphi(z) J_2(z,\delta ; K_{R_n},f_{R_n}) = \int_{(0,\infty)}dz \varphi(z) J_2(z,\delta;K,f)
\end{equation*}
 for any test function $\varphi \in C_c(0,\infty)$. Then using \eqref{eq:dc_3}-\eqref{eq:dc_4} we arrive at 
 \begin{equation*}
 \left| \int_{(0,\infty)}dz \varphi(z) J_2(z,\delta;K,f) - J\int_{(0,\infty)}dz \varphi(z) \right| \leq C\varepsilon J \| \varphi\|_{L^\infty(0,\infty)}.
 \end{equation*}
 Using again Lemma \ref{lem:J} and \eqref{eq:dc_fR_2} we deduce that
  \begin{equation*}
\left| \sum_{j \in \{1,3\}} \int_{(0,\infty)} dz \varphi(z) J_j(z,\delta;K,f) \right| \leq C\varepsilon J \| \varphi \|_{L^\infty(0,\infty)} 
 \end{equation*} 
 whence 
\begin{equation*}
\left|\int_{(0,\infty)}dz \varphi(z)\int_{ (0,z]}\int_{(z-x,\infty)} K(x,y)x f(dx) f(dy)  - J \int_{(0,\infty)}dz \varphi(z)\right| \leq C\varepsilon J \| \varphi \|_{L^\infty(0,\infty)} ,
\end{equation*}  
 for any $\varphi \in  C_c(0,\infty)$. Then since $\varepsilon$ is arbitrary small and $\varphi$ is an arbitrary test function, $f$ is a flux solution in the sense of Definition \ref{DefFluxSol2} and the result follows. 
\end{proofof}

\bigskip

\noindent \textbf{Acknowledgements.}
The authors are grateful to  P.\ Laurencot and R.\ L.\ Pego for interesting discussions about the content of this paper. We also thank  the anonymous reviewers for their careful reading of the manuscript and for valuable suggestions of improvements.  JL also thanks A.\  Vuoksenmaa for his help with the details of the existence proof in Section \ref{sec:existence}.
The authors gratefully acknowledge the support of the Hausdorff Research Institute for Mathematics
(Bonn), through the {\it Junior Trimester Program on Kinetic Theory},
of the CRC 1060 {\it The mathematics of emergent effects} at the University of Bonn funded through the German Science Foundation (DFG), 
and of the {\it Atmospheric Mathematics} (AtMath) collaboration of the Faculty of Science of University of Helsinki, as well as of  
the Academy of Finland via the {\it Centre of Excellence in Analysis and Dynamics Research} (project No. 307333).
The funders had no role in study design, analysis, decision to
publish, or preparation of the manuscript.

\bigskip

\noindent\textbf{Compliance with ethical standards} \smallskip

\noindent \textbf{Conflict of interest} The authors declare that they have no conflict of interest.

\bigskip

\bigskip 

 \bigskip
 
 \def\adresse{
\begin{description}

\item[M.~A. Ferreira] 
{Department of Mathematics and Statistics, University of Helsinki,\\ P.O. Box 68, FI-00014 Helsingin yliopisto, Finland \\
E-mail:  \texttt{marina.ferreira@helsinki.fi}}

\item[J. Lukkarinen]{ Department of Mathematics and Statistics, University of Helsinki, \\ P.O. Box 68, FI-00014 Helsingin yliopisto, Finland \\
E-mail: \texttt{jani.lukkarinen@helsinki.fi}}

\item[A. Nota:] { Institute for Applied Mathematics, University of Bonn, \\ Endenicher Allee 60, D-53115 Bonn, Germany\\
E-mail: \texttt{nota@iam.uni-bonn.de}}

\item[J.~J.~L. Vel\'azquez] { Institute for Applied Mathematics, University of Bonn, \\ Endenicher Allee 60, D-53115 Bonn, Germany\\
E-mail: \texttt{velazquez@iam.uni-bonn.de}}

\end{description}
}

\adresse
 
 \bigskip

\end{document}